\DeclareMathOperator{\prim}{prim}
\DeclareMathOperator{\Sym}{Sym}
\DeclareMathOperator{\colour}{colour}
\DeclareMathOperator{\Tr}{Tr}
\renewcommand\epsilon\varepsilon
\renewcommand\tilde\widetilde
\renewcommand\hat\widehat
\renewcommand\bar\overline
\DeclareMathOperator*\Res{Res}
\DeclareMathOperator*\Reg{Reg}
\DeclareMathOperator\skeleton{skeleton}
\newcommand\fracx[2]{\genfrac{}{}{0pt}{}{#1}{#2}}
\theoremstyle{plain}
\newtheorem{thm}{Theorem}[section]
\newtheorem{lem}[thm]{Lemma}
\newtheorem{prop}[thm]{Proposition}
\theoremstyle{definition}
\newtheorem{defn}[thm]{Definition}
\newtheorem{exmp}[thm]{Example}
\newtheorem{ex}[thm]{Example}
\newtheorem{rem}[thm]{Remark}
\def\One{\mathbb{I}}
\newcommand{\be}{\begin{equation}}
\newcommand{\ee}{\end{equation}}
\newcommand{\bea}{\begin{eqnarray}}
\newcommand{\eea}{\end{eqnarray}}
\newcommand{\beas}{\begin{eqnarray*}}
\newcommand{\eeas}{\end{eqnarray*}}
\def\dunce{\;\raisebox{-4mm}{\epsfysize=12mm\epsfbox{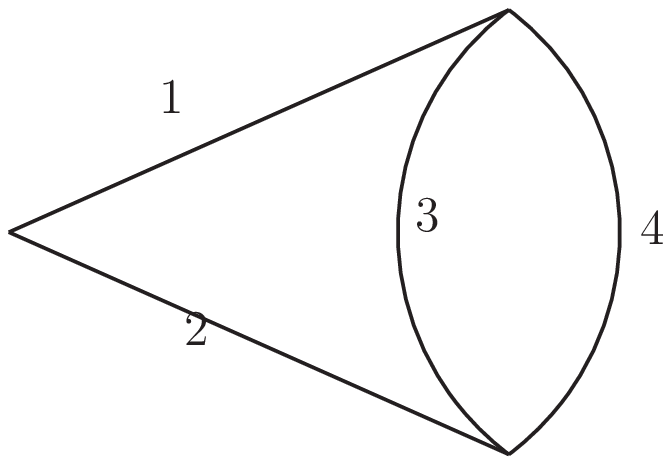}}\;}
\newcommand\markededge{{\includegraphics{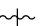}}}
\newcommand\mmarkededge{{\includegraphics{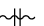}}}
\newcommand\dotsedge{{\includegraphics{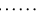}}}
\newcommand\ddotsedge{{\includegraphics{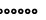}}}
\newcommand\fourvertex{{\includegraphics{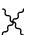}}}
\newcommand\ghostloop{{\includegraphics{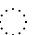}}}
\theoremstyle{plain}
\newtheorem{lemma}[thm]{Lemma}
\newtheorem{cor}[thm]{Corollary}
\theoremstyle{definition}
\newtheorem{defi}[thm]{Definition}
\newtheorem{exa}[thm]{Example}
\title{Quantization of gauge fields, graph polynomials  and graph homology
\thanks{This work is supported by an Alexander von Humboldt Professorship of the Alexander von
Humboldt Foundation and the BMBF for Dirk Kreimer. DK: kreimer@physik.hu-berlin.de, ph: +49 30 2093 7630, fax: +49 30 2093 3984}
}
\author{Dirk Kreimer \and Matthias Sars \and Walter D. van Suijlekom\\
Humboldt University, 10099 Berlin, Germany (D.K., M.S.) \and Radboud University Nijmegen, 6525 AJ Nijmegen,
The Netherlands}
\begin{document}

\begin{fmffile}{graphs}
\fmfset{thin}{.5pt}
\fmfset{thick}{1pt}
\fmfset{wiggly_len}{4pt}
\fmfset{arrow_len}{6pt}
\fmfset{dot_len}{2pt}
\fmfcmd{
 vardef bar (expr p, len, ang, sh) =
  ((-len/2,0)--(len/2,0))
     rotated (ang + angle direction length(p)/2 of p)
     shifted point length(p)/2 + sh of p
 enddef;
 style_def marked expr p =
  draw_wiggly p;
%  ccutdraw bar (p, 5, 90, -.08)
%  ccutdraw bar (p, 5, 90, .08);
  ccutdraw bar (p, 5, 90, 0);
 enddef;
 style_def mmarked expr p =
  draw_wiggly p;
%  ccutdraw bar (p, 5, 90, .16);
%  ccutdraw bar (p, 5, 90, 0);
%  ccutdraw bar (p, 5, 90, -.16)
  ccutdraw bar (p, 5, 90, .08);
  ccutdraw bar (p, 5, 90, -.08);
 enddef;
}

\maketitle

\begin{abstract}
We review quantization of gauge fields using algebraic properties of 3-regular graphs. 
We derive the Feynman integrand at $n$ loops for a non-abelian gauge theory quantized in a covariant gauge
from scalar integrands for connected 3-regular graphs, obtained from the two Symanzik polynomials.

The transition to the full gauge theory amplitude is obtained by the use of a third, new, graph polynomial, the corolla polynomial.

This implies effectively a covariant quantization without ghosts, where all the relevant signs of the ghost sector 
are incorporated in a double complex furnished by the corolla polynomial ---we call it cycle homology--- and by graph homology.
\end{abstract}

\section{Introduction}
Feynman rules for scalar field theories reveal astonishing connections to algebraic geometry. These are most easily seen
upon the use of parametric representations, which furnish renormalized integrands suitable to be analysed in mathematical terms. 
Crucial are here the two Symanzik polynomials, which underlie the parametric representation \cite{BEK,BrI,BrII,BrSchI,BrSchII,BrSchY,DBr}. 

If one wants to generalize this approach into the realm of gauge theories, a pedestrian approach would be to turn every tensor integral appearing from any single graph
into the parametric representation, creating a bewildering number of tensor integrals for each and any contributing graph in a non-abelian gauge theory. While effective on-shell methods have been suggested as an alternative and indeed succeeded  at sufficiently low loop orders \cite{Dixon,Kosower}, the question how the transition from scalar to gauge field theory can be formulated mathematically remained open.

Here we establish an answer and suggest a  more succinct and, we believe, more elegant approach.   It allows us to obtain the renormalized integrand of 
a $n$-loop scattering amplitude in a gauge theory from the use of the scalar amplitude with cubic interaction and through one further graph polynomial, beyond the use of the  two Symanzik polynomials,  in one go.

We eliminate thereby the need to introduce  ghosts in the context of covariant quantization and replace their use by the use of the corolla polynomial.
This polynomial, which incorporates all the necessary signs of closed ghost loops, is still a strictly positive polynomial. This property is maintained in the pure Yang--Mills sector. We also generalize it to enable the inclusion of fermions. Positivity then depends on the representation of the gauge group which we choose for fermionic matter fields.

We believe our approach is useful as it is an approach to gauge theory amplitudes which does not rely
on on-shell recursion relations and methods of cut-reconstructibility. This has become the standard approach 
to gauge theory amplitudes in particular in the context of $N=4$ supersymmetric theories in recent years.
It nevertheless has shortcomings with respect  to the consideration of loop amplitudes, and seems rather restricted 
to the context of  planar graphs, preferably to be considered in a theory with a vanishing $\beta$-function.

Here, we offer an approach which is based on a mere study of the two Kirchhoff polynomials, the corolla polynomial of  
\cite{KrY}, the combinatorics of parametric renormalization \cite{BrKr}, and nothing else to obtain the renormalized integrand of a generic gauge theory amplitude from the amplitude of a scalar field theory with 3-valent vertices only.

In this paper, we introduce our approach. Comparison of graph- and cycle homology with BRST homology, an interpretation of Slavnov--Taylor  identities through corolla differentials, and computations of gauge theory amplitudes will be presented in future work, as will be a discussion of gravity Feynman rules from this approach.

\subsection*{Acknowledgments}
M.S.\ thanks Erik Panzer for many helpful discussions. D.K.\ is grateful to David Broadhurst, Francis Brown, Spencer Bloch, Erik Panzer, Oliver Schnetz and Karen Yeats
for shared collaborations and insights into the topics of this work. 
\subsection{Results}
We prove that the Feynman integrand at any finite loop order $n$ in gauge theory can be obtained from the study of two different complexes: graph homology and cycle homology,
acting on 3-regular connected scalar graphs.

Both complexes are reflected in the analytic structure of the Feynman integrand for an amplitude: residues in parametric space
along co-dimension $k$  hypersurfaces biject with the integrands of graphs having $k$ distinct 4-valent vertices and thus reflect graph homology, while cycle homology pairs with the structure of the corolla polynomial defined below in Sect.(\ref{Corolla}),
see also \cite{KrY}. 

In fact, we prove that there exists a corolla differential $D_\Gamma$ such that the Feynman integrand $I_\Gamma$ for connected 3-regular graphs $\Gamma$,
\begin{equation}
I_\Gamma=\frac{e^{-\frac{|N_\Gamma|_{\mathrm{Pf}}}{\psi_\Gamma}}}{\psi_\Gamma^2},\label{scalint}
\end{equation}
gives rise, when summed over connected graphs $\Gamma$, to the total gauge theory amplitude, using $D_\Gamma I_\Gamma$. The differential $D_\Gamma$ arises from the corolla polynomial $C_\Gamma$, which is a graph polynomial based on half-edge variables, upon replacing each half-edge variable by a suitable differential operator
assigned to any half-edge.\footnote{Equation (\ref{scalint}) above needs a correction for quadratic sub-divergences, which will be provided  in detail later on}

In a gauge theory, we must deal with the simultaneous requirements of unitarity of the scattering matrix $S$ and of covariance of fields which we assume to be representatives of the Poincar\'e group.
These two requirements severely restrict possible Feynman rules \cite{Weinberg1,Weinberg2,Weinberg3}. In particular, unphysical degrees of freedom 
must be eliminated from observable amplitudes.

For the 3- and 4-valent vertices of gauge boson interactions this requires that those vertices are not independent \cite{Cvit}. 
A 4-valent vertex has a Feynman rule which can be written in terms of two 3-valent vertices connected through a marked edge, see Eq.(\ref{eq:|}), summing over the three $s$, $t$ and $u$ channels.

Actually, we have to clarify a few conventions from the start:  we will mark in graphs edges and vertices in various ways and let 
\[
\mathcal{G}^{r,l}_{n\fourvertex,m\ghostloop}
\]  
be the set of all graphs with external half-edges specifying the scattering amplitude  $r$, with $l$ loops and $n$  4-gluon vertices, and $m$ ghostloops.
Similarly, we will indicate the number of marked edges and other qualifiers as needed.

If we want to leave a qualifier $l,n,m,\cdots$ unspecified (so that we consider the union of all sets with any number of such items), we replace it by $/$.
A similar notation will be used below for sums or series of graphs
\[
X^{r,l}_{n\fourvertex,m\ghostloop}
\]  
which are sums (for fixed $l$) or series (for $l=/$) of graphs weighted by symmetry, colour and other such factors as defined below. 
We dub such series combinatorial Green functions later on.

We can then consider graph homology, for shrinking an edge between two 3-valent vertices gives a 4-valent vertex. In particular, the elimination of an edge through the graph homology boundary $s$, see Def.(\ref{4-12}), must be balanced against graphs which contain that newly generated 4-valent vertex, which can again be homologically expressed through a generalized boundary $S$, $S^2=0$, defined in Def.(\ref{4-15}).  

More precisely, let $e$ be an edge connecting two 3-gluon  vertices in a graph $\Gamma$, $\chi^e_+$ be the operator which shrinks edge $e$, and we 
 extend $\chi_+^e$ to zero when acting between any other two vertices.
Let $
\chi_+=\sum_e\chi^e_+$, see Def.(\ref{4-1}), a sum over all internal  edges. Let $S$, with $S^2=0$,  be that generalized graph homology operator. Then, for a gauge theory amplitude $r$:
\begin{thm}\label{chithm}
Let $X^{r,n}_{0\fourvertex;j\ghostloop}$ be the sum of all 3-regular connected graphs, with $j$ ghost loops,   and with external legs determined by $r$
and loop number $n$, weighted by colour and symmetry, let $X^{r,n}_{/\fourvertex,j\ghostloop}$
be the same allowing for 3- and 4-valent vertices.
We have
\begin{align*}
i):\; e^{\chi_+}X^{r,n}_{0\fourvertex;j\ghostloop}&=X^{r,n}_{/\fourvertex;j\ghostloop},\\
ii):\; Se^{\chi_+}X^{r,n}_{0\fourvertex;j\ghostloop}&=0.
\end{align*}
\end{thm}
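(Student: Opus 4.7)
The plan is to prove (i) by unpacking $e^{\chi_+}$ as an explicit sum over matchings of edges to be contracted, and then applying the channel decomposition of the 4-gluon Feynman rule, Eq.(\ref{eq:|}), to match the two sides graph-by-graph; part (ii) then follows by combining (i) with the channel-splitting component of $S$ built into Def.(\ref{4-15}). Throughout, the ghost loop count $j$ is inert under both $\chi_+$ and $S$, so I may fix $j$ and work only with the gluonic skeleton of each graph.

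For part (i), I would expand $e^{\chi_+}=\sum_k \chi_+^k/k!$ and observe that on a 3-regular graph $\Gamma_0$, the iterate $\chi_+^{e_k}\cdots\chi_+^{e_1}\Gamma_0$ is nonzero only when $e_1,\dots,e_k$ are pairwise non-adjacent; once any edge has been shrunk, its endpoints collapse into a 4-valent vertex on which subsequent $\chi_+^{e}$ vanish. Such non-adjacent contractions commute, so the result depends only on the unordered matching $M=\{e_1,\dots,e_k\}$, and dividing by $k!$ identifies $e^{\chi_+}\Gamma_0$ with $\sum_M \Gamma_0/M$. I would then reindex the double sum $\sum_{\Gamma_0,M}$ as $\sum_{\Gamma,\rho}$, where $\Gamma$ has some number of 4-valent vertices and $\rho$ chooses, at each 4-vertex $v$ of $\Gamma$, one of the three channel-resolutions of $v$ into a pair of 3-valent vertices joined by a marked edge. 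Eq.(\ref{eq:|}) says precisely that the sum over these three channels, weighted by the two 3-gluon colour factors attached to the marked edge, reproduces the colour factor of the 4-gluon vertex; applied independently at every 4-vertex, it collapses the sum over $\rho$ to the correct colour weight of $\Gamma$. The symmetry-factor check is then orbit-stabilizer for the $\mathrm{Aut}(\Gamma)$-action on the set of covers $(\Gamma_0,M)$: the factor $1/|\mathrm{Aut}(\Gamma_0)|$ on the left, together with the $1/k!$ from the exponential, reassembles into $1/|\mathrm{Aut}(\Gamma)|$ on the right.

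For part (ii), by (i) the statement is equivalent to $SX^{r,n}_{/\fourvertex;j\ghostloop}=0$. The operator $S$ of Def.(\ref{4-15}) extends $\chi_+$ to graphs with 4-valent vertices by a channel-splitting component: on a 3-3 edge $e$, $S$ contracts $e$ (creating one more 4-vertex), while on a 4-valent vertex $v$, $S$ outputs the three channel-resolutions of $v$ into a marked 3-3 edge. These two operations are inverse up to the channel sum, and the relative sign forced by $S^2=0$ is exactly the one that makes their contributions to $SX^{r,n}_{/\fourvertex;j\ghostloop}$ cancel term by term. Concretely, every pair $(\Gamma',v')$ with $v'$ a 4-vertex of a graph $\Gamma'$ arising in the output is produced once by channel-splitting a 4-vertex of some graph of the sum and once, with opposite sign, by contracting the corresponding resolved 3-3 edge of another graph of the same sum; the cancellation is the same mechanism that already forces $S^2=0$, now promoted to the combinatorial Green function.

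The main obstacle is the symmetry- and colour-factor bookkeeping in (i). The topological content is an innocuous reindexing, but one must carry through the conventions of the $X^{r,n}$ precisely enough that Eq.(\ref{eq:|}) applies with the correct signs and colour structures, and verify that the automorphism groups of $\Gamma_0$, of $\Gamma$, and the stabilizer of $M$ combine as claimed. Once this calibration is in place, part (ii) is a largely formal consequence of Def.(\ref{4-15}) together with the identity proven in (i).
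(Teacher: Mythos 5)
Your part (i) is the paper's own argument: the expansion of $e^{\chi_+}$ into sets of pairwise non-adjacent marked edges, the orbit--stabilizer bookkeeping of symmetry factors, and the collapse of the three channel resolutions at each 4-valent vertex via Eq.(\ref{eq:|}) are exactly the content of Lemmas \ref{lem:graph-4v-|}, \ref{lem:||&4-vertices} and \ref{lem:skeletongraphs}. For part (ii) your cancellation is also the paper's mechanism, but two points need tightening. First, you cannot reduce to ``$SX^{r,n}_{/\fourvertex;j\ghostloop}=0$'' by invoking (i): the identification of marked-edge graphs with genuine 4-vertex graphs is only an equality of Feynman amplitudes (the relation $\sim$), whereas $S$ of Def.(\ref{4-15}) is an operator on the Hopf algebra of marked graphs; correspondingly $\sigma$ does not ``output the three channel-resolutions of a 4-vertex'' but merely upgrades a single mark $\markededge$ to a double mark $\mmarkededge$, and $s$ double-marks an unmarked edge rather than contracting it. Second, once phrased at the level of marked graphs the cancellation is entirely internal to $e^{\chi_+}\Gamma$ for each fixed skeleton $\Gamma$: the paper's Prop.(\ref{Shom}) proves $Se^{\chi_+}\Gamma=0$ graph by graph via the explicit sign computation $se^{\chi_+}\Gamma=-\sigma e^{\chi_+}\Gamma$ (pairing the set $M$ of marks, to which $s$ adds a double mark at $f$, against the set $M\cup\{f\}$, in which $\sigma$ upgrades the mark at $f$), so no pairing of terms across different graphs of the Green function is needed or available. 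With these corrections your argument coincides with the paper's.
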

This theorem ensures that 3- and 4-valent vertices match to fulfil simultaneously the requirement of relativistic field theory and $S$-matrix theory.

These two requirements also demand that unphysical degrees of freedom propagating in closed loops do cancel. For this, we replace graph homology by cycle homology and can
proceed analogously.

%Again, let $e$ be an edge connecting two 3-gluon  vertices in a graph $\Gamma$, 
Let $\delta^C_+$ be the operator which marks a cycle
$C$ through 3-valent vertices and unmarked edges, extend $\delta_+^C$ to zero on any other cycle.
Let $\delta_+=\sum_C \delta_+^C$ sum over all cycles, see Def.(\ref{4-22}). Let $t$ be the corresponding cycle differential $t$, defined in  Def.(\ref{4-27}). Let $T$, with $T^2=0$,  be the generalized cycle homology operator introduced in Def.(\ref{4-30}). Then:
\begin{thm}\label{deltathm}
Let $X^{r,n}_{j\fourvertex;0\ghostloop}$ be the sum of all connected graphs with $j$ 4-vertices contributing to amplitude $r$ and loop number $n$ and no ghost loops, weighted by colour and symmetry, $X^{r,n}_{j\fourvertex;/\ghostloop }$
be the same allowing for any possible number of ghost loops.
We have
\begin{align*}
i):\; e^{\delta_+}X^{r,n}_{j\fourvertex;0\ghostloop}&=X^{r,n}_{j\fourvertex;/\ghostloop},\\
ii):\; Te^{\delta_+}X^{r,n}_{j\fourvertex;0\ghostloop}&=0.
\end{align*}
\end{thm}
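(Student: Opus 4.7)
The plan is to mirror the proof of Theorem \ref{chithm}, exploiting the structural parallel between graph homology (edges contracted into 4-vertices) and cycle homology (cycles promoted to ghost loops). Where $\chi_+$ adds a 4-vertex by contracting an edge, $\delta_+$ adds a ghost loop by marking a cycle; each step carries over, with the combinatorial unit being a cycle rather than an edge.

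For part (i), I would set up a bijection: every graph in $\mathcal{G}^{r,n}_{j\fourvertex;k\ghostloop}$ decomposes uniquely as an underlying purely gluonic graph in $\mathcal{G}^{r,n}_{j\fourvertex;0\ghostloop}$ together with an unordered family of $k$ admissible cycles on which ghosts have been inserted, weighted by the corresponding symmetry and colour factors. The operator $\delta_+^C$ is designed to record exactly this data on a single cycle $C$, returning zero when the insertion is incompatible with the 3-regular/unmarked-edge requirement. The iterate $(\delta_+)^k$ then produces an ordered sum over $k$-tuples of admissible cycles, so the $1/k!$ in $e^{\delta_+}$ corrects the overcount. What remains is exactly the set of multi-cycle markings corresponding to graphs with $k$ ghost loops; summing over $k$ yields $X^{r,n}_{j\fourvertex;/\ghostloop}$. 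The weight matching --- per-cycle colour trace, the $(-1)$ per closed ghost loop, and the relative automorphism factors --- must then be verified against the definition of $\delta_+^C$.

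For part (ii), $T$ is built (Def.\ (\ref{4-30})) as a graded boundary with $T^2=0$ acting as a per-cycle difference between the marked (ghost) and unmarked (gluon) state of a cycle. Its vanishing on $e^{\delta_+}X^{r,n}_{j\fourvertex;0\ghostloop}$ encodes the diagrammatic fact that, on each closed gluon cycle, the longitudinal unphysical polarisations precisely cancel the contribution of a ghost loop on that cycle. The argument therefore proceeds in two layers: first a local, per-cycle cancellation identity, which is essentially how $t$ and $T$ are defined; second an algebraic telescoping showing that this local cancellation lifts through $e^{\delta_+}$ to the full sum. Formally one expects a graded commutation relation between $T$ and $\delta_+$ that, when combined with the vanishing of $T$ on the seed $X^{r,n}_{j\fourvertex;0\ghostloop}$ (which has no marked cycles and hence no ghost/gluon pair to compare), produces the desired identity, in direct analogy with the step $Se^{\chi_+}X^{r,n}_{0\fourvertex;j\ghostloop}=0$ of Theorem \ref{chithm}.

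The main obstacle will be the combinatorial bookkeeping around non-disjoint cycles. Edge-disjoint cycles plainly decouple under $e^{\delta_+}$, but cycles sharing vertices still interact through the half-edge variables of the corolla polynomial, and the compatible signs and symmetry factors must be extracted carefully from Sect.\ (\ref{Corolla}). Once it is established that $\delta_+^C$ and $\delta_+^{C'}$ behave compatibly on overlapping admissible pairs, that the symmetry factors distinguishing $X^{r,n}_{j\fourvertex;0\ghostloop}$ from $X^{r,n}_{j\fourvertex;/\ghostloop}$ are exactly absorbed by the $1/k!$ emerging from the exponential, and that the local ghost/longitudinal-gluon cancellation encoded by $T$ is sign-consistent with $\delta_+^C$, both (i) and (ii) reduce to cycle-by-cycle statements and follow in the same spirit as Theorem \ref{chithm}.
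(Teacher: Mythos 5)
Your proposal follows the paper's own route: part (i) is exactly Lemma~\ref{lem:skeletongraphs-delta} (the skeleton/orbit-counting identity $\frac{(\delta_+)^k}{k!}X^{r,n}_{j\fourvertex;0\ghostloop}=X^{r,n}_{j\fourvertex;k\ghostloop}$, established by your bijection-plus-$\frac{1}{k!}$ argument with the multiplicity absorbed into the ratio of symmetry factors), and part (ii) is the sign-telescoping cancellation $t\,e^{\delta_+}\Gamma=-\tau\,e^{\delta_+}\Gamma$, carried out in direct analogy with Proposition~\ref{Shom} just as you anticipate. Two minor calibration points: the combinatorial series $X$ carries no $(-1)$ per ghost loop (that sign lives in the Feynman rules via the alternating corolla polynomial, not in $\delta_+^C$), and overlapping cycles cause no interaction at this purely combinatorial level, since $\delta_+^C$ is defined to vanish whenever $C$ meets a vertex already adjacent to a ghost or marked edge --- the corolla polynomial's half-edge variables play no role in this theorem.
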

These two operations are compatible:
\begin{thm}\label{bicomplexthm}
\label{allthm}
\begin{itemize}
\item[i)] We have $[s,t]=[S,T]=0$ and 
$$
Te^{\delta_++\chi_+}X^{r,n}_{0\fourvertex;0\ghostloop}=0, \qquad Se^{\delta_++\chi_+}X^{r,n}_{0\fourvertex;0\ghostloop}=0.
$$
\item[ii)] Together, they generate the whole gauge theory amplitude from 3-regular graphs:
\begin{equation*} e^{\delta_++\chi_+}X^{r,n}_{0\fourvertex;0\ghostloop}=X^{r,n}_{/\fourvertex;/\ghostloop}=:X^{r,n}.
\end{equation*}
\end{itemize}
\end{thm}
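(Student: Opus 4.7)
The plan is to reduce part (ii) to the two preceding theorems using the commutativity asserted in part (i), and to verify part (i) by analyzing how edge-shrinkings and cycle-markings act on disjoint combinatorial data. The key observation is that $\chi_+^e$ modifies an edge between two 3-valent vertices together with its two endpoints, while $\delta_+^C$ marks the half-edges along a cycle through 3-valent vertices; these are disjoint pieces of structural information on $\Gamma$. A two-case check shows $[\chi_+^e,\delta_+^C]=0$: if $e\notin C$, both orderings produce the same graph with $e$ shrunk and $C$ marked; if $e\in C$, then after $\delta_+^C$ the edge $e$ belongs to a ghost loop and is no longer eligible for contraction, while after $\chi_+^e$ the cycle $C$ passes through a 4-valent vertex and is no longer admissible for marking, so both compositions vanish. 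Summing over $(e,C)$ yields $[\chi_+,\delta_+]=0$. The dual pair $s,t$ (un-contracting a 4-vertex into two 3-vertices, un-marking a ghost loop) acts on the same disjoint data, so an analogous case check gives $[s,t]=0$.

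Granted $[\chi_+,\delta_+]=0$, we have $e^{\delta_++\chi_+}=e^{\delta_+}e^{\chi_+}$, and part (ii) follows from
\begin{equation*}
e^{\delta_++\chi_+}X^{r,n}_{0\fourvertex;0\ghostloop}=e^{\delta_+}\bigl(e^{\chi_+}X^{r,n}_{0\fourvertex;0\ghostloop}\bigr)=e^{\delta_+}X^{r,n}_{/\fourvertex;0\ghostloop}=\sum_{j\ge 0}e^{\delta_+}X^{r,n}_{j\fourvertex;0\ghostloop}=\sum_{j\ge 0}X^{r,n}_{j\fourvertex;/\ghostloop}=X^{r,n}_{/\fourvertex;/\ghostloop},
\end{equation*}
where the second equality uses Theorem \ref{chithm}(i) at $j=0$, the third decomposes by the number of 4-valent vertices (a finite sum at fixed loop number $n$), and the fourth applies Theorem \ref{deltathm}(i) term by term in $j$.

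For the vanishing statements in part (i) I would push $S$ and $T$ past the exponentials using compatibility relations $[S,\delta_+]=0$ and $[T,\chi_+]=0$, after which Theorems \ref{chithm}(ii) and \ref{deltathm}(ii) yield
\begin{equation*}
Se^{\delta_++\chi_+}X^{r,n}_{0\fourvertex;0\ghostloop}=e^{\delta_+}\bigl(Se^{\chi_+}X^{r,n}_{0\fourvertex;0\ghostloop}\bigr)=0,
\end{equation*}
and symmetrically with $S,\chi_+$ replaced by $T,\delta_+$. The commutator $[S,T]=0$ follows by decomposing $S$ and $T$ into their constituents per Defs.~\ref{4-15} and \ref{4-30}: the leading pieces commute by $[s,t]=0$, and the cross-terms cancel because the correction to $S$ is built from $\chi_+$-type insertions while that to $T$ is built from $\delta_+$-type insertions, which already commute by the first step. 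The main obstacle lies precisely in verifying these compatibility relations $[S,\delta_+]=[T,\chi_+]=0$ and, closely related, $[S,T]=0$: while the elementary commutation $[\chi_+,\delta_+]=0$ is a local first-order check on pairs $(e,C)$, the generalized boundaries $S$ and $T$ mix several elementary pieces with signs inherited from the colour and symmetry weights of the combinatorial Green functions, so the verification naturally proceeds by expanding in the bigrading (number of 4-valent vertices, number of ghost loops) and checking the commutators cell by cell, with the bookkeeping of these signs being the substantive part of the argument.
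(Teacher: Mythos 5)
Your proposal follows essentially the same route as the paper: part (ii) is the corollary to Lemma \ref{lem:skeletongraphs-delta} (factor $e^{\chi_++\delta_+}=e^{\chi_+}e^{\delta_+}$ using the remark that $\chi_+$ and $\delta_+$ commute, then apply Lemmas \ref{lem:skeletongraphs} and \ref{lem:skeletongraphs-delta}, equivalently Theorems \ref{chithm}(i) and \ref{deltathm}(i)), and the vanishing statements follow from Proposition \ref{Shom} and its $T$-analogue together with the same disjointness argument you invoke. Three local corrections, none of which affects the conclusion. First, the dichotomy for $[\chi_+^e,\delta_+^C]=0$ is not ``$e\in C$ versus $e\notin C$'' but ``$e$ shares a vertex with $C$ or not'': by Definitions \ref{4-1} and \ref{4-22}, an edge $e\notin C$ that is merely incident to a vertex of $C$ still annihilates both compositions, so your first case must be restricted to $e$ vertex-disjoint from $C$. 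Second, $s$ and $t$ are not ``un-contracting'' or ``un-marking'' operators: by Definitions \ref{4-12} and \ref{4-27} they place a second type of marking on an edge, respectively on a cycle, and their commutativity is checked by the same vertex-disjointness argument, not a dual one. Third, the signs in $S$ and $T$ come from the chosen total ordering of edges and the labelling of cycles, not from colour or symmetry weights; the reason $[S,T]=0$ and $[S,\delta_+]=[T,\chi_+]=0$ hold with no residual signs is precisely that $t$, $\tau$ and $\delta_+$ do not alter the set of marked edges (which controls the signs in $s$ and $\sigma$), while $s$, $\sigma$ and $\chi_+$ do not alter the set of ghost cycles (which controls the signs in $t$ and $\tau$). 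With these adjustments your argument coincides with the paper's (largely implicit) proof.
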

\begin{rem}
$X^{r,n}$ is the only non-trivial element in the cycle and graph homology which we here construct. It will be an object of future study.
\end{rem}

Finally, we get the Feynman integrands in the unrenormalized and renormalized case for a gauge theory amplitude $r$ from 3-regular connected graphs of scalar fields.

\begin{thm} \label{GTFR}
The full Yang--Mills amplitude $\bar U_\Gamma$ for a graph $\Gamma$ can be obtained by acting with a corolla differential operator (see below) on the scalar integrand $U_\Gamma (\{\xi_e\})$ for $\Gamma$, setting the edge momenta $\xi_e = 0$ afterwards.  

Moreover, $\bar U_\Gamma$ gives rise to a differential form $J_\Gamma^{\bar U_\Gamma}$ and there exists a vector $H_\Gamma$ such that the unrenormalized Feynman integrand for the sum of all Feynman graphs contributing to the connected $k$-loop amplitude $r$ is  
\[
\Phi(X^{r,k})=
\sum_{|\Gamma|=k,\mathbf{res}(\Gamma)=r} \frac{\mathrm{colour}(\gamma)}{\mathrm{sym}(\Gamma)}
\int (H_\Gamma\cdot J_\Gamma^{\bar{U}_\Gamma}),\] %where we set $\xi_e=0$ after the action of the corolla differentials.\\
The renormalized analogue is given by writing $\bar{U}_\Gamma^R$ instead of $\bar{U}_\Gamma$.
%ii) The renormalized Feynman integrand at $n$ loops for the sum of all Feynman graphs contributing to an amplitude $r$  is  
%\[
%\Phi(Y^{r,k})=
%\sum_{|\Gamma|=n,\mathbf{res}(\Gamma)=r} \frac{\mathrm{colour}(\gamma)}{\mathrm{sym}(\Gamma)}
%\int (H_\Gamma\cdot J_\Gamma^{\bar{U}_\Gamma^R}),\] where we set $\xi_e=0$ after the action of the corolla differentials.
\end{thm}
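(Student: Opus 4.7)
The plan is to first construct the corolla differential operator $D_\Gamma$ from the corolla polynomial $C_\Gamma$ by the prescription described after equation (\ref{scalint}): each half-edge variable appearing in $C_\Gamma$ gets replaced by a differential operator in the corresponding edge momentum $\xi_e$ (and in the Schwinger parameter), yielding a polynomial differential operator on the parametric integrand $I_\Gamma$. The first claim of the theorem is then the assertion that
\[
\bar U_\Gamma \;=\; \bigl(D_\Gamma\, I_\Gamma(\{\xi_e\})\bigr)\Big|_{\xi_e = 0},
\]
which I would prove by a direct combinatorial computation at each 3-valent vertex of $\Gamma$. The corolla polynomial is defined as a sum over choices of half-edges at each vertex weighted by cycle data; differentiating $I_\Gamma$ in $\xi_e$ brings down factors of edge momenta from the exponent $e^{-|N_\Gamma|_{\text{Pf}}/\psi_\Gamma}$, while the vertex-local sums in $C_\Gamma$ reproduce the antisymmetric momentum-dependent tensor structure $(g_{\mu\nu}(p_1-p_2)_\rho + \text{cyc.})$ characteristic of the 3-gluon Feynman rule. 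Setting $\xi_e = 0$ at the end removes the scalar-propagator insertions and leaves exactly the tensor numerator of the Yang-Mills graph.

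Next I would package $\bar U_\Gamma$ into a differential form $J_\Gamma^{\bar U_\Gamma}$ on the parametric simplex, with $H_\Gamma$ being the vector containing the external polarization and colour data needed to contract the uncontracted Lorentz indices at external legs. Writing the parametric representation in this form makes the external data factorize cleanly from the graph-dependent kernel, so that $\int (H_\Gamma \cdot J_\Gamma^{\bar U_\Gamma})$ is the ordinary Feynman integrand for $\Gamma$ in parametric space.

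To pass from a single graph to the full amplitude $\Phi(X^{r,k})$, I would invoke Theorem \ref{bicomplexthm} (ii), which gives $X^{r,n} = e^{\delta_+ + \chi_+} X^{r,n}_{0\fourvertex;0\ghostloop}$. The $\chi_+$ operator produces the graphs with 4-valent vertices (whose Feynman rule by Eq.(\ref{eq:|}) matches a marked-edge contraction of two 3-valent ones summed over the $s,t,u$ channels), and $\delta_+$ produces the ghost-loop contributions from marked cycles, with all relative signs encoded by the corolla polynomial. Thus the sum over 3-regular connected $\Gamma$ with the corolla-differential action reproduces exactly the full Yang-Mills amplitude, including the 4-valent and ghost contributions; the colour and symmetry factors $\mathrm{colour}(\gamma)/\mathrm{sym}(\Gamma)$ are carried along correctly by the weighting in $X^{r,n}_{0\fourvertex;0\ghostloop}$.

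Finally, the renormalized statement is obtained by replacing $\bar U_\Gamma$ with the parametric BPHZ renormalization $\bar U_\Gamma^R$ in the sense of Brown--Kreimer \cite{BrKr}; since $D_\Gamma$ is built from local (half-edge) operations it commutes with the forest-formula subtractions, so the sum-over-graphs identity passes through to the renormalized case. The principal obstacle is the quadratic sub-divergence correction flagged in the footnote to (\ref{scalint}): the naive exponential integrand must be modified so that $D_\Gamma$ does not produce spurious IR/UV enhanced terms when derivatives act on the exponential, and one must check that this correction is compatible both with $e^{\delta_++\chi_+}$ (so that the cycle- and graph-homology cancellations still produce the gauge-invariant amplitude) and with the parametric forest formula. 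Once this compatibility is established, Theorem \ref{bicomplexthm} together with the vertex-by-vertex verification of the corolla differential yields the claim.
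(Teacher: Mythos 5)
Your overall strategy tracks the paper's (verify the corolla differential reproduces vertex Feynman rules, assemble the amplitude via the homology theorems, renormalize parametrically), but two essential mechanisms are missing. First, you misidentify $H_\Gamma$. In the paper it is not external polarization and colour data: it is the vector of integration chains $(M_\Gamma^E)_{E\in\mathcal{E}^\Gamma}$, hypercubes $\mathbb{R}_+^{|\Gamma^{[1]}_I|-|E|}$ indexed by sets $E$ of mutually disjoint internal edges, paired with the vector of forms $J_\Gamma^f=\big(f^E\bigwedge_{e\notin E}dA_e\big)_{E}$ built from regular parts and residues of $\bar U_\Gamma$ along $A_e=0$, $e\in E$. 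That pairing is the analytic mechanism by which a single 3-regular graph contributes the integrands of all the graphs $\chi_+^{e_1}\cdots\chi_+^{e_k}\Gamma$ with 4-valent vertices: the Leibniz terms $D_vV^{(3)}_w=W_e/A_e$ produce simple poles whose residues are precisely the marked-edge integrands of Eq.(\ref{eq:W}). With $H_\Gamma$ reduced to a contraction of external indices, nothing in your integrand for a fixed 3-regular $\Gamma$ produces the 4-valent-vertex graphs; invoking $e^{\chi_+}$ at the purely combinatorial level does not generate their Feynman integrands from the scalar $I_\Gamma$.

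Second, you omit the linearization in the variables $1/A_e$. The higher corolla terms yield ghost cycles only after one retains, for each cycle $C$, the part of $\prod_{v\in C}D_g(v_C)$ linear in all $1/A_e$, $e\in C$ (Lemma \ref{ghostly}); the discarded nonlinear terms would otherwise produce spurious 2-ghost--2-gluon vertices. Dually, the 4-gluon vertices compatible with a chosen set of ghost cycles are restored through $C^{\mathrm{fr}}(\Gamma)=\sum_E C_E(\Gamma)\prod_{e\in E}W_e$ rather than through the raw Leibniz expansion, and only then does the sign and pole bookkeeping match Theorem \ref{allthm}. Finally, your handling of quadratic subdivergences is only a statement of the problem: the paper resolves it with the explicit correction factor of Eq.(\ref{corrfac}) and a partial integration in the subgraph scale, which is exactly what makes $\bar U_\Gamma$ (as opposed to $U_\Gamma$) have only logarithmic poles along forests and hence be renormalizable by the forest formula with the conditions (\ref{qurencon}).
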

\subsection{Organization of the paper}
The next section gives a detailed account of our graph-theoretic notions including graph homology. Then, we turn to the structure of Feynman rules for scalar fields
in the parametric representation, including renormalization as rigorously detailed for parametric
representations  in \cite{BrKr}. The peculiar situation in gauge theory is discussed in 
the fourth section, and a review of gauge theory from the viewpoint of Hochschild cocycles and the corresponding combinatorial Green functions is provided in the fifth. 
The sixth section combines these results with the corolla polynomial and the corolla differentials, culminating in Theorem \ref{GTFR} above. Short conclusions finish the paper.
\section{Graphs}
We first define the necessary graph theoretic notions.
\subsection{Vertices, edges, half-edges}
We consider connected graphs with labelled edges and vertices. We consider graphs as elements of a free commutative $\mathbb{Q}$-algebra $H$,
which is graded as a vectorspace by the first Betti number, the number of algebraically independent cycles in a graph.

For a graph $\Gamma$, we let $\Gamma^{[1]}=\Gamma^{[1]}_E\cup \Gamma^{[1]}_I \equiv E^\Gamma=E^\Gamma_E\cup E^\Gamma_I$ be the set of its external and internal edges
and let $\Gamma^{[0]}\equiv V^\Gamma$ be the set of its vertices.

We do not allow for internal edges which form self-loops (tadpoles): every internal edge can be considered as a pair of two distinct vertices.\footnote{The $\mathbb{Q}$-vectorspace of graphs with self-loops forms an ideal and co-ideal $I_{\mathrm{tad}}$, and we can effectively work in a quotient $H/I_{\mathrm{tad}}$.}

For a vertex $v\in V^\Gamma$, let $n(v)$ be the set of edges adjacent to $v$, and its cardinality $|n(v)|$ be the valence of $v$. For an edge $e\in E^\Gamma$,
we let $v(e)$ be the set $e^{[1]}\equiv e\cap V^\Gamma$. If $|v(e)|=2$, the edge  $e$ is an internal edge. 
If  $|v(e)|=1$, $e$ is an external edge, as self-loops are excluded. 

\begin{defi}
A pair $(v,e)$ with $e\in n(v)$ is called a half-edge. We let $H^\Gamma$ be the set of half-edges of $\Gamma$. 
\end{defi}
An internal edge $e$ defines two half-edges uniquely.\footnote{Note that we discard self-loops indeed. As a consequence, a chosen vertex and an edge incident to that vertex label a half-edge uniquely.} An ordering of the set $v(e)$ defines an orientation of that edge.
Reversing that ordering is called an edge swap. 

An oriented internal edge $e$ connects two vertices, which we call source,  $s(e)\in V^\Gamma$, and target, 
$t(e)\in V^\Gamma$, for an edge oriented  from source to target.

Half-edges will play an important role for us as our new graph polynomial, which we dub the corolla polynomial below,
is actually based on the half-edges of a graph.

\begin{defi}
The set of all half-edges incident to a given vertex, $$\mathrm{cor}(v):=\bigcup_{e\in n(v)}(v,e),$$ is called the corolla at $v$.
\end{defi}
\begin{defi}
We denote
\[
P_e:=\mathrm{cor}(s(e))\mathrm{cor}(t(e)),
\]
as an ordered pair of corollas.
\end{defi}
We call two such pairs $P_e,P_f$ disjoint if the edges $e$ and $f$ are disjoint.

External edges $e$ at $v$ are identified with the half-edge $(v,e)$ and are always regarded as oriented to the vertex.
\begin{defi}
\label{defn:j-regular}
We say that a graph is $j$-regular if all vertices have valence $j$, $|n(v)|=j$, $\forall v\in V^\Gamma$. 
\end{defi}
Let $\Gamma$ be 3-regular. Let $\mathcal{C}^\Gamma$ be the set of all its cycles (not circuits!). For $C\in \mathcal{C}^\Gamma$ and $v\in C$ a vertex, let $v_C\in H^\Gamma$ be the unique half-edge 
at $v$ not in $C_i$.

\begin{defi}
A graph is $n$-connected if it is connected after removal of any $n$ of its internal edges.
\end{defi}
\begin{rem}
A 2-connected graph is commonly called one-particle irreducible (1PI) in physics.\end{rem}

\subsection{Orientation and Cycles}
We need oriented graphs for two reasons: to define graph homology, and to have cyclic ordering at each corolla so that each half-edge incident to a vertex has a precursor and a successor at that vertex.

Let $M_k$ be an oriented  Riemann surface of genus $k$. We call a graph $k$-compatible, if it can be drawn on $M_k$ without self-intersections.
\begin{defi}
We say a graph is of genus $k$ if it is $k$-compatible but not $j$-compatible for any $j<k$. A planar graph is of genus $0$.
\end{defi}

\begin{defi}
An orientation $(\Gamma,or)$ of a graph $\Gamma$ is an ordering of $V^\Gamma$ together with an ordering of each $v(e)$ for all internal edges $e$.
Here, an ordering of $v(e)$ is the choice of one of the two possible bijections between $v(e)$ and the set $\{s(e),t(e)\}$.
\end{defi}
Two orderings distinguished by an even number of vertex permutations and edge swaps are equivalent. We write $(\Gamma,or)$
for an oriented graph. We set $(\Gamma,or)=-(\Gamma,or^\prime)$ for $or,or^\prime$ 
inequivalent orientations of the same graph.
  
\begin{lemma}(Conant, Vogtmann \cite{CV})
An ordering of a $3$-regular graph is equivalent to a cyclic ordering of all its corollas.
\end{lemma}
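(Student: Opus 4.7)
The plan is to view both sides of the equivalence as $\mathbb{Z}/2$-torsors attached to the underlying unoriented $3$-regular graph $\Gamma$, and then to produce a canonical map between them. First I would unpack the counts. An ordering consists of a bijection $\sigma\colon V^\Gamma\to\{1,\dots,|V^\Gamma|\}$ together with an element of $(\mathbb Z/2)^{E^\Gamma_I}$ specifying each $v(e)$; the equivalence relation $(\sigma,\epsilon)\sim(\sigma',\epsilon')$ is $\mathrm{sgn}(\sigma^{-1}\sigma')\cdot\prod_e(\epsilon_e/\epsilon'_e)=1$, so there are exactly two equivalence classes. On the other side, a $3$-element corolla has exactly two cyclic orderings; if one declares two collections of cyclic orderings equivalent when they differ at an even number of vertices, the $2^{|V^\Gamma|}$ choices collapse again to two equivalence classes. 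The lemma then amounts to producing a canonical $\mathbb Z/2$-equivariant bijection between these two-element sets.

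Next I would write down an explicit map $\Phi$ from orderings to cyclic orderings. At a vertex $v$ with corolla $\{h_1,h_2,h_3\}$, the triple of incident half-edges is naturally linearly ordered by the following rule: first separate them according to whether $v$ is the source or target of the corresponding internal edge (using the chosen orientation of $v(e)$), breaking ties by the $\sigma$-order of the opposite endpoint; external half-edges are incorporated using their $\sigma$-label on the incident vertex as above. A linear ordering of a $3$-set canonically descends to a cyclic ordering, so this defines $\Phi(\sigma,\epsilon)$ as a cyclic ordering on every corolla.

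I would then verify that $\Phi$ is compatible with the equivalence relations. Reversing a single edge orientation $\epsilon_e\mapsto-\epsilon_e$ changes the source/target partition at both endpoints of $e$, flipping the cyclic ordering at exactly two corollas — an even flip. Transposing two vertices $\sigma(i)\leftrightarrow\sigma(i{+}1)$ affects only those corollas whose opposite-endpoint sequence contains one of the swapped labels; a bookkeeping argument using the bipartite structure of incidences shows the total number of flipped corollas is even. Hence $\Phi$ descends to a map of equivalence classes, and since it is a map between two-element sets that intertwines the two $\mathbb Z/2$-actions (a single sign change on one side triggers a single sign change on the other, once one traces through the construction on a generating transposition), it is a bijection.

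The main obstacle is the parity bookkeeping in the second step: one must check rigorously that a single vertex transposition only flips an even number of corolla cyclic orderings, which requires a careful case analysis over how the two transposed vertices sit among the neighbours of each affected vertex. A secondary technicality, already flagged by the paper's convention excluding self-loops, is the treatment of multi-edges, which must be handled by referring to half-edge labels rather than to the unordered pair $v(e)$; once half-edges are used as the primary objects (as the paper already does), the ambiguity disappears and the rule for $\Phi$ is canonical.
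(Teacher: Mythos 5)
The paper itself gives no proof of this lemma; it is imported from Conant--Vogtmann \cite{CV}, where it is obtained by identifying both notions of orientation with orientations of the determinant line of the half-edge space $\mathbb{R}^{H^\Gamma}$. Decomposing $\mathbb{R}^{H^\Gamma}=\bigoplus_{v\in V^\Gamma}\mathbb{R}^{\mathrm{cor}(v)}$ over vertices gives $\det\mathbb{R}^{H^\Gamma}\cong\det\mathbb{R}^{V^\Gamma}\otimes\bigotimes_v\det\mathbb{R}^{\mathrm{cor}(v)}$, since each summand is odd-dimensional (and an orientation of a $3$-dimensional coordinate space is exactly a cyclic ordering of its indexing set); decomposing over edges gives $\bigotimes_{e}\det\mathbb{R}^{v(e)}$ tensored with the canonically labelled external legs, and each internal edge contributes an \emph{even}-dimensional summand, so no ordering of the edges is needed and $\det\mathbb{R}^{v(e)}$ is exactly a direction of $e$. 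Comparing the two decompositions is the lemma, with no tie-breaking, no case analysis, and multiple edges handled automatically. Your framing of both sides as two-element orientation sets is the right starting point, but the explicit map you build in place of this determinant bookkeeping does not work.

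There are two concrete problems. First, you are aiming at the wrong parity, and your final paragraph contradicts your middle one: a single edge swap, like a single vertex transposition, is an \emph{odd} change and reverses the orientation class, so for $\Phi$ to descend to a \emph{bijection} of the two-element quotients each generator must flip an \emph{odd} number of corolla cyclic orderings. You assert that each generator flips an even number --- which would make the descended map constant, not bijective --- and then conclude bijectivity by claiming that "a single sign change on one side triggers a single sign change on the other", which is exactly the odd-parity statement you have just denied. Second, the asserted parities are not true of the rule you define. Changing one half-edge at $v$ from outgoing to incoming displaces it within the lexicographic order of the three half-edges at $v$ by $0$, $1$ or $2$ positions, and only a displacement of $1$ reverses a cyclic ordering of a three-element set; which case occurs depends on the $\sigma$-labels of the opposite endpoints, so the flip count at the two endpoints of a reversed edge is not controlled, and the analogous count for a vertex transposition is likewise uncontrolled. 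Finally, the tie-breaking is not canonical: two parallel edges oriented the same way give half-edges at $v$ with identical sort keys, as do two external half-edges at one vertex, and appealing to "half-edge labels" smuggles in an edge ordering that is not part of the data of an orientation (the paper itself notes that ordering parallel edges requires the extra datum of the ambient Riemann surface). These defects are not repairable within your rule; the determinant-line route above is the way to make your torsor idea rigorous.
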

\begin{defi}
For an oriented 3-regular graph of genus $k$ and $e\in n(v)$, let $e_+$ and $e_-$ be the edges before ($e_-$) and after ($e_+$) edge $e$ in the cyclic ordering of the corolla at $v$, induced by the orientation of $M_k$. 
\end{defi}
\begin{rem}
Note that such an orientation of a graph is compatible with a strict ordering of edges: the ordering of vertices orders the pair $s(e),t(e)$ lexicographically
(with $s(e)<t(e)$ say), while multiple edges having the same source and target are ordered by the orientation of the underlying Riemann surface.
\end{rem}
\subsection{Graph homology}
We will use various homologies on graphs, with corresponding boundary operators, and suitable variants to study the filtrations
of graphs by the number of ghost cycles and the number of internal 4-valent vertices. 
We start with standard graph homology for scalar graphs with 3- and 4-valent vertices.
\subsubsection{Graph homology $\tilde s$ (following Conant, Vogtmann \cite{CV})}
For an edge $e$ in a graph $\Gamma$, let $\Gamma_e$ be the graph where $e$ shrinks to zero length.
Its orientation is obtained as follows:
we permute vertex labels collecting  signs until the edge $e$ connects vertex 1, $s(e)=1$, to vertex 2, $t(e)=2$. Let $\sigma$ be the sign of the necessary permutations. Then 
we shrink edge $e$ and the so-obtained vertex is labelled 1. We inherit all remaining edge orientations and the ordering of vertices remains unchanged, with vertices $3,4,\ldots,|V^\Gamma|$ relabelled to $2,3,\ldots,|V^\Gamma|-1$. 
This defines an orientation of $\Gamma_e$. If $\sigma$ is negative, we change the orientation by en edge swap.

For an oriented  graph $\Gamma$, let 
\begin{equation*}
\tilde s\Gamma=\sum_{e\in E_I}\Gamma_e,
\end{equation*}
 be a sum of  graphs obtained by shrinking edge $e$ and assigning
the orientation as above.   Graph homology comes from the classical result
\begin{thm}(graph homology)
$\tilde s\circ \tilde s=0$.
\end{thm}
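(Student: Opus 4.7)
The plan is to show that $\tilde{s}^2 \Gamma = 0$ as an oriented graph by a termwise pairing argument. Expand
\[
\tilde{s}^2 \Gamma = \sum_{e \in E_I^\Gamma}\; \sum_{f \in E_I^{\Gamma_e}} (\Gamma_e)_f
\]
and reorganize this double sum over ordered pairs of distinct internal edges of $\Gamma$ as a sum over unordered pairs $\{e,f\}$:
\[
\tilde{s}^2\Gamma = \sum_{\{e,f\}} \bigl((\Gamma_e)_f + (\Gamma_f)_e\bigr).
\]
It then suffices to show each bracketed sum is zero in the oriented graph algebra, i.e.\ that $(\Gamma_e)_f$ and $(\Gamma_f)_e$ agree as unoriented graphs and come with opposite orientations.

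The first claim is essentially topological: shrinking two edges is the same operation regardless of the order, since the result only records which vertices are identified and which edges are removed. Subtleties occur when $e$ and $f$ are parallel (share both endpoints), because then one order of shrinking turns the other edge into a self-loop before the second shrinking; this case is absorbed by working modulo the tadpole ideal $I_{\mathrm{tad}}$, in which such contributions vanish on both sides of the pairing. For disjoint or vertex-sharing edges, the underlying graphs $(\Gamma_e)_f$ and $(\Gamma_f)_e$ coincide tautologically.

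The second claim is where the work lies, and this is where I expect the main obstacle: tracking the composite orientation sign defined recursively through the vertex-permutation sign $\sigma$ and the corrective edge swap. I would handle it by treating the two cases separately. If $e$ and $f$ are vertex-disjoint, a direct bookkeeping shows that the two shrinking orders lead to labelings of the two newly-amalgamated vertices in opposite orders: shrinking $e$ first places $\{s(e),t(e)\}$ at position $1$ and $\{s(f),t(f)\}$ at position $2$, while shrinking $f$ first reverses this, and the two labelings differ by a single transposition. The edge-swap corrections contribute equal signs in both orders, so the net signs are opposite. If $e$ and $f$ share exactly one vertex, one verifies that bringing both pairs of endpoints successively to positions $1,2$ requires permutations that again differ by an odd number of transpositions, while the edge-swap corrections cancel pairwise.

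Once these sign comparisons are carried out it only remains to observe that edges swept into tadpoles by an intermediate shrinking step are discarded symmetrically on both sides of the pairing, so no boundary contributions are left over. The real obstacle is therefore entirely combinatorial: organizing the case analysis for the sign computation cleanly. A slicker alternative that I would keep in mind is to redefine orientation data as an ordering of the set $V^\Gamma \sqcup E_I^\Gamma$ (following Conant--Vogtmann); in that framework, shrinking edge $e$ amounts to deleting $e$ from this ordered set, and the identity $\tilde{s}^2=0$ reduces to the standard sign cancellation for a Koszul-type differential on ordered sets, bypassing the ad hoc permutation bookkeeping.
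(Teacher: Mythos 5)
The paper itself offers no proof of this theorem: it is quoted as a classical result (attributed to Conant and Vogtmann), so there is no in-paper argument to compare against. Your proposal follows the standard classical route: expand $\tilde s^2\Gamma$ as a double sum, reorganize over unordered pairs $\{e,f\}$ of internal edges, observe that $(\Gamma_e)_f$ and $(\Gamma_f)_e$ coincide as unoriented graphs (with the parallel-edge case killed in the quotient by the tadpole ideal, consistent with the paper's conventions), and show the two induced orientations are opposite. This is the right decomposition and the right cancellation mechanism, and your handling of the degenerate cases is consistent with how the paper sets things up.

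The one place your write-up falls short of a proof is exactly where you flag it: the sign comparison. You assert that "a direct bookkeeping shows" and "one verifies" that the two shrinking orders produce orientations differing by an odd permutation, but against the paper's specific convention (permute vertices until $s(e)=1$, $t(e)=2$, record the sign $\sigma$, contract, relabel, and correct a negative $\sigma$ by an edge swap) this is a genuinely fiddly computation with separate cases for disjoint and vertex-sharing pairs, and it is the entire content of the theorem; nothing else in the argument can fail. Your closing remark is the right way to discharge it: redefining an orientation as an ordering of $V^\Gamma\sqcup E^\Gamma_I$ \`a la Conant--Vogtmann turns $\tilde s$ into a Koszul-type boundary on ordered sets, for which the sign cancellation is the usual $(-1)^{i+j}$ identity; one must then check once and for all that this orientation datum is equivalent to the paper's (ordering of vertices plus edge orientations), which is the content of the Conant--Vogtmann lemma the paper already quotes. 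With that substitution made explicit, the argument is complete; as written, the central cancellation is sketched rather than proved.
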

%\begin{defi}
%Ignoring orientations, we let $\chi_+(\Gamma):=\sum_{e\in E_I}\Gamma_e$.
%\end{defi}
%Note that $e^{\chi_+}(\Gamma)=\Gamma+\chi_+(\Gamma)+\frac{1}{2}\chi_+(\chi_+(\Gamma))+\cdots$ is a terminating well-defined sum of graphs.
\subsubsection{Graphs with marked vertices}
We can restrict graph homology to graphs with vertex valence bounded by $|n(v)|\leq 4$ by setting all terms which have 
vertices of valence higher than four  to zero in the image of $\tilde s$. 

For such graphs, let $V_3$ be the set of 3-valent vertices, $V_4$ be the set of 4-valent vertices, so $V^\Gamma=V_3\cup V_4$.

Let $W_4\subset V_4$ be a chosen subset. If a vertex $w\in W_4$, call it a marked vertex. A pair $(\Gamma, W_4)$ 
is a graph with marked vertices.

Let $u_w$ be the obvious map which removes the marking at vertex $w$, and $W_4(\Gamma)$ be the map which marks the vertices of $\Gamma$ which are in $W_4$. 
So $u_w (\Gamma,W_4)=(\Gamma, W_4-w)$. Note that an orientation of $\Gamma$
induces an ordering of the set $W_4$. Set $\sigma(w)=j$ if and only if $w\in W_4$ is in the $j$-th place in that ordering.

We extend graph homology with boundary $\tilde s$ to graphs with marked 4-valent vertices and boundary $\tilde S$ by setting 
\begin{equation}
\tilde S(\Gamma,W_4)=(s(\Gamma),W_4)+(-1)^{|V_4|}\sum_w (-1)^{\sigma(w)}(\Gamma,W_4-w).\label{S-homology} 
\end{equation}
\begin{prop}
$\tilde S$ is a differential:
$$\tilde S^2=0.$$
\end{prop}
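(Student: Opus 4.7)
The plan is to expand $\tilde S^2(\Gamma, W_4)$ directly from the defining formula $\tilde S = \tilde s + (-1)^{|V_4|}(\text{signed unmarking})$ and organize the output into four families of terms that either vanish individually or cancel in pairs: shrink-then-shrink (A), shrink-then-unmark (B), unmark-then-shrink (C), and unmark-then-unmark (D). First I would dispose of (A): it produces precisely $(\tilde s^2\Gamma, W_4)$, which is zero by the classical graph-homology identity $\tilde s \circ \tilde s = 0$ stated just above the proposition.

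Next I would show that families (B) and (C) cancel one another. Both produce terms of the form $(-1)^{\sigma(w)}(\tilde s\Gamma, W_4 - w)$ summed over $w\in W_4$. The key observation I would verify is that the index $\sigma(w)$ is unchanged by edge shrinking: the newly created $4$-vertex $v_e$ is not marked, and the vertex-relabelling convention in the definition of $\tilde s$ only moves $s(e),t(e)$ to positions $1,2$ and shifts the remaining vertex labels down by one without altering their relative order, so the induced ordering of $W_4\subset V_4$ is preserved. The only difference between (B) and (C) is then the prefactor $(-1)^{|V_4|}$ attached to the unmarking step: in (C) this is $(-1)^{|V_4(\Gamma)|}$, while in (B) it is $(-1)^{|V_4(\Gamma)|+1}$ because shrinking raised $|V_4|$ by one. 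The two signs being opposite, the families cancel.

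Finally I would handle family (D) by pairing the contributions coming from each unordered pair $\{w,w'\}\subset W_4$. Taking without loss of generality $\sigma(w)<\sigma(w')$ in the $W_4$-ordering, unmarking $w$ first places $w'$ at position $\sigma(w')-1$ in $W_4-w$, whereas unmarking $w'$ first leaves $w$ at position $\sigma(w)$ in $W_4-w'$. Both paths produce the same oriented graph $(\Gamma, W_4-\{w,w'\})$ with prefactor $(-1)^{2|V_4|}=1$, and the two sign combinations $(-1)^{\sigma(w)+\sigma(w')-1}$ and $(-1)^{\sigma(w')+\sigma(w)}$ differ exactly by a sign, so they cancel.

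The only delicate step, and the main obstacle, is the invariance of $\sigma(w)$ under $\tilde s$; everything else is a Koszul-type sign calculation driven by the $(-1)^{|V_4|}$ prefactor and the antisymmetry of $\sigma$. In carrying this out I would be careful that the ``edge swap'' applied when the permutation sign $\sigma$ (of the $\tilde s$-definition) is negative does not affect the ordering of $W_4$ either, since an edge swap only flips an edge orientation and leaves the vertex ordering untouched. With that compatibility in hand, the cancellations above give $\tilde S^2=0$ term by term.
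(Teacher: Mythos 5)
Your proof is correct and follows essentially the same route as the paper: the shrink--shrink part vanishes by $\tilde s^2=0$, the two cross families cancel via the $(-1)^{|V_4|}$ versus $(-1)^{|V_4|+1}$ prefactors, and the double-unmarking terms cancel pairwise through the shift of position in the ordering of $W_4$. You spell out two details the paper leaves implicit --- the invariance of $\sigma(w)$ under the relabelling in $\tilde s$ and the explicit sign bookkeeping $(-1)^{\sigma(w)+\sigma(w')-1}$ versus $(-1)^{\sigma(w)+\sigma(w')}$ --- but the argument is the same.
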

\begin{proof}
We compute:
\begin{eqnarray*}
\tilde S^2(\Gamma,W_4)&=&
 \tilde S \left( (\tilde s(\Gamma),W_4)+(-1)^{|V_4|}\sum_w (-1)^{\sigma(w)}(\Gamma,W_4-w)\right)   \nonumber\\
 & = & (-1)^{|V_4|+1}\sum_w (-1)^{ \sigma(w)}(\tilde s(\Gamma),W_4-w)+(-1)^{|V_4|}\sum_w (-1)^\sigma(w)(\tilde s(\Gamma),W_4-w)\nonumber\\
 & & +(-1)^{|V_4|}\sum_{w\in W_4} (-1)^{\sigma(w)}\sum_{u\in W_4-w}(-1)^{\sigma(u)}(\tilde s(\Gamma),W_4-w-u)\\
  & = & 0,\nonumber
\end{eqnarray*}
where the last line vanishes due to the ordering of vertices in $W_4$, which makes sure that each pair $(w,u)$ appears twice with a relative sign.
\end{proof}
We will soon see (Prop.(\ref{Shom})) that the sum over all connected graphs is in the kernel of $S$, the gauge-theory equivalent of $\tilde S$.

\begin{rem}
While the above operators $\tilde s,\tilde S$ were defined on scalar graphs, we also have variants $s$, $S$ for gauge theory graphs where we have internal gauge boson or ghost propagators or fermion propagators. There are obvious restrictions then to shrink only edges which connect two 3-gluon vertices. The detailed homology operations available in such circumstances are exhibited in Section \ref{gtgr}.
\end{rem}

We now make graphs into a Hopf algebra.

\subsection{Algebra of graphs}
As we said before, we consider graphs as generators of a free commutative $\mathbb{Q}$-algebra of graphs $H$. 
We write $\One$ for the unit represented by the empty set, with disjoint union of graphs furnishing the product.

\begin{defi}
The number of external edges $n_E:=|E^\Gamma_E|$ assigns the weight 
\begin{equation*}
\omega^\Gamma:=4-n_E=|V_3^\Gamma|-2|E^\Gamma_I|+4|\Gamma|,
\end{equation*}
 to a graph.
\end{defi}

A graph has positive valuation if $\omega^\Gamma\geq 0$. 
A graph has $\iota$-valuation if $\omega^\Gamma\geq\iota$.
Note that the valuation of a graph is invariant under shrinking edges.

The most obvious Hopf algebra structure is given by a co-product  based on subgraphs of non-negative weights:
\begin{equation*}
\Delta\Gamma=\Gamma\otimes\One+\One\otimes\Gamma+\sum_{\emptyset\not=\gamma=\prod\gamma_i,\omega^{\gamma_i}\geq 0}\gamma\otimes \Gamma/\gamma,
\end{equation*} 
is a coproduct, for a connected commutative Hopf algebra with unit $\One$ and the span of all non-trivial graphs as augmentation ideal, as usual \cite{Kr}.
\begin{rem}
Note that we assign to a graph $\gamma$ the powercounting weight $\omega^\Gamma$ of four dimensional gauge theory: a graph with four external half-edges is logarithmically divergent, a graph with three external half-edges is linear divergent, and a graph with two external half-edges is quadratically divergent.
\end{rem}

\begin{defi}\label{forest}
Let $f:=\{\gamma_i\}$ be a subset of proper positive valued 1PI subgraphs $\gamma_i\subset\Gamma$ such that any two elements $\gamma_i,\gamma_j$
of $f$ fulfill: $\gamma_i\cap\gamma_j=\emptyset$, or $\gamma_i\subset\gamma_j$ or $\gamma_j\subset\gamma_i$.Then, $f$ is called a forest. 
\end{defi}

\begin{defi}
It is maximal, if and only if $\Gamma/f$ contains no positive valued proper subgraph. It is complete, if it contains all positive valued proper subgraphs of all its elements.
\end{defi}

We denote by $|f|$  the number of elements of $f$ and by $\mathcal{F}^\Gamma$ the set of all forests of $\Gamma$.
For a union of graphs $\gamma=\bigcup_i\gamma_i$ we say it has $\iota$ valuation if all its components have.
There are Hopf algebras for any $\iota$-valuation:
\begin{equation*}
\Delta_\iota(\Gamma)=\Gamma\otimes\One+\One\otimes\Gamma+\sum_{\gamma,\omega(\gamma)\geq\iota}\gamma\otimes\Gamma/\gamma.
\end{equation*}
In particular the antipode for positive valued graphs can be written as
\begin{equation*}
S(\Gamma)=-\sum_{f\in\mathcal{F}^\Gamma}(-1)^{|f|}f\times \Gamma/f,
\end{equation*}
where the sum includes the empty set. 

If the number of external edges of a subgraph $\gamma$ is greater than two, $|\gamma^{[1]}_E|>2$, then $\gamma$ shrinks to a
vertex in $\Gamma/\gamma$. If it equals two, the two external edges are identified to a single edge in $\Gamma/\gamma$.\footnote{If we were to have massive particles,
we had to blow up notation slightly.}

These Hopf algebras on scalar graphs straightforwardly generalize to gauge theory graphs and in particular act on the sum of all graphs contributing to a given amplitude ---the combinatorial Green function--- filtered by the number of 4-valent vertices and the number of ghost loops. 

From now on, we demand that orientations of a graph $\Gamma$ are such that for any proper subgraph $\gamma\subset\Gamma$ with two external edges $e,f$, $|\gamma^{[1]}_E|=2$, the edges $e,f$ form a consistently oriented edge in $\Gamma/\gamma$. Then, the orientation of $\Gamma$ determines the orientations of 
$f$ and $\Gamma/f$, for all forests $f$. 
\begin{defi}
\label{epm}
We call $e^\gamma_+$ the edge pointing towards $\gamma$,  $e^\gamma_-$ the edge pointing away from it.
\end{defi}
\begin{lem}\label{quadrsub}
Let $\gamma_1,\gamma_2\subset\Gamma$ be two proper propagator subgraphs, $|\gamma^{[1]}_{i,E}|=2$.
Then, either: $\gamma_1\subset\gamma_2$ or $\gamma_2\subset\gamma_1$ or $\gamma_1\cap\gamma_2=\emptyset$. 
\end{lem}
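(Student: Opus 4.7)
The plan is to argue by contradiction. Suppose $\gamma_1\cap\gamma_2\neq\emptyset$ but neither $\gamma_1\subset\gamma_2$ nor $\gamma_2\subset\gamma_1$. I would introduce the vertex partition
\[
V^\Gamma = V_{12}\sqcup V_1\sqcup V_2\sqcup V_0,
\]
with $V_{12}=V^{\gamma_1}\cap V^{\gamma_2}$, $V_i=V^{\gamma_i}\setminus V^{\gamma_{3-i}}$ for $i=1,2$, and $V_0=V^\Gamma\setminus(V^{\gamma_1}\cup V^{\gamma_2})$. Under the standing hypotheses the three sets $V_{12}$, $V_1$, $V_2$ are all non-empty: $V_1$ and $V_2$ because the inclusions fail, and $V_{12}$ because any edge shared by $\gamma_1$ and $\gamma_2$ contributes both of its endpoints to $V_{12}$.

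The key step is a 1PI (2-edge-connected) argument for $\gamma_1$ applied to the bipartition $V^{\gamma_1}=V_{12}\sqcup V_1$: if only a single edge of $E^{\gamma_1}$ joined the two parts, removing it would disconnect $\gamma_1$, violating 1PI. Hence at least two edges of $\gamma_1$ run between $V_{12}$ and $V_1$. Each such edge has one end in $V^{\gamma_2}$ (namely in $V_{12}$) and one end outside $V^{\gamma_2}$ (namely in $V_1$), so it is external to $\gamma_2$. Since $|E^{\gamma_2}_E|=2$, these two edges already saturate $E^{\gamma_2}_E$, and every remaining possible contribution to $E^{\gamma_2}_E$ must vanish: no internal edge of $\Gamma$ runs from $V^{\gamma_2}$ to $V_0$, no internal edge of $\Gamma$ runs between $V_2$ and $V_1$, and no external half-edge of $\Gamma$ is incident to $V^{\gamma_2}$. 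Running the symmetric argument on $\gamma_2$ with the bipartition $V^{\gamma_2}=V_{12}\sqcup V_2$ yields the mirror conclusions.

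Combining these, the set $V^{\gamma_1}\cup V^{\gamma_2}$ has no edges of $\Gamma$ leaving it to $V_0$ and no external half-edges of $\Gamma$ attached to it. Connectedness of $\Gamma$ then forces $V_0=\emptyset$, and moreover $\Gamma$ carries no external legs at all, i.e.\ $\Gamma$ is a vacuum graph, contradicting the standing assumption that $\Gamma$ contributes to a non-trivial amplitude. The main obstacle is the 1PI lower bound in the middle step: it hinges on $V_{12}$ and $V_1$ both being genuinely non-empty inside $\gamma_1$, which is exactly where the two failure-of-inclusion hypotheses and the non-trivial intersection hypothesis are used. Once that is in hand, the remainder is bookkeeping forced by the very tight constraint $|E^{\gamma_i}_E|=2$.
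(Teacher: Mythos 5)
The paper does not actually prove this lemma; it simply cites Lemma~51 of \cite{BrKr}. Your argument is a self-contained version of what is essentially the standard edge-cut counting proof of that cited result, and its core is sound: the 1PI bound of two crossing edges on the cut $V^{\gamma_1}=V_{12}\sqcup V_1$, the observation that each such edge is external to $\gamma_2$ and therefore saturates $|E^{\gamma_2}_E|=2$, the symmetric statement for $\gamma_2$, and the final contradiction with connectedness and the existence of external legs of $\Gamma$ all go through. What it buys over the paper is a proof readable without leaving the present framework of half-edges and corollas.

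There is, however, one step that does not hold as you state it: the claim that $V_1$ and $V_2$ are non-empty \emph{because} the inclusions fail. Failure of $\gamma_1\subset\gamma_2$ only gives an edge $e\in\gamma_1\setminus\gamma_2$; both endpoints of $e$ could a priori lie in $V^{\gamma_2}$, in which case $V_1=\emptyset$ and your cut argument has nothing to bite on. This is not a phantom worry: for the theta graph (two vertices joined by three edges $e_1,e_2,e_3$) the subgraphs $\gamma_1=\{e_1,e_2\}$ and $\gamma_2=\{e_2,e_3\}$ are overlapping 1PI propagator subgraphs with $V_1=V_2=\emptyset$, so the lemma itself fails there --- the theta graph is a vacuum graph, which is excluded, but this shows the non-vacuum hypothesis is already needed to get $V_1\neq\emptyset$, not only at the end of your proof. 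The repair is short and uses exactly your saturation mechanism: if $e\in\gamma_1\setminus\gamma_2$ had both endpoints in $V^{\gamma_2}$, it would contribute two external half-edges to $\gamma_2$, exhausting $|E^{\gamma_2}_E|=2$; then no other edge or external leg of $\Gamma$ meets $V^{\gamma_2}$, and connectedness forces $\Gamma=\gamma_2\cup\{e\}$ to be a vacuum graph, a contradiction. With that sentence inserted, your proof is complete.
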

\begin{proof}
This follows from \cite[Lemma 51]{BrKr}.
\end{proof}

\section{Feynman rules for scalar integrands}
First, some general remarks. For $n$ external edges and $l$ loops, an overall factor $c=(-i)^{n+3(l-1)}g^{n+2(l-1)}$ is not explicitly given below.
All momentum integrals are understood in Euclidean space, all parametric integrals over the real simplex $\sigma:\,\{A_e>0\}$, with boundary
$\prod_{e\in\Gamma^{[1]}}A_e=0$. A pairing between a parametric integrand and a simplex as in 
\[
\Phi(\Gamma) = \int_\sigma\mathrm d\underline A_{\Gamma^{[1]}}\ I_\Gamma,
\]
means just that: the pair of the simplex and a form, and is to be regarded as an honest integral only when the integrand is replaced by its suitably renormalized form
as defined below,
typically indicated by a superscript ${}^R$, 
so that the integral actually exists. We often simply write $\int$ for $\int_\sigma$.

Let $\Gamma$ be a 3-regular graph as defined in Definition \ref{defn:j-regular}.
In order to define the momenta, choose an orientation on $\Gamma$, which we represent by $\epsilon$ (the incidence matrix):
\begin{samepage}
\begin{equation*} \epsilon_{ve} = \begin{cases}
1 & \text{if the vertex $v$ is the endpoint of the edge $e$,} \\
-1 & \text{if the vertex $v$ is the starting point of the edge $e$,} \\
0 & \text{if $e$ is not incident on $v$.}
\end{cases}\end{equation*}
%\footnotetext{N.B.: minus sign w.r.t.\ I\&Z, \S\,6-2-3!}
\end{samepage}

Furthermore, choose a basis of loops  $L\subset\mathcal{C}^\Gamma$ of $l=:|\Gamma|$ independent loops of $\Gamma$ and for each $\ell\in L$ an orientation given by $\epsilon^\ell$.\footnote{$\epsilon^\ell$ is such that $\epsilon^\ell_{ve_1}=-\epsilon^\ell_{ve2}$, where $e_1$ and $e_2$ are the two edges adjacent to $v$, which are inside $\ell$.}

The Feynman amplitude of $\Gamma$ is\footnote{$d$ is the dimension of spacetime, and can be safely set to four as later on we will renormalize the parametric integrand before integrating. If the reader wishes to integrate first, $d$ can be used as a regulator, which is mathematically questionable though \cite{KrP}
when combined with minimal subtraction despite its popularity in physics.}
\begin{equation*}
\Phi(\Gamma) =\mathbf{Q}^E_\Gamma \int\frac{\mathrm d\underline k_L}{(2\pi)^{dl}} \bigg(\prod_{e\in\Gamma^{[1]}}\frac1{\xi_e^{\prime2}}\bigg),
\end{equation*}
where
%\begin{equation} c = (-i)^{n+3(l-1)}g^{n+2(l-1)} \end{equation}
%(we used that $\#\Gamma^{[0]} = n+2(l-1)$ and $\#\Gamma^{[1]}_{\rm int} = n+3(l-1)$),
\begin{equation*} \int\frac{\mathrm d\underline k_L}{(2\pi)^{dl}} := \prod_{\ell\in L} \int_{\mathbb{R}^4}\frac{\mathrm d^dk_\ell}{(2\pi)^d}, \end{equation*}and
\begin{equation*} \epsilon_{ve}\xi'_e := \epsilon_{ve}\xi_e + \sum_{\substack{\ell\in L\\ \ell^{[1]}\ni e}} \epsilon^\ell_{ve} k_\ell. \end{equation*}
Note that we also include the external momenta, which is useful for our purposes. This gives just an overall factor above,
\[
\mathbf{Q}^E_\Gamma=\prod_{e\in\Gamma^{[1]}_{\rm ext}}\xi_e^2.
\]
We define 
\[
-\sum_{e\in\Gamma^{[1]},e\,\mathrm{adj.\, to}\,v}\epsilon_{ve}\xi_e=:\xi_v,
\] 
to be the external momentum for every vertex $v$ of our graph.
Note that if $v$ is a vertex to which an external edge $e$ is adjacent, $\xi_v$ does not equal $\xi_e$. 

In Schwinger parametric form, the Feynman amplitude for generic $\xi_e$  is
\begin{equation*}\label{param}
\Phi(\Gamma) = \int\mathrm d\underline A_{\Gamma^{[1]}}\ I_\Gamma,
\end{equation*}
where the integration is a short-hand notation for
\begin{equation*} \int\mathrm d\underline A_{\Gamma^{[1]}} = \prod_{e\in \Gamma^{[1]}} \int_0^\infty\mathrm d\!A_e
=\int_\sigma d\!A_e, \end{equation*}
and the integrand is obtained from integrating the universal quadric (\cite{BEK}) 
\begin{equation*}
Q_\Gamma:=\sum_{e\in\Gamma^{[1]}} A_e\xi_e^{\prime2},\label{uquadric}
\end{equation*}
so that analytically, we study 
the expression
\[
I_\Gamma:=\mathbf{Q}_\Gamma^E \int\frac{\mathrm d\underline k_L}{(2\pi)^{dl}}\prod_{e}e^{-Q_\Gamma}\prod_{v}\delta^4\left(\sum\epsilon_{ev}k(e)\right),
\]
which corresponds to a graph where at each vertex $v$, an external momentum $\xi_v$ is entering, and which can be written in the form of Eq.\eqref{scalint}, by use of the graph polynomials, to which we now turn.
\subsection{The first Kirchhoff polynomial $\psi_\Gamma$}
For the first Kirchhoff polynomial consider the short exact sequence
\begin{equation}
0\to H^1\to\mathbb{Q}^E\overbrace{\to}^{\partial}\mathbb{Q}^{V,0}\to0.
\end{equation}
Here, $H^1$ is provided by a chosen basis for the algebraically independent loops of a graph $\Gamma$.
 $E=|E^\Gamma|$ is the number of edges and 
$V=|V^\Gamma|$ the number of vertices, so $\mathbb{Q}^E$ is an $E$-dimensional $\mathbb{Q}$-vectorspace generated by the edges, similar $\mathbb{Q}^{V,0}$
for the vertices with a side constraint setting the sum of all vertices to zero.

Consider the matrix (see \cite{BEK,BlKr})
\begin{equation*}
N_0\equiv (N_0)_{ij}=\sum_{e\in l_i\cap l_j}A_e,
\end{equation*}
for $l_i,l_j\in H^1$.

Define the {\it first Kirchhoff polynomial} as the determinant
\begin{equation*}
\psi_\Gamma:=|N_0|.
\end{equation*}
\begin{prop}(\cite[Prop.2.2]{BEK})
The first Kirchhoff polynomial can be written as
$$\psi_\Gamma=\sum_{T}\prod_{e\not\in T}A_e$$
where the sum on the right is over spanning trees $T$ of $\Gamma$.
\end{prop}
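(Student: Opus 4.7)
My plan is to prove this by expressing the matrix $N_0$ as a product involving the cycle-incidence matrix of the chosen loop basis, applying the Cauchy-Binet formula, and then identifying which $h$-element sets of edges yield a nonzero contribution.

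First, let $h = |H^1|$ be the first Betti number and introduce the cycle matrix $C$ of size $h \times |E^\Gamma|$ defined by $C_{\ell e} = \epsilon^\ell_e$, where $\epsilon^\ell_e \in \{-1,0,+1\}$ records the signed appearance of edge $e$ in the chosen basis cycle $\ell \in H^1$. Let $A = \mathrm{diag}(A_e)$ be the diagonal matrix of edge parameters. A direct computation gives
\[
(C A C^T)_{ij} = \sum_{e} \epsilon^{l_i}_e \epsilon^{l_j}_e A_e = \sum_{e \in l_i \cap l_j} A_e = (N_0)_{ij},
\]
since $(\epsilon^{l_i}_e \epsilon^{l_j}_e)^2 = 1$ exactly when $e$ lies in both loops. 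So $N_0 = C A C^T$.

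Next I would apply the Cauchy-Binet formula to the $h \times h$ product $N_0 = C \cdot (A C^T)$, obtaining
\[
\psi_\Gamma = \det(N_0) = \sum_{\substack{S \subset E^\Gamma \\ |S| = h}} \bigl(\det C_{\cdot S}\bigr)^2 \prod_{e \in S} A_e,
\]
where $C_{\cdot S}$ is the $h \times h$ submatrix obtained by restricting $C$ to the columns indexed by $S$. Since $|S| = h = |E^\Gamma| - |V^\Gamma| + 1$, the complement $T = E^\Gamma \smallsetminus S$ has exactly $|V^\Gamma| - 1$ edges, the size of a spanning tree.

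The central step is the following lemma: $\det C_{\cdot S} \in \{-1, 0, +1\}$, and it is nonzero iff $T = E^\Gamma \smallsetminus S$ is a spanning tree of $\Gamma$. To prove it, view the rows of $C$ as an integral basis of the cycle space $H^1 \subset \mathbb{Q}^{E^\Gamma}$, i.e.\ of $\ker\partial$. The submatrix $C_{\cdot S}$ is invertible precisely when the projection $H^1 \to \mathbb{Q}^S$ onto the coordinates indexed by $S$ is an isomorphism, i.e.\ when no nonzero cycle is supported on $T$, equivalently when $T$ contains no cycle; together with $|T| = |V^\Gamma|-1$ and connectedness of $\Gamma$, this is exactly the spanning-tree condition. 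When $T$ is a spanning tree, the fundamental cycles $\{C_e^T : e \in S\}$ furnish another integral basis of $H^1$ whose restriction to $S$ is the identity (up to signs), so $|\det|$ of the change-of-basis matrix equals $1$, hence $\det C_{\cdot S} = \pm 1$.

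Substituting the lemma into the Cauchy-Binet expansion, each spanning tree $T$ contributes $\prod_{e \notin T} A_e$ exactly once, yielding the claimed monomial expansion. The main obstacle is the lemma: the Cauchy-Binet step is mechanical, but pinning down the matroidal statement that $\det C_{\cdot S}$ is a $\{0,\pm 1\}$-valued cocircuit indicator requires either the fundamental-cycle base change argument sketched above or an appeal to total unimodularity of the cycle matrix.
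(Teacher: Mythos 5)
The paper itself offers no proof of this proposition --- it is imported verbatim from \cite[Prop.~2.2]{BEK} --- and your Cauchy--Binet argument is, in substance, the proof given in that reference (there it is phrased as a statement about the configuration $H^1\hookrightarrow\mathbb{Q}^{E}$ equipped with the quadratic form $\sum_e A_e e^2$, but the computation is identical), so there is nothing to contrast with in the paper proper. Your proof is correct, modulo two conventions that you silently impose and that the paper leaves ambiguous. First, $CAC^T$ has entries $\sum_e\epsilon^{l_i}_e\epsilon^{l_j}_eA_e$, which coincides with the paper's unsigned $\sum_{e\in l_i\cap l_j}A_e$ only when shared edges are traversed coherently by the two loops; your parenthetical ``$(\epsilon^{l_i}_e\epsilon^{l_j}_e)^2=1$'' identifies the support but not the sign of the coefficient. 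The signed matrix is the one whose determinant is well defined and equals $\psi_\Gamma$, so you have chosen correctly, but you should say so. Second, and more importantly, your unimodularity lemma requires the rows of $C$ to be a $\mathbb{Z}$-basis of $H_1(\Gamma,\mathbb{Z})$, not merely $|\Gamma|$ linearly independent circuits as the paper's phrase ``a chosen basis for the algebraically independent loops'' would allow: the three $4$-cycles of $K_4$ are independent circuits but span an index-$2$ sublattice of $H_1(K_4,\mathbb{Z})$, so for that choice one gets $\det N_0=4\,\psi_\Gamma$. Taking $L$ to be the fundamental cycles of a spanning tree --- which your own base-change argument already invokes --- guarantees integrality and closes the proof; it would strengthen the write-up to state this hypothesis explicitly rather than asserting that the chosen rows ``are'' an integral basis.
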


\subsection{The second Kirchhoff polynomial $\phi_\Gamma$ and $|N|_{\mathrm{Pf}}$}
Let $\sigma^i$, $i\in 1,2,3$ be the three Pauli matrices, and $\sigma^0=\One_{2\times 2}$ the unit matrix.

For the second Kirchhoff polynomial,
augment the matrix $N_0$ to a new matrix $N$ in the following way:
\begin{enumerate}
\item Assign to each edge $e$ a quaternion \[
\mathbf{q}_e:=q_0\sigma^0-i\sum_{j=1}^3 q_j\sigma^j,\]
so that  $\xi_e^2\One_{2\times 2}=\mathbf{q}_e\bar{\mathbf{q}}_e$,
 and to the loop $l_i$,
the quaternion \[
u_i=\sum_{e\in l_i} A_e q_e.\]\\
\item Consider the column vector $u=(u_i)$ and the conjugated transposed row vector 
$\bar{u}$. Augment $u$ as the rightmost column vector to $M$, and $\bar{u}$ as the bottom row vector.
\item Add a new diagonal entry at the bottom right  $ \sum_e q_e\bar{q}_e A_e$.
\end{enumerate}

Note that by momentum conservation, to each vertex, we assign a momentum $\xi_v$.
and a corresponding quaternion $\mathbf{q}_v$.
\begin{rem}
Note that we use that we work in four dimensions of space-time, by rewriting the momentum four-vectors in a quaternionic basis.
\end{rem}
The matrix $N$ has a well-defined Pfaffian determinant (see \cite{BlKr})
with a remarkable form obtained for generic $\xi_e$ and hence generic $\xi_v$:
\begin{lem}(\cite[Eq.3.12]{BlKr})\label{gensym}
$$|N|_{\mathrm{Pf}}=-\sum_{T_1\cup T_2}\left(\sum_{e\not\in T_1\cup T_2} \tau(e)\xi_e\right)^2\prod_{e\not\in T_1\cup T_2}A_e,$$
where $\tau(e)$ is $+1$ if $e$ is oriented from $T_1$ to $T_2$ and $-1$ else.
\end{lem}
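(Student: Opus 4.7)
My plan is to reduce the Pfaffian to a sum of determinants of minors of $N_0$ via a block expansion, then feed each piece through a matrix-tree--type theorem, and finally repackage the quaternionic sums as the squared linear forms appearing in the statement.

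\textbf{Step 1: Block structure.} The matrix $N$ has the form
\[
N=\begin{pmatrix} N_0 & u\\ \bar u & \sum_e \mathbf{q}_e\bar{\mathbf{q}}_e\,A_e\end{pmatrix},
\]
where $N_0$ is the symmetric loop matrix. Following the definition of the quaternionic Pfaffian in \cite{BlKr}, the first thing I would do is expand $|N|_{\mathrm{Pf}}$ along the last row and column. Because the only non-commuting entries live in the last row/column and because $\mathbf{q}_e\bar{\mathbf{q}}_e=\xi_e^2$ is a real scalar, this produces an identity of the schematic form
\[
|N|_{\mathrm{Pf}}=\Bigl(\sum_e \xi_e^2 A_e\Bigr)\psi_\Gamma - \bar u^{\,T}\,\mathrm{adj}(N_0)\,u,
\]
where $\mathrm{adj}(N_0)$ is the classical adjugate (matrix of signed cofactors). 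Verifying this block identity is routine linear algebra, but I would take some care because the entries of $u,\bar u$ are quaternionic; the product $\bar u\,\mathrm{adj}(N_0)\,u$ has to be interpreted with the ordering inherited from the Pfaffian expansion so that $\mathbf{q}_e$ and $\bar{\mathbf{q}}_f$ pair up in the correct order.

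\textbf{Step 2: Matrix-tree expansions.} Now I would substitute Kirchhoff's theorem into each piece. For the first term, the classical matrix-tree theorem (already used above for $\psi_\Gamma = |N_0|$) gives $\psi_\Gamma=\sum_T\prod_{e\notin T} A_e$, so the first term is
\[
\sum_e \xi_e^2 A_e \sum_{T}\prod_{e'\notin T}A_{e'}.
\]
For the second term, the all-minors (or "Chaiken") matrix-tree theorem expresses $\mathrm{adj}(N_0)_{ij}$ as a signed sum over spanning 2-forests $T_1\cup T_2$ of $\Gamma$, the sign being controlled by the orientations of the cycles $l_i,l_j$ with respect to the bipartition of the vertices induced by the 2-forest. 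Combining this with $u_i=\sum_{e\in l_i}A_e\mathbf{q}_e$ turns $\bar u^{\,T}\mathrm{adj}(N_0) u$ into a double sum indexed by pairs $(e,f)$ of non-tree edges, weighted by the corresponding 2-forest monomial.

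\textbf{Step 3: Repackaging as squares.} At this point both terms are sums over 2-forests $T_1\cup T_2$ of quaternionic polynomials $\sum_{e,f\notin T_1\cup T_2} c_{ef}\,\mathbf{q}_e\bar{\mathbf{q}}_f$ multiplied by $\prod_{e\notin T_1\cup T_2}A_e$ (after absorbing a spanning tree as a degenerate 2-forest into the first term). Here I would check that, for a fixed 2-forest, the coefficients $c_{ef}$ are exactly $\tau(e)\tau(f)$, where $\tau$ is the sign attached to the orientation of an edge from $T_1$ to $T_2$. Once this identification is made, the sum collapses to
\[
\sum_{e,f\notin T_1\cup T_2}\tau(e)\tau(f)\,\mathbf{q}_e\bar{\mathbf{q}}_f=\Bigl(\sum_{e\notin T_1\cup T_2}\tau(e)\mathbf{q}_e\Bigr)\Bigl(\sum_{f\notin T_1\cup T_2}\tau(f)\bar{\mathbf{q}}_f\Bigr),
\]
and the quaternion identity $\mathbf{q}\bar{\mathbf{q}}=|\mathbf{q}|^2$ (applied to the real vector obtained after summation) gives the squared real number $\bigl(\sum \tau(e)\xi_e\bigr)^2$. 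The overall sign $-1$ in the lemma emerges from the signature conventions of the block expansion and the signs in Chaiken's formula.

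\textbf{Main obstacle.} The routine matrix-tree part is not the difficulty; the hard part will be the bookkeeping of signs and of quaternionic ordering simultaneously: showing that the cofactor signs coming from the all-minors theorem coincide with $\tau(e)\tau(f)$ attached to 2-forest orientations, and that non-commutativity of the $\mathbf{q}_e$ drops out precisely because the pairs $(e,f)$ enter symmetrically once summed over both orderings, so that only $\mathbf{q}_e\bar{\mathbf{q}}_f+\mathbf{q}_f\bar{\mathbf{q}}_e=2\xi_e\!\cdot\!\xi_f$ (a real scalar) survives. Once this is reconciled, the identification with the 2-forest formula is immediate, and we obtain the stated expression for $|N|_{\mathrm{Pf}}$.
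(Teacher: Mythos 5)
The paper does not actually prove this lemma: its proof is the single line ``See \cite{BlKr}'', deferring entirely to Eq.~(3.12) of that reference. So there is no in-paper argument to measure yours against; what you have written is, in outline, the expected route (and the one taken in the cited source): border the loop matrix, expand the quaternionic determinant as $\bigl(\sum_e\xi_e^2A_e\bigr)\psi_\Gamma-\bar u\,\mathrm{adj}(N_0)\,u$, feed $\psi_\Gamma$ and the cofactors of $N_0$ through the matrix--tree and all-minors matrix--tree theorems, and complete the square. Your observation that non-commutativity is harmless because $\mathrm{adj}(N_0)$ is real symmetric, so that only the combinations $\mathbf q_e\bar{\mathbf q}_f+\mathbf q_f\bar{\mathbf q}_e=2\,(\xi_e\cdot\xi_f)\,\One$ survive, is exactly the right point and is where the quaternionic formalism earns its keep.

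As a proof, however, your text is a plan rather than an argument: both load-bearing steps are asserted, not established. First, the identification of the all-minors cofactor signs with $\tau(e)\tau(f)$ for a fixed spanning 2-forest \emph{is} the content of the lemma, and you explicitly defer it to ``bookkeeping''. Second, your remark about ``absorbing a spanning tree as a degenerate 2-forest'' only accounts for the contributions of $\bigl(\sum_e\xi_e^2A_e\bigr)\psi_\Gamma$ with $e\in T$, where $T\smallsetminus\{e\}$ is indeed a 2-forest; the contributions with $e\notin T$ produce monomials quadratic in $A_e$, and these must cancel against the diagonal $e=f$ terms of $\bar u\,\mathrm{adj}(N_0)\,u$ before anything can be organized by 2-forests at all. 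That cancellation is nowhere addressed and is not automatic. Neither point is a wrong turn --- both are carried out in \cite{BlKr} --- but until they are done the squared linear form on the right-hand side has not been produced.
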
 
\begin{proof}
See \cite{BlKr} .
\end{proof}
Note that $|N|_{\mathrm{Pf}}=|N|_{\mathrm{Pf}}(\{\xi_v\})$ is a function of all $\xi_v$, $v\in \Gamma^{[0]}$.
It gives the second Symanzik polynomial upon setting the $\xi_e$ in accordance with the external momenta:
\[
Q: \xi_e\to\xi_e+q_e,
\] 
and setting $\xi_e=0$ for all edges $e$ afterwards.
\begin{prop}
$$\phi_\Gamma\One_{2\times 2}=Q\left(|N|_{\mathrm{Pf}}\right)_{|\xi_e=0}=-\sum_{T_1\cup T_2}Q(T_1) \bar{Q}(T_1)\prod_{e\not\in T_1\cup T_2}A_2,$$
where the sum is over spanning two-forests $T_1\cup T_2$, so $T_1$, $T_2$ are two disjoint non-empty trees which together cover
all vertices of $\Gamma$. Here, $Q(T_i)=\sum_{v\in T_i^0}\mathbf{q}_v$, $v$ a vertex of $T_i$,  the momentum $q_v$
incoming at that vertex expressed in the quaternionic basis.
\end{prop}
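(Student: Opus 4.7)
The plan is to apply the substitution $Q:\xi_e\mapsto\xi_e+q_e$ followed by setting $\xi_e=0$ directly to the formula given by Lemma~\ref{gensym}. Using the quaternionic identity $\xi_e^2\One=\mathbf{q}_e^\xi\bar{\mathbf{q}}_e^\xi$ edge by edge, the scalar square $(\sum_{e\not\in F}\tau(e)\xi_e)^2\One$ lifts to the matrix product $(\sum\tau(e)\mathbf{q}_e^\xi)(\sum\tau(e)\bar{\mathbf{q}}_e^\xi)$. The substitution $Q$ shifts $\mathbf{q}_e^\xi\mapsto\mathbf{q}_e^\xi+\mathbf{q}_e$, and afterwards setting $\xi_e=0$ leaves
\[
Q(|N|_{\mathrm{Pf}})_{|\xi_e=0}=-\sum_{T_1\cup T_2}\Bigl(\sum_{e\not\in T_1\cup T_2}\tau(e)\mathbf{q}_e\Bigr)\Bigl(\sum_{e\not\in T_1\cup T_2}\tau(e)\bar{\mathbf{q}}_e\Bigr)\prod_{e\not\in T_1\cup T_2}A_e.
\]

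The key combinatorial step is then to identify, for each spanning 2-forest $F=T_1\cup T_2$, the signed edge sum with the external momentum of $T_1$:
\[
\sum_{e\not\in F}\tau(e)\mathbf{q}_e\;=\;\sum_{v\in T_1}\mathbf{q}_v\;=\;Q(T_1).
\]
This will follow by summing momentum conservation $\mathbf{q}_v=-\sum_{e\ni v}\epsilon_{ve}\mathbf{q}_e$ over $v\in T_1$: edges with both endpoints in $T_1$ drop out because $\epsilon_{v_1,e}+\epsilon_{v_2,e}=0$ by antisymmetry of the incidence matrix, whereas each cut edge between $T_1$ and $T_2$ contributes $-\epsilon_{v_1,e}\mathbf{q}_e$ with the endpoint $v_1\in T_1$, which matches $\tau(e)$ under the orientation convention of Lemma~\ref{gensym} ($+1$ for $T_1\to T_2$, $-1$ for $T_2\to T_1$). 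Once this identification is in hand the proposition reads off: $Q(T_1)\bar Q(T_1)$ is the quaternionic norm-squared of $Q(T_1)$, i.e.\ a scalar multiple of $\One_{2\times 2}$, so the left-hand side is of the form $\phi_\Gamma\One_{2\times 2}$ with $\phi_\Gamma$ the usual second Kirchhoff polynomial.

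The main obstacle is the careful treatment of chord edges of $F$, i.e.\ non-forest edges with both endpoints inside the same $T_i$. Under the verbatim sign rule of Lemma~\ref{gensym} these carry $\tau=-1$, yet they contribute nothing to the vertex sum over $T_1$. Reconciling this either amounts to showing that chord contributions cancel once the momentum-routing relations $\mathbf{q}_v=-\sum_e\epsilon_{ve}\mathbf{q}_e$ are globally imposed, or to reading $\tau$ as vanishing on chord edges after the substitution by $q_e$. Beyond this check, the remainder of the proof is the bookkeeping of signs arising from the orientation of $\Gamma$, the orientation induced on the cut between $T_1$ and $T_2$, and the quaternionic complex conjugation $\bar{\mathbf{q}}_e$.
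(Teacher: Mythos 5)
The paper supplies no proof of this proposition at all: it is presented as an immediate consequence of Lemma~\ref{gensym} (itself only cited to \cite{BlKr}) together with the definition of the map $Q$. So your route --- substitute $\xi_e\mapsto q_e$ into the two-forest expansion of $|N|_{\mathrm{Pf}}$, lift the scalar square to the quaternionic product $(\sum\tau\mathbf{q}_e)(\sum\tau\bar{\mathbf{q}}_e)=(\sum\tau\xi_e)^2\One_{2\times 2}$, and identify the signed cut-edge sum with $Q(T_1)$ by telescoping momentum conservation over the vertices of $T_1$ --- is exactly the intended one, and those parts of your argument are correct.

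The genuine gap is the point you flag yourself and then leave open: the chord edges, i.e.\ edges $e\notin T_1\cup T_2$ with both endpoints in the same tree, which exist as soon as the loop order exceeds one. Of your two proposed resolutions, the second (that chord contributions cancel once momentum conservation is imposed) cannot work. The internal momenta $q_e$ produced by $Q$ are only a routing, determined up to shifts $q_e\mapsto q_e+\lambda\,\epsilon^\ell_e$ along loops; the cut-edge sum $\sum_{\mathrm{cut}}\tau(e)q_e=\sum_{v\in T_1}q_v$ is invariant under such shifts (a closed loop crosses the cut equally often in each direction), but a chord term $-q_{e_0}$ shifts by $\pm\lambda$ whenever $e_0\in\ell$. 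Hence if chords genuinely entered the inner sum, $Q(|N|_{\mathrm{Pf}})_{|\xi_e=0}$ would depend on the choice of routing, which $\phi_\Gamma$ does not; there is no cancellation to be had. The correct resolution is your first option: for generic $\xi_e$ the coefficient of $\prod_{e\notin F}A_e$ in $-|N|_{\mathrm{Pf}}$ is the square of the sum over \emph{cut} edges only, i.e.\ $\tau$ must be read as vanishing on chords. But this is a refinement of Lemma~\ref{gensym} as literally stated, and it is not automatic: \cite{BlKr} treats one-loop graphs, where every non-forest edge of a spanning two-forest is a cut edge and the ambiguity never arises. To close your proof you must actually establish this restricted form of the two-forest expansion at arbitrary loop order (e.g.\ by expanding the bordered Pfaffian determinant via Dodgson/spanning-forest-polynomial identities); the routing-independence observation above only shows that the unrestricted reading would be inconsistent, not that the restricted identity holds.
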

\begin{rem}
For a Feynman graph $\Gamma$  contributing to a scattering amplitude $r$ with $k$ external edges adjacent to $m\leq k$ vertices of $\Gamma$, $\phi_\Gamma$ is the quantity of interest.   The quantity $|N_\Gamma|_{\mathrm{Pf}}$, which assigns  a momentum $\xi_v$ to every vertex (not only to those $m$ which have external edges attached) is more natural from a graph-theoretic viewpoint. For us, it has the added advantage that it assigns a four-momentum $\xi_e$ to every edge.
Derivatives with respect to such four-momenta will generate gauge theory Feynman rules for us below.
\end{rem}
\begin{cor}\label{linear}
$\frac{\partial}{\partial_{{\xi_e}_\mu}}|N|_{\mathrm{Pf}}$ is linear in $(A_e\xi_e^\mu)$.\\
For edges $e\not= f$, $\frac{\partial^2}{\partial_{{\xi_e}_\mu}\partial_{\xi(f)_\nu}}|N|_{\mathrm{Pf}}$ is linear in $(A_eA_fg^{\mu\nu})$ and  constant in 
all $\xi_e$.\\
For $e=f$,  $\frac{\partial^2}{\partial_{{\xi_e}_\mu}\partial_{{\xi_e}_\nu}}|N|_{\mathrm{Pf}}$ is linear in $g^{\mu\nu}$, constant in all $\xi_e$ and linear in $A_e$.
\end{cor}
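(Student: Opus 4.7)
The plan is to read the three claims directly off the explicit formula for $|N|_{\mathrm{Pf}}$ provided by Lemma \ref{gensym}, which I will treat as the single tool of the proof.

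First, I rewrite
\[
|N|_{\mathrm{Pf}} = -\sum_{T_1\cup T_2}\Bigl(\sum_{e\notin T_1\cup T_2}\tau(e)\xi_e\Bigr)^{\!2}\prod_{e\notin T_1\cup T_2}A_e
\]
and observe two structural facts of each summand that drive everything: (a) every edge $e$ appearing in the squared momentum sum also appears with an $A_e$ in the product, and (b) the summand is quadratic in the $\xi$'s and multilinear of degree one in each $A_e$. Writing the square in components as $\sum_\mu\bigl(\sum_{e'}\tau(e')\xi_{e'}^{\mu}\bigr)^{\!2}$ makes the subsequent differentiations transparent.

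Next I compute $\partial/\partial\xi_e^\mu$: only 2-forests with $e\notin T_1\cup T_2$ contribute, and each such contribution yields $-2\tau(e)\bigl(\sum_{e'\notin T_1\cup T_2}\tau(e')\xi_{e'}^\mu\bigr)\prod_{f\notin T_1\cup T_2}A_f$. Since $e\notin T_1\cup T_2$ in every surviving summand, $A_e$ factors out of the full expression, giving a result that is linear in the $\xi$-variables with an overall $A_e$, which is the first claim. For the mixed second derivative with $e\neq f$, the only surviving terms of $\partial/\partial\xi_e^\mu$ after applying $\partial/\partial\xi_f^\nu$ are those 2-forests with both $e,f\notin T_1\cup T_2$; differentiation kills the remaining $\xi$-dependence and produces a Kronecker factor $g^{\mu\nu}$, with $A_eA_f$ extractable from the surviving product, yielding the second claim. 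For $e=f$ the analogous computation produces $\tau(e)^2=1$ and a single $A_e$, giving the third claim.

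The only subtlety is bookkeeping the factor of $A_e$ in the last statement: when differentiating twice with respect to the same edge, one must check that $A_e$ appears exactly once (not squared) in each surviving summand, which follows from fact (b) above; similarly in the mixed case I must verify that $A_e$ and $A_f$ each appear to the first power and independently, which again follows from multilinearity in the $A$'s. I do not anticipate any real obstacle: the corollary is essentially a bookkeeping consequence of the shape of Lemma \ref{gensym}, and the argument is entirely formal once one fixes the component-wise expansion of the squared momentum vector.
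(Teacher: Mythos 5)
Your proposal is correct and follows exactly the route the paper intends: the paper's entire proof is the single sentence ``This follows readily from Lemma \ref{gensym}'', and your component-wise differentiation of the spanning-two-forest formula is just that argument written out, including the key observation that every edge surviving in the squared momentum sum also contributes exactly one factor $A_e$ to the accompanying product. Nothing further is needed.
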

\begin{proof}
This follows readily from Lemma \ref{gensym}.
\end{proof}
\begin{exa}
We let
$$\Gamma=\dunce
$$
with $\{1,2,3\},\{1,2,4\}$ a basis for the cycles of $\Gamma$. Then,
$$
N:=\left(\begin{array}{cc}
N_0:=\left( \sum_{e\in h_i\cap h_j}A_e\right)_{ij} & \sum_{e\in h_j}{\mu}_eA_e\\
\sum_{e\in h_j}\bar{\mu}_eA_e & \sum_{e\in\Gamma^{[1]}}\bar{\mu_e}\mu_e A_e\end{array}\right)$$
so
$$
N_\Gamma=\left(\begin{array}{ccc}
A_1+A_2+A_3 & A_1+A_2 & A_1\mu_1+A_2\mu_2+A_3\mu_3\\
A_1+A_2 & A_1+A_2+A_4 & A_1\mu_1+A_2\mu_2+A_4\mu_4\\
A_1\bar{\mu}_1+A_2\bar{\mu}_2+A_3\bar{\mu}_3 & A_1\bar{\mu}_1+A_2\bar{\mu}_2+A_4\bar{\mu}_4 &
\sum_{i:=1}^4 A_i \bar{\mu}_i\mu_i 
\end{array}
\right)$$
\[
\psi_\Gamma=(A_1+A_2)(A_3+A_4)+A_3A_4=\sum_{\mathrm{sp.Tr.}T}\prod_{e\not\in T}A_e\]
and
\begin{eqnarray*} \phi_\Gamma & = & -(A_3+A_4)A_1A_2 p_a^2+A_2A_3A_4p_b^2+A_1A_3A_4p_c^2\\
 & = & \sum_{\mathrm{sp.2-Tr.}T_1\cup T_2}Q(T_1)\cdot Q(T_2)\prod_{e\not\in T_1\cup T_2}A_e.
\end{eqnarray*}
\end{exa}
\begin{rem}
We use the two Symanzik polynomials to integrate out loop momenta in scalar integrands upon using
\[
\frac{1}{{\xi_e^\prime}\cdot {\xi_e^\prime}}=\int_0^\infty e^{-A{\xi_e^\prime}\cdot {\xi_e^\prime}}d\!A,\,
{\xi_e^\prime}\cdot {\xi_e^\prime}={\xi_e^\prime}^2,
\]
so that we get back Eq.(\ref{scalint}).
\end{rem}

\subsection{Correction for quadratic subdivergences}
The renormalized massless quadratically divergent two-point self-energy 
\[
\Sigma_R(q^2,\mu^2)=q^2\sigma(q^2/\mu^2),
\]
vanishes at $q^2=0$, and $\sigma=\Sigma_R/q^2$ vanishes at $q^2=\mu^2$:
\begin{equation}
\Sigma_R(q^2,\mu^2)_{|q^2=0}=0,\;\; \left(\frac{1}{q^2}\Sigma_R(q^2,\mu^2)\right)_{|q^2=\mu^2}=0.\label{qurencon}
\end{equation} 
This fixes the two renormalization conditions for any graph contributing to a massless quadratically divergent two-point functions which we employ.
Transversality of the gluon propagator self-energy 
\[\Pi_{\mu\nu}=q^2(g_{\mu\nu}-\hat{q}_\mu\hat{q}_\nu)\Pi(q^2/\mu^2),\]
renders this as a single condition $\Pi(1)=0$ on the sum of all contributing graphs at each loop order.

All graphs contributing to other Green functions are renormalized by simple subtractions at chosen kinematics (see \cite{BrKr} for a complete discussion).

Let us now discuss the correction factor for each quadratically divergent (sub)graph.
The scalar integrand in parametric variables is 
\[
I_\Gamma=\frac{e^{-\frac{|N_\Gamma|_{\mathrm{Pf}}}{\psi_\Gamma}}}{\psi_\Gamma^2}d\!A_1\cdots d\!A_{|\Gamma^{[1]}|}.
\] 
In fact, for gauge theory we need to study integrands which are slightly more general:
\[
\tilde I_{\Gamma,F=}:
F \frac{e^{-\frac{|N_\Gamma|_{\mathrm{Pf}}}{\psi_\Gamma}}}{\psi^2_\Gamma}d\!A_1\cdots d\!A_{|\Gamma^{[1]}|}. 
\]
Here, 
\[
F=\frac{F_N(\{A_e\})}{F_D(\{A_e\})}
\] 
is a rational function of the variables $A_e$, which is a quotient of homogeneous polynomials  $F_N,F_D$.
In fact, our integrand in Yang--Mills or gauge theory will give us a finite sum of such terms with different $F$.

We want to analyse the degree of the form 
\[
A_\Gamma^{F}= \frac{F}{\psi^2_\Gamma}d\!A_1\cdots d\!A_{|\Gamma^{[1]}|}.
\]
Consider a set of variables $\gamma^{[1]}_I\subseteq\Gamma^{[1]}_I$, for $\gamma\subset\Gamma$ a subgraph (the case $\gamma=\Gamma$ is allowed).
For a polynomial function $f=f(\{A_e\})$ we let $|f|_{\gamma^{[1]}_I}$ be twice\footnote{The mass dimension of an edge variable $A_e$ is $-2$.}  the degree of that polynomial in the variables indicated.
For a rational function which is a quotient $f_1/f_2$ of such functions, 
we have $|f_1/f_2|_{\gamma^{[1]}_I}=|f_1|_{\gamma^{[1]}_I}-|f_2|_{\gamma^{[1]}_I}$.
Also, $|d\!A_1\cdots d\!A_{|\Gamma^{[1]}|}|_{\gamma^{[1]}_I}=|\gamma^{[1]}_I|$.

\begin{prop}
For any 
$A_\Gamma^{F}$ appearing in the integrand of a graph $\Gamma$, we have $|A_\Gamma^{F}|_{\gamma^{[1]}_I}\leq \omega^\gamma$. 
\end{prop}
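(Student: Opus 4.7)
The plan is to bound each factor of $A_\Gamma^F$ separately in its $\gamma^{[1]}_I$-degree and use additivity of the weight $|\cdot|_{\gamma^{[1]}_I}$. Concretely, I would start from
\[
|A_\Gamma^F|_{\gamma^{[1]}_I} \;=\; |F|_{\gamma^{[1]}_I} \;-\; |\psi_\Gamma^2|_{\gamma^{[1]}_I} \;+\; |E^\gamma_I|,
\]
and analyse the $\psi_\Gamma^2$-contribution and the $F$-contribution in turn.

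For $|\psi_\Gamma^2|_{\gamma^{[1]}_I}$ the argument is purely combinatorial. Using the spanning-tree expansion $\psi_\Gamma = \sum_T\prod_{e\notin T}A_e$, the monomials of highest $\gamma^{[1]}_I$-degree come precisely from those spanning trees $T$ of $\Gamma$ whose restriction $T\cap\gamma$ is a spanning tree of $\gamma$; each such monomial contributes exactly $|\gamma|$ factors from $\gamma^{[1]}_I$, and no other spanning tree contributes more. Hence $|\psi_\Gamma|_{\gamma^{[1]}_I}=2|\gamma|$, so $|\psi_\Gamma^2|_{\gamma^{[1]}_I}=4|\gamma|$.

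For $|F|_{\gamma^{[1]}_I}$ I would use the corolla construction: every $F$ appearing in the gauge-theoretic integrand arises from applying a finite product of edge-momentum derivatives $\partial_{\xi_e^\mu}$ to $e^{-|N_\Gamma|_{\mathrm{Pf}}/\psi_\Gamma}$ and then substituting $\xi_e=0$ on all internal edges. By Corollary~\ref{linear}, a first derivative on the exponent produces a factor linear in $A_e\xi_e^\mu$ over one additional $\psi_\Gamma$; only monomials in which every free $\xi_e^\mu$ gets contracted by a second derivative survive the substitution $\xi_e=0$, and by the same corollary each such surviving pair contributes at most $A_eA_f/\psi_\Gamma$ (for $e\neq f$) or $A_e/\psi_\Gamma$ (for $e=f$). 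Since the corolla differential assigns one derivative to every half-edge of $\Gamma$ and each internal edge of $\gamma$ carries two half-edges, the number of pairs landing on $\gamma^{[1]}_I$ is bounded; tracking $A$-numerators against $\psi_\Gamma$-denominators and using $\omega^\gamma=|V_3^\gamma|-2|E^\gamma_I|+4|\gamma|$ yields
\[
|F|_{\gamma^{[1]}_I} \;\le\; \omega^\gamma + 4|\gamma| - |E^\gamma_I|,
\]
so that $|A_\Gamma^F|_{\gamma^{[1]}_I}\le \omega^\gamma$.

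The main obstacle is the last estimate: one must show that in every monomial of $F$ the $A$-numerators generated by surviving derivative pairs are exactly compensated by the growth of the $\psi_\Gamma$-denominator, leaving the margin controlled by the superficial weight $\omega^\gamma=4-|\gamma^{[1]}_E|$. I expect the cleanest route is pair by pair: decompose $F$ into monomials indexed by derivative pairings admissible under Corollary~\ref{linear}, and identify each pair's net $\gamma^{[1]}_I$-weight with the power count of the two half-edges it consumes. Summing over pairings and comparing with the total half-edge budget of $\gamma$ then reproduces, in parametric language, the classical statement that the gauge-theory integrand never exceeds the superficial degree of divergence of any of its subgraphs.
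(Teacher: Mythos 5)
Your decomposition $|A_\Gamma^F|_{\gamma^{[1]}_I}=|F|_{\gamma^{[1]}_I}-|\psi_\Gamma^2|_{\gamma^{[1]}_I}+|E_I^\gamma|$ and the spanning-tree computation $|\psi_\Gamma|_{\gamma^{[1]}_I}=2|\gamma|$ are fine (this is the standard factorization of $\psi_\Gamma$ into $\psi_\gamma\psi_{\Gamma/\gamma}$ plus higher-order terms in the subgraph variables), but the argument does not close: the entire content of the proposition sits in the inequality $|F|_{\gamma^{[1]}_I}\le\omega^\gamma+4|\gamma|-|E_I^\gamma|$, and you do not establish it --- you say yourself that it is ``the main obstacle'' and only gesture at ``tracking $A$-numerators against $\psi_\Gamma$-denominators''. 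For comparison, the paper does not attempt this computation at all; its proof is a one-sentence appeal to the classical power-counting theorem (short-distance singularities sit along forests, and each forest component's degree is bounded by $\omega^\gamma$). So if you want an explicit parametric proof, you must actually carry out the count you only sketch.

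Moreover, the structural claim on which you base that count is wrong. It is not true that only monomials in which every $\xi_e^\mu$ is contracted by a second derivative survive setting $\xi_e=0$: the substitution is preceded by $Q:\xi_e\to\xi_e+q_e$, so single-derivative terms survive carrying external momenta --- these are precisely the momentum factors $(\xi_1-\xi_2)_\mu$ of the 3-gluon vertices, visible in the paper's worked Yang--Mills example where $\tilde U^0$ contains terms proportional to $q^{\mu_3}q^{\mu_4}$. Worse, in the QED example it is the paired second-derivative (Leibniz-type) contribution, $A_\gamma^{F_2}=(A_1+A_2)^{-3}$ with $|A_\gamma^{F_2}|_\gamma=2$, that saturates the bound $\omega^\gamma=2$, while the unpaired terms give only $|A_\gamma^{F_1}|_\gamma=0$; so both types of term occur, and their relative degrees are the opposite of what your accounting suggests. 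Until the bound on $|F|_{\gamma^{[1]}_I}$ is proved monomial by monomial --- one power of external momentum, or one factor $A_e$ together with one extra power of $\psi_\Gamma$ in the denominator, per derivative, consistently with Corollary \ref{linear} --- your argument is a restatement of the proposition rather than a proof of it.
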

\begin{proof}
All short-distance singularities of a Feynman integrand correspond to forests  and their degree is bounded by the powercounting $\omega^\gamma$, for all $\gamma$ appearing as components of forests.
\end{proof}
Now assume $|A_\Gamma^{F}|_{\gamma^{[1]}_I}=2$ for some $\gamma$.
Such quadratically divergent short-distance singularities can only originate from self-energy subgraphs. Such graphs have two distinguished vertices at which external edges are adjacent. Moreover, it follows from Lemma 
\ref{quadrsub} that quadratically divergent subgraphs are either disjoint or nested. We have immediately from the definition of the second Kirchhoff polynomial $\phi_\gamma=:q^2\psi_{\gamma_\bullet}$, which defines $\gamma_\bullet$ to be the graph where those two vertices are identified \cite{BrKr}.

For each $|A_\Gamma^{F}|_{\gamma^{[1]}_I}$ above, let $2^{F}_\Gamma$ 
be the set of subgraphs $\gamma\subseteq\Gamma$ such that 
$|A_\Gamma^{F}|_{\gamma^{[1]}_I}=2$, for all $\gamma\in 2^{F}_\Gamma$.
Define, for all $A_\Gamma^F$,
\begin{equation}
\bar{A}_\Gamma^{F}:=A_\Gamma^{F}\prod_{\gamma\in 2^{F}_\Gamma}\frac{\psi_{\gamma^\bullet}}{\psi_\gamma A_{e^\gamma_+}}.\label{corrfac}
\end{equation}
\begin{lem}
The form $\bar{A}_\Gamma^{F}$ has only logarithmic poles along divergent subgraphs including self-energy subgraphs. 
Renormalizing these remaining logarithmic poles of  self-energy  subgraphs at a fixed $\mu^2$ imposes renormalization conditions Eq.(\ref{qurencon}).\footnote{We consider the case of massless propagators. For the incorporation of masses, see
\cite{BlKr}.}
\end{lem}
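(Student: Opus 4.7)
My plan is to prove the two sentences of the lemma separately. For the power-counting claim, I will verify that $|\bar A_\Gamma^F|_{\gamma^{[1]}_I}\le 0$ for every divergent subgraph $\gamma$, using the identity $\phi_\gamma=\xi_{e^\gamma_+}^2\psi_{\gamma_\bullet}$ for propagator subgraphs together with Lemma \ref{quadrsub} to organise the quadratic subgraphs. For the renormalisation-conditions claim, I will then combine the structural form of the correction factor with a single BPHZ-type subtraction at $\mu^2$, as in \cite{BrKr}, to read off Eq.\eqref{qurencon}.

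For the power-counting step, Lemma \ref{quadrsub} says that the elements of $2^F_\Gamma$ are propagator subgraphs that are pairwise nested or disjoint, so the product in Eq.\eqref{corrfac} is unambiguous and the correction factors can be analysed one $\gamma$ at a time. Near each $\gamma\in 2^F_\Gamma$ one has the short-distance factorisation $\psi_\Gamma=\psi_\gamma\psi_{\Gamma/\gamma}+\text{lower order in }\{A_e\}_{e\in\gamma^{[1]}_I}$, while the identity $\phi_\gamma=\xi_{e^\gamma_+}^2\psi_{\gamma_\bullet}$ shows that the polynomial $\psi_{\gamma_\bullet}$ is precisely the $\xi_{e^\gamma_+}^2$-coefficient carried by $\gamma$ inside $\phi_\Gamma$. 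Multiplication by $\psi_{\gamma_\bullet}/\psi_\gamma$ therefore trades the factor $\psi_\gamma$ in the denominator of $A_\Gamma^F$ for $\psi_{\gamma_\bullet}$, and division by $A_{e^\gamma_+}$ cancels the single power of the adjacent propagator variable that the $\gamma$-UV scale also brings in. A direct computation of $|\cdot|_{\gamma^{[1]}_I}$ using these factorisations then gives the required $2$-unit reduction, turning the quadratic pole at $\gamma$ into a logarithmic one. For subgraphs $\gamma'\notin 2^F_\Gamma$ one already has $|A_\Gamma^F|_{\gamma'^{[1]}_I}\le 0$; using once more that the elements of $2^F_\Gamma$ are nested or disjoint from $\gamma'$, the $\gamma$-corrections neither enhance the degree along $\gamma'$ nor create new singular loci, so the corrected form has at most logarithmic poles along every divergent subgraph.

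For the renormalisation conditions, the point is that the correction has effectively extracted an overall factor of $\xi_{e^\gamma_+}^2=q^2$ from each self-energy subgraph $\gamma$: indeed the replacement $\psi_\gamma\mapsto\psi_{\gamma_\bullet}/A_{e^\gamma_+}$ reproduces, via $\phi_\gamma=q^2\psi_{\gamma_\bullet}$, precisely the Taylor coefficient of $\Sigma(q^2)$ that is proportional to $q^2$. Thus the self-energy amplitude built from $\bar A_\Gamma^F$ is manifestly proportional to $q^2$ as $q^2\to 0$, so that $\Sigma_R(q^2,\mu^2)|_{q^2=0}=0$ is automatic, which is the first condition in Eq.\eqref{qurencon}. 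The leftover logarithmic divergence is then a single scale-invariant pole, which the standard kinematic subtraction at $q^2=\mu^2$ of \cite{BrKr} removes while enforcing $(\Sigma_R/q^2)|_{q^2=\mu^2}=0$, the second condition.

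The main obstacle I expect is the combinatorial bookkeeping when several quadratic self-energies are nested: one must check that the corrections for an inner $\gamma_1\subset\gamma_2$ and for $\gamma_2$ compose cleanly and do not re-introduce a quadratic singularity at $\gamma_2$ through the $\psi_{\gamma_{2,\bullet}}/\psi_{\gamma_2}$ factor's dependence on the inner $A$-variables. Lemma \ref{quadrsub} supplies the geometric separation, but the clean composition has to be traced through the factorisations of $\psi_\Gamma$ and $\phi_\Gamma$ along nested forests, matching the forest formula of Section~\ref{forest} and the parametric renormalisation scheme of \cite{BrKr}. A subsidiary subtlety is that the identity $\phi_\gamma=\xi_{e^\gamma_+}^2\psi_{\gamma_\bullet}$ must be used with $q$ identified as the edge momentum $\xi_{e^\gamma_+}$ carried through $\Gamma$, so that the edge variable $A_{e^\gamma_+}$ appearing in the correction is the correct one.
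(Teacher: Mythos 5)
Your route is genuinely different from the paper's. The paper does not power-count the corrected form directly: it obtains the correction factor from a partial integration. Writing $A_i=t_\gamma a_i$ for the subgraph variables isolates the quadratic divergence as a $dt_\gamma/t_\gamma^2$ pole; integrating by parts in $t_\gamma$ produces a boundary term that is polynomial in $\xi'_e$ and is therefore annihilated by the conditions of Eq.(\ref{qurencon}), plus a remainder proportional to $F\,\psi_{\gamma_\bullet}\,{\xi'_e}^2 e^{-t_\gamma(\cdots)}/(t_\gamma\psi_\gamma^4)$ which is manifestly only log-divergent; the factor $\psi_{\gamma_\bullet}/(\psi_\gamma A_{e^\gamma_+})$ of Eq.(\ref{corrfac}) is then read off from this remainder after trading ${\xi'}^2_{e^\gamma_+}$ for $1/A_{e^\gamma_+}$ via momentum conservation and the residue identity $1=\frac{1}{2\pi i}\oint e^{-{\xi'_e}^2A_e}\,dA_e/A_e$. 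Your direct degree count buys a statement about the form itself without invoking the integral; the paper's integration by parts makes the link to the double subtraction $(\tilde I(q^2)-\tilde I(0))/q^2-(\tilde I(\mu^2)-\tilde I(0))/\mu^2$ and hence to Eq.(\ref{qurencon}) transparent.

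One step of yours is, however, wrong as stated, and it hides the only delicate point. The edge $e^\gamma_+$ is \emph{not} internal to $\gamma$, so $A_{e^\gamma_+}$ does not participate in the $\gamma$-UV scaling and dividing by it contributes nothing to $|\cdot|_{\gamma^{[1]}_I}$: the entire improvement along $\gamma$ comes from $\psi_{\gamma_\bullet}/\psi_\gamma$ alone, since $\gamma_\bullet$ has one more independent cycle and $\psi_{\gamma_\bullet}$ is therefore of polynomial degree $|\gamma|+1$ against $|\gamma|$ for $\psi_\gamma$. The actual role of $1/A_{e^\gamma_+}$ is to restore the overall homogeneity of the integrand and to encode the extracted factor ${\xi'}^2_{e^\gamma_+}$; in doing so it introduces a new simple pole at $A_{e^\gamma_+}=0$, on a locus which is not a divergent subgraph, and your argument never explains why this does not contradict the lemma's conclusion. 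The paper's answer is that this is a simple pole whose residue is collected as part of the construction of the integrand (in momentum space it is ${\xi'}^2/{\xi'}^2=1$, cancelling the adjacent propagator), so it is not a short-distance singularity. With that sentence corrected and the new pole accounted for, your power counting and your reading of the two conditions (vanishing at $q^2=0$ automatic from the extracted $q^2$, the $\mu^2$ subtraction fixing the residual logarithm) agree with the paper's remark following the lemma.
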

\begin{proof}
A partial integration with respect to the quadratic subgraph variables renders its overall divergence logarithmic. The boundary term is eliminated by our renormalization conditions, which adopt the BPHZ conditions of massive propagators to the renormalization conditions adopted here for massless gluons. See \cite[Section 3.5]{BrKr}. 
\end{proof}
Let us exhibit this in an example. Integrating the universal quadric, Eq.(\ref{uquadric}), only the propagators for the quadratically divergent subgraph $\gamma$ leaves us with a contribution
\[
\frac{1}{\xi^\prime_{e^\gamma_-}}F \frac{e^{-\frac{{\xi_e^\prime}^2\psi_{\gamma_\bullet}}{\psi_\gamma}}(d\!A)_\gamma}{\psi_\gamma^2} \frac{1}{\xi^\prime_{e^\gamma_+}},
\]
with $e^\gamma_\pm$ defined in Def.(\ref{epm}).

Setting $A_i=t_\gamma a_i$ isolates the quadratic divergence:
\[
\frac{1}{\xi^\prime_{e^\gamma_-}} \frac{Fe^{-t_\gamma\frac{{\xi_e^\prime}^2\psi_{\gamma_\bullet}}{\psi_\gamma}}\Omega_\gamma\wedge t_\gamma}{t_\gamma^2\psi_\gamma^2} \frac{1}{\xi^\prime_{e^\gamma_+}}.
\]
Consider the integral $\lim_{c_\gamma\to 0}\int_{c_\gamma}^\infty$ against the above, and partially integrate with respect to $t_\gamma$. 
This gives, modulo terms which vanish when $c_\gamma\to 0$,  a boundary term
\[
\frac{1}{\xi^\prime{e^\gamma_-}}\frac{F}{\psi_\gamma^2}\left(1-{\xi_e^\prime}^2\frac{\psi_{\gamma_\bullet}}{\psi_\Gamma}\right)\frac{1}{\xi^\prime_{e^\gamma_+}}
\] 
which is polynomial in ${\xi_e^\prime}$ and hence vanishes in our renormalization conditions.

What remains though is the logarithmically divergent
\[
\frac{1}{\xi^\prime_{e^\gamma_-}} \frac{F \psi_{\gamma_\bullet} {\xi_e^\prime}^2e^{-t_\gamma\frac{{\xi_e^\prime}^2\psi_{\gamma_\bullet}}{\psi_\gamma}}\Omega_\gamma\wedge t_\gamma}{t_\gamma\psi_\gamma^4} \frac{1}{\xi^\prime_{e^\gamma_+}}.
\]
which justifies the result above, upon using momentum conservation ${\xi_e^\prime}=\xi^\prime_{e^\gamma_+}$, and 
\[ 
\frac{\xi^\prime_e}{\xi^\prime_{e^\gamma_+}}=1=\frac{1}{2\pi i}\oint_{\gamma_e} \frac{e^{-{\xi_e^\prime}^2A_e}}{A_e},
\]
with $\gamma_e$ a curve which picks the residue at $A_e=0$. Collecting such residues will be automatic in our approach below.

\begin{rem}
For a self-energy graph $\gamma$ and an $F$ such that $|A_\gamma^F|_\gamma=2$, and $\tilde{I_\Gamma}^F=\tilde{I_\Gamma}^F(q^2)$,
the factor Eq.(\ref{corrfac}) is in accordance with subtractions $(\tilde{I_\Gamma}^F(q^2)-\tilde{I_\Gamma}^F(0))/q^2-
(\tilde{I_\Gamma}^F(\mu^2)-\tilde{I_\Gamma}^F(0))/\mu^2$, in accordance with the conditions (\ref{qurencon}).
\end{rem}
\subsection{Renormalization}
By summing over all $F$, we have finally constructed an integrand which we will call  $\bar{U}_\Gamma$ 
with log-poles only along any forest. We can hence render it finite by the usual forest formula. We define
\[
U^{\mathrm{R}}_\Gamma:=\sum_{f\in\mathcal{F}_\Gamma}(-1)^{|f|}Q(\bar{U}_{\Gamma/f})Q_0(\bar{U}_f),
\]
where for $f=\bigcup_i\gamma_i$, $U_f=\prod_i U_{\gamma_i}$ and $Q,Q_0$ are the maps 
\[
Q:\xi_e\to\xi_e+q_e,\; Q_0:\xi_e\to\xi_e+q_{e,0},
\]
where $q_e$ are the momenta as prescribed by the amplitude under consideration, $q_{e,0}$ those prescribed by the renormalization scheme
for this amplitude.

\begin{rem}
Note that a single $\tilde I_{\Gamma,F}$ renormalizes similarly,
\[
\tilde I_{\Gamma,F}^R=\bar{F}
\sum_{f\in\mathcal{F}_\Gamma}(-1)^{|f|}Q\left(\frac{e^{-\frac{|N_{\Gamma/f}|_{\mathrm{Pf}}}{\psi_{\Gamma/f}}}}{\psi_{\Gamma/f}^2}\right)
Q_0\left(\frac{e^{-\frac{|N_f|_{\mathrm{Pf}}}{\psi_{f}}}}{\psi_{f}^2}\right),
\]
with notation as in \cite{BrKr}. Note in particular how derivatives with respect to four-vectors $\xi_e$ still act on  $\tilde I_{\Gamma,F}^R$:
for any monomial in derivatives $X_E:=\prod_{e\in E}\partial_{{\xi_e}_{\mu(e)}}$, we have
\begin{equation}\label{xeeq}
X_E\tilde I_{\Gamma,F}^R=\bar{F}
\sum_{f\in\mathcal{F}_\Gamma}(-1)^{|f|}\left(\prod_{e\in E\cap {\Gamma/f}^{[1]}}\partial_{{\xi_e}_{\mu(e)}} \right) Q\left(\frac{e^{-\frac{|N_{\Gamma/f}|_{\mathrm{Pf}}}{\psi_{\Gamma/f}}}}{\psi_{\Gamma/f}^2}\right)
\left(\prod_{e\in E\cap {f}^{[1]}}\partial_{{\xi_e}_{\mu(e)}} \right)Q_0\left(\frac{e^{-\frac{|N_f|_{\mathrm{Pf}}}{\psi_{f}}}}{\psi_{f}^2}\right).
\end{equation}
This is particularly useful for future work when combined with the projective renormalized integrand, see \cite{BrKr}.

Also, even when $\Gamma$ is a graph without external ghost lines, some of the forests appearing above can correspond
to graphs with open ghost lines. Then, Eq.(\ref{xeeq}) reconfirms the formula Eq.(\ref{openlines}) below for amplitudes
with open ghost (or fermion, in an obvious modification) lines.   
\end{rem}
\begin{rem}
The regularized integrand in dimensional regularization is obtained by multiplying $\bar{U}_\Gamma$ by $1/\psi_\Gamma^{(4-d)/2}$,
and treating the Clifford algebra accordingly.
Evaluating the parametric integrals on the regularized integrand first  and renormalizing in accordance with our renormalization prescription
produces the same renormalized results as above. Using a minimal subtraction scheme is different though. See \cite{KrP} for a discussion of this point.
\end{rem}
\section{Gauge Theory Graphs}\label{gtgr}
We now turn to graphs in gauge theory, as contrasted to 3-regular graphs in scalar field theory. While the latter were graphs which can be regarded as corollas with three half-edges, connected by gluing two half-edges from different corollas to an internal edge $e$ which hence determine a pair of corollas $P_e$,
the former are graphs with 3- and 4-valent vertices. 

Again, we can consider them based on corollas, this time corollas which have either three or four half-edges 
of gauge boson type (indicated by wavy lines), or one gauge-boson half-edge with two half-edges of ghost type (indicated by consistently oriented straight dashed lines), or
one gauge-boson half-edge with two half-edges of fermion type (indicated by consistently oriented straight full lines).

We repeat our notational conventions.  We mark in such graphs edges and vertices in various ways and we let 
\[
\mathcal{G}^{r,l}_{n\fourvertex,m\ghostloop}
\]  
be the set of all graphs with external half-edges specifying the amplitude  $r$, with $l$ loops and $n$  4-gluon vertices, and  $m$ ghostloops.
Similarly, we will indicate the number of marked edges and other qualifiers as needed.

Still, if we want to leave a qualifier $l,n,m,\cdots$ unspecified (so that we consider the union of all sets with any number of such items), we replace it by $/$.
For sums or series of graphs we continue to  use
\[
X^{r,l}_{n\fourvertex,m\ghostloop}
\]  
which are sums (for fixed $l$) or series (for $l=/$) of graphs weighted by symmetry, colour and other such factors as defined below. 

We now start adopting graph homology to our purposes in gauge theory.
\subsection{Marking edges}
Recall that the Feynman rule for the 4-valent vertex is
\begin{equation*}\begin{split}
\Phi\Big(
\parbox[c][25pt]{25pt}{\centering\scriptsize
\begin{fmfgraph*}(15,15)
	\fmfleft{2,1}
	\fmfright{3,4}
	\fmfv{label=$1$,label.dist=2}{1}
	\fmfv{label=$2$,label.dist=2}{2}
	\fmfv{label=$3$,label.dist=2}{3}
	\fmfv{label=$4$,label.dist=2}{4}
	\fmf{boson}{1,v}
	\fmf{boson}{2,v}
	\fmf{boson}{3,v}
	\fmf{boson}{4,v}
\end{fmfgraph*}}
\Big) = &  + f^{a_1a_2b}f^{a_3a_4b}(g^{\mu_1\mu_3}g^{\mu_2\mu_4}-g^{\mu_1\mu_4}g^{\mu_2\mu_3}) \\
 & + f^{a_1a_3b}f^{a_2a_4b}(g^{\mu_1\mu_2}g^{\mu_3\mu_4}-g^{\mu_1\mu_4}g^{\mu_3\mu_2}) \\
 & + f^{a_1a_4b}f^{a_2a_3b}(g^{\mu_1\mu_2}g^{\mu_4\mu_3}-g^{\mu_1\mu_3}g^{\mu_4\mu_2}).
\end{split}\end{equation*}
We introduce a new edge type \markededge\ which has the following Feynman rule:
\begin{equation}\begin{split}
\Phi\Big(
\parbox[c][25pt]{25pt}{\centering\scriptsize
\begin{fmfgraph*}(15,15)
	\fmfleft{2,1}
	\fmfright{3,4}
	\fmfv{label=$1$,label.dist=2}{1}
	\fmfv{label=$2$,label.dist=2}{2}
	\fmfv{label=$3$,label.dist=2}{3}
	\fmfv{label=$4$,label.dist=2}{4}
	\fmf{boson,tension=2}{1,v1}
	\fmf{boson,tension=2}{2,v1}
	\fmf{boson,tension=2}{3,v2}
	\fmf{boson,tension=2}{4,v2}
	\fmf{marked,label=$e$,label.dist=4}{v1,v2}
\end{fmfgraph*}}
\Big)
&= f^{a_1a_2b}f^{a_3a_4b}(g^{\mu_1\mu_3}g^{\mu_2\mu_4}-g^{\mu_1\mu_4}g^{\mu_2\mu_3}) \\
&=: \colour\Big(
\parbox{15pt}{\centering\scriptsize
\begin{fmfgraph*}(15,15)
	\fmfleft{2,1}
	\fmfright{3,4}
	\fmf{boson,tension=2}{1,v1}
	\fmf{boson,tension=2}{2,v1}
	\fmf{boson,tension=2}{3,v2}
	\fmf{boson,tension=2}{4,v2}
	\fmf{boson,label.dist=3}{v1,v2}
\end{fmfgraph*}}
\Big) W_e,
\end{split}\label{eq:W}\end{equation}
so that  we can write the 4-point vertex as
\begin{equation}
\parbox{20pt}{\centering\scriptsize
\begin{fmfgraph*}(20,20)
	\fmfleft{2,1}
	\fmfright{3,4}
	\fmf{boson,tension=2}{1,v}
	\fmf{boson,tension=2}{2,v}
	\fmf{boson,tension=2}{3,v}
	\fmf{boson,tension=2}{4,v}
\end{fmfgraph*}}
\sim
\parbox{20pt}{\centering\scriptsize
\begin{fmfgraph*}(20,20)
	\fmfleft{2,1}
	\fmfright{3,4}
	\fmf{boson,tension=2}{1,v1}
	\fmf{boson,tension=2}{2,v1}
	\fmf{boson,tension=2}{3,v2}
	\fmf{boson,tension=2}{4,v2}
	\fmf{marked}{v1,v2}
\end{fmfgraph*}}
+
\parbox{20pt}{\centering\scriptsize
\begin{fmfgraph*}(20,20)
	\fmfleft{2,1}
	\fmfright{3,4}
	\fmf{boson,tension=2}{1,v1}
	\fmf{phantom,tension=2}{2,v1}
	\fmf{phantom,tension=2}{3,v2}
	\fmf{boson,tension=2}{4,v2}
	\fmf{marked}{v1,v2}
	\fmffreeze
	\fmf{boson,tension=2}{2,v2}
	\fmf{boson,tension=2,rubout=3}{3,v1}
\end{fmfgraph*}}
+
\parbox{20pt}{\centering\scriptsize
\begin{fmfgraph*}(20,20)
	\fmfleft{2,1}
	\fmfright{3,4}
	\fmf{boson,tension=2}{1,v1}
	\fmf{boson,tension=2}{2,v2}
	\fmf{boson,tension=2}{3,v2}
	\fmf{boson,tension=2}{4,v1}
	\fmf{marked}{v1,v2}
\end{fmfgraph*}}
.\label{eq:|}
\end{equation}
(The relation $\sim$ denotes that the left- and right-hand side have the same Feynman amplitude.) Note that because of this relation, the internal marked edge does not correspond to a propagator. It is just a graphical way of writing the three terms of the 4-valent vertex. 
\begin{rem}
The fact that the 4-valent vertex decomposes
in such a way into a product of two corollas is actually the starting point for recursion relations of amplitudes \cite{recursion,WalterCore}. 
\end{rem}

For any graph $\Gamma$ with marked edges, let $\bar{\Gamma}$ be the graph where the marked edges shrink to zero length.\footnote{If we have $k$ marked edges, here are $3^k$ different graphs $\Gamma$ which have the same $\bar{\Gamma}$.}
The Feynman--Schwinger integrand of a graph with marked edges is given for us by
\begin{equation}\begin{split}
I_\Gamma &= \Big(\prod_{e\in\Gamma^{[1]}_{\rm marked}}W_e\Big)\left( \prod_{v\in V^\Gamma, v\cap \Gamma^{[1]}_{\rm marked}=\emptyset} V_v\right) I_{\bar{\Gamma}}.\label{markedintegrand}
\end{split}\end{equation}
Here, $V_v$ is the colour-stripped part of the Feynman rule of a 3-gluon vertex, 
\begin{equation}
\label{3-vertex}
V_v=\sum_{\mathrm{cycl}(1,2,3)}(\xi_1-\xi_2)_{\mu_3}g_{\mu_1\mu_2}.
\end{equation}

Note that the scalar integrand $ I_{\bar{\Gamma}}$  does obviously not contain the edge variables of the $\markededge$-marked edges. 

For future use, we define
\begin{equation} \mathcal I_\Gamma = \Big(\prod_{e\in\Gamma^{[1]}_{\rm marked}}W_e\Big)\left( \prod_{v\in V^\Gamma, v\cap \Gamma^{[1]}_{\rm marked}=\emptyset} V_v\right) e^{-\sum\limits_{\substack{e\in\Gamma\\\text{unmarked}}}A_e\xi_e'^2}, \label{eq:mathcalI} \end{equation}
which is such that
\begin{equation*} \int\frac{\mathrm d\underline k_L}{(2\pi)^{dl}} \mathcal I_\Gamma = I_\Gamma. \end{equation*}

\begin{defn}\label{4-1}
Define a derivation $\chi_+: H \to H$ on generators by
\begin{equation*} \chi_+\Gamma = \sum_{e\in\Gamma^{[1]}_{\rm int}} \chi_+^{e}\Gamma, \end{equation*}
where
\begin{equation*} \chi_+^{e}\Gamma = \begin{cases}
0 & \text{if $e$ shares a vertex with a marked, fermion or ghost edge,} \\
\Gamma_{e\rightsquigarrow\markededge} & \text{otherwise}.
\end{cases}\end{equation*}
\end{defn}

The next lemma shows how symmetry factors relate upon exchanging 4-valent vertices for a pair of corollas with a marked edge in-between.
We consider  graphs with $l$ loops, $k$ 4-gluon vertices and $k'$ marked edges, for an amplitude $r$. Also, $\bar{\mathcal G}$ denotes unlabelled graphs, in contrast to labelled graphs in $\mathcal{G}$. 
\begin{lem}\label{lem:graph-4v-|}
For any graph $\Gamma\in\mathcal G^{r,l}_{k\fourvertex,k'\markededge}$ we have
\begin{equation} \frac1{\Sym(\Gamma)}\Gamma \sim \frac1k \sum_{\substack{\Gamma'\in\bar{\mathcal G}^{r,l}_{k-1\fourvertex,k'+1\markededge}\\ \exists e\in\Gamma^{\prime[1]}_{\rm marked}:\Gamma'/e=\Gamma }} \frac1{\Sym(\Gamma')}\Gamma'. \end{equation}
\end{lem}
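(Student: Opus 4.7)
My strategy is to upgrade the local Feynman-rule identity (\ref{eq:|}) --- which rewrites a single 4-gluon vertex as the sum of three marked-edge configurations (the $s$, $t$ and $u$ channels) --- into a global identity on combinatorial Green functions by applying it at each 4-gluon vertex of $\Gamma$ separately, averaging, and then converting from labelled to unlabelled graphs.

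First I would fix a labelled representative of $\Gamma$ and enumerate its internal 4-gluon vertices $v_1,\dots,v_k$. For each $v_i$, applying (\ref{eq:|}) at $v_i$ alone, while leaving the rest of $\Gamma$ untouched, produces three labelled graphs $\Gamma_{v_i,s},\Gamma_{v_i,t},\Gamma_{v_i,u}\in\mathcal G^{r,l}_{(k-1)\fourvertex,(k'+1)\markededge}$, together with the Feynman-amplitude identity $\Gamma\sim \Gamma_{v_i,s}+\Gamma_{v_i,t}+\Gamma_{v_i,u}$. Averaging over $i$ gives the labelled form of the claim:
\[
\Gamma\;\sim\;\frac1k\sum_{i=1}^{k}\sum_{c\in\{s,t,u\}}\Gamma_{v_i,c}.
\]

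Next I would pass from labelled to unlabelled graphs by orbit--stabilizer. At the labelled level, the blow-up map $(\Gamma,v,c)\mapsto \Gamma_{v,c}$ is inverse to the contraction of the newly marked edge $e$, so labelled triples $(\Gamma,v,c)$ are in bijection with labelled pairs $(\Gamma',e)$ where $e$ is a marked edge of $\Gamma'$ and $\Gamma'/e\cong\Gamma$. Dividing the left-hand side by $\Sym(\Gamma)$, grouping the right-hand side into $\mathrm{Aut}(\Gamma')$-orbits, and using the standard relation between orbit sizes and symmetry factors, the labelled average converts into a sum over isomorphism classes of $\Gamma'$ weighted by $1/\Sym(\Gamma')$, with the $1/k$ prefactor inherited from the averaging. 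This is precisely the stated identity.

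The main technical obstacle will be the orbit--stabilizer bookkeeping. One has to check that, for each unlabelled $\Gamma'$ that appears, the multiplicity with which it is produced by $(v,c)$-blow-ups of $\Gamma$ (summed over $\mathrm{Aut}(\Gamma)$-orbits of 4-vertices and over the three channels) equals $k\,\Sym(\Gamma)/\Sym(\Gamma')$ times the number of $\mathrm{Aut}(\Gamma')$-orbits of marked edges $e$ of $\Gamma'$ with $\Gamma'/e=\Gamma$. The identity is purely formal once set up: both sides count the common labelled set $\{(\Gamma,v,c)\}=\{(\Gamma',e)\}$ and then apply orbit--stabilizer with respect to the two $\mathrm{Aut}$-actions. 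Care is needed, however, because an automorphism of $\Gamma$ can permute both the 4-vertices of $\Gamma$ and the three channels at a fixed 4-vertex, and this must be matched against the action of $\mathrm{Aut}(\Gamma')$ on the marked edges of $\Gamma'$; this is the only step where a careful case analysis of the stabilizers is required.
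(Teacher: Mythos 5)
Your proposal is correct and follows essentially the same route as the paper: apply the local identity (\ref{eq:|}) at a single 4-gluon vertex, match symmetry factors between $\Gamma$ and the blown-up graphs $\Gamma'$, and then sum over the $k$ choices of 4-vertex to produce the prefactor $1/k$. The only difference is presentational: where you invoke a general orbit--stabilizer argument comparing the $\mathrm{Aut}(\Gamma)$-action on triples $(v,c)$ with the $\mathrm{Aut}(\Gamma')$-action on marked edges, the paper carries out the same bookkeeping by an explicit case analysis of how many of the four edges at $v$ are interchangeable (none, two, or three), so your version is just the systematic form of their computation.
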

\begin{proof}
\newcommand\blobgraph{{\includegraphics{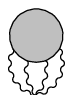}}}
\newcommand\blobgrapha{{\includegraphics{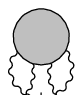}}}
\newcommand\blobgraphb{{\includegraphics{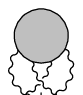}}}
\newcommand\blobgraphc{{\includegraphics{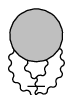}}}
\newcommand\blobgraphsmall{{\includegraphics{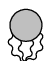}}}
\newcommand\blobgraphasmall{{\includegraphics{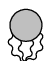}}}
\newcommand\blobgraphbsmall{{\includegraphics{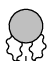}}}
\newcommand\blobgraphcsmall{{\includegraphics{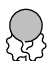}}}

Let $v\in\Gamma^{[0]}_4$ with adjacent edges $1,2,3,4$:
\begin{equation*} \Gamma = \parbox{30pt}{\centering
\begin{fmfgraph*}(20,40)
	\fmfbottom b
	\fmftop{t}
	\fmfpolyn{filled=25,smooth,tension=.5} p9
	\fmf{phantom,tension=99}{t,p7}
	\fmf{boson,tension=.25, left=.5 ,label=\scriptsize$1$,label.dist=1}{b,p1}
	\fmf{boson,tension=.25, left=.25,label=\scriptsize$2$,label.dist=1,label.pos=right}{b,p2}
	\fmf{boson,tension=.25,right=.25,label=\scriptsize$3$,label.dist=1,label.pos=right}{b,p3}
	\fmf{boson,tension=.25,right=.5 ,label=\scriptsize$4$,label.dist=1}{b,p4}
	\fmfv{label=\scriptsize$v$,label.dist=2}{b}
\end{fmfgraph*}}. \end{equation*}
(We do not show $\Gamma$'s external edges in the diagram.) Apply (\ref{eq:|}):
\begin{equation*}
\frac1{\Sym(\blobgraphsmall)} \blobgraph
=\frac1{\Sym(\blobgraphsmall)} \Big(\blobgrapha + \blobgraphb + \blobgraphc\Big) .
\end{equation*}
The following three cases can occur:
\begin{enumerate}
\item[1.] The four edges adjacent to $v$ are each un-interchangeable. In this case
\begin{equation*}
\Sym\Big(\blobgrapha\Big)
=\Sym\Big(\blobgraphb\Big)
=\Sym\Big(\blobgraphc\Big)
=\Sym\Big(\blobgraph\Big),
\end{equation*}
so that
\begin{equation*}\begin{split}
\frac1{\Sym(\blobgraphsmall)}\blobgraph
&\sim \frac1{\Sym(\blobgraphasmall)}\blobgrapha
+\frac1{\Sym(\blobgraphbsmall)}\blobgraphb 
%&\qquad
+ \frac1{\Sym(\blobgraphcsmall)}\blobgraphc .
\end{split}\end{equation*}
Note that the three graphs at the right-hand side are all non-isomorphic.
\item[2.] Two of $v$'s adjacent edges are interchangeable, say $1$ and $2$. Then
\begin{equation*} \blobgraphb = \blobgraphc. \end{equation*}
The symmetry factors of the new graphs are
\begin{equation*} \Sym\Big(\blobgrapha\Big) = \Sym\Big(\blobgraph\Big), \end{equation*}
and
\begin{equation*} \Sym\Big(\blobgraphc\Big) = \tfrac12 \Sym\Big(\blobgraph\Big), \end{equation*}
so that
\begin{equation*}\begin{split}
\frac1{\Sym(\blobgraphsmall)} \blobgraph
&\sim \frac1{\Sym(\blobgraphsmall)} \Big(\blobgrapha+2\blobgraphc\Big) \\
&= \frac1{\Sym(\blobgraphasmall)} \blobgrapha + \frac1{\Sym(\blobgraphcsmall)} \blobgraphc .
\end{split}\end{equation*}
Note that the two graphs at the right-hand side are unequal.

\item[3.] Three of $v$'s adjacent edges are interchangeable, say $1$, $2$ and $3$. Then
\begin{equation*} \blobgrapha = \blobgraphb = \blobgraphc, \end{equation*}
and
\begin{equation*} \Sym\Big(\blobgrapha\Big) = \tfrac13 \Sym\Big(\blobgraph\Big). \end{equation*}
So:
\begin{equation*}
\frac1{\Sym(\blobgraphsmall)} \blobgraph
\sim\frac3{\Sym(\blobgraphsmall)} \blobgrapha
=\frac1{\Sym(\blobgraphasmall)}\blobgrapha.
\end{equation*}
\end{enumerate}
Thus we can conclude that
\begin{equation*} \frac1{\Sym(\Gamma)}\Gamma = \sum_{\substack{\Gamma'\in\bar{\mathcal G}^{n,l}_{k-1\fourvertex,k'+1\markededge}\\ \exists e\in\Gamma^{\prime[1]}_{\rm marked}:\Gamma'/e=\Gamma\\ \text{where the new vertex is $v$} }} \frac1{\Sym(\Gamma')}\Gamma'. \end{equation*}
The result follows up summing this over all 4-valent vertices in $\Gamma$, giving rise to the factor $\#\Gamma^{[0]}_4=k$:
\begin{equation*}
\Gamma \sim \frac1k \sum_{\substack{\Gamma'\in\bar{\mathcal G}^{n,l}_{k-1\fourvertex,k'+1\markededge}\\ \exists e\in\Gamma^{\prime[1]}_{\rm marked}:\Gamma'/e=\Gamma }} \frac1{\Sym(\Gamma')}\Phi(\Gamma') .
\end{equation*}
\end{proof}

\begin{exmp}
Take $\Gamma =
\parbox{25pt}{\begin{fmfgraph*}(25,25)
	\fmfsurroundn e3
	\fmf{boson,tension=2}{e1,v1}
	\fmf{boson,tension=2}{e2,v2}
	\fmf{boson,tension=2}{e3,v2}
	\fmf{boson,left,tension=.5}{v1,v2,v1}
\end{fmfgraph*}}
$ and apply equation (\ref{eq:|}) to the 4-valent vertex:
\begin{equation} \tfrac12
\parbox{25pt}{\begin{fmfgraph*}(25,25)
	\fmfsurroundn e3
	\fmf{boson,tension=2}{e1,v1}
	\fmf{boson,tension=2}{e2,v2}
	\fmf{boson,tension=2}{e3,v2}
	\fmf{boson,left,tension=.5}{v1,v2,v1}
\end{fmfgraph*}}
\sim \tfrac12 \Big(
\parbox{25pt}{\begin{fmfgraph*}(25,25)
	\fmfsurroundn e3
	\fmf{boson,tension=2}{e1,v1}
	\fmf{boson,tension=2}{e2,v2}
	\fmf{boson,tension=2}{e3,v3}
	\fmf{boson}{v3,v1,v2}
	\fmf{marked}{v2,v3}
\end{fmfgraph*}}
+ \parbox{25pt}{\begin{fmfgraph*}(25,25)
	\fmfsurroundn e3
	\fmf{boson,tension=2}{e1,v1}
	\fmf{boson,tension=2}{e2,v2}
	\fmf{boson,tension=2}{e3,v3}
	\fmf{phantom}{v3,v1,v2}
	\fmf{marked}{v2,v3}
	\fmffreeze
	\fmfi{boson}{vloc(__v1){dir -90} .. vloc(__v2){dir 120}}
	\fmfi{boson,rubout=3}{vloc(__v1){dir 90} .. vloc(__v3){dir -120}}
\end{fmfgraph*}}
+ \parbox{25pt}{\begin{fmfgraph*}(25,25)
	\fmfsurroundn e3
	\fmf{boson,tension=2}{e1,v1}
	\fmf{boson,tension=2}{e2,v3}
	\fmf{boson,tension=2}{e3,v3}
	\fmf{boson,left,tension=.5}{v1,v2,v1}
	\fmf{marked}{v2,v3}
\end{fmfgraph*}}
\Big) 
=
\parbox{25pt}{\begin{fmfgraph*}(25,25)
	\fmfsurroundn e3
	\fmf{boson,tension=2}{e1,v1}
	\fmf{boson,tension=2}{e2,v2}
	\fmf{boson,tension=2}{e3,v3}
	\fmf{boson}{v3,v1,v2}
	\fmf{marked}{v2,v3}
\end{fmfgraph*}}
+ \tfrac12 \parbox{25pt}{\begin{fmfgraph*}(25,25)
	\fmfsurroundn e3
	\fmf{boson,tension=2}{e1,v1}
	\fmf{boson,tension=2}{e2,v3}
	\fmf{boson,tension=2}{e3,v3}
	\fmf{boson,left,tension=.5}{v1,v2,v1}
	\fmf{marked}{v2,v3}
\end{fmfgraph*}}
. \label{eq:exmp:graph-4v-|-n=3}\end{equation}
\end{exmp}

\begin{exmp}
Take $\Gamma =
\parbox{25pt}{\begin{fmfgraph*}(25,12.5)
	\fmfsurroundn e2
	\fmf{boson,tension=2}{e1,v1}
	\fmf{boson,tension=2}{e2,v2}
	\fmf{boson}{v1,v2}
	\fmffreeze
	\fmf{boson,left}{v1,v2,v1}
\end{fmfgraph*}}
$ and apply (\ref{eq:|}) to one of the two vertices:
\begin{equation}
\tfrac16
\parbox{25pt}{\begin{fmfgraph*}(25,12.5)
	\fmfsurroundn e2
	\fmf{boson,tension=2}{e1,v1}
	\fmf{boson,tension=2}{e2,v2}
	\fmf{boson}{v1,v2}
	\fmffreeze
	\fmf{boson,left}{v1,v2,v1}
\end{fmfgraph*}}
\sim \tfrac16 \Big(
\parbox{25pt}{\begin{fmfgraph*}(25,12.5)
	\fmfsurroundn e2
	\fmf{boson,tension=2}{e1,v1}
	\fmf{boson,tension=2}{e2,v2}
	\fmf{boson,left,tension=.5}{v1,v2}
	\fmf{phantom,right,tension=.5,tag=1}{v1,v2}
	\fmffreeze
	\fmfi{marked}{subpath (1,2) of vpath1(__v1,__v2)}
	\fmfi{boson}{subpath (0,1) of vpath1(__v1,__v2)}
	\fmfi{boson}{point 1 of vpath1(__v1,__v2){down} .. point 0 of vpath1(__v1,__v2){right}}
\end{fmfgraph*}}
+ \parbox{25pt}{\begin{fmfgraph*}(25,12.5)
	\fmfsurroundn e2
	\fmf{boson,tension=2}{e1,v1}
	\fmf{boson,tension=2}{e2,v2}
	\fmf{boson,right,tension=.5}{v1,v2}
	\fmf{phantom,left,tension=.5,tag=1}{v1,v2}
	\fmffreeze
	\fmfi{marked}{subpath (1,2) of vpath1(__v1,__v2)}
	\fmfi{boson}{subpath (0,1) of vpath1(__v1,__v2)}
	\fmfi{boson}{point 1 of vpath1(__v1,__v2) .. point 0 of vpath1(__v1,__v2){right}}
\end{fmfgraph*}}
+\parbox{25pt}{\begin{fmfgraph*}(25,12.5)
	\fmfsurroundn e2
	\fmf{boson,tension=2}{e1,v1}
	\fmf{boson,tension=2}{e2,v2}
	\fmf{phantom,right,tag=1}{v1,v2}
	\fmffreeze
	\fmfi{marked}{subpath (1,2) of vpath1(__v1,__v2)}
	\fmfi{boson}{subpath (0,1) of vpath1(__v1,__v2)}
	\fmfi{boson}{vloc(__v1){dir 180} .. vloc(__v2){dir 180}}
	\fmfi{boson,rubout=3}{point 1 of vpath1(__v1,__v2){dir 240} .. point 0 of vpath1(__v1,__v2){dir 90}}
\end{fmfgraph*}}
\Big)
= \tfrac12
\parbox{25pt}{\begin{fmfgraph*}(25,12.5)
	\fmfsurroundn e2
	\fmf{boson,tension=2}{e1,v1}
	\fmf{boson,tension=2}{e2,v2}
	\fmf{boson,left,tension=.5}{v1,v2}
	\fmf{phantom,right,tension=.5,tag=1}{v1,v2}
	\fmffreeze
	\fmfi{marked}{subpath (1,2) of vpath1(__v1,__v2)}
	\fmfi{boson}{subpath (0,1) of vpath1(__v1,__v2)}
	\fmfi{boson}{point 1 of vpath1(__v1,__v2) .. point 0 of vpath1(__v1,__v2){right}}
\end{fmfgraph*}}
. \label{eq:exmp:graph-4v-|-l=2-1}\end{equation}
Analogously, we get
\begin{align}
\tfrac16
\parbox{25pt}{\begin{fmfgraph*}(25,12.5)
	\fmfsurroundn e2
	\fmf{boson,tension=2}{e1,v1}
	\fmf{boson,tension=2}{e2,v2}
	\fmf{boson}{v1,v2}
	\fmffreeze
	\fmf{boson,left}{v1,v2,v1}
\end{fmfgraph*}}
&\sim \tfrac12
\parbox{25pt}{\begin{fmfgraph*}(25,12.5)
	\fmfsurround{e2,e1}
	\fmf{boson,tension=2}{e1,v1}
	\fmf{boson,tension=2}{e2,v2}
	\fmf{boson,right,tension=.5}{v1,v2}
	\fmf{phantom,left,tension=.5,tag=1}{v1,v2}
	\fmffreeze
	\fmfi{marked}{subpath (1,2) of vpath1(__v1,__v2)}
	\fmfi{boson}{subpath (0,1) of vpath1(__v1,__v2)}
	\fmfi{boson}{point 1 of vpath1(__v1,__v2) .. point 0 of vpath1(__v1,__v2){left}}
\end{fmfgraph*}},\nonumber
\intertext{so that we can write}
\tfrac16
\parbox{25pt}{\begin{fmfgraph*}(25,12.5)
	\fmfsurroundn e2
	\fmf{boson,tension=2}{e1,v1}
	\fmf{boson,tension=2}{e2,v2}
	\fmf{boson}{v1,v2}
	\fmffreeze
	\fmf{boson,left}{v1,v2,v1}
\end{fmfgraph*}}
&\sim \tfrac12 \Big( \tfrac12
\parbox{25pt}{\begin{fmfgraph*}(25,12.5)
	\fmfsurroundn e2
	\fmf{boson,tension=2}{e1,v1}
	\fmf{boson,tension=2}{e2,v2}
	\fmf{boson,left,tension=.5}{v1,v2}
	\fmf{phantom,right,tension=.5,tag=1}{v1,v2}
	\fmffreeze
	\fmfi{marked}{subpath (1,2) of vpath1(__v1,__v2)}
	\fmfi{boson}{subpath (0,1) of vpath1(__v1,__v2)}
	\fmfi{boson}{point 1 of vpath1(__v1,__v2) .. point 0 of vpath1(__v1,__v2){right}}
\end{fmfgraph*}}
+ \tfrac12
\parbox{25pt}{\begin{fmfgraph*}(25,12.5)
	\fmfsurround{e2,e1}
	\fmf{boson,tension=2}{e1,v1}
	\fmf{boson,tension=2}{e2,v2}
	\fmf{boson,right,tension=.5}{v1,v2}
	\fmf{phantom,left,tension=.5,tag=1}{v1,v2}
	\fmffreeze
	\fmfi{marked}{subpath (1,2) of vpath1(__v1,__v2)}
	\fmfi{boson}{subpath (0,1) of vpath1(__v1,__v2)}
	\fmfi{boson}{point 1 of vpath1(__v1,__v2) .. point 0 of vpath1(__v1,__v2){left}}
\end{fmfgraph*}}
\Big). \label{eq:exmp:graph-4v-|-l=2-2}
\end{align}
\end{exmp}

\begin{exmp}
\begin{equation*}\begin{split}
\tfrac12
\parbox{25pt}{\begin{fmfgraph*}(25,12.5)
	\fmfsurroundn e2
	\fmf{boson,tension=2}{e1,v1}
	\fmf{boson,tension=2}{e2,v2}
	\fmf{boson,left,tension=.5}{v1,v2}
	\fmf{phantom,right,tension=.5,tag=1}{v1,v2}
	\fmffreeze
	\fmfi{marked}{subpath (1,2) of vpath1(__v1,__v2)}
	\fmfi{boson}{subpath (0,1) of vpath1(__v1,__v2)}
	\fmfi{boson}{point 1 of vpath1(__v1,__v2) .. point 0 of vpath1(__v1,__v2){right}}
\end{fmfgraph*}}
&\sim \tfrac12 \Big(
\parbox{25pt}{\begin{fmfgraph*}(25,12.5)
	\fmfsurroundn e2
	\fmf{boson,tension=2}{e1,v1}
	\fmf{boson,tension=2}{e2,v2}
	\fmf{boson,left,tension=.5}{v1,v2}
	\fmf{phantom,right,tension=.5,tag=1}{v1,v2}
	\fmffreeze
	\fmfi{marked}{subpath (0,2/3) of vpath1(__v1,__v2)}
	\fmfi{marked}{subpath (4/3,2) of vpath1(__v1,__v2)}
	\fmfi{boson}{point 2/3 of vpath1(__v1,__v2){up} .. point 4/3 of vpath1(__v1,__v2){down}}
	\fmfi{boson}{point 2/3 of vpath1(__v1,__v2){down} .. point 4/3 of vpath1(__v1,__v2){up}}
\end{fmfgraph*}}
+
\parbox{25pt}{
\begin{fmfgraph*}(25,12.5)
	\fmfleft{l}
	\fmfright{r}
	\fmf{phantom}{l,v1,v2,r}
	\fmf{boson}{l,v1}
	\fmf{boson}{v2,r}
	\fmf{phantom,left,tension=0,tag=1}{v1,v2}
	\fmf{phantom,right,tension=0,tag=2}{v1,v2}
	\fmffreeze
	\fmfi{marked}{subpath (0,1) of vpath1(__v1,__v2)}
	\fmfi{boson}{subpath (0,1) of vpath2(__v1,__v2)}
	\fmfi{boson}{point 1 of vpath1(__v1,__v2) .. point 1 of vpath2(__v1,__v2)}
	\fmfi{boson}{subpath (1,2) of vpath1(__v1,__v2)}
	\fmfi{marked}{subpath (1,2) of vpath2(__v1,__v2)}
\end{fmfgraph*}}
+
\parbox{25pt}{
\begin{fmfgraph*}(25,12.5)
	\fmfleft{l}
	\fmfright{r}
	\fmf{phantom}{l,v1,v2,r}
	\fmf{boson}{l,v1}
	\fmf{boson}{v2,r}
	\fmf{phantom,left,tension=0,tag=1}{v1,v2}
	\fmf{phantom,right,tension=0,tag=2}{v1,v2}
	\fmffreeze
	\fmfi{marked}{subpath (0,1) of vpath1(__v1,__v2)}
	\fmfi{boson}{subpath (0,1) of vpath2(__v1,__v2)}
	\fmfi{marked}{subpath (1,2) of vpath2(__v1,__v2)}
	\fmfi{boson}{point 1 of vpath1(__v1,__v2){dir -120} .. vloc(__v2){dir 0}}
	\fmfi{boson,rubout=3}{point 1 of vpath1(__v1,__v2){dir 0}
		.. point 1 of vpath2(__v1,__v2){dir -60}}
\end{fmfgraph*}}
\Big)
= \tfrac12
\parbox{25pt}{\begin{fmfgraph*}(25,12.5)
	\fmfsurroundn e2
	\fmf{boson,tension=2}{e1,v1}
	\fmf{boson,tension=2}{e2,v2}
	\fmf{boson,left,tension=.5}{v1,v2}
	\fmf{phantom,right,tension=.5,tag=1}{v1,v2}
	\fmffreeze
	\fmfi{marked}{subpath (0,2/3) of vpath1(__v1,__v2)}
	\fmfi{marked}{subpath (4/3,2) of vpath1(__v1,__v2)}
	\fmfi{boson}{point 2/3 of vpath1(__v1,__v2){up} .. point 4/3 of vpath1(__v1,__v2){down}}
	\fmfi{boson}{point 2/3 of vpath1(__v1,__v2){down} .. point 4/3 of vpath1(__v1,__v2){up}}
\end{fmfgraph*}}
+
\parbox{25pt}{
\begin{fmfgraph*}(25,12.5)
	\fmfleft{l}
	\fmfright{r}
	\fmf{phantom}{l,v1,v2,r}
	\fmf{boson}{l,v1}
	\fmf{boson}{v2,r}
	\fmf{phantom,left,tension=0,tag=1}{v1,v2}
	\fmf{phantom,right,tension=0,tag=2}{v1,v2}
	\fmffreeze
	\fmfi{marked}{subpath (0,1) of vpath1(__v1,__v2)}
	\fmfi{boson}{subpath (0,1) of vpath2(__v1,__v2)}
	\fmfi{boson}{point 1 of vpath1(__v1,__v2) .. point 1 of vpath2(__v1,__v2)}
	\fmfi{boson}{subpath (1,2) of vpath1(__v1,__v2)}
	\fmfi{marked}{subpath (1,2) of vpath2(__v1,__v2)}
\end{fmfgraph*}}
.
\end{split}\end{equation*}
\end{exmp}

The next lemma is crucial, as it shows that the fundamental relation between a 4-gluon vertex and a pair of 3-gluon vertices, in all three channels,
gives a relation between combinatorial Green functions. We would have no chance at getting a well-defined gauge theory without such a relation.
\begin{lem}\label{lem:||&4-vertices}\begin{enumerate}
\item For any $k$ and $k'$, $k'<k$:
\begin{equation*}
\frac1{(\fracx k{k'})} X^{n.l}_{k-k'\fourvertex,k'\markededge} \sim \frac1{(\fracx k{k'+1})} X^{n,l}_{k-k'-1\fourvertex,k'+1\markededge}.
\end{equation*}
\item For any $k$:
\begin{equation*}
X^{n,l}_{k\fourvertex} \sim X^{n,l}_{k\markededge}.
\end{equation*}
\end{enumerate}\end{lem}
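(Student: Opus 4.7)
My plan is to prove part (i) by applying Lemma \ref{lem:graph-4v-|} to every graph appearing in $X^{n,l}_{k-k'\fourvertex, k'\markededge}$ and then swapping the order of summation in the resulting double sum, and then to obtain part (ii) as a telescoping of part (i).

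For part (i), I apply Lemma \ref{lem:graph-4v-|} (with the role of $k$ in the lemma played by $k-k'$) to each $\Gamma \in \bar{\mathcal G}^{n,l}_{k-k'\fourvertex, k'\markededge}$, giving
\[
\frac{\Gamma}{\Sym(\Gamma)} \sim \frac{1}{k-k'}\sum_{\Gamma'}\frac{\Gamma'}{\Sym(\Gamma')},
\]
where $\Gamma'$ ranges over the graphs in $\bar{\mathcal G}^{n,l}_{k-k'-1\fourvertex, k'+1\markededge}$ obtained by blowing up some 4-vertex of $\Gamma$ into a marked edge. Summing over $\Gamma$ with the appropriate colour weights and swapping the order of summation, each target graph $\Gamma'$ receives one contribution for every marked edge $e$ of $\Gamma'$ with $\Gamma'/e = \Gamma$ in the sum. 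Since $\Gamma'$ has exactly $k'+1$ marked edges, each of which contracts to some graph on the other side, this gives total multiplicity $k'+1$, producing
\[
X^{n,l}_{k-k'\fourvertex, k'\markededge} \sim \frac{k'+1}{k-k'}\, X^{n,l}_{k-k'-1\fourvertex, k'+1\markededge}.
\]
Using the identity $\binom{k}{k'+1}/\binom{k}{k'} = (k-k')/(k'+1)$, this rearranges into the claim of part (i).

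Part (ii) is then an immediate telescoping: applying part (i) successively for $k' = 0, 1, \ldots, k-1$ gives a chain of equivalences
\[
\frac{1}{\binom{k}{0}}X^{n,l}_{k\fourvertex, 0\markededge} \sim \frac{1}{\binom{k}{1}}X^{n,l}_{k-1\fourvertex, 1\markededge} \sim \cdots \sim \frac{1}{\binom{k}{k}}X^{n,l}_{0\fourvertex, k\markededge},
\]
and since $\binom{k}{0} = \binom{k}{k} = 1$ the extreme terms give exactly $X^{n,l}_{k\fourvertex} \sim X^{n,l}_{k\markededge}$.

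The main obstacle is the combinatorial bookkeeping of the $\Sym$ and colour weights when swapping the order of summation in part (i). When an automorphism of $\Gamma'$ identifies two of its marked edges, contracting either one yields isomorphic $\Gamma$'s, which might seem to threaten the naive count of $k'+1$; however, the factor $1/\Sym(\Gamma')$ from Lemma \ref{lem:graph-4v-|} already absorbs this overcounting, so the multiplicity $k'+1$ is correct after all identifications. On the colour side, equation (\ref{eq:|}) expresses the colour of a 4-gluon vertex as the sum of the three channel colours carried by the three marked-edge insertions, so the colour of each $\Gamma$ distributes correctly across the three associated $\Gamma'$'s in the blow-up, making the colour-weighted sum-swap consistent.
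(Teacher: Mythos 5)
Your proposal is correct and follows essentially the same route as the paper: apply Lemma \ref{lem:graph-4v-|} graph by graph, swap the double sum to pick up the multiplicity $k'+1$ from the marked edges of each target graph $\Gamma'$, invoke the identity $\binom{k}{k'+1}=\frac{k-k'}{k'+1}\binom{k}{k'}$, and obtain part (ii) by iterating part (i) from $k'=0$ to $k'=k$ (the paper phrases this as induction, you as telescoping --- the same argument). Your added remarks on how $1/\Sym(\Gamma')$ absorbs identifications of marked edges under automorphisms and on the distribution of colour factors across the three channels are consistent with, and slightly more explicit than, the paper's treatment.
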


\begin{proof}
For i. we have from lemma \ref{lem:graph-4v-|} for the Green's function $X^{n,l}_{k-k'\fourvertex,k'\markededge}$:
\begin{equation*}\begin{split}
 X^{n,l}_{k-k'\fourvertex,k'\markededge}
&= \sum_{\Gamma\in\bar{\mathcal G}^{n,l}_{k-k'\fourvertex,k'\markededge}} \frac1{\Sym(\Gamma)} \Gamma \\
&\sim \frac1{k-k'} \sum_{\Gamma\in\bar{\mathcal G}^{n,l}_{k-k'\fourvertex\,k'\markededge}} \sum_{\substack{\Gamma'\in\bar{\mathcal G}^{n,l}_{k-k'-1\fourvertex,k'+1\markededge}\\ \exists e\in\Gamma^{\prime[1]}_{\rm marked}:\Gamma'/e=\Gamma }} \frac1{\Sym(\Gamma')} \Gamma' \\
&= \frac{k'+1}{k-k'} \sum_{\Gamma'\in\bar{\mathcal G}^{n,l}_{k-k'-1\fourvertex,k'+1\markededge}} \frac1{\Sym(\Gamma')} \Gamma' \\
&= \frac{k'+1}{k-k'} X^{n,l}_{k-k'-1\fourvertex,k'+1\markededge} .
\end{split}\end{equation*}
The factor $k'+1$ appears because every graph $\Gamma'\in\bar{\mathcal G}^{n,l}_{k-k'-1\fourvertex,k+1'\markededge}$ can be obtained from $\#\Gamma'^{[1]}_{\rm marked}=k'+1$ graphs $\Gamma\in\bar{\mathcal G}^{n,l}_{k-k'\fourvertex,k'\markededge}$ by applying (\ref{eq:|}). 
Using the identity
\begin{equation*} \Big(\fracx k{k'+1}\Big)
= \frac{k-k'}{k'+1}\Big(\fracx k{k'}\Big), \end{equation*}
it follows that
\begin{equation*} \frac1{(\fracx k{k'})} X^{n,l}_{k-k'\fourvertex,k'\markededge} \sim \frac1{(\fracx k{k'+1})} X^{n,l}_{k-k'-1\fourvertex,k'+1\markededge} . \end{equation*}

For ii. we have
\begin{equation*} X^{n,l}_{k\fourvertex} \sim \frac1{(\fracx k{k'})} X^{n,l}_{k-k'\fourvertex,k'\markededge}. \end{equation*}
This is true by induction: it is an equality for $k'=0$ and the inductive step is true by i.. Taking $k'=k$ gives:
\begin{equation*} X^{n,l}_{k\fourvertex} \sim X^{n,l}_{k\markededge}. \qedhere \end{equation*}
\end{proof}

In the following, if $r$ is an $n$-gluon amplitude, we often replace the subscript $r$ by $n$, as in this example:
\begin{exmp}
Take $n=3$, $l=1$ and $k=1$.
\begin{equation*}\begin{split}
X^{3,1}_{1\fourvertex} &= \tfrac12
\parbox{25pt}{\begin{fmfgraph*}(25,25)
	\fmfsurroundn e3
	\fmf{boson,tension=2}{e1,v1}
	\fmf{boson,tension=2}{e2,v2}
	\fmf{boson,tension=2}{e3,v2}
	\fmf{boson,left,tension=.5}{v1,v2,v1}
\end{fmfgraph*}}
+ \tfrac12
\parbox{25pt}{\begin{fmfgraph*}(25,25)
	\fmfsurround{e3,e1,e2}
	\fmf{boson,tension=2}{e1,v1}
	\fmf{boson,tension=2}{e2,v2}
	\fmf{boson,tension=2}{e3,v2}
	\fmf{boson,left,tension=.5}{v1,v2,v1}
\end{fmfgraph*}}
+ \tfrac12
\parbox{25pt}{\begin{fmfgraph*}(25,25)
	\fmfsurround{e2,e3,e1}
	\fmf{boson,tension=2}{e1,v1}
	\fmf{boson,tension=2}{e2,v2}
	\fmf{boson,tension=2}{e3,v2}
	\fmf{boson,left,tension=.5}{v1,v2,v1}
\end{fmfgraph*}}
\\
&\sim 
\parbox{25pt}{\begin{fmfgraph*}(25,25)
	\fmfsurroundn e3
	\fmf{boson,tension=2}{e1,v1}
	\fmf{boson,tension=2}{e2,v2}
	\fmf{boson,tension=2}{e3,v3}
	\fmf{boson}{v3,v1,v2}
	\fmf{marked}{v2,v3}
\end{fmfgraph*}}
+ \parbox{25pt}{\begin{fmfgraph*}(25,25)
	\fmfsurround{e3,e1,e2}
	\fmf{boson,tension=2}{e1,v1}
	\fmf{boson,tension=2}{e2,v2}
	\fmf{boson,tension=2}{e3,v3}
	\fmf{boson}{v3,v1,v2}
	\fmf{marked}{v2,v3}
\end{fmfgraph*}}
+ \parbox{25pt}{\begin{fmfgraph*}(25,25)
	\fmfsurround{e2,e3,e1}
	\fmf{boson,tension=2}{e1,v1}
	\fmf{boson,tension=2}{e2,v2}
	\fmf{boson,tension=2}{e3,v3}
	\fmf{boson}{v3,v1,v2}
	\fmf{marked}{v2,v3}
\end{fmfgraph*}}
+ \tfrac12 \parbox{25pt}{\begin{fmfgraph*}(25,25)
	\fmfsurroundn e3
	\fmf{boson,tension=2}{e1,v1}
	\fmf{boson,tension=2}{e2,v3}
	\fmf{boson,tension=2}{e3,v3}
	\fmf{boson,left,tension=.5}{v1,v2,v1}
	\fmf{marked}{v2,v3}
\end{fmfgraph*}}
+ \tfrac12 \parbox{25pt}{\begin{fmfgraph*}(25,25)
	\fmfsurround{e3,e1,e2}
	\fmf{boson,tension=2}{e1,v1}
	\fmf{boson,tension=2}{e2,v3}
	\fmf{boson,tension=2}{e3,v3}
	\fmf{boson,left,tension=.5}{v1,v2,v1}
	\fmf{marked}{v2,v3}
\end{fmfgraph*}}
+ \tfrac12 \parbox{25pt}{\begin{fmfgraph*}(25,25)
	\fmfsurround{e2,e3,e1}
	\fmf{boson,tension=2}{e1,v1}
	\fmf{boson,tension=2}{e2,v3}
	\fmf{boson,tension=2}{e3,v3}
	\fmf{boson,left,tension=.5}{v1,v2,v1}
	\fmf{marked}{v2,v3}
\end{fmfgraph*}}
= X^{3,1}_{1\markededge}
\end{split}\end{equation*}
where we have used Eq.(\ref{eq:exmp:graph-4v-|-n=3}).
\end{exmp}

\begin{exmp}
Take $n=2$, $l=2$ and $k=2$. Note that $\bar{\mathcal G}^{2,2}_{2\fourvertex}$ contains just one graph and use (\ref{eq:exmp:graph-4v-|-l=2-1}) and (\ref{eq:exmp:graph-4v-|-l=2-2}):
\begin{equation*}
X^{2,2}_{2\fourvertex}
= \tfrac16
\parbox{25pt}{\begin{fmfgraph*}(25,12.5)
	\fmfsurroundn e2
	\fmf{boson,tension=2}{e1,v1}
	\fmf{boson,tension=2}{e2,v2}
	\fmf{boson}{v1,v2}
	\fmffreeze
	\fmf{boson,left}{v1,v2,v1}
\end{fmfgraph*}}
\sim \tfrac12
\parbox{25pt}{\begin{fmfgraph*}(25,12.5)
	\fmfsurroundn e2
	\fmf{boson,tension=2}{e1,v1}
	\fmf{boson,tension=2}{e2,v2}
	\fmf{boson,left,tension=.5}{v1,v2}
	\fmf{phantom,right,tension=.5,tag=1}{v1,v2}
	\fmffreeze
	\fmfi{marked}{subpath (1,2) of vpath1(__v1,__v2)}
	\fmfi{boson}{subpath (0,1) of vpath1(__v1,__v2)}
	\fmfi{boson}{point 1 of vpath1(__v1,__v2) .. point 0 of vpath1(__v1,__v2){right}}
\end{fmfgraph*}}
\sim \tfrac12
\parbox{25pt}{\begin{fmfgraph*}(25,12.5)
	\fmfsurroundn e2
	\fmf{boson,tension=2}{e1,v1}
	\fmf{boson,tension=2}{e2,v2}
	\fmf{boson,left,tension=.5}{v1,v2}
	\fmf{phantom,right,tension=.5,tag=1}{v1,v2}
	\fmffreeze
	\fmfi{marked}{subpath (0,2/3) of vpath1(__v1,__v2)}
	\fmfi{marked}{subpath (4/3,2) of vpath1(__v1,__v2)}
	\fmfi{boson}{point 2/3 of vpath1(__v1,__v2){up} .. point 4/3 of vpath1(__v1,__v2){down}}
	\fmfi{boson}{point 2/3 of vpath1(__v1,__v2){down} .. point 4/3 of vpath1(__v1,__v2){up}}
\end{fmfgraph*}}
+
\parbox{25pt}{
\begin{fmfgraph*}(25,12.5)
	\fmfleft{l}
	\fmfright{r}
	\fmf{phantom}{l,v1,v2,r}
	\fmf{boson}{l,v1}
	\fmf{boson}{v2,r}
	\fmf{phantom,left,tension=0,tag=1}{v1,v2}
	\fmf{phantom,right,tension=0,tag=2}{v1,v2}
	\fmffreeze
	\fmfi{marked}{subpath (0,1) of vpath1(__v1,__v2)}
	\fmfi{boson}{subpath (0,1) of vpath2(__v1,__v2)}
	\fmfi{boson}{point 1 of vpath1(__v1,__v2) .. point 1 of vpath2(__v1,__v2)}
	\fmfi{boson}{subpath (1,2) of vpath1(__v1,__v2)}
	\fmfi{marked}{subpath (1,2) of vpath2(__v1,__v2)}
\end{fmfgraph*}}
= X^{2,2}_{2\markededge} .
\end{equation*}
\end{exmp}
\begin{lem}\label{lem:skeletongraphs}
\begin{enumerate}
\item For any graph $\Gamma\in\bar{\mathcal G}^{n,l}$:
\begin{samepage}\begin{equation*} \frac1{\Sym(\Gamma)} \frac{(\chi_+)^k}{k!} \Gamma
= \sum_{\substack{\Gamma'\in\bar{\mathcal G}^{n,l}_{k\markededge}\\ \skeleton(\Gamma')=\Gamma}} \frac1{\Sym(\Gamma')} \Gamma'. 
\end{equation*}
where the skeleton of a marked graph is the graph with its markings removed.
\end{samepage}
\item For any $k \geq 0$, $\frac{(\chi_+)^k}{k!} X^{n,l} = X^{n,l}_{k\markededge}.$
\item We have $e^{\chi_+} X^{n,l} = X^{n,l}_{/\markededge}= X^{n,l}_{/\fourvertex}.$
\end{enumerate}
\end{lem}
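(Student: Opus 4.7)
My plan is to prove the three items in order, since (ii) is a direct sum of (i) over $\Gamma$ and (iii) is a direct sum of (ii) over $k$, together with a single invocation of Lemma \ref{lem:||&4-vertices}(ii) at the end. All the real work lives in (i).

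For (i), I fix $\Gamma \in \bar{\mathcal G}^{n,l}$ and compute $\chi_+^k\Gamma$ directly from the definition of $\chi_+$ in Definition \ref{4-1}. Each application of $\chi_+$ increments the set of marked internal edges by exactly one, so iterating $k$ times gives a sum over ordered $k$-tuples $(e_1,\dots,e_k)$ of pairwise distinct internal unmarked edges (ignoring those forbidden by the adjacency constraint of Definition \ref{4-1}). Dividing by $k!$ collapses ordered tuples into unordered $k$-subsets $T \subseteq E_I^\Gamma$, each contributing one marked graph $\Gamma_T$ with $\mathrm{skeleton}(\Gamma_T)=\Gamma$.

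The next step is to pass from labelled $k$-subsets to isomorphism classes of marked graphs via orbit-stabiliser. The group $\mathrm{Aut}(\Gamma)$ acts on $k$-subsets of internal edges, and two subsets $T,T'$ yield isomorphic marked graphs iff they lie in the same orbit; the stabiliser of $T$ is precisely the group of automorphisms of $\Gamma$ that fix $T$ setwise, which is by definition $\mathrm{Aut}(\Gamma_T)=\mathrm{Sym}(\Gamma_T)$. Hence the number of $k$-subsets producing a given isomorphism class $[\Gamma']$ equals $\mathrm{Sym}(\Gamma)/\mathrm{Sym}(\Gamma')$, and dividing the identity for $\chi_+^k\Gamma/k!$ by $\mathrm{Sym}(\Gamma)$ yields (i). For (ii), I sum (i) over all $\Gamma\in\bar{\mathcal G}^{n,l}$: because removing all markings is a well-defined map $\bar{\mathcal G}^{n,l}_{k\markededge}\to \bar{\mathcal G}^{n,l}$, the double sum rearranges as a single sum over $\Gamma'\in \bar{\mathcal G}^{n,l}_{k\markededge}$ weighted by $1/\mathrm{Sym}(\Gamma')$, which is $X^{n,l}_{k\markededge}$. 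For (iii) I then sum (ii) over $k\ge 0$ to obtain $e^{\chi_+}X^{n,l}=\sum_k X^{n,l}_{k\markededge}=X^{n,l}_{/\markededge}$, and the final equality $X^{n,l}_{/\markededge}=X^{n,l}_{/\fourvertex}$ follows from Lemma \ref{lem:||&4-vertices}(ii) summed over $k$.

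The only genuine obstacle is the bookkeeping in (i): one has to verify that the adjacency exclusion in Definition \ref{4-1} is stable under $\mathrm{Aut}(\Gamma)$ (so that $\mathrm{Aut}(\Gamma)$ really does act on the set of admissible $k$-subsets), and that the stabiliser of $T$ inside $\mathrm{Aut}(\Gamma)$ coincides with the marked-graph automorphism group $\mathrm{Aut}(\Gamma_T)$. Both checks are formal, but they are the conceptual content of the lemma; items (ii) and (iii) then reduce to interchanging finite sums and invoking an already-proven identity.
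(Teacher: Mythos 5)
Your proposal follows essentially the same route as the paper: expand $(\chi_+)^k/k!$ as a sum over $k$-subsets of internal edges, group the resulting marked graphs by isomorphism class, and identify the multiplicity of each class as $\Sym(\Gamma)/\Sym(\Gamma')$, with (ii) obtained by summing over $\Gamma$ and (iii) by summing over $k$ together with Lemma \ref{lem:||&4-vertices}.ii. The only difference is that you make explicit the orbit--stabiliser justification of the multiplicity count (and the check that the stabiliser of a marked edge set is $\mathrm{Aut}(\Gamma')$), which the paper leaves implicit.
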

\begin{proof}
For i. we have
\begin{equation*}\begin{split}
 \frac1{\Sym(\Gamma)} \frac{(\chi_+)^k}{k!} [\Gamma]
&= \frac1{\Sym(\Gamma)} \sum_{\{e_1,\ldots,e_k\}\subset\Gamma^{[1]}_{\rm int}} [\chi_+^{e_1}\cdots\chi_+^{k_k}\Gamma] \\
&= \frac1{\Sym(\Gamma)} \sum_{\substack{\Gamma'\in\bar{\mathcal G}^{n,l}_{k\markededge}\\ \skeleton(\Gamma')=[\Gamma]}} \#\big\{\{e_1,\ldots,e_k\}\subset\Gamma^{[1]}_{\rm int} \ \big|\ [\chi_+^{e_1}\cdots\chi_+^{k_k}\Gamma]=\Gamma'\big\}\ \Gamma'\\
& = \sum_{\substack{\Gamma'\in\bar{\mathcal G}^{n,l}_{k\markededge}\\ \skeleton(\Gamma')=[\Gamma]}} \frac1{\Sym(\Gamma')}\Gamma'.
\end{split}\end{equation*}
For ii. we apply i. to the combinatorial Green's function $X^{n,l}$, instead of a single graph. Summing %equation (\ref{eq:lem:skeletongraphs}) 
over all graphs in $\bar{\mathcal G}^{n,l}$ yields
\begin{equation*}\begin{split}
\frac{(\chi_+)^k}{k!} X^{n,l}
&= \sum_{\Gamma\in\bar{\mathcal G}^{n,l}} \frac1{\Sym(\Gamma)} \frac{(\chi_+)^k}{k!} \Gamma \\
&= \sum_{\Gamma'\in\bar{\mathcal G}^{n,l}_{k\markededge}} \frac1{\Sym(\Gamma')} \Gamma'
= X^{n,l}_{k\markededge}.
\end{split}\end{equation*}
Finally, iii. follows directly from ii.\ by taking the sum over $k$.

\end{proof}

\begin{exmp}
Take $
\parbox{25pt}{\begin{fmfgraph*}(25,12.5)
	\fmfleft{l}
	\fmfright{r}
	\fmf{phantom}{l,v1,v2,r}
	\fmf{boson}{l,v1}
	\fmf{boson}{v2,r}
	\fmf{phantom,left,tension=0,tag=1}{v1,v2}
	\fmf{phantom,right,tension=0,tag=2}{v1,v2}
	\fmffreeze
	\fmfi{boson}{subpath (0,1) of vpath1(__v1,__v2)}
	\fmfi{boson}{subpath (0,1) of vpath2(__v1,__v2)}
	\fmfi{boson}{point 1 of vpath1(__v1,__v2) .. point 1 of vpath2(__v1,__v2)}
	\fmfi{boson}{subpath (1,2) of vpath1(__v1,__v2)}
	\fmfi{boson}{subpath (1,2) of vpath2(__v1,__v2)}
\end{fmfgraph*}}$
and $k=2$; then lemma \ref{lem:skeletongraphs}.i reads
\begin{equation*}
\tfrac12 \frac{(\chi_+)^2}2
\parbox{25pt}{\begin{fmfgraph*}(25,12.5)
	\fmfleft{l}
	\fmfright{r}
	\fmf{phantom}{l,v1,v2,r}
	\fmf{boson}{l,v1}
	\fmf{boson}{v2,r}
	\fmf{phantom,left,tension=0,tag=1}{v1,v2}
	\fmf{phantom,right,tension=0,tag=2}{v1,v2}
	\fmffreeze
	\fmfi{boson}{subpath (0,1) of vpath1(__v1,__v2)}
	\fmfi{boson}{subpath (0,1) of vpath2(__v1,__v2)}
	\fmfi{boson}{point 1 of vpath1(__v1,__v2) .. point 1 of vpath2(__v1,__v2)}
	\fmfi{boson}{subpath (1,2) of vpath1(__v1,__v2)}
	\fmfi{boson}{subpath (1,2) of vpath2(__v1,__v2)}
\end{fmfgraph*}}
= \tfrac12 \Big( \parbox{25pt}{
\begin{fmfgraph*}(25,12.5)
	\fmfleft{l}
	\fmfright{r}
	\fmf{phantom}{l,v1,v2,r}
	\fmf{boson}{l,v1}
	\fmf{boson}{v2,r}
	\fmf{phantom,left,tension=0,tag=1}{v1,v2}
	\fmf{phantom,right,tension=0,tag=2}{v1,v2}
	\fmffreeze
	\fmfi{marked}{subpath (0,1) of vpath1(__v1,__v2)}
	\fmfi{boson}{subpath (0,1) of vpath2(__v1,__v2)}
	\fmfi{boson}{point 1 of vpath1(__v1,__v2) .. point 1 of vpath2(__v1,__v2)}
	\fmfi{boson}{subpath (1,2) of vpath1(__v1,__v2)}
	\fmfi{marked}{subpath (1,2) of vpath2(__v1,__v2)}
\end{fmfgraph*}}
+ \parbox{25pt}{
\begin{fmfgraph*}(25,12.5)
	\fmfleft{l}
	\fmfright{r}
	\fmf{phantom}{l,v1,v2,r}
	\fmf{boson}{l,v1}
	\fmf{boson}{v2,r}
	\fmf{phantom,left,tension=0,tag=1}{v1,v2}
	\fmf{phantom,right,tension=0,tag=2}{v1,v2}
	\fmffreeze
	\fmfi{boson}{subpath (0,1) of vpath1(__v1,__v2)}
	\fmfi{marked}{subpath (0,1) of vpath2(__v1,__v2)}
	\fmfi{boson}{point 1 of vpath1(__v1,__v2) .. point 1 of vpath2(__v1,__v2)}
	\fmfi{marked}{subpath (1,2) of vpath1(__v1,__v2)}
	\fmfi{boson}{subpath (1,2) of vpath2(__v1,__v2)}
\end{fmfgraph*}} \Big)
=
\parbox{25pt}{\begin{fmfgraph*}(25,12.5)
	\fmfleft{l}
	\fmfright{r}
	\fmf{phantom}{l,v1,v2,r}
	\fmf{boson}{l,v1}
	\fmf{boson}{v2,r}
	\fmf{phantom,left,tension=0,tag=1}{v1,v2}
	\fmf{phantom,right,tension=0,tag=2}{v1,v2}
	\fmffreeze
	\fmfi{marked}{subpath (0,1) of vpath1(__v1,__v2)}
	\fmfi{boson}{subpath (0,1) of vpath2(__v1,__v2)}
	\fmfi{boson}{point 1 of vpath1(__v1,__v2) .. point 1 of vpath2(__v1,__v2)}
	\fmfi{boson}{subpath (1,2) of vpath1(__v1,__v2)}
	\fmfi{marked}{subpath (1,2) of vpath2(__v1,__v2)}
\end{fmfgraph*}}
.
\end{equation*}
\end{exmp}

\begin{rem}
Note that $\chi_+$ has a non-trivial kernel as it can create self-loops graphs, for example:
\begin{equation*}
\tfrac12 \chi_+
\parbox{25pt}{\begin{fmfgraph*}(25,12.5)
	\fmfleft 1
	\fmfright 2
	\fmf{boson,tension=2}{1,a}
	\fmf{boson,tension=2}{2,b}
	\fmf{boson,tension=.5,right}{a,b,a}
\end{fmfgraph*}}
=
\parbox{25pt}{\begin{fmfgraph*}(25,12.5)
	\fmfleft 1
	\fmfright 2
	\fmf{boson,tension=2}{1,a}
	\fmf{boson,tension=2}{2,b}
	\fmf{boson,tension=.5,right}{b,a}
	\fmf{marked,tension=.5,right}{a,b}
\end{fmfgraph*}}
\sim \tfrac12
\parbox{25pt}{\begin{fmfgraph*}(25,12.5)
	\fmfleft 1
	\fmfright 2
	\fmf{boson}{1,a,2}
	\fmf{boson}{a,a}
\end{fmfgraph*}}
\end{equation*}
Here we have used that
\begin{equation*}
\parbox{25pt}{\begin{fmfgraph*}(25,12.5)
	\fmfleft 1
	\fmfright 2
	\fmf{boson}{1,a,2}
	\fmf{boson}{a,a}
\end{fmfgraph*}}
\sim
\parbox{25pt}{\begin{fmfgraph*}(25,12.5)
	\fmfleft 1
	\fmfright 2
	\fmf{boson,tension=2}{1,a}
	\fmf{boson,tension=2}{2,b}
	\fmf{boson,tension=.5,right}{b,a}
	\fmf{marked,tension=.5,right}{a,b}
\end{fmfgraph*}}
+
\parbox{25pt}{\begin{fmfgraph*}(25,12.5)
	\fmfleft 1
	\fmfright 2
	\fmftop t
	\fmf{boson,tension=2}{1,a}
	\fmf{boson,tension=2}{2,b}
	\fmf{marked,right}{a,b}
	\fmffreeze
	\fmfi{boson}{vloc(__a){dir -30} .. vloc(__t){dir 180}}
	\fmfi{boson,rubout=5}{vloc(__t){dir 180} .. vloc(__b){dir 30}}
\end{fmfgraph*}}
+
\parbox{25pt}{\begin{fmfgraph*}(25,25)
	\fmfleft 1
	\fmfright 2
	\fmftop t
	\fmf{boson}{1,a,2}
	\fmffreeze
	\fmf{marked}{a,b}
	\fmf{boson,tension=.5,right}{b,t,b}
\end{fmfgraph*}}
\sim 2
\parbox{25pt}{\begin{fmfgraph*}(25,12.5)
	\fmfleft 1
	\fmfright 2
	\fmf{boson,tension=2}{1,a}
	\fmf{boson,tension=2}{2,b}
	\fmf{boson,tension=.5,right}{b,a}
	\fmf{marked,tension=.5,right}{a,b}
\end{fmfgraph*}}
.
\end{equation*}
This does not influence our results, since self-loops have amplitude zero.
\end{rem}
It is now time to study graph homology, again by studying marked edges, but now the labelling plays a crucial role.
\subsection{The graph differential $s$ for gauge theory graphs} 
\begin{defn}\label{4-12}
The derivation $s: H \to H$ is given on generators by
\begin{equation*} s\Gamma = \sum_{e\in\Gamma^{[1]}_{\rm int}} (-)^{\#\{e'\in\Gamma^{[1]}_{\rm marked} \ |\ e'<e\}} s_e\Gamma  \end{equation*}
where
\begin{equation*} s_e\Gamma = \begin{cases}
0 & \text{if $e$ shares a vertex with a marked or ghost edge} \\
\Gamma_{e\rightsquigarrow\mmarkededge} & \text{otherwise}
\end{cases}\end{equation*}
In the above, $<$ is a (strict) total ordering on $\Gamma^{[1]}$.

Next, we want to distinguish the markings created by $\chi_+$ and $s$. Therefore we draw the latter with two lines instead of one. So two lines indicate the action of $s$, and we denote:
$\Gamma^{[1]}_\markededge\cup\Gamma^{[1]}_\mmarkededge=\Gamma^{[1]}_{\rm marked} \subset \Gamma^{[1]}_{\rm int}$.
\end{defn}

\begin{prop}\label{ssquare}
$s$ is a differential: $s^2\Gamma = 0$.
\end{prop}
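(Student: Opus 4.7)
The plan is to expand $s^2\Gamma$ as a double sum over ordered pairs of internal edges and show that contributions cancel pairwise, in close analogy with the proof of $\tilde S^2=0$ given above. For a graph $\Gamma'$ and edge $h$, write $n^{\Gamma'}_h := \#\{e'\in (\Gamma')^{[1]}_{\rm marked}\mid e'<h\}$, so that the sign at the term $s_e\Gamma$ in $s\Gamma$ is $(-1)^{n^\Gamma_e}$. Since $s_e$ adds $e$ to the set of marked edges (and changes nothing else), one has $n^{s_e\Gamma}_f = n^\Gamma_f + [e<f]$, where $[e<f]=1$ if $e<f$ and $0$ otherwise. Thus
$$s^2\Gamma = \sum_{e,f\in \Gamma^{[1]}_{\rm int}} (-1)^{n^\Gamma_e + n^\Gamma_f + [e<f]}\, s_f s_e \Gamma.$$

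I would then split the sum into its diagonal and off-diagonal parts. The diagonal $e=f$ vanishes term-by-term: after $s_e\Gamma$ marks $e$ with $\mmarkededge$, the edge $e$ shares both its endpoints with a marked edge (namely itself), so the defining condition of $s_e$ forces $s_e(s_e\Gamma)=0$.

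For the off-diagonal part I would group the terms into unordered pairs $\{e,f\}$ with $e\neq f$. If $e$ and $f$ share a vertex, both $s_f s_e\Gamma$ and $s_e s_f\Gamma$ vanish, because after the first marking the remaining edge is incident to a marked edge. Otherwise, the two graphs $s_f s_e\Gamma$ and $s_e s_f\Gamma$ agree (both are $\Gamma$ with $e$ and $f$ independently $\mmarkededge$-marked), while the signs differ by the factor $(-1)^{[e<f]-[f<e]}=-1$, since exactly one of $[e<f],[f<e]$ equals $1$. Hence the two contributions cancel. The same analysis handles incidences with ghost edges or pre-existing $\markededge$-marks: if either $e$ or $f$ is adjacent to such an edge in $\Gamma$, both orderings vanish since this status is unaffected by marking the other edge.

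The only real bookkeeping issue is to verify that the ordering-dependent sign in Def.~\ref{4-12} produces exactly the $-1$ required when one swaps the order of marking, but this is precisely what the factor $(-1)^{\#\{e'\in\Gamma^{[1]}_{\rm marked}\mid e'<e\}}$ is designed for. With this check the proof is completely parallel to the standard arguments for $\tilde s^2=0$ and for $\tilde S^2=0$ above.
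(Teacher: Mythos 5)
Your proposal is correct and follows essentially the same route as the paper: expand $s^2\Gamma$ as a double sum over ordered pairs of internal edges, observe that marking the first edge shifts the sign of the second application by exactly $(-1)^{[e<f]}$, and cancel the two orderings of each unordered pair. Your explicit treatment of the diagonal and of the degenerate (vertex-sharing, ghost-adjacent) cases is a slightly more careful bookkeeping of points the paper's computation leaves implicit, but the argument is the same.
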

\begin{proof}
We compute
\begin{equation*}\begin{split}
s^2\Gamma
%&= \sum_{e_1\in\Gamma^{[1]}_{\rm int}} (-)^{\#\{e_1'\in\Gamma^{[1]}_{\rm marked} \ |\ e_1'<e_1\}} ss_{e_1}\Gamma \\
&= \sum_{e_1,e_2\in\Gamma^{[1]}_{\rm int}}
(-)^{\#\{e_1'\in\Gamma^{[1]}_{\rm marked} \ |\ e_1'<e_1\}+\#\{e_2'\in\Gamma^{[1]}_{\rm marked}\cup\{e_1\} \ |\ e_2'<e_2\}} s_{e_1}s_{e_2} \Gamma \\
&= \sum_{\substack{e_1,e_2\in\Gamma^{[1]}_{\rm int}\\ e_1<e_2 }}
(-)^{\#\{e\in\Gamma^{[1]}_{\rm marked}\ |\ e_1<e<e_2\}+1} s_{e_1}s_{e_2} \Gamma \\
&\qquad +\sum_{\substack{e_1,e_2\in\Gamma^{[1]}_{\rm int}\\ e_1>e_2 }}
(-)^{\#\{e\in\Gamma^{[1]}_{\rm marked}\ |\ e_2<e<e_1\}} s_{e_1}s_{e_2} \Gamma \\
&=0.
\end{split}\end{equation*}
\end{proof}

\begin{exmp} We work with labelled graphs, eg.  
\begin{equation*}
\Gamma=
\parbox[c][35pt]{55pt}{\centering\scriptsize
\begin{fmfgraph*}(40,20)
	\fmfleft{1}
	\fmfright{2}
	\fmfv{label=$1$,label.dist=2}{1}
	\fmfv{label=$2$,label.dist=2}{2}
	\fmf{phantom}{1,a,b,2}
	\fmf{boson}{1,a}
	\fmf{boson}{b,2}
	\fmf{phantom,left,tension=0,tag=1}{a,b}
	\fmf{phantom,right,tension=0,tag=2}{a,b}
	\fmffreeze
	\fmfi{boson,label=$3$,label.dist=2}{subpath (0,1) of vpath1(__a,__b)}
	\fmfi{boson,label=$4$,label.dist=2}{subpath (0,1) of vpath2(__a,__b)}
	\fmfi{boson,label=$5$,label.dist=2}{point 1 of vpath1(__a,__b) .. point 1 of vpath2(__a,__b)}
	\fmfi{boson,label=$6$,label.dist=2}{subpath (1,2) of vpath1(__a,__b)}
	\fmfi{boson,label=$7$,label.dist=2}{subpath (1,2) of vpath2(__a,__b)}
\end{fmfgraph*}},
\end{equation*}
for which
\begin{equation*}\begin{split}
s^2
\parbox{20pt}{
\begin{fmfgraph*}(20,10)
	\fmfleft{1}
	\fmfright{2}
	\fmf{phantom}{1,a,b,2}
	\fmf{boson}{1,a}
	\fmf{boson}{b,2}
	\fmf{phantom,left,tension=0,tag=1}{a,b}
	\fmf{phantom,right,tension=0,tag=2}{a,b}
	\fmffreeze
	\fmfi{boson}{subpath (0,1) of vpath1(__a,__b)}
	\fmfi{boson}{subpath (0,1) of vpath2(__a,__b)}
	\fmfi{boson}{point 1 of vpath1(__a,__b) .. point 1 of vpath2(__a,__b)}
	\fmfi{boson}{subpath (1,2) of vpath1(__a,__b)}
	\fmfi{boson}{subpath (1,2) of vpath2(__a,__b)}
\end{fmfgraph*}}
&= s
\parbox{20pt}{
\begin{fmfgraph*}(20,10)
	\fmfleft{1}
	\fmfright{2}
	\fmf{phantom}{1,a,b,2}
	\fmf{boson}{1,a}
	\fmf{boson}{b,2}
	\fmf{phantom,left,tension=0,tag=1}{a,b}
	\fmf{phantom,right,tension=0,tag=2}{a,b}
	\fmffreeze
	\fmfi{mmarked}{subpath (0,1) of vpath1(__a,__b)}
	\fmfi{boson}{subpath (0,1) of vpath2(__a,__b)}
	\fmfi{boson}{point 1 of vpath1(__a,__b) .. point 1 of vpath2(__a,__b)}
	\fmfi{boson}{subpath (1,2) of vpath1(__a,__b)}
	\fmfi{boson}{subpath (1,2) of vpath2(__a,__b)}
\end{fmfgraph*}}
+ s
\parbox{20pt}{
\begin{fmfgraph*}(20,10)
	\fmfleft{1}
	\fmfright{2}
	\fmf{phantom}{1,a,b,2}
	\fmf{boson}{1,a}
	\fmf{boson}{b,2}
	\fmf{phantom,left,tension=0,tag=1}{a,b}
	\fmf{phantom,right,tension=0,tag=2}{a,b}
	\fmffreeze
	\fmfi{boson}{subpath (0,1) of vpath1(__a,__b)}
	\fmfi{mmarked}{subpath (0,1) of vpath2(__a,__b)}
	\fmfi{boson}{point 1 of vpath1(__a,__b) .. point 1 of vpath2(__a,__b)}
	\fmfi{boson}{subpath (1,2) of vpath1(__a,__b)}
	\fmfi{boson}{subpath (1,2) of vpath2(__a,__b)}
\end{fmfgraph*}}
+ s
\parbox{20pt}{
\begin{fmfgraph*}(20,10)
	\fmfleft{1}
	\fmfright{2}
	\fmf{phantom}{1,a,b,2}
	\fmf{boson}{1,a}
	\fmf{boson}{b,2}
	\fmf{phantom,left,tension=0,tag=1}{a,b}
	\fmf{phantom,right,tension=0,tag=2}{a,b}
	\fmffreeze
	\fmfi{boson}{subpath (0,1) of vpath1(__a,__b)}
	\fmfi{boson}{subpath (0,1) of vpath2(__a,__b)}
	\fmfi{mmarked}{point 1 of vpath1(__a,__b) .. point 1 of vpath2(__a,__b)}
	\fmfi{boson}{subpath (1,2) of vpath1(__a,__b)}
	\fmfi{boson}{subpath (1,2) of vpath2(__a,__b)}
\end{fmfgraph*}}
+ s
\parbox{20pt}{
\begin{fmfgraph*}(20,10)
	\fmfleft{1}
	\fmfright{2}
	\fmf{phantom}{1,a,b,2}
	\fmf{boson}{1,a}
	\fmf{boson}{b,2}
	\fmf{phantom,left,tension=0,tag=1}{a,b}
	\fmf{phantom,right,tension=0,tag=2}{a,b}
	\fmffreeze
	\fmfi{boson}{subpath (0,1) of vpath1(__a,__b)}
	\fmfi{boson}{subpath (0,1) of vpath2(__a,__b)}
	\fmfi{boson}{point 1 of vpath1(__a,__b) .. point 1 of vpath2(__a,__b)}
	\fmfi{mmarked}{subpath (1,2) of vpath1(__a,__b)}
	\fmfi{boson}{subpath (1,2) of vpath2(__a,__b)}
\end{fmfgraph*}}
+ s
\parbox{20pt}{
\begin{fmfgraph*}(20,10)
	\fmfleft{1}
	\fmfright{2}
	\fmf{phantom}{1,a,b,2}
	\fmf{boson}{1,a}
	\fmf{boson}{b,2}
	\fmf{phantom,left,tension=0,tag=1}{a,b}
	\fmf{phantom,right,tension=0,tag=2}{a,b}
	\fmffreeze
	\fmfi{boson}{subpath (0,1) of vpath1(__a,__b)}
	\fmfi{boson}{subpath (0,1) of vpath2(__a,__b)}
	\fmfi{boson}{point 1 of vpath1(__a,__b) .. point 1 of vpath2(__a,__b)}
	\fmfi{boson}{subpath (1,2) of vpath1(__a,__b)}
	\fmfi{mmarked}{subpath (1,2) of vpath2(__a,__b)}
\end{fmfgraph*}}
\\
&= -
\parbox{20pt}{
\begin{fmfgraph*}(20,10)
	\fmfleft{1}
	\fmfright{2}
	\fmf{phantom}{1,a,b,2}
	\fmf{boson}{1,a}
	\fmf{boson}{b,2}
	\fmf{phantom,left,tension=0,tag=1}{a,b}
	\fmf{phantom,right,tension=0,tag=2}{a,b}
	\fmffreeze
	\fmfi{mmarked}{subpath (0,1) of vpath1(__a,__b)}
	\fmfi{boson}{subpath (0,1) of vpath2(__a,__b)}
	\fmfi{boson}{point 1 of vpath1(__a,__b) .. point 1 of vpath2(__a,__b)}
	\fmfi{boson}{subpath (1,2) of vpath1(__a,__b)}
	\fmfi{mmarked}{subpath (1,2) of vpath2(__a,__b)}
\end{fmfgraph*}}
-
\parbox{20pt}{
\begin{fmfgraph*}(20,10)
	\fmfleft{1}
	\fmfright{2}
	\fmf{phantom}{1,a,b,2}
	\fmf{boson}{1,a}
	\fmf{boson}{b,2}
	\fmf{phantom,left,tension=0,tag=1}{a,b}
	\fmf{phantom,right,tension=0,tag=2}{a,b}
	\fmffreeze
	\fmfi{boson}{subpath (0,1) of vpath1(__a,__b)}
	\fmfi{mmarked}{subpath (0,1) of vpath2(__a,__b)}
	\fmfi{boson}{point 1 of vpath1(__a,__b) .. point 1 of vpath2(__a,__b)}
	\fmfi{mmarked}{subpath (1,2) of vpath1(__a,__b)}
	\fmfi{boson}{subpath (1,2) of vpath2(__a,__b)}
\end{fmfgraph*}}
+
\parbox{20pt}{
\begin{fmfgraph*}(20,10)
	\fmfleft{1}
	\fmfright{2}
	\fmf{phantom}{1,a,b,2}
	\fmf{boson}{1,a}
	\fmf{boson}{b,2}
	\fmf{phantom,left,tension=0,tag=1}{a,b}
	\fmf{phantom,right,tension=0,tag=2}{a,b}
	\fmffreeze
	\fmfi{boson}{subpath (0,1) of vpath1(__a,__b)}
	\fmfi{mmarked}{subpath (0,1) of vpath2(__a,__b)}
	\fmfi{boson}{point 1 of vpath1(__a,__b) .. point 1 of vpath2(__a,__b)}
	\fmfi{mmarked}{subpath (1,2) of vpath1(__a,__b)}
	\fmfi{boson}{subpath (1,2) of vpath2(__a,__b)}
\end{fmfgraph*}}
+
\parbox{20pt}{
\begin{fmfgraph*}(20,10)
	\fmfleft{1}
	\fmfright{2}
	\fmf{phantom}{1,a,b,2}
	\fmf{boson}{1,a}
	\fmf{boson}{b,2}
	\fmf{phantom,left,tension=0,tag=1}{a,b}
	\fmf{phantom,right,tension=0,tag=2}{a,b}
	\fmffreeze
	\fmfi{mmarked}{subpath (0,1) of vpath1(__a,__b)}
	\fmfi{boson}{subpath (0,1) of vpath2(__a,__b)}
	\fmfi{boson}{point 1 of vpath1(__a,__b) .. point 1 of vpath2(__a,__b)}
	\fmfi{boson}{subpath (1,2) of vpath1(__a,__b)}
	\fmfi{mmarked}{subpath (1,2) of vpath2(__a,__b)}
\end{fmfgraph*}}
=0.
\end{split}\end{equation*}
\end{exmp}
 
\subsection{The differential $S$ for gauge theories}
Marking edges, which corresponds upon summation of connected diagrams to shrinking  pairs of two 3-gluon vertices to 4-gluon vertices, should match with the graphs with 4-gluon vertices present in the theory. This can be phrased homologically. 
\begin{defn}\label{4-15}
A derivation $S: H \to H$ is given by $S = s + \sigma$ where
\begin{equation*} \sigma\Gamma = (-)^{\#\Gamma^{[1]}_{\rm marked}} \sum_{e\in\Gamma^{[1]}\markededge} (-)^{\#\{e'\in\Gamma^{[1]}\markededge \ |\ e'>e\}} \sigma_e\Gamma, \end{equation*}
and
\begin{equation*} \sigma_e\Gamma = \Gamma_{e\rightsquigarrow\mmarkededge}. \end{equation*}
\end{defn}

\begin{prop}
\label{Ssquare} 
$S$ is a differential: $S^2\Gamma = 0$.
\end{prop}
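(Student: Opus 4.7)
The plan is to expand $S^2 = (s+\sigma)^2 = s^2 + (s\sigma + \sigma s) + \sigma^2$ and prove the three pieces vanish separately. The first vanishes by Proposition \ref{ssquare}, so the work reduces to showing $\sigma^2 = 0$ and the anticommutator $s\sigma + \sigma s = 0$. In each case the argument is the standard "antisymmetrise over the order of the two operations" pattern used in the proof of $s^2 = 0$, but bookkeeping for the sign factors $(-)^{|\Gamma^{[1]}_{\mathrm{marked}}|}$ and $(-)^{\#\{e'\in\Gamma^{[1]}_\markededge\mid e'>e\}}$ requires care, which I expect to be the main difficulty.

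For $\sigma^2$, I would fix a pair $e_1<e_2$ of single-marked edges and compare the two compositions $\sigma_{e_1}\sigma_{e_2}\Gamma$ and $\sigma_{e_2}\sigma_{e_1}\Gamma$. The key observation is that $\sigma$ converts a $\markededge$ into a $\mmarkededge$, so $|\Gamma^{[1]}_{\mathrm{marked}}|$ is preserved: the two prefactors $(-)^{|\Gamma^{[1]}_{\mathrm{marked}}|}$ multiply to $+1$ in either ordering. What remains is the inner sign. Writing $A = \#\{e'\in\Gamma^{[1]}_\markededge : e' > e_1\}$ and $B = \#\{e'\in\Gamma^{[1]}_\markededge : e' > e_2\}$, one order produces $(-)^{A+B}$ and the other $(-)^{A+B+1}$ (the off-by-one comes from the fact that in one ordering $e_2$ has already been removed from $\Gamma^{[1]}_\markededge$ before counting), so the two terms cancel in the sum and $\sigma^2 = 0$.

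For $s\sigma + \sigma s$, I would fix an unmarked edge $e_2$ (admissible for $s$) and a single-marked edge $e_1$, and compare $s_{e_2}\sigma_{e_1}\Gamma$ with $\sigma_{e_1}s_{e_2}\Gamma$. Here the decisive point is that $s$ strictly \emph{increases} $|\Gamma^{[1]}_{\mathrm{marked}}|$ by one (it creates a fresh $\mmarkededge$), whereas $\sigma$ preserves it and does not change $\Gamma^{[1]}_\markededge \ni e_1$ in any way relevant to the positional signs. Thus the inner sign $(-)^{\#\{e'\in\Gamma^{[1]}_\markededge : e' > e_1\}}$ from $\sigma$ and the inner sign $(-)^{\#\{e'\in\Gamma^{[1]}_{\mathrm{marked}}: e'<e_2\}}$ from $s$ agree in both orderings, while the prefactor of $\sigma$ changes from $(-)^{|\Gamma^{[1]}_{\mathrm{marked}}|}$ to $(-)^{|\Gamma^{[1]}_{\mathrm{marked}}|+1}$ when $s$ acts first. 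This single extra minus sign establishes $s_{e_2}\sigma_{e_1}\Gamma = -\sigma_{e_1}s_{e_2}\Gamma$, and summing over all admissible pairs $(e_1,e_2)$ yields $s\sigma + \sigma s = 0$.

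Two small points to verify carefully will constitute the main obstacle. First, I must check that $\sigma$ does not alter the admissibility of an edge for $s$ (that is, the set of edges not sharing a vertex with a marked or ghost edge is preserved by $\sigma$, which is clear since $\sigma$ only changes a single mark into a double mark at the same edge). Second, I should ensure that $s$ does not create or destroy any $\markededge$ edge and therefore the counts governing $\sigma$'s positional sign on $e_1$ are genuinely unchanged. Both are immediate from the definitions, after which the computation above goes through and gives $S^2 = 0$.
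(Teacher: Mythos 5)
Your proposal is correct and follows essentially the same route as the paper: expand $S^2=s^2+(s\sigma+\sigma s)+\sigma^2$, kill $\sigma^2$ by the standard antisymmetrization (the off-by-one in the positional count after one single-marked edge is removed from $\Gamma^{[1]}_\markededge$), and get $s\sigma=-\sigma s$ from the single sign flip of the prefactor $(-)^{\#\Gamma^{[1]}_{\rm marked}}$ caused by $s$ enlarging the marked set, all other signs and admissibility conditions being unchanged. The two verification points you flag are exactly the ones implicitly used in the paper's computation, so nothing is missing.
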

\begin{proof}
A calculation shows that
\begin{align*}
s\sigma\Gamma
%&= (-)^{\Gamma^{[1]}_{\rm marked}} \sum_{e_1\in\Gamma^{[1]}\markededge} (-)^{\#\{e_1'\in\Gamma^{[1]}\markededge \ |\ e_1'>e_1\}} s\sigma_{e_1}\Gamma \\
&= (-)^{\Gamma^{[1]}_{\rm marked}} \sum_{e_1\in\Gamma^{[1]}_\markededge}\sum_{e_2\in\Gamma^{[1]}_{\rm int}} (-)^{\#\{e_1'\in\Gamma^{[1]}\markededge \ |\ e_1'>e_1\}+\#\{e_2'\in\Gamma^{[1]}_{\rm marked} \ |\ e_2'<e_2\}} s_{e_2}\sigma_{e_1}\Gamma \\
\sigma s\Gamma
%&= \sum_{e_2\in\Gamma^{[1]}_{\rm int}} (-)^{\#\{e_2'\in\Gamma_{\rm marked} \ |\ e_2'<e_2\}} \sigma s_{e_2}\Gamma \\
&= (-)^{\Gamma^{[1]}_{\rm marked}+1} \sum_{e_2\in\Gamma^{[1]}_{\rm int}} \sum_{e_1\in\Gamma^{[1]}\markededge} (-)^{\#\{e_2'\in\Gamma_{\rm marked} \ |\ e_2'<e_2\} + \#\{e_1'\in\Gamma^{[1]}\markededge \ |\ e_1'>e_1\}} s_{e_2}\sigma_{e_1}\Gamma \\
&= - s\sigma\Gamma,
\intertext{and also}
\sigma^2\Gamma
%&= (-)^{\#\Gamma^{[1]}_{\rm marked}} \sum_{e_1\in\Gamma^{[1]}\markededge} (-)^{\#\{e_1'\in\Gamma^{[1]}\markededge \ |\ e_1'>e_1\}} \sigma_{e_1}\sigma\Gamma \\
&= \sum_{e_1\in\Gamma^{[1]}\markededge} \sum_{e_1\in\Gamma^{[1]}\markededge\smallsetminus\{e_1\}} (-)^{\#\{e_1'\in\Gamma^{[1]}\markededge \ |\ e_1'>e_1\}+\#\{e_2'\in\Gamma^{[1]}\markededge\smallsetminus\{e_1\} \ |\ e_2'>e_2\}} \sigma_{e_1}\sigma_{e_2}\Gamma \\
&= \sum_{\substack{e_1,e_2\in\Gamma^{[1]}\markededge\\ e_1<e_2}} (-)^{\#\{e\in\Gamma^{[1]}\markededge \ |\ e_1<e<e_2\}+1} \sigma_{e_1}\sigma_{e_2}\Gamma \\
&\qquad+ \sum_{\substack{e_1,e_2\in\Gamma^{[1]}\markededge\\ e_1>e_2}} (-)^{\#\{e\in\Gamma^{[1]}\markededge \ |\ e_1<e<e_2\}} \sigma_{e_1}\sigma_{e_2}\Gamma \\
&= 0.
\end{align*}
so that
\begin{equation*} S^2\Gamma
= s^2\Gamma + s\sigma\Gamma + \sigma s\Gamma + \sigma^2\Gamma
= 0. \end{equation*}
\end{proof}
\begin{rem}
Note that upon summing the markings in a 3-valent corolla, and identifying such a sum with a 4-valent vertex, the operators $s,S$  here reduce to the operators 
$\tilde s$ and $\tilde S$  we had before in Eq.(\ref{S-homology}).
\end{rem}
\begin{exmp}
\begin{align*}
s\sigma
\parbox{20pt}{\begin{fmfgraph*}(20,10)
	\fmfleft{1}
	\fmfright{2}
	\fmf{phantom}{1,a,b,2}
	\fmf{boson}{1,a}
	\fmf{boson}{b,2}
	\fmf{phantom,left,tension=0,tag=1}{a,b}
	\fmf{phantom,right,tension=0,tag=2}{a,b}
	\fmffreeze
	\fmfi{marked}{subpath (0,1) of vpath1(__a,__b)}
	\fmfi{boson}{subpath (0,1) of vpath2(__a,__b)}
	\fmfi{boson}{point 1 of vpath1(__a,__b) .. point 1 of vpath2(__a,__b)}
	\fmfi{boson}{subpath (1,2) of vpath1(__a,__b)}
	\fmfi{boson}{subpath (1,2) of vpath2(__a,__b)}
\end{fmfgraph*}}
&= -s
\parbox{20pt}{\begin{fmfgraph*}(20,10)
	\fmfleft{1}
	\fmfright{2}
	\fmf{phantom}{1,a,b,2}
	\fmf{boson}{1,a}
	\fmf{boson}{b,2}
	\fmf{phantom,left,tension=0,tag=1}{a,b}
	\fmf{phantom,right,tension=0,tag=2}{a,b}
	\fmffreeze
	\fmfi{mmarked}{subpath (0,1) of vpath1(__a,__b)}
	\fmfi{boson}{subpath (0,1) of vpath2(__a,__b)}
	\fmfi{boson}{point 1 of vpath1(__a,__b) .. point 1 of vpath2(__a,__b)}
	\fmfi{boson}{subpath (1,2) of vpath1(__a,__b)}
	\fmfi{boson}{subpath (1,2) of vpath2(__a,__b)}
\end{fmfgraph*}}
=
\parbox{20pt}{\begin{fmfgraph*}(20,10)
	\fmfleft{1}
	\fmfright{2}
	\fmf{phantom}{1,a,b,2}
	\fmf{boson}{1,a}
	\fmf{boson}{b,2}
	\fmf{phantom,left,tension=0,tag=1}{a,b}
	\fmf{phantom,right,tension=0,tag=2}{a,b}
	\fmffreeze
	\fmfi{mmarked}{subpath (0,1) of vpath1(__a,__b)}
	\fmfi{boson}{subpath (0,1) of vpath2(__a,__b)}
	\fmfi{boson}{point 1 of vpath1(__a,__b) .. point 1 of vpath2(__a,__b)}
	\fmfi{boson}{subpath (1,2) of vpath1(__a,__b)}
	\fmfi{mmarked}{subpath (1,2) of vpath2(__a,__b)}
\end{fmfgraph*}}
\\
\sigma s
\parbox{20pt}{\begin{fmfgraph*}(20,10)
	\fmfleft{1}
	\fmfright{2}
	\fmf{phantom}{1,a,b,2}
	\fmf{boson}{1,a}
	\fmf{boson}{b,2}
	\fmf{phantom,left,tension=0,tag=1}{a,b}
	\fmf{phantom,right,tension=0,tag=2}{a,b}
	\fmffreeze
	\fmfi{marked}{subpath (0,1) of vpath1(__a,__b)}
	\fmfi{boson}{subpath (0,1) of vpath2(__a,__b)}
	\fmfi{boson}{point 1 of vpath1(__a,__b) .. point 1 of vpath2(__a,__b)}
	\fmfi{boson}{subpath (1,2) of vpath1(__a,__b)}
	\fmfi{boson}{subpath (1,2) of vpath2(__a,__b)}
\end{fmfgraph*}}
&= -\sigma 
\parbox{20pt}{\begin{fmfgraph*}(20,10)
	\fmfleft{1}
	\fmfright{2}
	\fmf{phantom}{1,a,b,2}
	\fmf{boson}{1,a}
	\fmf{boson}{b,2}
	\fmf{phantom,left,tension=0,tag=1}{a,b}
	\fmf{phantom,right,tension=0,tag=2}{a,b}
	\fmffreeze
	\fmfi{marked}{subpath (0,1) of vpath1(__a,__b)}
	\fmfi{boson}{subpath (0,1) of vpath2(__a,__b)}
	\fmfi{boson}{point 1 of vpath1(__a,__b) .. point 1 of vpath2(__a,__b)}
	\fmfi{boson}{subpath (1,2) of vpath1(__a,__b)}
	\fmfi{mmarked}{subpath (1,2) of vpath2(__a,__b)}
\end{fmfgraph*}}
= -
\parbox{20pt}{\begin{fmfgraph*}(20,10)
	\fmfleft{1}
	\fmfright{2}
	\fmf{phantom}{1,a,b,2}
	\fmf{boson}{1,a}
	\fmf{boson}{b,2}
	\fmf{phantom,left,tension=0,tag=1}{a,b}
	\fmf{phantom,right,tension=0,tag=2}{a,b}
	\fmffreeze
	\fmfi{mmarked}{subpath (0,1) of vpath1(__a,__b)}
	\fmfi{boson}{subpath (0,1) of vpath2(__a,__b)}
	\fmfi{boson}{point 1 of vpath1(__a,__b) .. point 1 of vpath2(__a,__b)}
	\fmfi{boson}{subpath (1,2) of vpath1(__a,__b)}
	\fmfi{mmarked}{subpath (1,2) of vpath2(__a,__b)}
\end{fmfgraph*}}
\\
\sigma^2
\parbox{20pt}{\begin{fmfgraph*}(20,10)
	\fmfleft{1}
	\fmfright{2}
	\fmf{phantom}{1,a,b,2}
	\fmf{boson}{1,a}
	\fmf{boson}{b,2}
	\fmf{phantom,left,tension=0,tag=1}{a,b}
	\fmf{phantom,right,tension=0,tag=2}{a,b}
	\fmffreeze
	\fmfi{marked}{subpath (0,1) of vpath1(__a,__b)}
	\fmfi{boson}{subpath (0,1) of vpath2(__a,__b)}
	\fmfi{boson}{point 1 of vpath1(__a,__b) .. point 1 of vpath2(__a,__b)}
	\fmfi{boson}{subpath (1,2) of vpath1(__a,__b)}
	\fmfi{boson}{subpath (1,2) of vpath2(__a,__b)}
\end{fmfgraph*}}
&= -\sigma
\parbox{20pt}{\begin{fmfgraph*}(20,10)
	\fmfleft{1}
	\fmfright{2}
	\fmf{phantom}{1,a,b,2}
	\fmf{boson}{1,a}
	\fmf{boson}{b,2}
	\fmf{phantom,left,tension=0,tag=1}{a,b}
	\fmf{phantom,right,tension=0,tag=2}{a,b}
	\fmffreeze
	\fmfi{mmarked}{subpath (0,1) of vpath1(__a,__b)}
	\fmfi{boson}{subpath (0,1) of vpath2(__a,__b)}
	\fmfi{boson}{point 1 of vpath1(__a,__b) .. point 1 of vpath2(__a,__b)}
	\fmfi{boson}{subpath (1,2) of vpath1(__a,__b)}
	\fmfi{boson}{subpath (1,2) of vpath2(__a,__b)}
\end{fmfgraph*}}
=0
\\
S^2
\parbox{20pt}{\begin{fmfgraph*}(20,10)
	\fmfleft{1}
	\fmfright{2}
	\fmf{phantom}{1,a,b,2}
	\fmf{boson}{1,a}
	\fmf{boson}{b,2}
	\fmf{phantom,left,tension=0,tag=1}{a,b}
	\fmf{phantom,right,tension=0,tag=2}{a,b}
	\fmffreeze
	\fmfi{marked}{subpath (0,1) of vpath1(__a,__b)}
	\fmfi{boson}{subpath (0,1) of vpath2(__a,__b)}
	\fmfi{boson}{point 1 of vpath1(__a,__b) .. point 1 of vpath2(__a,__b)}
	\fmfi{boson}{subpath (1,2) of vpath1(__a,__b)}
	\fmfi{boson}{subpath (1,2) of vpath2(__a,__b)}
\end{fmfgraph*}}
&= 0 \end{align*}
\end{exmp}

\begin{exmp}
\begin{align*} 
s\sigma
\parbox{20pt}{\begin{fmfgraph*}(20,10)
	\fmfleft{1}
	\fmfright{2}
	\fmf{phantom}{1,a,b,2}
	\fmf{boson}{1,a}
	\fmf{boson}{b,2}
	\fmf{phantom,left,tension=0,tag=1}{a,b}
	\fmf{phantom,right,tension=0,tag=2}{a,b}
	\fmffreeze
	\fmfi{marked}{subpath (0,1) of vpath1(__a,__b)}
	\fmfi{boson}{subpath (0,1) of vpath2(__a,__b)}
	\fmfi{boson}{point 1 of vpath1(__a,__b) .. point 1 of vpath2(__a,__b)}
	\fmfi{boson}{subpath (1,2) of vpath1(__a,__b)}
	\fmfi{marked}{subpath (1,2) of vpath2(__a,__b)}
\end{fmfgraph*}}
&= - s
\parbox{20pt}{\begin{fmfgraph*}(20,10)
	\fmfleft{1}
	\fmfright{2}
	\fmf{phantom}{1,a,b,2}
	\fmf{boson}{1,a}
	\fmf{boson}{b,2}
	\fmf{phantom,left,tension=0,tag=1}{a,b}
	\fmf{phantom,right,tension=0,tag=2}{a,b}
	\fmffreeze
	\fmfi{mmarked}{subpath (0,1) of vpath1(__a,__b)}
	\fmfi{boson}{subpath (0,1) of vpath2(__a,__b)}
	\fmfi{boson}{point 1 of vpath1(__a,__b) .. point 1 of vpath2(__a,__b)}
	\fmfi{boson}{subpath (1,2) of vpath1(__a,__b)}
	\fmfi{marked}{subpath (1,2) of vpath2(__a,__b)}
\end{fmfgraph*}}
+ s
\parbox{20pt}{\begin{fmfgraph*}(20,10)
	\fmfleft{1}
	\fmfright{2}
	\fmf{phantom}{1,a,b,2}
	\fmf{boson}{1,a}
	\fmf{boson}{b,2}
	\fmf{phantom,left,tension=0,tag=1}{a,b}
	\fmf{phantom,right,tension=0,tag=2}{a,b}
	\fmffreeze
	\fmfi{marked}{subpath (0,1) of vpath1(__a,__b)}
	\fmfi{boson}{subpath (0,1) of vpath2(__a,__b)}
	\fmfi{boson}{point 1 of vpath1(__a,__b) .. point 1 of vpath2(__a,__b)}
	\fmfi{boson}{subpath (1,2) of vpath1(__a,__b)}
	\fmfi{mmarked}{subpath (1,2) of vpath2(__a,__b)}
\end{fmfgraph*}}
=0 
\\
 \sigma s
\parbox{20pt}{\begin{fmfgraph*}(20,10)
	\fmfleft{1}
	\fmfright{2}
	\fmf{phantom}{1,a,b,2}
	\fmf{boson}{1,a}
	\fmf{boson}{b,2}
	\fmf{phantom,left,tension=0,tag=1}{a,b}
	\fmf{phantom,right,tension=0,tag=2}{a,b}
	\fmffreeze
	\fmfi{marked}{subpath (0,1) of vpath1(__a,__b)}
	\fmfi{boson}{subpath (0,1) of vpath2(__a,__b)}
	\fmfi{boson}{point 1 of vpath1(__a,__b) .. point 1 of vpath2(__a,__b)}
	\fmfi{boson}{subpath (1,2) of vpath1(__a,__b)}
	\fmfi{marked}{subpath (1,2) of vpath2(__a,__b)}
\end{fmfgraph*}}
&= 0 
\\
\sigma^2
\parbox{20pt}{\begin{fmfgraph*}(20,10)
	\fmfleft{1}
	\fmfright{2}
	\fmf{phantom}{1,a,b,2}
	\fmf{boson}{1,a}
	\fmf{boson}{b,2}
	\fmf{phantom,left,tension=0,tag=1}{a,b}
	\fmf{phantom,right,tension=0,tag=2}{a,b}
	\fmffreeze
	\fmfi{marked}{subpath (0,1) of vpath1(__a,__b)}
	\fmfi{boson}{subpath (0,1) of vpath2(__a,__b)}
	\fmfi{boson}{point 1 of vpath1(__a,__b) .. point 1 of vpath2(__a,__b)}
	\fmfi{boson}{subpath (1,2) of vpath1(__a,__b)}
	\fmfi{marked}{subpath (1,2) of vpath2(__a,__b)}
\end{fmfgraph*}}
&= -\sigma
\parbox{20pt}{\begin{fmfgraph*}(20,10)
	\fmfleft{1}
	\fmfright{2}
	\fmf{phantom}{1,a,b,2}
	\fmf{boson}{1,a}
	\fmf{boson}{b,2}
	\fmf{phantom,left,tension=0,tag=1}{a,b}
	\fmf{phantom,right,tension=0,tag=2}{a,b}
	\fmffreeze
	\fmfi{mmarked}{subpath (0,1) of vpath1(__a,__b)}
	\fmfi{boson}{subpath (0,1) of vpath2(__a,__b)}
	\fmfi{boson}{point 1 of vpath1(__a,__b) .. point 1 of vpath2(__a,__b)}
	\fmfi{boson}{subpath (1,2) of vpath1(__a,__b)}
	\fmfi{marked}{subpath (1,2) of vpath2(__a,__b)}
\end{fmfgraph*}}
+ \sigma
\parbox{20pt}{\begin{fmfgraph*}(20,10)
	\fmfleft{1}
	\fmfright{2}
	\fmf{phantom}{1,a,b,2}
	\fmf{boson}{1,a}
	\fmf{boson}{b,2}
	\fmf{phantom,left,tension=0,tag=1}{a,b}
	\fmf{phantom,right,tension=0,tag=2}{a,b}
	\fmffreeze
	\fmfi{marked}{subpath (0,1) of vpath1(__a,__b)}
	\fmfi{boson}{subpath (0,1) of vpath2(__a,__b)}
	\fmfi{boson}{point 1 of vpath1(__a,__b) .. point 1 of vpath2(__a,__b)}
	\fmfi{boson}{subpath (1,2) of vpath1(__a,__b)}
	\fmfi{mmarked}{subpath (1,2) of vpath2(__a,__b)}
\end{fmfgraph*}}
= -
\parbox{20pt}{\begin{fmfgraph*}(20,10)
	\fmfleft{1}
	\fmfright{2}
	\fmf{phantom}{1,a,b,2}
	\fmf{boson}{1,a}
	\fmf{boson}{b,2}
	\fmf{phantom,left,tension=0,tag=1}{a,b}
	\fmf{phantom,right,tension=0,tag=2}{a,b}
	\fmffreeze
	\fmfi{mmarked}{subpath (0,1) of vpath1(__a,__b)}
	\fmfi{boson}{subpath (0,1) of vpath2(__a,__b)}
	\fmfi{boson}{point 1 of vpath1(__a,__b) .. point 1 of vpath2(__a,__b)}
	\fmfi{boson}{subpath (1,2) of vpath1(__a,__b)}
	\fmfi{mmarked}{subpath (1,2) of vpath2(__a,__b)}
\end{fmfgraph*}}
+
\parbox{20pt}{\begin{fmfgraph*}(20,10)
	\fmfleft{1}
	\fmfright{2}
	\fmf{phantom}{1,a,b,2}
	\fmf{boson}{1,a}
	\fmf{boson}{b,2}
	\fmf{phantom,left,tension=0,tag=1}{a,b}
	\fmf{phantom,right,tension=0,tag=2}{a,b}
	\fmffreeze
	\fmfi{mmarked}{subpath (0,1) of vpath1(__a,__b)}
	\fmfi{boson}{subpath (0,1) of vpath2(__a,__b)}
	\fmfi{boson}{point 1 of vpath1(__a,__b) .. point 1 of vpath2(__a,__b)}
	\fmfi{boson}{subpath (1,2) of vpath1(__a,__b)}
	\fmfi{mmarked}{subpath (1,2) of vpath2(__a,__b)}
\end{fmfgraph*}}
= 0
\\
 S^2
\parbox{20pt}{\begin{fmfgraph*}(20,10)
	\fmfleft{1}
	\fmfright{2}
	\fmf{phantom}{1,a,b,2}
	\fmf{boson}{1,a}
	\fmf{boson}{b,2}
	\fmf{phantom,left,tension=0,tag=1}{a,b}
	\fmf{phantom,right,tension=0,tag=2}{a,b}
	\fmffreeze
	\fmfi{mmarked}{subpath (0,1) of vpath1(__a,__b)}
	\fmfi{boson}{subpath (0,1) of vpath2(__a,__b)}
	\fmfi{boson}{point 1 of vpath1(__a,__b) .. point 1 of vpath2(__a,__b)}
	\fmfi{boson}{subpath (1,2) of vpath1(__a,__b)}
	\fmfi{mmarked}{subpath (1,2) of vpath2(__a,__b)}
\end{fmfgraph*}}
&=0 \end{align*}
\end{exmp}
The cancellations between 3-gluon and 4-gluon vertices necessary to obtain a unitary and covariant gauge theory
demand that shrinking internal edges in graphs with $k_3$ 3-gluon vertices and $k_4$ 4-gluon vertices matches with the graphs 
having $(k_3-2)$ 3-gluon vertices, and $(k_4+1)$ 4-gluon vertices. Rephrased in terms of our marked edges and using our sign conventions,
that precisely is captured by 
\begin{prop}\label{Shom}
Let $\Gamma$ be a graph without marked edges. Then:
\begin{equation} Se^{\chi_+}\Gamma = 0. \end{equation}
\end{prop}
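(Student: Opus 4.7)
The strategy is to expand $e^{\chi_+}$ as a sum over marked matchings of $\Gamma$ and then pair terms of $s$ and $\sigma$ bijectively by a sign-flipping involution on the extra marking.

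First I would check that, since $\Gamma$ has no marked edges, $\chi_+^{e_1}\cdots\chi_+^{e_k}\Gamma$ is nonzero only when the edges $e_1,\dots,e_k$ are pairwise vertex-disjoint gluon edges (none incident to a fermion or ghost edge), and in that case all orderings contribute equally. Hence, arguing as in Lemma \ref{lem:skeletongraphs}.i at the level of labelled graphs,
\[
e^{\chi_+}\Gamma \;=\; \sum_{M}\Gamma_M,
\]
where $M$ runs over all matchings (vertex-disjoint subsets) of internal gluon edges of $\Gamma$ admissible for $\chi_+$, and $\Gamma_M$ denotes $\Gamma$ with every edge of $M$ carrying a \markededge.

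Next I would apply $S=s+\sigma$ to each $\Gamma_M$ and classify the resulting terms by the pair $(M,d)$, where $d$ is the edge that ends up \mmarkededge-marked. A given $(M,d)$ can arise in exactly two ways: from $s\Gamma_M$ if $d\notin M$ is admissible (so $M\cup\{d\}$ is again an admissible matching), with sign $(-)^{\#\{e\in M:\,e<d\}}$; and from $\sigma\Gamma_{M\cup\{d\}}$, with sign $(-)^{|M|+1}(-)^{\#\{e\in M:\,e>d\}}$. The admissibility conditions agree precisely because $\Gamma$ carries no marked edges initially, so the only marked edges ever present were produced by $\chi_+$ and enter symmetrically in the hypotheses of $s_e$ and $\chi_+^e$.

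Then I would verify the sign cancellation
\[
(-)^{\#\{e\in M:\,e<d\}} + (-)^{|M|+1}(-)^{\#\{e\in M:\,e>d\}} = 0,
\]
which is immediate from $\#\{e\in M:\,e<d\}+\#\{e\in M:\,e>d\}=|M|$ (using $d\notin M$). Summing over all $(M,d)$ yields $Se^{\chi_+}\Gamma=0$.

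The main obstacle is bookkeeping rather than conceptual: one must be careful that the compatibility between the conditions defining $\chi_+^e$, $s_e$, and $\sigma_e$ really does ensure that the $(M,d)$ classification is a bijection onto the terms on both sides. The hypothesis that $\Gamma$ has no marked edges is exactly what makes the ``ghost/fermion'' clauses in the definitions play no role beyond what $\chi_+$ already enforces on $\Gamma$, reducing the argument to the clean sign identity above.
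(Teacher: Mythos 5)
Your proof is correct and takes essentially the same route as the paper's: both expand $e^{\chi_+}\Gamma$ over admissible sets of marked edges and cancel each $s$-contribution against the matching $\sigma$-contribution via the sign identity $\#\{e\in M: e<d\}+\#\{e\in M: e>d\}=|M|$. The paper phrases this as $\sigma e^{\chi_+}\Gamma=-se^{\chi_+}\Gamma$ using ordered tuples $e_1<\cdots<e_k$ with a distinguished index $l$, which is exactly your $(M,d)$ bookkeeping in different notation.
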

\begin{proof}
By definition:
\begin{equation*} 
e^{\chi_+}\Gamma
= \sum_{k\geq0} \sum_{\substack{ e_1,\ldots,e_k\in\Gamma^{[1]}_{\rm int}\\ e_1<\cdots<e_k}} \chi_+^{e_1}\cdots\chi_+^{e_k} \Gamma \end{equation*}
on which
\begin{equation*}\begin{split}
se^{\chi_+}\Gamma
%&= \sum_{k\geq0} \sum_{\substack{ e_1,\ldots,e_k\in\Gamma^{[1]}_{\rm int}\\ e_1<\cdots<e_k}} s\chi_+^{e_1}\cdots\chi_+^{e_k} \Gamma \\
%&= \sum_{k\geq0} \sum_{\substack{ e_1,\ldots,e_k\in\Gamma^{[1]}_{\rm int}\\ e_1<\cdots<e_k}} \sum_{e\in\Gamma^{[1]}_{\rm int}} (-)^{\#\{e'\in\{e_1,\ldots,e_k\} \ |\ e'<e\}} s_e\chi_+^{e_1}\cdots\chi_+^{e_k} \Gamma \\
&= \sum_{k\geq0} \sum_{\substack{ e_1,\ldots,e_{k+1}\in\Gamma^{[1]}_{\rm int}\\ e_1<\cdots<e_{k+1}}} \sum_{l=1}^{k+1} (-)^{l-1} \chi_+^{e_1}\cdots s_{e_l}\cdots\chi_+^{e_{k+1}} \Gamma \\
&= \sum_{k\geq1} \sum_{\substack{ e_1,\ldots,e_k\in\Gamma^{[1]}_{\rm int}\\ e_1<\cdots<e_k}} \sum_{l=1}^k (-)^{l-1} \chi_+^{e_1}\cdots s_{e_l}\cdots\chi_+^{e_k} \Gamma \\
\sigma e^{\chi_+}\Gamma
%&= \sum_{k\geq0} \sum_{\substack{ e_1,\ldots,e_k\in\Gamma^{[1]}_{\rm int}\\ e_1<\cdots<e_k}} \sigma \chi_+^{e_1}\cdots\chi_+^{e_k} \Gamma \\
&= \sum_{k\geq1} \sum_{\substack{ e_1,\ldots,e_k\in\Gamma^{[1]}_{\rm int}\\ e_1<\cdots<e_k}} (-)^k \sum_{l=1}^k (-)^{k-l} \chi_+^{e_1}\cdots s_{e_l}\cdots\chi_+^{e_k} \Gamma \\
%&= \sum_{k\geq1} \sum_{\substack{ e_1,\ldots,e_k\in\Gamma^{[1]}_{\rm int}\\ e_1<\cdots<e_k}} \sum_{l=1}^k (-)^l \chi_+^{e_1}\cdots s_{e_l}\cdots\chi_+^{e_k} \Gamma \\
&= -se^{\chi_+}\Gamma.
\end{split}\end{equation*}
We conclude that
\begin{equation*} Se^{\chi_+}\Gamma = (s+\sigma)e^{\chi_+}\Gamma = 0. \end{equation*}
\end{proof}

\begin{exmp}
\begin{align*} e^{\chi_+}
\parbox{20pt}{\begin{fmfgraph*}(20,10)
	\fmfleft{1}
	\fmfright{2}
	\fmf{phantom}{1,a,b,2}
	\fmf{boson}{1,a}
	\fmf{boson}{b,2}
	\fmf{phantom,left,tension=0,tag=1}{a,b}
	\fmf{phantom,right,tension=0,tag=2}{a,b}
	\fmffreeze
	\fmfi{boson}{subpath (0,1) of vpath1(__a,__b)}
	\fmfi{boson}{subpath (0,1) of vpath2(__a,__b)}
	\fmfi{boson}{point 1 of vpath1(__a,__b) .. point 1 of vpath2(__a,__b)}
	\fmfi{boson}{subpath (1,2) of vpath1(__a,__b)}
	\fmfi{boson}{subpath (1,2) of vpath2(__a,__b)}
\end{fmfgraph*}}
&=
\parbox{20pt}{\begin{fmfgraph*}(20,10)
	\fmfleft{1}
	\fmfright{2}
	\fmf{phantom}{1,a,b,2}
	\fmf{boson}{1,a}
	\fmf{boson}{b,2}
	\fmf{phantom,left,tension=0,tag=1}{a,b}
	\fmf{phantom,right,tension=0,tag=2}{a,b}
	\fmffreeze
	\fmfi{boson}{subpath (0,1) of vpath1(__a,__b)}
	\fmfi{boson}{subpath (0,1) of vpath2(__a,__b)}
	\fmfi{boson}{point 1 of vpath1(__a,__b) .. point 1 of vpath2(__a,__b)}
	\fmfi{boson}{subpath (1,2) of vpath1(__a,__b)}
	\fmfi{boson}{subpath (1,2) of vpath2(__a,__b)}
\end{fmfgraph*}}
+
\parbox{20pt}{\begin{fmfgraph*}(20,10)
	\fmfleft{1}
	\fmfright{2}
	\fmf{phantom}{1,a,b,2}
	\fmf{boson}{1,a}
	\fmf{boson}{b,2}
	\fmf{phantom,left,tension=0,tag=1}{a,b}
	\fmf{phantom,right,tension=0,tag=2}{a,b}
	\fmffreeze
	\fmfi{marked}{subpath (0,1) of vpath1(__a,__b)}
	\fmfi{boson}{subpath (0,1) of vpath2(__a,__b)}
	\fmfi{boson}{point 1 of vpath1(__a,__b) .. point 1 of vpath2(__a,__b)}
	\fmfi{boson}{subpath (1,2) of vpath1(__a,__b)}
	\fmfi{boson}{subpath (1,2) of vpath2(__a,__b)}
\end{fmfgraph*}}
+
\parbox{20pt}{\begin{fmfgraph*}(20,10)
	\fmfleft{1}
	\fmfright{2}
	\fmf{phantom}{1,a,b,2}
	\fmf{boson}{1,a}
	\fmf{boson}{b,2}
	\fmf{phantom,left,tension=0,tag=1}{a,b}
	\fmf{phantom,right,tension=0,tag=2}{a,b}
	\fmffreeze
	\fmfi{boson}{subpath (0,1) of vpath1(__a,__b)}
	\fmfi{marked}{subpath (0,1) of vpath2(__a,__b)}
	\fmfi{boson}{point 1 of vpath1(__a,__b) .. point 1 of vpath2(__a,__b)}
	\fmfi{boson}{subpath (1,2) of vpath1(__a,__b)}
	\fmfi{boson}{subpath (1,2) of vpath2(__a,__b)}
\end{fmfgraph*}}
+
\parbox{20pt}{\begin{fmfgraph*}(20,10)
	\fmfleft{1}
	\fmfright{2}
	\fmf{phantom}{1,a,b,2}
	\fmf{boson}{1,a}
	\fmf{boson}{b,2}
	\fmf{phantom,left,tension=0,tag=1}{a,b}
	\fmf{phantom,right,tension=0,tag=2}{a,b}
	\fmffreeze
	\fmfi{boson}{subpath (0,1) of vpath1(__a,__b)}
	\fmfi{boson}{subpath (0,1) of vpath2(__a,__b)}
	\fmfi{marked}{point 1 of vpath1(__a,__b) .. point 1 of vpath2(__a,__b)}
	\fmfi{boson}{subpath (1,2) of vpath1(__a,__b)}
	\fmfi{boson}{subpath (1,2) of vpath2(__a,__b)}
\end{fmfgraph*}}
+
\parbox{20pt}{\begin{fmfgraph*}(20,10)
	\fmfleft{1}
	\fmfright{2}
	\fmf{phantom}{1,a,b,2}
	\fmf{boson}{1,a}
	\fmf{boson}{b,2}
	\fmf{phantom,left,tension=0,tag=1}{a,b}
	\fmf{phantom,right,tension=0,tag=2}{a,b}
	\fmffreeze
	\fmfi{boson}{subpath (0,1) of vpath1(__a,__b)}
	\fmfi{boson}{subpath (0,1) of vpath2(__a,__b)}
	\fmfi{boson}{point 1 of vpath1(__a,__b) .. point 1 of vpath2(__a,__b)}
	\fmfi{marked}{subpath (1,2) of vpath1(__a,__b)}
	\fmfi{boson}{subpath (1,2) of vpath2(__a,__b)}
\end{fmfgraph*}}
+
\parbox{20pt}{\begin{fmfgraph*}(20,10)
	\fmfleft{1}
	\fmfright{2}
	\fmf{phantom}{1,a,b,2}
	\fmf{boson}{1,a}
	\fmf{boson}{b,2}
	\fmf{phantom,left,tension=0,tag=1}{a,b}
	\fmf{phantom,right,tension=0,tag=2}{a,b}
	\fmffreeze
	\fmfi{boson}{subpath (0,1) of vpath1(__a,__b)}
	\fmfi{boson}{subpath (0,1) of vpath2(__a,__b)}
	\fmfi{boson}{point 1 of vpath1(__a,__b) .. point 1 of vpath2(__a,__b)}
	\fmfi{boson}{subpath (1,2) of vpath1(__a,__b)}
	\fmfi{marked}{subpath (1,2) of vpath2(__a,__b)}
\end{fmfgraph*}}
+
\parbox{20pt}{\begin{fmfgraph*}(20,10)
	\fmfleft{1}
	\fmfright{2}
	\fmf{phantom}{1,a,b,2}
	\fmf{boson}{1,a}
	\fmf{boson}{b,2}
	\fmf{phantom,left,tension=0,tag=1}{a,b}
	\fmf{phantom,right,tension=0,tag=2}{a,b}
	\fmffreeze
	\fmfi{marked}{subpath (0,1) of vpath1(__a,__b)}
	\fmfi{boson}{subpath (0,1) of vpath2(__a,__b)}
	\fmfi{boson}{point 1 of vpath1(__a,__b) .. point 1 of vpath2(__a,__b)}
	\fmfi{boson}{subpath (1,2) of vpath1(__a,__b)}
	\fmfi{marked}{subpath (1,2) of vpath2(__a,__b)}
\end{fmfgraph*}}
+
\parbox{20pt}{\begin{fmfgraph*}(20,10)
	\fmfleft{1}
	\fmfright{2}
	\fmf{phantom}{1,a,b,2}
	\fmf{boson}{1,a}
	\fmf{boson}{b,2}
	\fmf{phantom,left,tension=0,tag=1}{a,b}
	\fmf{phantom,right,tension=0,tag=2}{a,b}
	\fmffreeze
	\fmfi{boson}{subpath (0,1) of vpath1(__a,__b)}
	\fmfi{marked}{subpath (0,1) of vpath2(__a,__b)}
	\fmfi{boson}{point 1 of vpath1(__a,__b) .. point 1 of vpath2(__a,__b)}
	\fmfi{marked}{subpath (1,2) of vpath1(__a,__b)}
	\fmfi{boson}{subpath (1,2) of vpath2(__a,__b)}
\end{fmfgraph*}}
\\
se^{\chi_+}
\parbox{20pt}{\begin{fmfgraph*}(20,10)
	\fmfleft{1}
	\fmfright{2}
	\fmf{phantom}{1,a,b,2}
	\fmf{boson}{1,a}
	\fmf{boson}{b,2}
	\fmf{phantom,left,tension=0,tag=1}{a,b}
	\fmf{phantom,right,tension=0,tag=2}{a,b}
	\fmffreeze
	\fmfi{boson}{subpath (0,1) of vpath1(__a,__b)}
	\fmfi{boson}{subpath (0,1) of vpath2(__a,__b)}
	\fmfi{boson}{point 1 of vpath1(__a,__b) .. point 1 of vpath2(__a,__b)}
	\fmfi{boson}{subpath (1,2) of vpath1(__a,__b)}
	\fmfi{boson}{subpath (1,2) of vpath2(__a,__b)}
\end{fmfgraph*}}
&=
\parbox{20pt}{\begin{fmfgraph*}(20,10)
	\fmfleft{1}
	\fmfright{2}
	\fmf{phantom}{1,a,b,2}
	\fmf{boson}{1,a}
	\fmf{boson}{b,2}
	\fmf{phantom,left,tension=0,tag=1}{a,b}
	\fmf{phantom,right,tension=0,tag=2}{a,b}
	\fmffreeze
	\fmfi{mmarked}{subpath (0,1) of vpath1(__a,__b)}
	\fmfi{boson}{subpath (0,1) of vpath2(__a,__b)}
	\fmfi{boson}{point 1 of vpath1(__a,__b) .. point 1 of vpath2(__a,__b)}
	\fmfi{boson}{subpath (1,2) of vpath1(__a,__b)}
	\fmfi{boson}{subpath (1,2) of vpath2(__a,__b)}
\end{fmfgraph*}}
+
\parbox{20pt}{\begin{fmfgraph*}(20,10)
	\fmfleft{1}
	\fmfright{2}
	\fmf{phantom}{1,a,b,2}
	\fmf{boson}{1,a}
	\fmf{boson}{b,2}
	\fmf{phantom,left,tension=0,tag=1}{a,b}
	\fmf{phantom,right,tension=0,tag=2}{a,b}
	\fmffreeze
	\fmfi{boson}{subpath (0,1) of vpath1(__a,__b)}
	\fmfi{mmarked}{subpath (0,1) of vpath2(__a,__b)}
	\fmfi{boson}{point 1 of vpath1(__a,__b) .. point 1 of vpath2(__a,__b)}
	\fmfi{boson}{subpath (1,2) of vpath1(__a,__b)}
	\fmfi{boson}{subpath (1,2) of vpath2(__a,__b)}
\end{fmfgraph*}}
+
\parbox{20pt}{\begin{fmfgraph*}(20,10)
	\fmfleft{1}
	\fmfright{2}
	\fmf{phantom}{1,a,b,2}
	\fmf{boson}{1,a}
	\fmf{boson}{b,2}
	\fmf{phantom,left,tension=0,tag=1}{a,b}
	\fmf{phantom,right,tension=0,tag=2}{a,b}
	\fmffreeze
	\fmfi{boson}{subpath (0,1) of vpath1(__a,__b)}
	\fmfi{boson}{subpath (0,1) of vpath2(__a,__b)}
	\fmfi{mmarked}{point 1 of vpath1(__a,__b) .. point 1 of vpath2(__a,__b)}
	\fmfi{boson}{subpath (1,2) of vpath1(__a,__b)}
	\fmfi{boson}{subpath (1,2) of vpath2(__a,__b)}
\end{fmfgraph*}}
+
\parbox{20pt}{\begin{fmfgraph*}(20,10)
	\fmfleft{1}
	\fmfright{2}
	\fmf{phantom}{1,a,b,2}
	\fmf{boson}{1,a}
	\fmf{boson}{b,2}
	\fmf{phantom,left,tension=0,tag=1}{a,b}
	\fmf{phantom,right,tension=0,tag=2}{a,b}
	\fmffreeze
	\fmfi{boson}{subpath (0,1) of vpath1(__a,__b)}
	\fmfi{boson}{subpath (0,1) of vpath2(__a,__b)}
	\fmfi{boson}{point 1 of vpath1(__a,__b) .. point 1 of vpath2(__a,__b)}
	\fmfi{mmarked}{subpath (1,2) of vpath1(__a,__b)}
	\fmfi{boson}{subpath (1,2) of vpath2(__a,__b)}
\end{fmfgraph*}}
+
\parbox{20pt}{\begin{fmfgraph*}(20,10)
	\fmfleft{1}
	\fmfright{2}
	\fmf{phantom}{1,a,b,2}
	\fmf{boson}{1,a}
	\fmf{boson}{b,2}
	\fmf{phantom,left,tension=0,tag=1}{a,b}
	\fmf{phantom,right,tension=0,tag=2}{a,b}
	\fmffreeze
	\fmfi{boson}{subpath (0,1) of vpath1(__a,__b)}
	\fmfi{boson}{subpath (0,1) of vpath2(__a,__b)}
	\fmfi{boson}{point 1 of vpath1(__a,__b) .. point 1 of vpath2(__a,__b)}
	\fmfi{boson}{subpath (1,2) of vpath1(__a,__b)}
	\fmfi{mmarked}{subpath (1,2) of vpath2(__a,__b)}
\end{fmfgraph*}}
\\&\qquad-
\parbox{20pt}{\begin{fmfgraph*}(20,10)
	\fmfleft{1}
	\fmfright{2}
	\fmf{phantom}{1,a,b,2}
	\fmf{boson}{1,a}
	\fmf{boson}{b,2}
	\fmf{phantom,left,tension=0,tag=1}{a,b}
	\fmf{phantom,right,tension=0,tag=2}{a,b}
	\fmffreeze
	\fmfi{marked}{subpath (0,1) of vpath1(__a,__b)}
	\fmfi{boson}{subpath (0,1) of vpath2(__a,__b)}
	\fmfi{boson}{point 1 of vpath1(__a,__b) .. point 1 of vpath2(__a,__b)}
	\fmfi{boson}{subpath (1,2) of vpath1(__a,__b)}
	\fmfi{mmarked}{subpath (1,2) of vpath2(__a,__b)}
\end{fmfgraph*}}
-
\parbox{20pt}{\begin{fmfgraph*}(20,10)
	\fmfleft{1}
	\fmfright{2}
	\fmf{phantom}{1,a,b,2}
	\fmf{boson}{1,a}
	\fmf{boson}{b,2}
	\fmf{phantom,left,tension=0,tag=1}{a,b}
	\fmf{phantom,right,tension=0,tag=2}{a,b}
	\fmffreeze
	\fmfi{boson}{subpath (0,1) of vpath1(__a,__b)}
	\fmfi{marked}{subpath (0,1) of vpath2(__a,__b)}
	\fmfi{boson}{point 1 of vpath1(__a,__b) .. point 1 of vpath2(__a,__b)}
	\fmfi{mmarked}{subpath (1,2) of vpath1(__a,__b)}
	\fmfi{boson}{subpath (1,2) of vpath2(__a,__b)}
\end{fmfgraph*}}
+
\parbox{20pt}{\begin{fmfgraph*}(20,10)
	\fmfleft{1}
	\fmfright{2}
	\fmf{phantom}{1,a,b,2}
	\fmf{boson}{1,a}
	\fmf{boson}{b,2}
	\fmf{phantom,left,tension=0,tag=1}{a,b}
	\fmf{phantom,right,tension=0,tag=2}{a,b}
	\fmffreeze
	\fmfi{boson}{subpath (0,1) of vpath1(__a,__b)}
	\fmfi{mmarked}{subpath (0,1) of vpath2(__a,__b)}
	\fmfi{boson}{point 1 of vpath1(__a,__b) .. point 1 of vpath2(__a,__b)}
	\fmfi{marked}{subpath (1,2) of vpath1(__a,__b)}
	\fmfi{boson}{subpath (1,2) of vpath2(__a,__b)}
\end{fmfgraph*}}
+
\parbox{20pt}{\begin{fmfgraph*}(20,10)
	\fmfleft{1}
	\fmfright{2}
	\fmf{phantom}{1,a,b,2}
	\fmf{boson}{1,a}
	\fmf{boson}{b,2}
	\fmf{phantom,left,tension=0,tag=1}{a,b}
	\fmf{phantom,right,tension=0,tag=2}{a,b}
	\fmffreeze
	\fmfi{mmarked}{subpath (0,1) of vpath1(__a,__b)}
	\fmfi{boson}{subpath (0,1) of vpath2(__a,__b)}
	\fmfi{boson}{point 1 of vpath1(__a,__b) .. point 1 of vpath2(__a,__b)}
	\fmfi{boson}{subpath (1,2) of vpath1(__a,__b)}
	\fmfi{marked}{subpath (1,2) of vpath2(__a,__b)}
\end{fmfgraph*}}
\\
\sigma e^{\chi_+}
\parbox{20pt}{\begin{fmfgraph*}(20,10)
	\fmfleft{1}
	\fmfright{2}
	\fmf{phantom}{1,a,b,2}
	\fmf{boson}{1,a}
	\fmf{boson}{b,2}
	\fmf{phantom,left,tension=0,tag=1}{a,b}
	\fmf{phantom,right,tension=0,tag=2}{a,b}
	\fmffreeze
	\fmfi{boson}{subpath (0,1) of vpath1(__a,__b)}
	\fmfi{boson}{subpath (0,1) of vpath2(__a,__b)}
	\fmfi{boson}{point 1 of vpath1(__a,__b) .. point 1 of vpath2(__a,__b)}
	\fmfi{boson}{subpath (1,2) of vpath1(__a,__b)}
	\fmfi{boson}{subpath (1,2) of vpath2(__a,__b)}
\end{fmfgraph*}}
&= -
\parbox{20pt}{\begin{fmfgraph*}(20,10)
	\fmfleft{1}
	\fmfright{2}
	\fmf{phantom}{1,a,b,2}
	\fmf{boson}{1,a}
	\fmf{boson}{b,2}
	\fmf{phantom,left,tension=0,tag=1}{a,b}
	\fmf{phantom,right,tension=0,tag=2}{a,b}
	\fmffreeze
	\fmfi{mmarked}{subpath (0,1) of vpath1(__a,__b)}
	\fmfi{boson}{subpath (0,1) of vpath2(__a,__b)}
	\fmfi{boson}{point 1 of vpath1(__a,__b) .. point 1 of vpath2(__a,__b)}
	\fmfi{boson}{subpath (1,2) of vpath1(__a,__b)}
	\fmfi{boson}{subpath (1,2) of vpath2(__a,__b)}
\end{fmfgraph*}}
-
\parbox{20pt}{\begin{fmfgraph*}(20,10)
	\fmfleft{1}
	\fmfright{2}
	\fmf{phantom}{1,a,b,2}
	\fmf{boson}{1,a}
	\fmf{boson}{b,2}
	\fmf{phantom,left,tension=0,tag=1}{a,b}
	\fmf{phantom,right,tension=0,tag=2}{a,b}
	\fmffreeze
	\fmfi{boson}{subpath (0,1) of vpath1(__a,__b)}
	\fmfi{mmarked}{subpath (0,1) of vpath2(__a,__b)}
	\fmfi{boson}{point 1 of vpath1(__a,__b) .. point 1 of vpath2(__a,__b)}
	\fmfi{boson}{subpath (1,2) of vpath1(__a,__b)}
	\fmfi{boson}{subpath (1,2) of vpath2(__a,__b)}
\end{fmfgraph*}}
-
\parbox{20pt}{\begin{fmfgraph*}(20,10)
	\fmfleft{1}
	\fmfright{2}
	\fmf{phantom}{1,a,b,2}
	\fmf{boson}{1,a}
	\fmf{boson}{b,2}
	\fmf{phantom,left,tension=0,tag=1}{a,b}
	\fmf{phantom,right,tension=0,tag=2}{a,b}
	\fmffreeze
	\fmfi{boson}{subpath (0,1) of vpath1(__a,__b)}
	\fmfi{boson}{subpath (0,1) of vpath2(__a,__b)}
	\fmfi{mmarked}{point 1 of vpath1(__a,__b) .. point 1 of vpath2(__a,__b)}
	\fmfi{boson}{subpath (1,2) of vpath1(__a,__b)}
	\fmfi{boson}{subpath (1,2) of vpath2(__a,__b)}
\end{fmfgraph*}}
-
\parbox{20pt}{\begin{fmfgraph*}(20,10)
	\fmfleft{1}
	\fmfright{2}
	\fmf{phantom}{1,a,b,2}
	\fmf{boson}{1,a}
	\fmf{boson}{b,2}
	\fmf{phantom,left,tension=0,tag=1}{a,b}
	\fmf{phantom,right,tension=0,tag=2}{a,b}
	\fmffreeze
	\fmfi{boson}{subpath (0,1) of vpath1(__a,__b)}
	\fmfi{boson}{subpath (0,1) of vpath2(__a,__b)}
	\fmfi{boson}{point 1 of vpath1(__a,__b) .. point 1 of vpath2(__a,__b)}
	\fmfi{mmarked}{subpath (1,2) of vpath1(__a,__b)}
	\fmfi{boson}{subpath (1,2) of vpath2(__a,__b)}
\end{fmfgraph*}}
-
\parbox{20pt}{\begin{fmfgraph*}(20,10)
	\fmfleft{1}
	\fmfright{2}
	\fmf{phantom}{1,a,b,2}
	\fmf{boson}{1,a}
	\fmf{boson}{b,2}
	\fmf{phantom,left,tension=0,tag=1}{a,b}
	\fmf{phantom,right,tension=0,tag=2}{a,b}
	\fmffreeze
	\fmfi{boson}{subpath (0,1) of vpath1(__a,__b)}
	\fmfi{boson}{subpath (0,1) of vpath2(__a,__b)}
	\fmfi{boson}{point 1 of vpath1(__a,__b) .. point 1 of vpath2(__a,__b)}
	\fmfi{boson}{subpath (1,2) of vpath1(__a,__b)}
	\fmfi{mmarked}{subpath (1,2) of vpath2(__a,__b)}
\end{fmfgraph*}}
\\&\qquad-
\parbox{20pt}{\begin{fmfgraph*}(20,10)
	\fmfleft{1}
	\fmfright{2}
	\fmf{phantom}{1,a,b,2}
	\fmf{boson}{1,a}
	\fmf{boson}{b,2}
	\fmf{phantom,left,tension=0,tag=1}{a,b}
	\fmf{phantom,right,tension=0,tag=2}{a,b}
	\fmffreeze
	\fmfi{mmarked}{subpath (0,1) of vpath1(__a,__b)}
	\fmfi{boson}{subpath (0,1) of vpath2(__a,__b)}
	\fmfi{boson}{point 1 of vpath1(__a,__b) .. point 1 of vpath2(__a,__b)}
	\fmfi{boson}{subpath (1,2) of vpath1(__a,__b)}
	\fmfi{marked}{subpath (1,2) of vpath2(__a,__b)}
\end{fmfgraph*}}
+
\parbox{20pt}{\begin{fmfgraph*}(20,10)
	\fmfleft{1}
	\fmfright{2}
	\fmf{phantom}{1,a,b,2}
	\fmf{boson}{1,a}
	\fmf{boson}{b,2}
	\fmf{phantom,left,tension=0,tag=1}{a,b}
	\fmf{phantom,right,tension=0,tag=2}{a,b}
	\fmffreeze
	\fmfi{marked}{subpath (0,1) of vpath1(__a,__b)}
	\fmfi{boson}{subpath (0,1) of vpath2(__a,__b)}
	\fmfi{boson}{point 1 of vpath1(__a,__b) .. point 1 of vpath2(__a,__b)}
	\fmfi{boson}{subpath (1,2) of vpath1(__a,__b)}
	\fmfi{mmarked}{subpath (1,2) of vpath2(__a,__b)}
\end{fmfgraph*}}
-
\parbox{20pt}{\begin{fmfgraph*}(20,10)
	\fmfleft{1}
	\fmfright{2}
	\fmf{phantom}{1,a,b,2}
	\fmf{boson}{1,a}
	\fmf{boson}{b,2}
	\fmf{phantom,left,tension=0,tag=1}{a,b}
	\fmf{phantom,right,tension=0,tag=2}{a,b}
	\fmffreeze
	\fmfi{boson}{subpath (0,1) of vpath1(__a,__b)}
	\fmfi{mmarked}{subpath (0,1) of vpath2(__a,__b)}
	\fmfi{boson}{point 1 of vpath1(__a,__b) .. point 1 of vpath2(__a,__b)}
	\fmfi{marked}{subpath (1,2) of vpath1(__a,__b)}
	\fmfi{boson}{subpath (1,2) of vpath2(__a,__b)}
\end{fmfgraph*}}
+
\parbox{20pt}{\begin{fmfgraph*}(20,10)
	\fmfleft{1}
	\fmfright{2}
	\fmf{phantom}{1,a,b,2}
	\fmf{boson}{1,a}
	\fmf{boson}{b,2}
	\fmf{phantom,left,tension=0,tag=1}{a,b}
	\fmf{phantom,right,tension=0,tag=2}{a,b}
	\fmffreeze
	\fmfi{boson}{subpath (0,1) of vpath1(__a,__b)}
	\fmfi{marked}{subpath (0,1) of vpath2(__a,__b)}
	\fmfi{boson}{point 1 of vpath1(__a,__b) .. point 1 of vpath2(__a,__b)}
	\fmfi{mmarked}{subpath (1,2) of vpath1(__a,__b)}
	\fmfi{boson}{subpath (1,2) of vpath2(__a,__b)}
\end{fmfgraph*}}\\
Se^{\chi_+}
\parbox{20pt}{\begin{fmfgraph*}(20,10)
	\fmfleft{1}
	\fmfright{2}
	\fmf{phantom}{1,a,b,2}
	\fmf{boson}{1,a}
	\fmf{boson}{b,2}
	\fmf{phantom,left,tension=0,tag=1}{a,b}
	\fmf{phantom,right,tension=0,tag=2}{a,b}
	\fmffreeze
	\fmfi{boson}{subpath (0,1) of vpath1(__a,__b)}
	\fmfi{boson}{subpath (0,1) of vpath2(__a,__b)}
	\fmfi{boson}{point 1 of vpath1(__a,__b) .. point 1 of vpath2(__a,__b)}
	\fmfi{boson}{subpath (1,2) of vpath1(__a,__b)}
	\fmfi{boson}{subpath (1,2) of vpath2(__a,__b)}
\end{fmfgraph*}}
&=0 \end{align*}
\end{exmp}

This finishes our considerations of graph homology; we have proved Theorem \ref{chithm}.
%Note that the demand that a gauge theory amplitude is in the kernel of $S$ implies that the Feynman rules for a 4-gluon vertex are precisely what they are.

\subsection{The ghost cycle generator $\delta_+^C$}
We now turn to an investigation of the ghost sector through cycle homology.
\begin{defn}\label{4-22}
Let $\mathcal C_\Gamma$ be the set of cycles in $\Gamma$. We write $\mathcal C_\Gamma = \{C_1,C_2,\ldots\}$.
\begin{equation*} \delta_+\Gamma
= \sum_{C\in\mathcal C_\Gamma}\delta_+^{C}\Gamma, \end{equation*}
where
\begin{samepage}\begin{equation*} \delta_+^{C}\Gamma = \begin{cases}
0 & \text{if $C$ has a vertex which has an adjacent marked or ghost edge} \\
\Gamma_{C\rightsquigarrow\ \dotsedge} & \text{otherwise.}
\end{cases}\end{equation*}
Note that an (unoriented) ghost cycle is the short-hand notation for the sum of the two orientations:
\begin{equation*}
\parbox{20pt}{\begin{fmfgraph*}(20,20)
	\fmfsurroundn{e}{6}
	\fmf{boson,tension=2}{e1,v1}
	\fmf{boson,tension=2}{e2,v2}
	\fmf{boson,tension=2}{e3,v3}
	\fmf{boson,tension=2}{e4,v4}
	\fmf{boson,tension=2}{e5,v5}
	\fmf{phantom,tension=2}{e6,v6}
	\fmf{phantom}{v1,v2,v3,v4,v5,v6,v1}
	\fmffreeze
	\fmf{dots,right}{v1,v4,v1}
\end{fmfgraph*}}
=
\parbox{20pt}{\begin{fmfgraph*}(20,20)
	\fmfsurroundn{e}{6}
	\fmf{boson,tension=2}{e1,v1}
	\fmf{boson,tension=2}{e2,v2}
	\fmf{boson,tension=2}{e3,v3}
	\fmf{boson,tension=2}{e4,v4}
	\fmf{boson,tension=2}{e5,v5}
	\fmf{phantom,tension=2}{e6,v6}
	\fmf{phantom}{v1,v2,v3,v4,v5,v6,v1}
	\fmffreeze
	\fmf{phantom,right,tag=1}{v1,v4}
	\fmf{phantom,right,tag=2}{v4,v1}
	\fmffreeze
	\fmfi{ghost}{subpath (0,2/3) of vpath1(__v1,__v4)}
	\fmfi{ghost}{subpath (2/3,4/3) of vpath1(__v1,__v4)}
	\fmfi{ghost}{subpath (4/3,2) of vpath1(__v1,__v4)}
	\fmfi{ghost}{subpath (0,2/3) of vpath2(__v1,__v4)}
	\fmfi{dots}{subpath (2/3,4/3) of vpath2(__v1,__v4)}
	\fmfi{dots}{subpath (4/3,2) of vpath2(__v1,__v4)}
\end{fmfgraph*}}
+
\parbox{20pt}{\begin{fmfgraph*}(20,20)
	\fmfsurroundn{e}{6}
	\fmf{boson,tension=2}{e1,v1}
	\fmf{boson,tension=2}{e2,v2}
	\fmf{boson,tension=2}{e3,v3}
	\fmf{boson,tension=2}{e4,v4}
	\fmf{boson,tension=2}{e5,v5}
	\fmf{phantom,tension=2}{e6,v6}
	\fmf{phantom}{v1,v2,v3,v4,v5,v6,v1}
	\fmffreeze
	\fmf{phantom,left,tag=1}{v1,v4}
	\fmf{phantom,left,tag=2}{v4,v1}
	\fmffreeze
	\fmfi{dots}{subpath (0,2/3) of vpath1(__v1,__v4)}
	\fmfi{dots}{subpath (2/3,4/3) of vpath1(__v1,__v4)}
	\fmfi{ghost}{subpath (4/3,2) of vpath1(__v1,__v4)}
	\fmfi{ghost}{subpath (0,2/3) of vpath2(__v1,__v4)}
	\fmfi{ghost}{subpath (2/3,4/3) of vpath2(__v1,__v4)}
	\fmfi{ghost}{subpath (4/3,2) of vpath2(__v1,__v4)}
\end{fmfgraph*}}.
\end{equation*}
\end{samepage}
\end{defn}

\begin{lem}\label{lem:skeletongraphs-delta}\begin{enumerate}
\item For any graph $\Gamma\in\bar{\mathcal G}^{n,l}$:
\begin{equation*} \frac1{\Sym(\Gamma)} \frac{(\delta_+)^k}{k!} \Gamma
= \sum_{\substack{\Gamma'\in\bar{\mathcal G}^{n,l}_{k\ghostloop}\\ \skeleton(\Gamma')=\Gamma}} \frac1{\Sym(\Gamma')} \Gamma'.  \end{equation*}
\item For any $k \geq 0$, $\frac{(\delta_+)^k}{k!} X^{n,l} = X^{n,l}_{k\ghostloop}$.
\item We have $e^{\delta_+} X^{n,l} = X^{n,l}_{/\ghostloop}$.
\end{enumerate}
\end{lem}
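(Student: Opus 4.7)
\medskip

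\noindent\textbf{Proof proposal.} The lemma is the cycle–homology analogue of Lemma \ref{lem:skeletongraphs} (which treated the marked–edge generator $\chi_+$), and the plan is to mirror that proof step by step, with ``cycle'' replacing ``internal edge'' and ``ghost loop'' replacing ``marked edge''.

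The core of the argument is part (i). First, expand
\begin{equation*}
(\delta_+)^k \Gamma = \sum_{(C_1,\ldots,C_k)\in\mathcal C_\Gamma^k} \delta_+^{C_1}\cdots\delta_+^{C_k}\,\Gamma.
\end{equation*}
By the defining rule (Def.\ \ref{4-22}) $\delta_+^{C}$ annihilates any graph in which $C$ shares a vertex with a marked or ghost edge, so only ordered $k$-tuples of pairwise vertex–disjoint cycles survive. On such a tuple the operators $\delta_+^{C_i}$ commute (they act on disjoint vertex sets), so each unordered $k$-subset of pairwise vertex–disjoint cycles contributes $k!$ equal terms. Dividing by $k!$,
\begin{equation*}
\frac{(\delta_+)^k}{k!}\Gamma
= \sum_{\{C_1,\ldots,C_k\}} \Gamma_{\{C_1,\ldots,C_k\}\to\ghostloop},
\end{equation*}
the sum running over unordered $k$-subsets of pairwise vertex–disjoint cycles of $\Gamma$.

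Next I group the right-hand side by isomorphism class of the resulting ghost-marked graph $\Gamma'\in\bar{\mathcal G}^{n,l}_{k\ghostloop}$ with $\skeleton(\Gamma')=\Gamma$. The automorphism group $\Aut(\Gamma)$ acts on the set of such $k$-subsets, and the stabilizer of a subset whose image is $\Gamma'$ is exactly $\Aut(\Gamma')$ (an automorphism fixing $\Gamma'$ is an automorphism of $\Gamma$ preserving the chosen collection of cycles). Orbit–stabilizer then gives
\begin{equation*}
\#\bigl\{\{C_1,\ldots,C_k\}\subset\mathcal C_\Gamma : \Gamma_{\{C_1,\ldots,C_k\}\to\ghostloop}=\Gamma'\bigr\} \;=\; \frac{\Sym(\Gamma)}{\Sym(\Gamma')},
\end{equation*}
and dividing the whole expansion by $\Sym(\Gamma)$ yields (i).

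Parts (ii) and (iii) are then formal: (ii) follows from (i) by summing over $\Gamma\in\bar{\mathcal G}^{n,l}$, using that every $\Gamma'\in\bar{\mathcal G}^{n,l}_{k\ghostloop}$ has a unique skeleton in $\bar{\mathcal G}^{n,l}$; (iii) follows from (ii) by summing $k\geq 0$. The main obstacle I anticipate is verifying that the ``$\delta_+^C$ kills cycles touching marked or ghost edges'' rule matches precisely the structure of $\bar{\mathcal G}^{n,l}_{k\ghostloop}$ (i.e.\ that ghost loops in the sets $\bar{\mathcal G}^{n,l}_{k\ghostloop}$ are indeed vertex–disjoint, never sharing a 3-gluon vertex); once this combinatorial identification is cleanly done, the orbit–stabilizer count is essentially the same bookkeeping that drove Lemma \ref{lem:skeletongraphs}.
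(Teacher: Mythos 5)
Your proposal is correct and follows essentially the same route as the paper: the authors dispose of this lemma by declaring it analogous to the earlier marked-edge statement, whose proof is precisely your expansion of $(\delta_+)^k/k!$ into unordered disjoint subsets followed by the count $\#\{\cdots\}=\Sym(\Gamma)/\Sym(\Gamma')$, which is your orbit--stabilizer step made explicit.
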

\begin{proof} Analogous to Lemma \ref{lem:||&4-vertices}. \end{proof}

\begin{exmp}An example for Lemma \ref{lem:skeletongraphs-delta}.i is
\begin{equation*}
\tfrac12\delta_+
\parbox{20pt}{\begin{fmfgraph*}(20,10)
	\fmfleft{1}
	\fmfright{2}
	\fmf{phantom}{1,a,d,2}
	\fmf{boson}{1,a}
	\fmf{boson}{d,2}
	\fmf{phantom,left,tension=0,tag=1}{a,d}
	\fmf{phantom,right,tension=0,tag=2}{a,d}
	\fmffreeze
	\fmfi{boson}{subpath (0,2/3) of vpath1(__a,__d)}
	\fmfi{boson}{point 2/3 of vpath1(__a,__d){dir 90} .. point 4/3 of vpath1(__a,__d){dir -90}}
	\fmfi{boson}{point 2/3 of vpath1(__a,__d){dir -90} .. point 4/3 of vpath1(__a,__d){dir 90}}
	\fmfi{boson}{subpath (4/3,2) of vpath1(__a,__d)}
	\fmfi{boson}{subpath (0,2) of vpath2(__a,__d)}
\end{fmfgraph*}}
= \tfrac12 \Big(
\parbox{20pt}{\begin{fmfgraph*}(20,10)
	\fmfleft{1}
	\fmfright{2}
	\fmf{phantom}{1,a,d,2}
	\fmf{boson}{1,a}
	\fmf{boson}{d,2}
	\fmf{phantom,left,tension=0,tag=1}{a,d}
	\fmf{phantom,right,tension=0,tag=2}{a,d}
	\fmffreeze
	\fmfi{dots}{subpath (0,2/3) of vpath1(__a,__d)}
	\fmfi{dots}{point 2/3 of vpath1(__a,__d){dir 90} .. point 4/3 of vpath1(__a,__d){dir -90}}
	\fmfi{boson}{point 2/3 of vpath1(__a,__d){dir -90} .. point 4/3 of vpath1(__a,__d){dir 90}}
	\fmfi{dots}{subpath (4/3,2) of vpath1(__a,__d)}
	\fmfi{dots}{subpath (0,2) of vpath2(__a,__d)}
\end{fmfgraph*}}
+
\parbox{20pt}{\begin{fmfgraph*}(20,10)
	\fmfleft{1}
	\fmfright{2}
	\fmf{phantom}{1,a,d,2}
	\fmf{boson}{1,a}
	\fmf{boson}{d,2}
	\fmf{phantom,left,tension=0,tag=1}{a,d}
	\fmf{phantom,right,tension=0,tag=2}{a,d}
	\fmffreeze
	\fmfi{dots}{subpath (0,2/3) of vpath1(__a,__d)}
	\fmfi{boson}{point 2/3 of vpath1(__a,__d){dir 90} .. point 4/3 of vpath1(__a,__d){dir -90}}
	\fmfi{dots}{point 2/3 of vpath1(__a,__d){dir -90} .. point 4/3 of vpath1(__a,__d){dir 90}}
	\fmfi{dots}{subpath (4/3,2) of vpath1(__a,__d)}
	\fmfi{dots}{subpath (0,2) of vpath2(__a,__d)}
\end{fmfgraph*}}
+
\parbox{20pt}{\begin{fmfgraph*}(20,10)
	\fmfleft{1}
	\fmfright{2}
	\fmf{phantom}{1,a,d,2}
	\fmf{boson}{1,a}
	\fmf{boson}{d,2}
	\fmf{phantom,left,tension=0,tag=1}{a,d}
	\fmf{phantom,right,tension=0,tag=2}{a,d}
	\fmffreeze
	\fmfi{boson}{subpath (0,2/3) of vpath1(__a,__d)}
	\fmfi{dots}{point 2/3 of vpath1(__a,__d){dir 90} .. point 4/3 of vpath1(__a,__d){dir -90}}
	\fmfi{dots}{point 2/3 of vpath1(__a,__d){dir -90} .. point 4/3 of vpath1(__a,__d){dir 90}}
	\fmfi{boson}{subpath (4/3,2) of vpath1(__a,__d)}
	\fmfi{boson}{subpath (0,2) of vpath2(__a,__d)}
\end{fmfgraph*}}
\Big) =
\parbox{20pt}{\begin{fmfgraph*}(20,10)
	\fmfleft{1}
	\fmfright{2}
	\fmf{phantom}{1,a,d,2}
	\fmf{boson}{1,a}
	\fmf{boson}{d,2}
	\fmf{phantom,left,tension=0,tag=1}{a,d}
	\fmf{phantom,right,tension=0,tag=2}{a,d}
	\fmffreeze
	\fmfi{dots}{subpath (0,2/3) of vpath1(__a,__d)}
	\fmfi{dots}{point 2/3 of vpath1(__a,__d){dir 90} .. point 4/3 of vpath1(__a,__d){dir -90}}
	\fmfi{boson}{point 2/3 of vpath1(__a,__d){dir -90} .. point 4/3 of vpath1(__a,__d){dir 90}}
	\fmfi{dots}{subpath (4/3,2) of vpath1(__a,__d)}
	\fmfi{dots}{subpath (0,2) of vpath2(__a,__d)}
\end{fmfgraph*}}
+ \tfrac12
\parbox{20pt}{\begin{fmfgraph*}(20,10)
	\fmfleft{1}
	\fmfright{2}
	\fmf{phantom}{1,a,d,2}
	\fmf{boson}{1,a}
	\fmf{boson}{d,2}
	\fmf{phantom,left,tension=0,tag=1}{a,d}
	\fmf{phantom,right,tension=0,tag=2}{a,d}
	\fmffreeze
	\fmfi{boson}{subpath (0,2/3) of vpath1(__a,__d)}
	\fmfi{dots}{point 2/3 of vpath1(__a,__d){dir 90} .. point 4/3 of vpath1(__a,__d){dir -90}}
	\fmfi{dots}{point 2/3 of vpath1(__a,__d){dir -90} .. point 4/3 of vpath1(__a,__d){dir 90}}
	\fmfi{boson}{subpath (4/3,2) of vpath1(__a,__d)}
	\fmfi{boson}{subpath (0,2) of vpath2(__a,__d)}
\end{fmfgraph*}}
\end{equation*}
\end{exmp}

\begin{rem} The operators $\chi_+^e$ and $\delta_+^C$ commute, hence so do $\chi_+$ and $\delta_+$. \end{rem}

\begin{cor}\begin{enumerate}
\item The combinatorial Green's functions $X^{n,l}_{k\markededge,\tilde l\ghostloop}$ can be written as
\begin{equation*} X^{n,l}_{k\markededge,\tilde l\ghostloop} = \frac{\chi_+^k\delta_+^{\tilde l}}{k!\tilde l!} X^{n,l}_{0\markededge,0\ghostloop}. \end{equation*}
\item The full combinatorial Green's function can be written as
\begin{equation*} X^{n,l}_{/\markededge,/\ghostloop} = e^{\chi_+}e^{\delta_+} X^{n,l}_{0\markededge,0\ghostloop}. \end{equation*}
\end{enumerate}\end{cor}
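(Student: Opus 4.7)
\medskip

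\noindent\textbf{Proof proposal.} The plan is to iterate the two single-operator identities of Lemmas \ref{lem:skeletongraphs} and \ref{lem:skeletongraphs-delta}, using the commutativity $[\chi_+,\delta_+]=0$ recorded in the remark immediately preceding the corollary. Part (ii) will then be a formal consequence of part (i) by summing a double exponential series.

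For part (i), I would first establish, by induction on $\tilde l$, the intermediate identity
\[
\frac{\delta_+^{\tilde l}}{\tilde l!}\, X^{n,l}_{0\markededge,0\ghostloop} = X^{n,l}_{0\markededge,\tilde l\ghostloop}.
\]
The base case $\tilde l=0$ is tautological. The inductive step repeats verbatim the combinatorial argument behind Lemma \ref{lem:skeletongraphs-delta}.i: each graph $\Gamma'$ contributing to $X^{n,l}_{0\markededge,\tilde l\ghostloop}$ has exactly $\tilde l!$ preimages under $\delta_+^{\tilde l}$, corresponding to the $\tilde l!$ orderings in which its $\tilde l$ ghost cycles can be created one at a time, and the preimage counts combine with $1/\Sym$-factors to give the stated equality. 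The only point that needs verification is that at each stage $\delta_+^C$ is applied to a cycle $C$ consisting entirely of unmarked non-ghost edges, which is automatic since we begin with a graph in $\mathcal G^{n,l}_{0\markededge,0\ghostloop}$ and only add ghost markings.

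Next, apply $\chi_+^k/k!$ to both sides. The proof of Lemma \ref{lem:skeletongraphs}.i applies equally well to each graph $\Gamma'$ appearing in $X^{n,l}_{0\markededge,\tilde l\ghostloop}$: the operator $\chi_+^e$ is zero on edges adjacent to ghost (or fermion) edges by the very Definition \ref{4-1}, and is otherwise insensitive to the presence of ghost cycles elsewhere in the graph. Enumerating $k$-subsets of the edges on which $\chi_+$ acts non-trivially yields
\[
\frac{\chi_+^k\,\delta_+^{\tilde l}}{k!\,\tilde l!}\, X^{n,l}_{0\markededge,0\ghostloop}
\;=\; \frac{\chi_+^k}{k!}\, X^{n,l}_{0\markededge,\tilde l\ghostloop}
\;=\; X^{n,l}_{k\markededge,\tilde l\ghostloop},
\]
which is (i). Part (ii) then follows by summing over $k$ and $\tilde l$ and invoking $e^{\chi_+}e^{\delta_+}=e^{\chi_++\delta_+}$, which is valid by the commutativity of the two derivations.

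The only genuine subtlety, and the place I expect to spend most care, is checking that the two inductions interleave consistently --- that is, that no graph in $X^{n,l}_{k\markededge,\tilde l\ghostloop}$ is miscounted because some ghost cycle $C$ in it happens to pass through a vertex adjacent to one of the marked edges created by $\chi_+$. By the definitions of $\chi_+^e$ and $\delta_+^C$, both orders of action vanish on such incompatible configurations, so these graphs simply do not appear in either side of the equality. This is precisely the content of $[\chi_+^e,\delta_+^C]=0$: the compositions agree termwise, including on the degenerate configurations where both vanish, so the double-operator count reproduces exactly the unrestricted sum over $\mathcal G^{n,l}_{k\markededge,\tilde l\ghostloop}$.
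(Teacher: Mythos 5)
Your proposal is correct and follows essentially the same route the paper intends: the corollary is stated there without proof precisely because it is the immediate combination of Lemma \ref{lem:skeletongraphs}, Lemma \ref{lem:skeletongraphs-delta} and the commutativity $[\chi_+,\delta_+]=0$ noted in the preceding remark. Your extra care about the interleaving (that $\chi_+^e$ vanishes on edges adjacent to ghost edges and $\delta_+^C$ on cycles touching marked edges, so the degenerate configurations drop out of both sides consistently) is exactly the point the paper leaves implicit.
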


\subsection{The cycle differential $t$}
\begin{defn}\label{4-27} For a graph $\Gamma$ choose a labelling of the cycles $C_1,C_2,\ldots\in\mathcal C_\Gamma$. We define a derivation $t:H \to H$ acting on graphs as:
\begin{equation*} t\Gamma = \sum_{C_i\in\mathcal C_\Gamma} (-)^{\#\{C_{i'}\in\mathcal C_{\Gamma\rm gh}\ |\ i'<i\}} t_{C_i}\Gamma \end{equation*}
where
$$ t_C\Gamma = \begin{cases}
0 & \text{if $C$ has a vertex which has an adjacent marked or ghost edge} \\
\Gamma_{C\rightsquigarrow\ \ddotsedge} & \text{otherwise.}
\end{cases}$$

Next, we want to distinguish the markings created by $\delta_+$ and $t$. Therefore we draw the former with little circles instead of dots. We denote: $\mathcal C_{\Gamma\dotsedge}\cup\mathcal C_{\Gamma\ddotsedge} = \mathcal C_{\Gamma\rm gh}\subset\mathcal C_\Gamma$.
\end{defn}

\begin{prop}
$t$ is a differential: $t^2\Gamma = 0$.
\end{prop}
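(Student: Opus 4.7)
The plan is to mirror the proof of Proposition \ref{ssquare} almost verbatim, since the cycle differential $t$ is set up in precise analogy with the edge differential $s$. First I would expand $t^{2}\Gamma$ as a double sum
\[
t^{2}\Gamma \;=\; \sum_{C_{i},C_{j}\in\mathcal{C}_{\Gamma}}\!\!(-)^{\varepsilon(C_{i},C_{j})}\, t_{C_{i}}t_{C_{j}}\Gamma ,
\]
where the nested signs read off the definition as
\[
\varepsilon(C_{i},C_{j}) \;=\; \#\{C'\in\mathcal{C}_{\Gamma\rm gh}\mid C'<C_{j}\} \;+\; \#\{C'\in\mathcal{C}_{t_{C_{j}}\Gamma\,\rm gh}\mid C'<C_{i}\}.
\]
The ghost cycles of $t_{C_{j}}\Gamma$ are exactly those of $\Gamma$ together with $C_{j}$ itself, so writing $a_{k}:=\#\{C'\in\mathcal{C}_{\Gamma\rm gh}\mid C'<C_{k}\}$ we get
\[
\varepsilon(C_{i},C_{j}) \;=\; a_{i}+a_{j}+[C_{j}<C_{i}].
\]

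Next I would isolate the terms that survive. Because $t_{C_{j}}\Gamma$ has ghost edges along $C_{j}$, the factor $t_{C_{i}}$ vanishes unless $C_{i}$ is vertex-disjoint from $C_{j}$; in particular $C_{i}\neq C_{j}$, and the condition is symmetric in $i,j$. For any such ordered pair, the underlying graph operation commutes: marking $C_{i}$ and then $C_{j}$ (bosons replaced by ghost lines along each cycle) produces the same labelled graph as marking $C_{j}$ and then $C_{i}$, so $t_{C_{i}}t_{C_{j}}\Gamma$ and $t_{C_{j}}t_{C_{i}}\Gamma$ agree as graphs.

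Now I would pair up the two orderings of each unordered pair $\{C_{i},C_{j}\}$. Their signs are $(-)^{a_{i}+a_{j}+[C_{j}<C_{i}]}$ and $(-)^{a_{i}+a_{j}+[C_{i}<C_{j}]}$, and since exactly one of $[C_{j}<C_{i}]$, $[C_{i}<C_{j}]$ equals $1$, the two contributions have opposite signs and cancel. Summing over all unordered pairs yields $t^{2}\Gamma=0$, which can be displayed in a chain of equalities strictly parallel to the one in the proof of Proposition \ref{ssquare}, but with ``$\mathcal{C}_{\rm gh}$'' replacing ``$\Gamma^{[1]}_{\rm marked}$'' and ``$t_{C}$'' replacing ``$s_{e}$''.

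There is no serious obstacle here; the only point that needs a brief justification (and which I would make explicit in the write-up) is the commutation of $t_{C_{i}}$ and $t_{C_{j}}$ on the underlying graph. This follows because the vanishing condition on $t_{C}$ forces $C_{i}$ and $C_{j}$ to be vertex-disjoint whenever $t_{C_{i}}t_{C_{j}}\Gamma\neq 0$, so the two markings act on disjoint edge sets and hence manifestly commute, exactly as in the edge case where shrinking edge $e_{1}$ and shrinking edge $e_{2}$ commute once both are admissible.
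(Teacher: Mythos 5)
Your proposal is correct and is precisely the argument the paper intends: its own proof consists of the single line ``Analogous to Proposition \ref{ssquare}'', and you have carried out that analogy faithfully, with the right sign bookkeeping ($a_i+a_j$ plus the order indicator, cancelling over the two orderings of each unordered pair) and the right justification that surviving terms require vertex-disjoint cycles, on which $t_{C_i}$ and $t_{C_j}$ commute. Nothing is missing.
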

\begin{proof} Analogous to Proposition \ref{ssquare}. 
\end{proof}

\begin{exmp}
Consider the graph
\begin{equation*}
\Gamma=
\parbox[c][35pt]{95pt}{\centering\scriptsize
\begin{fmfgraph*}(80,20)
	\fmfleft{1}
	\fmfright{2}
	\fmfv{label=$1$,label.dist=2}{1}
	\fmfv{label=$2$,label.dist=2}{2}
	\fmf{boson,tension=2}{1,a}
	\fmf{boson,tension=.5,left,label=$3$,label.dist=2}{a,b}
	\fmf{boson,tension=.5,left,label=$4$,label.dist=2}{b,a}
	\fmf{boson,label=$5$,label.dist=2}{b,c}
	\fmf{boson,tension=.5,left,label=$6$,label.dist=2}{c,d}
	\fmf{boson,tension=.5,left,label=$7$,label.dist=2}{d,c}
	\fmf{boson,tension=2}{d,2}
	\fmfv{label=a,label.dist=2,label.angle=0}{a}
	\fmfv{label=b,label.dist=2,label.angle=180}{b}
	\fmfv{label=c,label.dist=2,label.angle=0}{c}
	\fmfv{label=d,label.dist=2,label.angle=180}{d}
\end{fmfgraph*}}
\end{equation*}
and label the two cycles:
\begin{align*}
C_1 = 
\parbox[c][35pt]{40pt}{\centering\scriptsize
\begin{fmfgraph*}(40,20)
	\fmfleft{1}
	\fmfright{5}
	\fmf{boson,tension=2}{1,a}
	\fmf{boson,tension=.5,left,label=$3$,label.dist=2}{a,b}
	\fmf{boson,tension=.5,left,label=$4$,label.dist=2}{b,a}
	\fmf{boson,tension=2}{b,5}
	\fmfv{label=a,label.dist=2,label.angle=0}{a}
	\fmfv{label=b,label.dist=2,label.angle=180}{b}
\end{fmfgraph*}}
,\qquad C_2 = 
\parbox[c][35pt]{40pt}{\centering\scriptsize
\begin{fmfgraph*}(40,20)
	\fmfleft{5}
	\fmfright{2}
	\fmf{boson,tension=2}{5,c}
	\fmf{boson,tension=.5,left,label=$6$,label.dist=2}{c,d}
	\fmf{boson,tension=.5,left,label=$7$,label.dist=2}{d,c}
	\fmf{boson,tension=2}{d,2}
	\fmfv{label=c,label.dist=2,label.angle=0}{c}
	\fmfv{label=d,label.dist=2,label.angle=180}{d}
\end{fmfgraph*}}
.
\end{align*}
Then:
\begin{equation*}\begin{split}
t^2
\parbox{40pt}{\begin{fmfgraph*}(40,10)
	\fmfleft{1}
	\fmfright{2}
	\fmf{boson,tension=2}{1,a}
	\fmf{boson,tension=.5,left}{a,b,a}
	\fmf{boson}{b,c}
	\fmf{boson,tension=.5,left}{c,d,c}
	\fmf{boson,tension=2}{d,2}
\end{fmfgraph*}}
&= t
\parbox{40pt}{\begin{fmfgraph*}(40,10)
	\fmfleft{1}
	\fmfright{2}
	\fmf{boson,tension=2}{1,a}
	\fmf{dbl_dots,right,tension=.5}{a,b,a}
	\fmf{boson}{b,c}
	\fmf{boson,tension=.5,left}{c,d,c}
	\fmf{boson,tension=2}{d,2}
\end{fmfgraph*}}
+ t
\parbox{40pt}{\begin{fmfgraph*}(40,10)
	\fmfleft{1}
	\fmfright{2}
	\fmf{boson,tension=2}{1,a}
	\fmf{boson,right,tension=.5}{a,b,a}
	\fmf{boson}{b,c}
	\fmf{dbl_dots,tension=.5,left}{c,d,c}
	\fmf{boson,tension=2}{d,2}
\end{fmfgraph*}}
\\
&= -
\parbox{40pt}{\begin{fmfgraph*}(40,10)
	\fmfleft{1}
	\fmfright{2}
	\fmf{boson,tension=2}{1,a}
	\fmf{dbl_dots,right,tension=.5}{a,b,a}
	\fmf{boson}{b,c}
	\fmf{dbl_dots,tension=.5,left}{c,d,c}
	\fmf{boson,tension=2}{d,2}
\end{fmfgraph*}}
+
\parbox{40pt}{\begin{fmfgraph*}(40,10)
	\fmfleft{1}
	\fmfright{2}
	\fmf{boson,tension=2}{1,a}
	\fmf{dbl_dots,right,tension=.5}{a,b,a}
	\fmf{boson}{b,c}
	\fmf{dbl_dots,tension=.5,left}{c,d,c}
	\fmf{boson,tension=2}{d,2}
\end{fmfgraph*}}
=0.
\end{split}\end{equation*}
\end{exmp}

\subsection{The differential $T$}
The $T$-homology checks that the longitudinal degrees of freedom in a loop through 3-gluon vertices are appropriately matched by ghost loops,
so that physical amplitudes are in the kernel of $T$. 
Hence, we define
\begin{defn}\label{4-30}
A derivation $T:H \to H$ is given by $T = t + \tau$ where
\begin{equation*} \tau\Gamma = (-)^{\#\mathcal C_{\Gamma\rm gh}} \sum_{C_i\in\mathcal C_{\Gamma\dotsedge}} (-)^{\#\{C_{i}'\in\mathcal C_{\Gamma\ \dotsedge} \ |\ i'>i\}} \tau_{C_i}\Gamma \end{equation*}
and
\begin{equation*} \tau_{C_i}\Gamma = \Gamma_{C_i\rightsquigarrow\ \ddotsedge}. \end{equation*}
\end{defn}

\begin{prop}
$T$ is a differential: $T^2\Gamma = 0$.
\end{prop}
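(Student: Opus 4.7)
The plan is to follow the proof of Proposition~\ref{Ssquare} line for line, replacing the edge operators $s,\sigma$ by the cycle operators $t,\tau$. Expanding $T^2 = t^2 + (t\tau + \tau t) + \tau^2$, the piece $t^2 = 0$ is the preceding proposition, so the work reduces to showing $\tau^2 = 0$ and the anticommutator $t\tau + \tau t = 0$.

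For $\tau^2 = 0$ I would expand the definition on an ordered pair of distinct $\dotsedge$-cycles and observe that the prefactor $(-)^{\#\mathcal C_{\Gamma\mathrm{gh}}}$ is unchanged by an application of $\tau$, since converting a $\dotsedge$ to a $\ddotsedge$ leaves the total ghost count fixed while merely reducing $\#\mathcal C_{\Gamma\dotsedge}$ by one. The inner ordering sign $(-)^{\#\{C'\in\mathcal C_{\Gamma\dotsedge}\,:\,i'>i\}}$ then behaves exactly like the analogous sign in the $\sigma^2=0$ computation, so the transposition $(C_i,C_j)\leftrightarrow(C_j,C_i)$ flips exactly one sign and the two contributions cancel pairwise.

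For the cross term the key observation is that $t$ and $\tau$ act on disjoint classes of cycles ($t$ on unmarked cycles whose vertices avoid marked or ghost edges, $\tau$ on $\dotsedge$-cycles already present), so the underlying graph produced on an ordered pair $(C_t,C_\tau)$ is independent of the order in which $t$ and $\tau$ are applied. The only order-sensitive sign is the prefactor $(-)^{\#\mathcal C_{\Gamma\mathrm{gh}}}$ inside $\tau$: performing $t$ first inserts a new $\ddotsedge$-cycle, increasing $\#\mathcal C_{\Gamma\mathrm{gh}}$ by one and flipping that prefactor relative to performing $\tau$ first. All remaining sign factors are counts internal to $\mathcal C_{\Gamma\dotsedge}$, or counts of ghost cycles below $C_t$, and since $t$ does not produce $\dotsedge$-cycles and $\tau$ does not alter membership in $\mathcal C_{\Gamma\mathrm{gh}}$, these counts agree in either order. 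Hence $t\tau\Gamma = -\tau t\Gamma$ on each such pair, and summing yields $T^2 = 0$.

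The main obstacle is the sign bookkeeping, specifically verifying that the chosen labelling of $\mathcal C_\Gamma$ from Definitions~\ref{4-22},~\ref{4-27} and~\ref{4-30} remains stable under successive applications of $t$ and $\tau$ so that the auxiliary counts really do track the promised combinatorics; once this is pinned down, the rest is a line-by-line transcription of the edge-homology calculation in Proposition~\ref{Ssquare}.
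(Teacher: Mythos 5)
Your proof is correct and follows exactly the route the paper takes: the paper's own proof simply states that, as in Proposition \ref{Ssquare}, the result follows from $\tau^2\Gamma=0$ and $t\tau\Gamma=-\tau t\Gamma$, which together with $t^2=0$ give $T^2=0$. Your sign bookkeeping — the prefactor $(-)^{\#\mathcal C_{\Gamma\rm gh}}$ being invariant under $\tau$ but flipped by $t$, and the inner ordering count cancelling pairwise as in the $\sigma^2=0$ computation — is a faithful and correctly verified elaboration of that argument.
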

\begin{proof}
As in Proposition \ref{Ssquare} this follows from
$t\tau\Gamma=-\tau t\Gamma$ and $\tau^2\Gamma=0$. 
\end{proof}

\begin{exmp}
\begin{align*}
 t\tau
\parbox{40pt}{\begin{fmfgraph*}(40,10)
	\fmfleft{1}
	\fmfright{2}
	\fmf{boson,tension=2}{1,a}
	\fmf{dots,tension=.5,left}{a,b,a}
	\fmf{boson}{b,c}
	\fmf{boson,tension=.5,left}{c,d,c}
	\fmf{boson,tension=2}{d,2}
\end{fmfgraph*}}
&= -t
\parbox{40pt}{\begin{fmfgraph*}(40,10)
	\fmfleft{1}
	\fmfright{2}
	\fmf{boson,tension=2}{1,a}
	\fmf{dbl_dots,tension=.5,left}{a,b,a}
	\fmf{boson}{b,c}
	\fmf{boson,tension=.5,left}{c,d,c}
	\fmf{boson,tension=2}{d,2}
\end{fmfgraph*}}
=
\parbox{40pt}{\begin{fmfgraph*}(40,10)
	\fmfleft{1}
	\fmfright{2}
	\fmf{boson,tension=2}{1,a}
	\fmf{dbl_dots,tension=.5,left}{a,b,a}
	\fmf{boson}{b,c}
	\fmf{dbl_dots,tension=.5,left}{c,d,c}
	\fmf{boson,tension=2}{d,2}
\end{fmfgraph*}}
\\
\tau t
\parbox{40pt}{\begin{fmfgraph*}(40,10)
	\fmfleft{1}
	\fmfright{2}
	\fmf{boson,tension=2}{1,a}
	\fmf{dots,tension=.5,left}{a,b,a}
	\fmf{boson}{b,c}
	\fmf{boson,tension=.5,left}{c,d,c}
	\fmf{boson,tension=2}{d,2}
\end{fmfgraph*}}
&= -\tau
\parbox{40pt}{\begin{fmfgraph*}(40,10)
	\fmfleft{1}
	\fmfright{2}
	\fmf{boson,tension=2}{1,a}
	\fmf{dots,tension=.5,left}{a,b,a}
	\fmf{boson}{b,c}
	\fmf{dbl_dots,tension=.5,left}{c,d,c}
	\fmf{boson,tension=2}{d,2}
\end{fmfgraph*}}
= -
\parbox{40pt}{\begin{fmfgraph*}(40,10)
	\fmfleft{1}
	\fmfright{2}
	\fmf{boson,tension=2}{1,a}
	\fmf{dbl_dots,tension=.5,left}{a,b,a}
	\fmf{boson}{b,c}
	\fmf{dbl_dots,tension=.5,left}{c,d,c}
	\fmf{boson,tension=2}{d,2}
\end{fmfgraph*}}
\\
 \tau^2
\parbox{40pt}{\begin{fmfgraph*}(40,10)
	\fmfleft{1}
	\fmfright{2}
	\fmf{boson,tension=2}{1,a}
	\fmf{dots,tension=.5,left}{a,b,a}
	\fmf{boson}{b,c}
	\fmf{boson,tension=.5,left}{c,d,c}
	\fmf{boson,tension=2}{d,2}
\end{fmfgraph*}}
&= -\tau
\parbox{40pt}{\begin{fmfgraph*}(40,10)
	\fmfleft{1}
	\fmfright{2}
	\fmf{boson,tension=2}{1,a}
	\fmf{dbl_dots,tension=.5,left}{a,b,a}
	\fmf{boson}{b,c}
	\fmf{boson,tension=.5,left}{c,d,c}
	\fmf{boson,tension=2}{d,2}
\end{fmfgraph*}}
= 0 
\\
 T^2
\parbox{40pt}{\begin{fmfgraph*}(40,10)
	\fmfleft{1}
	\fmfright{2}
	\fmf{boson,tension=2}{1,a}
	\fmf{dots,tension=.5,left}{a,b,a}
	\fmf{boson}{b,c}
	\fmf{boson,tension=.5,left}{c,d,c}
	\fmf{boson,tension=2}{d,2}
\end{fmfgraph*}}
&= 0 \end{align*}
\end{exmp}

\begin{exmp}
\begin{align*} 
t\tau
\parbox{40pt}{\begin{fmfgraph*}(40,10)
	\fmfleft{1}
	\fmfright{2}
	\fmf{boson,tension=2}{1,a}
	\fmf{dots,tension=.5,left}{a,b,a}
	\fmf{boson}{b,c}
	\fmf{dots,tension=.5,left}{c,d,c}
	\fmf{boson,tension=2}{d,2}
\end{fmfgraph*}}
&= -t
\parbox{40pt}{\begin{fmfgraph*}(40,10)
	\fmfleft{1}
	\fmfright{2}
	\fmf{boson,tension=2}{1,a}
	\fmf{dbl_dots,tension=.5,left}{a,b,a}
	\fmf{boson}{b,c}
	\fmf{dots,tension=.5,left}{c,d,c}
	\fmf{boson,tension=2}{d,2}
\end{fmfgraph*}}
+ t
\parbox{40pt}{\begin{fmfgraph*}(40,10)
	\fmfleft{1}
	\fmfright{2}
	\fmf{boson,tension=2}{1,a}
	\fmf{dots,tension=.5,left}{a,b,a}
	\fmf{boson}{b,c}
	\fmf{dbl_dots,tension=.5,left}{c,d,c}
	\fmf{boson,tension=2}{d,2}
\end{fmfgraph*}}
=0 
\\
\tau t
\parbox{40pt}{\begin{fmfgraph*}(40,10)
	\fmfleft{1}
	\fmfright{2}
	\fmf{boson,tension=2}{1,a}
	\fmf{dots,tension=.5,left}{a,b,a}
	\fmf{boson}{b,c}
	\fmf{dots,tension=.5,left}{c,d,c}
	\fmf{boson,tension=2}{d,2}
\end{fmfgraph*}}
&=0 
\\
 \tau^2
\parbox{40pt}{\begin{fmfgraph*}(40,10)
	\fmfleft{1}
	\fmfright{2}
	\fmf{boson,tension=2}{1,a}
	\fmf{dots,tension=.5,left}{a,b,a}
	\fmf{boson}{b,c}
	\fmf{dots,tension=.5,left}{c,d,c}
	\fmf{boson,tension=2}{d,2}
\end{fmfgraph*}}
&= -\tau
\parbox{40pt}{\begin{fmfgraph*}(40,10)
	\fmfleft{1}
	\fmfright{2}
	\fmf{boson,tension=2}{1,a}
	\fmf{dbl_dots,tension=.5,left}{a,b,a}
	\fmf{boson}{b,c}
	\fmf{dots,tension=.5,left}{c,d,c}
	\fmf{boson,tension=2}{d,2}
\end{fmfgraph*}}
+ \tau
\parbox{40pt}{\begin{fmfgraph*}(40,10)
	\fmfleft{1}
	\fmfright{2}
	\fmf{boson,tension=2}{1,a}
	\fmf{dots,tension=.5,left}{a,b,a}
	\fmf{boson}{b,c}
	\fmf{dbl_dots,tension=.5,left}{c,d,c}
	\fmf{boson,tension=2}{d,2}
\end{fmfgraph*}}
= -
\parbox{40pt}{\begin{fmfgraph*}(40,10)
	\fmfleft{1}
	\fmfright{2}
	\fmf{boson,tension=2}{1,a}
	\fmf{dbl_dots,tension=.5,left}{a,b,a}
	\fmf{boson}{b,c}
	\fmf{dbl_dots,tension=.5,left}{c,d,c}
	\fmf{boson,tension=2}{d,2}
\end{fmfgraph*}}
+
\parbox{40pt}{\begin{fmfgraph*}(40,10)
	\fmfleft{1}
	\fmfright{2}
	\fmf{boson,tension=2}{1,a}
	\fmf{dbl_dots,tension=.5,left}{a,b,a}
	\fmf{boson}{b,c}
	\fmf{dbl_dots,tension=.5,left}{c,d,c}
	\fmf{boson,tension=2}{d,2}
\end{fmfgraph*}}
=0 
\\
 T^2
\parbox{40pt}{\begin{fmfgraph*}(40,10)
	\fmfleft{1}
	\fmfright{2}
	\fmf{boson,tension=2}{1,a}
	\fmf{dots,tension=.5,left}{a,b,a}
	\fmf{boson}{b,c}
	\fmf{dots,tension=.5,left}{c,d,c}
	\fmf{boson,tension=2}{d,2}
\end{fmfgraph*}}
&=0 \end{align*}
\end{exmp}

\begin{prop}
Let $\Gamma$ be a graph without ghost edges. Then $Te^{\delta_+}\Gamma = 0$.
\end{prop}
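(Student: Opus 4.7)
The plan is to mimic the proof of Proposition \ref{Shom} line by line, replacing edges by cycles, $\chi_+$ by $\delta_+$, $s$ by $t$, and $\sigma$ by $\tau$. The combinatorial heart of that earlier argument is a sign matching between two operations that produce the same final marked graph from different intermediate states; I expect the same phenomenon here.

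First, I would fix a labelling $C_1,C_2,\ldots$ of the cycles of $\Gamma$ once and for all, and expand
\[
e^{\delta_+}\Gamma = \sum_{k\ge 0}\;\sum_{i_1<\cdots<i_k} \delta_+^{C_{i_1}}\cdots\delta_+^{C_{i_k}}\Gamma,
\]
where the inner sum survives only for \emph{compatible} index sets $\{i_1,\ldots,i_k\}$, meaning no two cycles share a vertex (since $\Gamma$ has no ghost edges to begin with and each $\delta_+^C$ produces ghost edges that block further overlapping $\delta_+$-actions). Because the $\delta_+^{C_{i_l}}$ commute on compatible sets, the ordering is unimportant for the resulting graph, but the labelling will be essential for the signs appearing in $t$ and $\tau$.

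Next I compute $te^{\delta_+}\Gamma$ and $\tau e^{\delta_+}\Gamma$ term-by-term. A nonzero summand of $te^{\delta_+}\Gamma$ is indexed by a compatible set $\{C_{i_1},\ldots,C_{i_k}\}$ (dots-marked by $\delta_+$) together with one further cycle $C_{i_0}$, compatible with all of them, that $t$ promotes to a $\ddotsedge$-marking, carrying the sign $(-)^{\#\{l:i_l<i_0\}}$ from Definition \ref{4-27}. A nonzero summand of $\tau e^{\delta_+}\Gamma$ is indexed by a compatible set $\{C_{i_0},C_{i_1},\ldots,C_{i_k}\}$ all dots-marked by $\delta_+$, with $\tau$ promoting one distinguished element (say $C_{i_0}$) to a $\ddotsedge$-marking, carrying the sign $(-)^{k+1}(-)^{\#\{l:i_l>i_0\}}$ from Definition \ref{4-30}.

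The final step is to fix a common target configuration: an ordered-by-labelling compatible family $C_{j_1}<\cdots<C_{j_{k+1}}$ of cycles, one designated as $\ddotsedge$-marked and the remaining $k$ as $\dotsedge$-marked. Say the distinguished one is $C_{j_p}$. Then the contribution from $te^{\delta_+}\Gamma$ is $(-)^{p-1}$ (there are $p-1$ dots cycles with smaller label), while the contribution from $\tau e^{\delta_+}\Gamma$ is $(-)^{k+1}(-)^{k+1-p}=(-)^{p}$. These cancel, so
\[
Te^{\delta_+}\Gamma = (t+\tau)e^{\delta_+}\Gamma = 0.
\]
The main obstacle, as in Proposition \ref{Shom}, is simply keeping the sign bookkeeping straight; the enumerations on each side must be matched via exactly the bijection ``the promoted cycle moved from $\delta_+$-provenance to $t$-provenance,'' and the competing sign conventions in Definitions \ref{4-27} and \ref{4-30} are arranged precisely so the net sign flips.
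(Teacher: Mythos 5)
Your proposal is correct and follows exactly the route the paper intends: the paper's proof of this proposition is literally ``Analogous to Proposition \ref{Shom},'' and your argument is precisely that analogy carried out explicitly, with the sign bookkeeping ($(-)^{p-1}$ from $t$ versus $(-)^{k+1}(-)^{k+1-p}=(-)^{p}$ from $\tau$) matching the cancellation mechanism of the $s$/$\sigma$ computation. Nothing further is needed.
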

\begin{proof} Analogous to Proposition \ref{Shom}. 
\end{proof}
Symmetry factors are no issue in the following example as we sum over both orientations for the two ghost lines.

\begin{exmp}
\begin{align*}
e^{\delta_+}
\parbox{40pt}{\begin{fmfgraph*}(40,10)
	\fmfleft{1}
	\fmfright{2}
	\fmf{boson,tension=2}{1,a}
	\fmf{boson,tension=.5,left}{a,b,a}
	\fmf{boson}{b,c}
	\fmf{boson,tension=.5,left}{c,d,c}
	\fmf{boson,tension=2}{d,2}
\end{fmfgraph*}}
&=
\parbox{40pt}{\begin{fmfgraph*}(40,10)
	\fmfleft{1}
	\fmfright{2}
	\fmf{boson,tension=2}{1,a}
	\fmf{boson,tension=.5,left}{a,b,a}
	\fmf{boson}{b,c}
	\fmf{boson,tension=.5,left}{c,d,c}
	\fmf{boson,tension=2}{d,2}
\end{fmfgraph*}}
+
\parbox{40pt}{\begin{fmfgraph*}(40,10)
	\fmfleft{1}
	\fmfright{2}
	\fmf{boson,tension=2}{1,a}
	\fmf{dots,tension=.5,left}{a,b,a}
	\fmf{boson}{b,c}
	\fmf{boson,tension=.5,left}{c,d,c}
	\fmf{boson,tension=2}{d,2}
\end{fmfgraph*}}
+
\parbox{40pt}{\begin{fmfgraph*}(40,10)
	\fmfleft{1}
	\fmfright{2}
	\fmf{boson,tension=2}{1,a}
	\fmf{boson,tension=.5,left}{a,b,a}
	\fmf{boson}{b,c}
	\fmf{dots,tension=.5,left}{c,d,c}
	\fmf{boson,tension=2}{d,2}
\end{fmfgraph*}}
+
\parbox{40pt}{\begin{fmfgraph*}(40,10)
	\fmfleft{1}
	\fmfright{2}
	\fmf{boson,tension=2}{1,a}
	\fmf{dots,tension=.5,left}{a,b,a}
	\fmf{boson}{b,c}
	\fmf{dots,tension=.5,left}{c,d,c}
	\fmf{boson,tension=2}{d,2}
\end{fmfgraph*}}
\\
te^{\delta_+}
\parbox{40pt}{\begin{fmfgraph*}(40,10)
	\fmfleft{1}
	\fmfright{2}
	\fmf{boson,tension=2}{1,a}
	\fmf{boson,tension=.5,left}{a,b,a}
	\fmf{boson}{b,c}
	\fmf{boson,tension=.5,left}{c,d,c}
	\fmf{boson,tension=2}{d,2}
\end{fmfgraph*}}
&=
\parbox{40pt}{\begin{fmfgraph*}(40,10)
	\fmfleft{1}
	\fmfright{2}
	\fmf{boson,tension=2}{1,a}
	\fmf{dbl_dots,tension=.5,left}{a,b,a}
	\fmf{boson}{b,c}
	\fmf{boson,tension=.5,left}{c,d,c}
	\fmf{boson,tension=2}{d,2}
\end{fmfgraph*}}
+
\parbox{40pt}{\begin{fmfgraph*}(40,10)
	\fmfleft{1}
	\fmfright{2}
	\fmf{boson,tension=2}{1,a}
	\fmf{boson,tension=.5,left}{a,b,a}
	\fmf{boson}{b,c}
	\fmf{dbl_dots,tension=.5,left}{c,d,c}
	\fmf{boson,tension=2}{d,2}
\end{fmfgraph*}}
-
\parbox{40pt}{\begin{fmfgraph*}(40,10)
	\fmfleft{1}
	\fmfright{2}
	\fmf{boson,tension=2}{1,a}
	\fmf{dbl_dots,tension=.5,left}{a,b,a}
	\fmf{boson}{b,c}
	\fmf{dots,tension=.5,left}{c,d,c}
	\fmf{boson,tension=2}{d,2}
\end{fmfgraph*}}
+
\parbox{40pt}{\begin{fmfgraph*}(40,10)
	\fmfleft{1}
	\fmfright{2}
	\fmf{boson,tension=2}{1,a}
	\fmf{dots,tension=.5,left}{a,b,a}
	\fmf{boson}{b,c}
	\fmf{dbl_dots,tension=.5,left}{c,d,c}
	\fmf{boson,tension=2}{d,2}
\end{fmfgraph*}}
\\
\tau e^{\delta_+}
\parbox{40pt}{\begin{fmfgraph*}(40,10)
	\fmfleft{1}
	\fmfright{2}
	\fmf{boson,tension=2}{1,a}
	\fmf{boson,tension=.5,left}{a,b,a}
	\fmf{boson}{b,c}
	\fmf{boson,tension=.5,left}{c,d,c}
	\fmf{boson,tension=2}{d,2}
\end{fmfgraph*}}
&= -
\parbox{40pt}{\begin{fmfgraph*}(40,10)
	\fmfleft{1}
	\fmfright{2}
	\fmf{boson,tension=2}{1,a}
	\fmf{dbl_dots,tension=.5,left}{a,b,a}
	\fmf{boson}{b,c}
	\fmf{boson,tension=.5,left}{c,d,c}
	\fmf{boson,tension=2}{d,2}
\end{fmfgraph*}}
-
\parbox{40pt}{\begin{fmfgraph*}(40,10)
	\fmfleft{1}
	\fmfright{2}
	\fmf{boson,tension=2}{1,a}
	\fmf{boson,tension=.5,left}{a,b,a}
	\fmf{boson}{b,c}
	\fmf{dbl_dots,tension=.5,left}{c,d,c}
	\fmf{boson,tension=2}{d,2}
\end{fmfgraph*}}
-
\parbox{40pt}{\begin{fmfgraph*}(40,10)
	\fmfleft{1}
	\fmfright{2}
	\fmf{boson,tension=2}{1,a}
	\fmf{dots,tension=.5,left}{a,b,a}
	\fmf{boson}{b,c}
	\fmf{dbl_dots,tension=.5,left}{c,d,c}
	\fmf{boson,tension=2}{d,2}
\end{fmfgraph*}}
+
\parbox{40pt}{\begin{fmfgraph*}(40,10)
	\fmfleft{1}
	\fmfright{2}
	\fmf{boson,tension=2}{1,a}
	\fmf{dbl_dots,tension=.5,left}{a,b,a}
	\fmf{boson}{b,c}
	\fmf{dots,tension=.5,left}{c,d,c}
	\fmf{boson,tension=2}{d,2}
\end{fmfgraph*}}
\\
T e^{\delta_+}
\parbox{40pt}{\begin{fmfgraph*}(40,10)
	\fmfleft{1}
	\fmfright{2}
	\fmf{boson,tension=2}{1,a}
	\fmf{boson,tension=.5,left}{a,b,a}
	\fmf{boson}{b,c}
	\fmf{boson,tension=.5,left}{c,d,c}
	\fmf{boson,tension=2}{d,2}
\end{fmfgraph*}}
&=0 \end{align*}
\end{exmp}

This homology ensures that longitudinal degrees of freedom propagating in loops cancel.
We summarize:
\begin{thm} 
Let $\Gamma$ be a graph without marked and ghost edges. Then
\begin{equation*} Se^{\delta_+}e^{\chi_+}\Gamma = 0,\quad  \text{ and } \quad Te^{\delta_+}e^{\chi_+}\Gamma = 0. \end{equation*}
\end{thm}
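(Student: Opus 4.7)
The plan is to reduce the theorem to the two single-sector results already in hand---Proposition~\ref{Shom}, giving $Se^{\chi_+}\Gamma=0$ when $\Gamma$ has no marked edges, and its cycle-homology analogue $Te^{\delta_+}\Gamma=0$ when $\Gamma$ has no ghost edges---by showing that $S$ commutes with $e^{\delta_+}$ and $T$ commutes with $e^{\chi_+}$ on the subspace of graphs with no marked or ghost edges. Combined with the identity $[\chi_+,\delta_+]=0$ noted in the preceding Remark, this will give the chain
\begin{align*}
Se^{\delta_+}e^{\chi_+}\Gamma &= e^{\delta_+}\,Se^{\chi_+}\Gamma = e^{\delta_+}\cdot 0 = 0,\\
Te^{\delta_+}e^{\chi_+}\Gamma &= Te^{\chi_+}e^{\delta_+}\Gamma = e^{\chi_+}\,Te^{\delta_+}\Gamma = e^{\chi_+}\cdot 0 = 0,
\end{align*}
which is the claim.

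The core step is the generator-level commutation
\[
[s_e,\delta_+^C] = [\sigma_e,\delta_+^C] = [t_C,\chi_+^e] = [\tau_C,\chi_+^e] = 0
\]
for every internal edge $e$ and every cycle $C$. I would split on whether $e$ has a vertex lying in $C$. If not, the two elementary operators act on disjoint graph-theoretic data---one adorns an edge with a mark (or a double mark), the other paints a cycle with dots (or double dots)---and commute tautologically. If $e$ does share a vertex with $C$, both orderings of the composition vanish: whichever operator acts first creates either a marked edge on a vertex of $C$ or a ghost edge incident to $e$, and the exclusion clauses in Definitions~\ref{4-1}, \ref{4-12}, \ref{4-22}, \ref{4-27} and \ref{4-30} then force the second operator to give zero. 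This is the same case distinction that underpins the Remark's statement $[\chi_+^e,\delta_+^C]=0$, so no genuinely new phenomenon appears.

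Next I would check the signs. The signs carried by $s_e$ and $\sigma_e$ depend only on the cardinality and ordering of $\Gamma^{[1]}_{\mathrm{marked}}\subset\Gamma^{[1]}_{\mathrm{int}}$, and $\delta_+^C$ neither creates nor destroys marked edges, so those signs are invariant under insertion of a $\delta_+^C$ on either side. Symmetrically, the signs of $t_C$ and $\tau_C$ are determined by the labelled set $\mathcal C_{\Gamma\rm gh}\subset\mathcal C_\Gamma$, which is left untouched by the edge-marking operator $\chi_+^e$. Thus the two sectors---edge markings for $(s,\sigma)$ and cycle-ghosts for $(t,\tau)$---are combinatorially decoupled, which upgrades the generator-level commutations to $[S,\delta_+^C]=0$ and $[T,\chi_+^e]=0$. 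Summing over $e$ and $C$ and exponentiating gives $[S,e^{\delta_+}]=0$ and $[T,e^{\chi_+}]=0$, after which the chain above closes the proof.

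The main obstacle I anticipate is bookkeeping the signs in the full expansion of $Se^{\delta_+}e^{\chi_+}$, since the proof of Proposition~\ref{Shom} already requires careful re-indexing of $s$ and $\sigma$ against the expansion of $e^{\chi_+}$; one must verify that this re-indexing is unaffected by interposing any number of $\delta_+^C$'s. The observation that each of the four sign prescriptions sees only data in its own sector reduces this to the decoupling statement above, so the apparent difficulty dissolves into the commutation check already sketched.
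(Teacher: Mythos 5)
Your proposal is correct and matches the route the paper intends: the theorem is stated there as a summary of Proposition \ref{Shom}, its $T$-analogue, and the remark that $\chi_+^e$ and $\delta_+^C$ commute, and your generator-level checks $[s_e,\delta_+^C]=[\sigma_e,\delta_+^C]=[t_C,\chi_+^e]=[\tau_C,\chi_+^e]=0$ (both orders vanish when $e$ meets $C$, trivial commutation otherwise, with signs unaffected because each sign prescription only sees its own sector's data) supply exactly the missing details. The only caveat is cosmetic: you need these commutations on graphs that already carry marks or ghost cycles (e.g.\ on $e^{\chi_+}\Gamma$), not just on undecorated graphs as your opening sentence suggests, but the argument you actually give is fully general, so nothing is lost.
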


\subsection{The bicomplex}
As $[s,t]=[S,T] = 0$, we get a double complex:
$$
\xymatrix{ 
& & \vdots \ar[d]_s & \vdots \ar[d]_s &\\
& \cdots \ar[r]_t  &H_{k,\tilde l} \ar[d]_s \ar[r]_t  &H_{k,\tilde l+1} \ar[d]_s \ar[r]_t & \cdots\\
& \cdots \ar[r]_t  &H_{k+1,\tilde l} \ar[d]_s \ar[r]_{t}  &H_{k+1,\tilde l+1} \ar[d]_s \ar[r]_t & \cdots\\
&&\vdots &\vdots&
}
$$
Here, $H_{\ldots,\ldots}$ are to be regarded as reflecting the relevant vector space structure only of these spaces. The corresponding Hopf algebras and combinatorial Green
functions are discussed now.  
This bicomplex above and its relation to gauge symmetry and BRST cohomology will be the study of future work.

\section{Combinatorial Green functions}
The Hopf algebras on scalar graphs straightforwardly generalize to gauge theory graphs. In particular, the coproduct acts on the sum of all graphs contributing to a given amplitude ---the combinatorial Green function--- filtered by the number of 4-valent vertices and the number of ghost loops. Let us make this more precise.
\subsection{Gradings on the Hopf algebra}
Recall that the Hopf algebra $H$ is graded by the loop number, since the number of loops in a subgraph $\gamma \subset \Gamma$ and in the graph $\Gamma/\gamma$ add up to $|\Gamma| \equiv n(\Gamma)$. Another (multi)grading is given by the number of vertices. In order for this to be compatible with the coproduct ---creating an extra vertex in the quotient $\Gamma/\gamma$--- we say a graph $\Gamma$ with $E_E(\Gamma)$ external edges, is of multi-vertex-degree $(j_3, j_4, \ldots)$ if the number of $m$-valent vertices is equal to $j_m + \delta_{m,E_E(\Gamma)}$. One can check that this grading is compatible with the coproduct. Moreover, the two degrees are related via $\sum_m (m-2) j_m(\Gamma) = 2|\Gamma|$. This grading can be extended to involve other types of vertices ---such as $\tilde j$ ghost-gluon vertices--- cf. \cite{Sui08} for full details.

\subsection{Series of graphs}
As said, from a physical point of view, it is not so interesting to study individual graphs; rather, one considers whole sums of graphs with the same number of external lines. In this section, we will study series of 1PI graphs in the Hopf algebra $H$:
\begin{equation*}
G_{0}^{k,n}=\sum_{|\Gamma|=n,|E_E(\Gamma)|=k}\Gamma\frac{\mathrm{colour}(\Gamma)}{\mathrm{sym}(\Gamma)},
\end{equation*} 
This is the sum of all 1PI 3-regular (0 4-valent vertices) graphs with first Betti number $n$ and $k$ external gluon edges (which fixes the amplitude $r$ under consideration), normalized by their symmetry factors $\mathrm{sym}(\Gamma)$, the rank of their automorphism groups, in the denominator, and also weighted in the numerator by the corresponding colour factor 
$\mathrm{colour}(\Gamma)$:
\begin{equation*}
\mathrm{colour}(\Gamma):=\prod_{v\in V^\Gamma}R_v\prod_{e\in E_I^\Gamma}\delta_{s(e),t(e)}.
\end{equation*}
Here, $R_v$ is determined by a choice of a representation of th gauge group at $v$, and $s(e),t(e)$ are the vertex labels for source and target 
 of the internal edge $e$. Typical, $R_v$ is the adjoint representation for gluon self-interactions  or the fundamental representation for a gluon interacting with fermionic matter
 fields.
 
Similarly, we write $G_{j}^{k,n}$ for series of graphs which have $j$ 4-valent vertices, with all other vertices 3-valent. 
Also, we consider external ghost edges and loops. We let $G_{j;\tilde n}^{k,\tilde k,n}$ denote the sum of graphs which have $k$ external gluon edges, $\tilde k$ external ghost edges, $j$ 4-valent vertices, with all other vertices 3-valent, and $\tilde n$ ghost cycles (Figure \ref{fig:ghostcycle}). 
We let $\hat{G}_{j; \tilde n}^{k,n}$ be the same sum where we consider all $j$ 4-valent vertices, and all $\tilde n$ ghost cycles as marked.

Summarizing, the superscript on $G$ always indicate the external structure of the graphs in the series, whereas the subscripts indicate the 4-vertex degree, the loop number, or the ghost cycle degree.

\bigskip

\begin{figure}
\begin{center}
\includegraphics[scale=.2]{./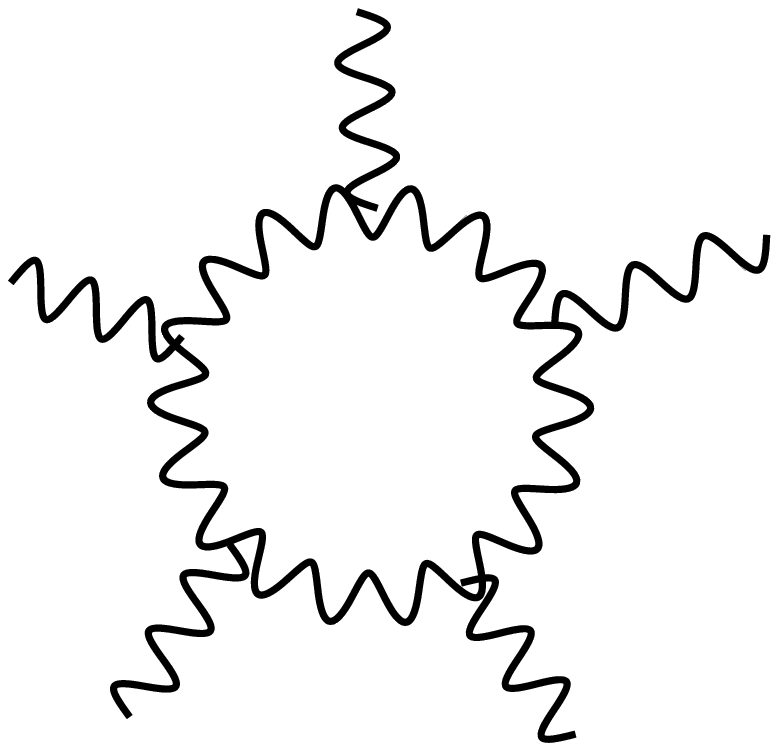}
\hspace{1cm}
\includegraphics[scale=.2]{./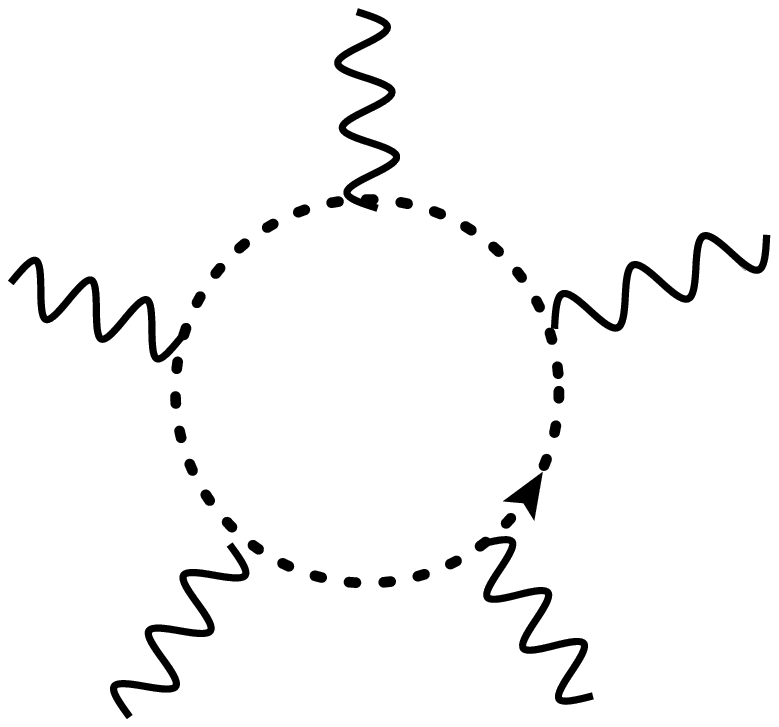}
\end{center}
\caption{A 3-regular gluon cycle (left) and an oriented  ghost cycle (right)}
\label{fig:ghostcycle}
\end{figure}

We have shown in \cite{Sui07} that we can impose the Slavnov--Taylor identities on the Hopf algebra $H$, compatibly with the coproduct, equating all of the following formal elements:
\begin{equation}
Q^{k,\tilde k} := \left(\frac{G^{k,\tilde k,/}_{/}}{(G^{2,0,/}_{/})^{k/2}(G^{0,\tilde 2,/}_{/})^{\tilde k/2}}\right)^{1/(k+\tilde k-2)} ,
\label{eq:ST}
\end{equation}
independent of the numbers $k$ and $\tilde k$ of external gluon and ghost edges,  respectively. The thus-defined single formal series $Q \equiv Q^{k,\tilde k}$ will play the role of a `charge' element in the Hopf algebra. 

\begin{prop}
The coproduct on the Green's functions read
$$
\Delta(G^{k,n}_{j_3j_4; \tilde n}) = \sum_{\begin{smallmatrix} j_m=j'_m+j''_m \\ 
n = n' + n'' \\ \tilde n = \tilde n' + \tilde n''
\end{smallmatrix}} (G^{k,n'} Q^{2n''})_{j_3'j_4'; \tilde n' } \otimes G^{k,n''}_{j_3''j_4''; \tilde n''}
$$
with $G^{k,n}_{j_3j_4; \tilde n}$ the above series of graphs of vertex multidegree $(j_3, j_4)$, first Betti number $n$ and $\tilde n$ ghost cycles. 

After taking the Slavnov--Taylor identities \eqref{eq:ST} into account, the coproduct reads on the above series of graphs
$$
\Delta(G^{k,n}) = \sum_{\begin{smallmatrix} %j=j'+j''\\ 
n = n' + n'' %\\ \tilde n = \tilde n' + \tilde n''
\end{smallmatrix}} (G^k Q^{2n''})_{n'} \otimes G^{k,n''}.
$$
\end{prop}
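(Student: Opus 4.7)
The plan is to start from the Connes--Kreimer coproduct on individual graphs and reorganize the triple sum.  For any $\Gamma$ in the series, $\Delta(\Gamma) = \Gamma \otimes \mathbb{I} + \mathbb{I} \otimes \Gamma + \sum_{\gamma} \gamma \otimes \Gamma/\gamma$ with $\gamma$ ranging over disjoint unions of proper positive-valuation 1PI subgraphs, so after weighting by $\text{colour}(\Gamma)/\text{sym}(\Gamma)$ and summing over $\Gamma$, I would swap the order of summation, grouping by the quotient graph $\bar\Gamma = \Gamma/\gamma$ and then, for each $\bar\Gamma$, summing over ways of inserting 1PI positive-valuation subgraphs at its vertices and on its internal edges.

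The heart of the argument is the standard Connes--Kreimer insertion identity: for fixed $\bar\Gamma$ with loop number $n''$, vertex multidegree $(j_3'',j_4'',\ldots)$ and ghost-cycle count $\tilde n''$, summing $\text{colour}(\Gamma)/\text{sym}(\Gamma)$ over preimages of $\bar\Gamma$ factorizes into a product of Green's functions, one $G^{r(v)}$ at each vertex $v \in V^{\bar\Gamma}$ and one propagator-inverse $(G^{r(e)})^{-1}$ at each internal edge $e \in E_I^{\bar\Gamma}$, times $\text{colour}(\bar\Gamma)/\text{sym}(\bar\Gamma)\cdot \bar\Gamma$ on the right.  This identity is an automorphism-counting statement---automorphisms of $\Gamma$ preserving $\gamma$ correspond bijectively to automorphisms of $\bar\Gamma$ paired with automorphisms of each $\gamma_i$---together with the multiplicativity of the colour factor across the gluing of subgraphs into insertion places.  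The detailed argument is carried out in \cite{Sui07,Sui08} and is the main technical step.

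The final step uses the defining relation \eqref{eq:ST}, which gives $G^{k,\tilde k} = (Q^{k,\tilde k})^{k+\tilde k - 2}\,(G^{2,0})^{k/2}(G^{0,\tilde 2})^{\tilde k/2}$ when inserted at each vertex of $\bar\Gamma$.  The square-root propagator factors pair across each internal edge of $\bar\Gamma$ and cancel the $(G^{r(e)})^{-1}$ contributions there, leaving only external-leg factors that combine with the residue data of $\bar\Gamma$ to produce $G^{k,n'}$ on the left tensor slot at the appropriate multidegree.  By the Euler identity $\sum_m (m-2)j_m(\bar\Gamma) = 2|\bar\Gamma|$ recorded in the preceding subsection, the total charge exponent is $\sum_v(|r(v)|-2) = 2n''$, yielding the first displayed formula once the constraints $j_m = j_m' + j_m''$, $n = n'+n''$, $\tilde n = \tilde n'+\tilde n''$ are read off from the compatibility of the multigradings with the coproduct.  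The second formula then follows immediately on imposing \eqref{eq:ST}: all $Q^{k,\tilde k}$ collapse to a single $Q$, so the $Q^{2n''}$ factor becomes independent of the vertex-multidegree splitting $(j_m',j_m'')$, and collapsing the constrained sum over $j_m'+j_m'' = j_m$ yields the stated form.

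The main obstacle is the Connes--Kreimer insertion identity: tracking symmetry and colour factors across all possible insertion configurations is genuinely delicate, and one has to be careful about how ghost loops, fermion loops and propagator (2-point) subgraphs fit into the same bookkeeping framework as vertex subgraphs.  After that identity is in hand, the rest reduces to arithmetic with the Euler relation and the definition of $Q$.
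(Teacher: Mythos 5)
Your outline is correct and is essentially the argument the paper intends: the paper itself states this proposition without proof, deferring to \cite{Sui07,Sui08}, and your three steps (reorganizing the Connes--Kreimer coproduct by quotient graph, the automorphism-counting insertion identity that factorizes the sum over preimages into vertex Green's functions and inverse propagator Green's functions, and the Euler relation $\sum_m(m-2)j_m=2n''$ combined with the definition of $Q$) are precisely the mechanism of the proof in those references. You also correctly identify the insertion identity as the genuinely delicate step and correctly derive the second displayed formula from the Slavnov--Taylor collapse of the various $Q^{k,\tilde k}$.
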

\begin{rem}
Note that neither the lhs nor the rhs depend on $\tilde k$ in the above proposition, as 
$Q \equiv Q^{k,\tilde k}$, $\forall k,\tilde k$.
\end{rem}
\begin{rem}
The inclusion of fermions is parallel to the study of ghost edges and loops, and a mere notational exercise.
\end{rem}
Another way to describe the Green's function $G^k$ is in terms of so-called grafting operators, defined in terms of 1PI primitive graphs.
We start by considering maps $B_+^\gamma: H\to \mathrm{Aug}$, with $\mathrm{Aug}$ the augmentation ideal, which will soon lead us to non-trivial one co-cycles in the Hochschild cohomology of $H$. They are defined as follows.
$$
B_+^\gamma(h)=\sum_{\Gamma\in \langle\Gamma\rangle}\frac{{\textbf{bij}(\gamma,h,\Gamma)}}{|h|_\vee}\frac{1}{\textrm{maxf}(\Gamma)}\frac{1}{(\gamma|h)}\Gamma,
$$
where maxf$(\Gamma)$ is the number of maximal forests of $\Gamma$, $|h|_\vee$ is the number of distinct graphs obtainable by permuting edges of $h$, $\textbf{bij}(\gamma,h,\Gamma)$ is the number of bijections of external edges of $h$ with an insertion place in $\gamma$ such that the result is $\Gamma$,
and finally $(\gamma|h)$ is the number of insertion places for $h$ in $\gamma$ \cite{Kre05}. $\sum_{\Gamma\in <\Gamma>}$ indicates a sum over the linear span $\langle\Gamma\rangle$ of generators of $H$.

The sum of the $B^\gamma_+$ over all primitive 1PI Feynman graphs at a given loop order and with given residue will be denoted by $B^{l;n}_+$, as in \cite{Kre05}. More precisely,
$$
B^{k;n}_+ = \sum_{\begin{smallmatrix} \gamma ~\prim \\ |\gamma|=n \\ E_E(\gamma)=k \end{smallmatrix}} \frac{1}{\Sym(\gamma)} B^\gamma_+.
$$
With this and the above Proposition, we can show \cite[Theorem 5]{Kre05}:
\begin{align}
\label{eq:DS}
G^k &= \sum_{l=0}^\infty B_+^{k;n} (G^k Q^{2n}); \\
\label{eq:Hochschild}
\Delta( B_+^{k;n} (G^k Q^{2n})) &= B_+^{k;n} (G^k Q^{2n}) \otimes \One + (\textup{id} \otimes B_+^{k;n}) \Delta (G^k Q^{2n}).
\end{align}
Equation \eqref{eq:DS} is known as the combinatorial Dyson--Schwinger equation, while \eqref{eq:Hochschild} shows that $B_+^{k;n}$ is a Hochschild cocycle for the Hopf algebra $H$.

\subsection{The generator of ghost loops}
We again consider the  map $\delta_+: H \to H$ that replaces gluon loops in a Feynman graph by ghost loops.
\begin{rem}
In accordance with our previous definition of $\delta_+$,  it becomes an algebra derivation $\delta_+:H \to H$ by the assignment
$$
\delta_+(\Gamma) =( \tilde  l+1)\sum_{ g \subset \Gamma}
%\underbrace{\frac{ \Sym(\Gamma)}{\Sym(\Gamma_{g \mapsto\tilde g}) a_{\Gamma,g}}}_{=1} 
\Gamma_{g \mapsto \tilde g } .
$$
for a 1PI Feynman graph $\Gamma$ at ghost loop order $\tilde l$. The sum is over all oriented 3-regular gluon cycles $g$, and $\Gamma_{g \mapsto \tilde g }$ denotes the graph $\Gamma$ with the 3-regular gluon cycle $g$ replaced by a ghost cycle $\tilde g$ (cf. Figure \ref{fig:ghostcycle}), of the same orientation. 
\end{rem}
The notation $\delta_+$ suggests that there is also a $\delta_-$. In fact,  such an operator can be defined and would replace a ghost loop by a gluon loop. We will not further study such an operator, since our interest lies in generating physical amplitudes from zero-ghost-loop amplitudes. 

%Obviously, the above completely determines the algebra derivation $\delta:H \to H$, since:
%$$
%\delta_+(x_1x_2) = \delta_+(X_1)X_2 +X_1  \delta_+(X_2); \qquad \delta_+(\alpha_1 X_1 + \alpha_2 X_2) = \alpha_1 \delta_+( X_1 ) + \alpha_2  \delta_+( X_2) . 
%$$

\begin{ex}
Consider the following one-loop gluon self-energy graph:
$$
\Gamma = \parbox{10mm}{\includegraphics[scale=.1]{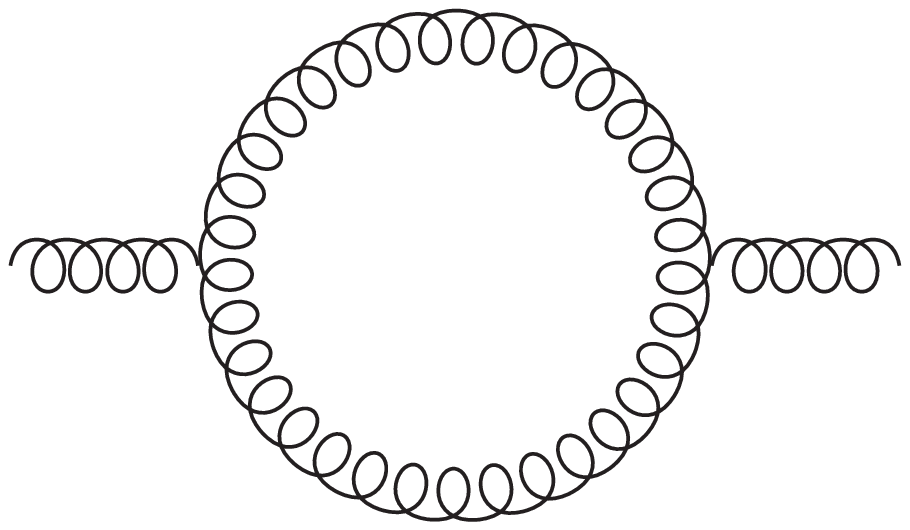}}.
$$
Its symmetry factor is $\Sym(\Gamma) = 2$ so that 
$$
\delta_+\left( \parbox{10mm}{\includegraphics[scale=.1]{gluon1l2e.eps}} \right) = 2\parbox{10mm}{\includegraphics[scale=.1]{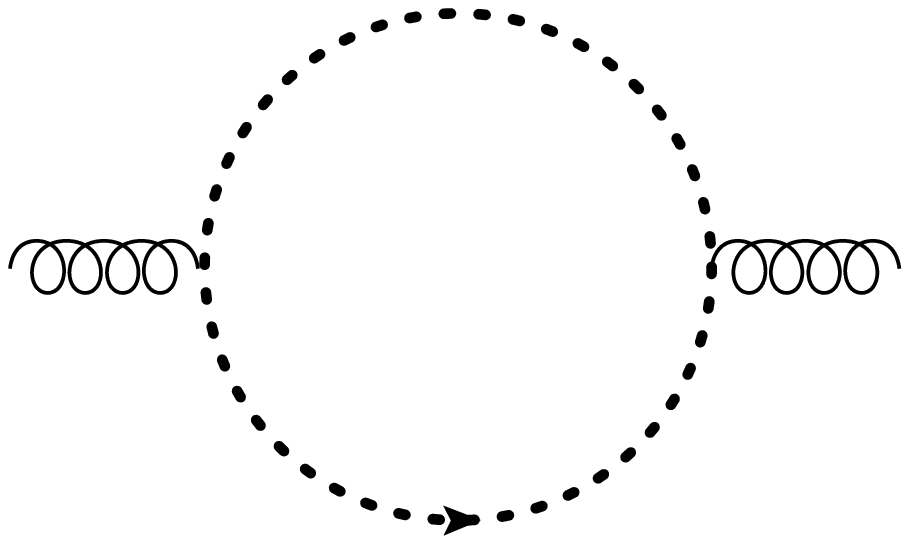}}.
$$
A two loop example is given by the graph 
$$
\Gamma' = \parbox{15mm}{\includegraphics[scale=.15]{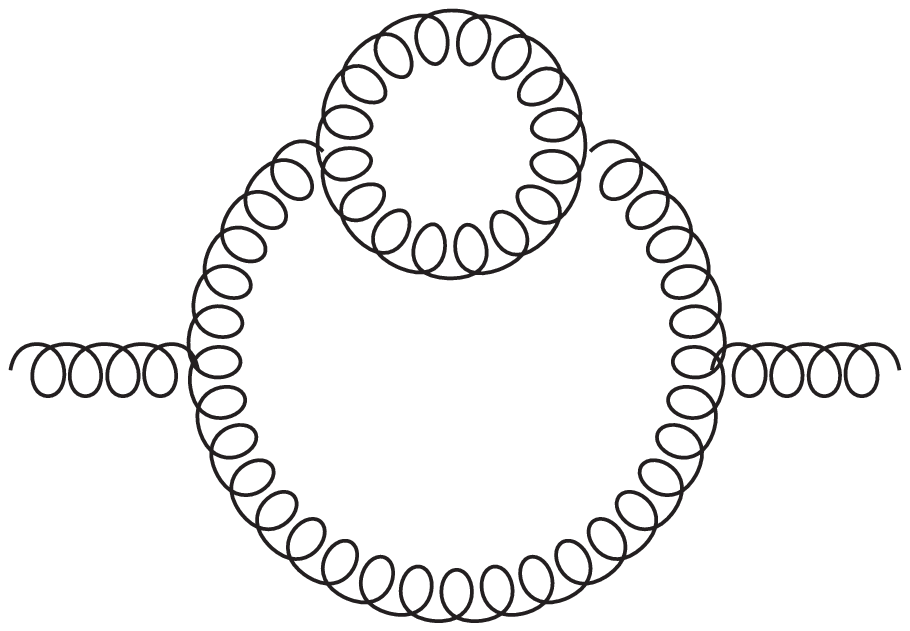}}
$$
for which $\Sym(\Gamma' ) = 2$. Now, 
$$
\delta_+\left( \parbox{15mm}{\includegraphics[scale=.15]{gluon2l2eA.eps}} \right) = 2\parbox{20mm}{\includegraphics[scale=.15]{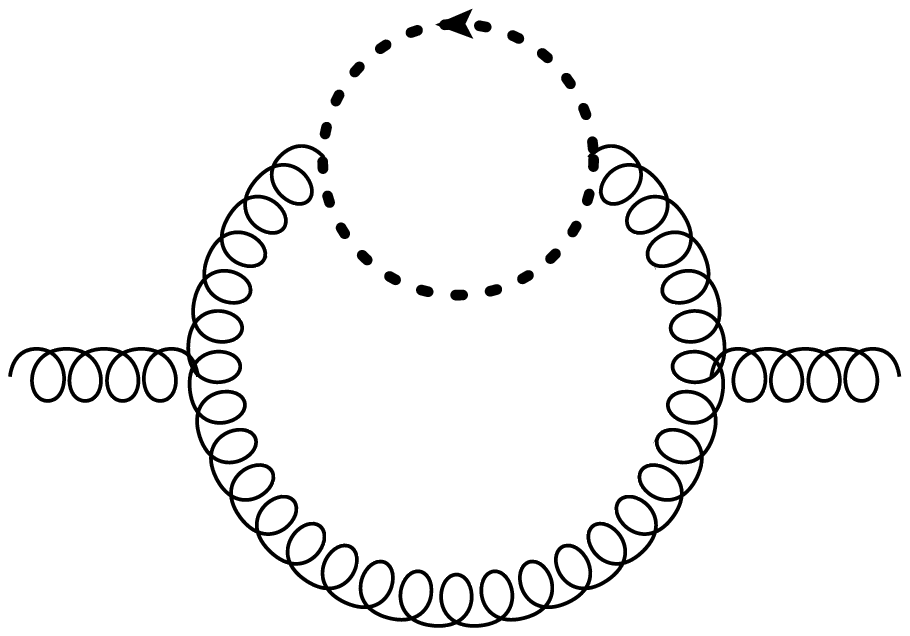}}+ 2  \parbox{20mm}{\includegraphics[scale=.15]{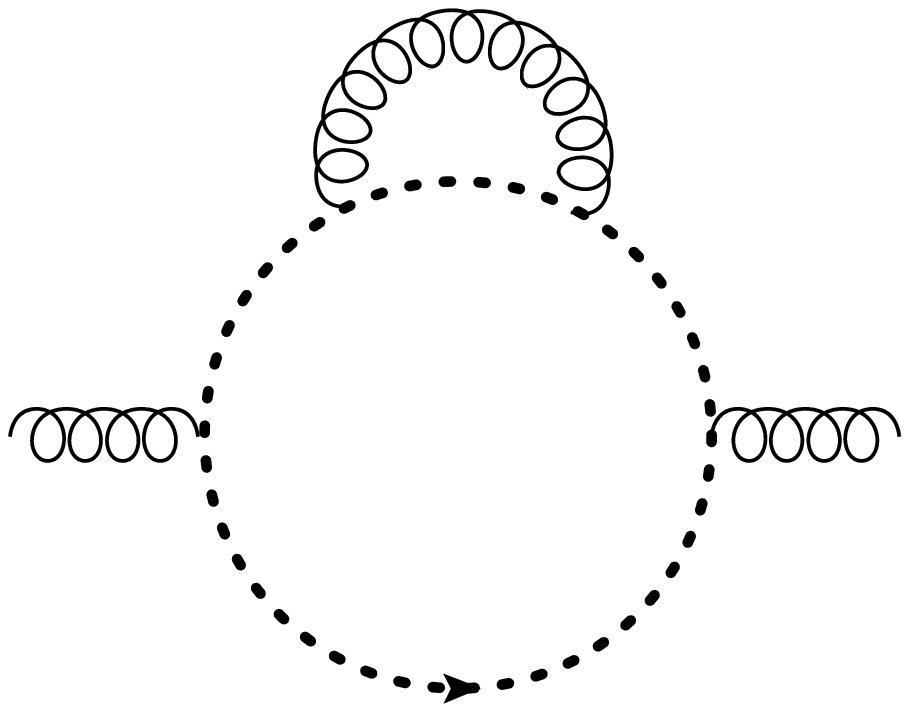}} + 2  \parbox{20mm}{\includegraphics[scale=.15]{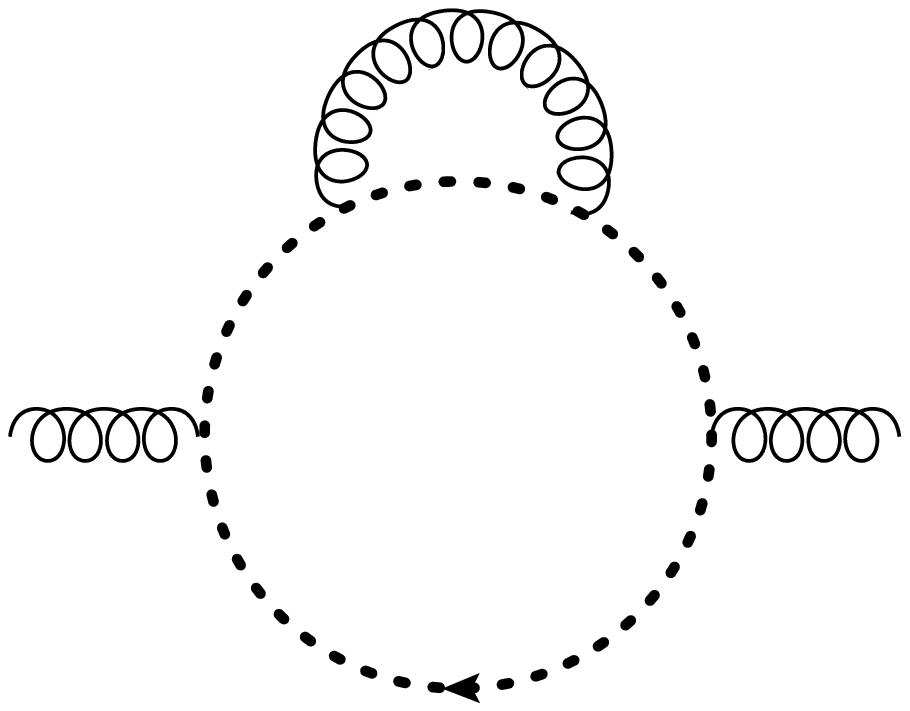}}.
$$
\end{ex}
The first graph on the rhs obtains a factor of two because the two orientation of the ghost loop
both reproduce this graph when the little gluon loop is replaced by a ghost loop.

The other two possible gluon cycles give the same graphs
with again a coefficient of two for each of them, with the two orientations of the ghost cycle
now resulting in those two remaining  graphs on the rhs. In full accordance with 
Lemma \ref{lem:skeletongraphs-delta},
the ratio of the symmetry factor of a graph on the left by the symmetry factor of a  graph on the right  counts such multiplicities. 

With that lemma we  conclude:
\begin{prop}
When acting on series of graphs with no ghost cycles ($\tilde n =0$):
$$
e^{\delta^+} \circ B_+^{k;n}%(G^k_{\tilde n =0} Q^{2n}_{\tilde n =0})) 
= \sum_{\begin{smallmatrix} \gamma ~\prim \\ |\gamma|=n, \tilde n(\gamma) =0 \\ E_E(\gamma)=k \end{smallmatrix}} \frac{1}{\Sym(\gamma)}B_+^{e^{\delta^+}(\gamma)} \circ e^{\delta_+}
%(Q^{2k}_{\tilde n = 0} G^{r}_{\tilde n = 0}))
$$
where the sum is over graphs $\gamma$ with no ghost cycles. 
\end{prop}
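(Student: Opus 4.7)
The identity is linear and additive over the sum defining $B_+^{k;n}$, so it suffices to prove the single-graph statement
\begin{equation*}
e^{\delta_+}\bigl(B_+^\gamma(h)\bigr) \;=\; B_+^{e^{\delta_+}(\gamma)}\bigl(e^{\delta_+}(h)\bigr), \qquad \tilde n(\gamma)=\tilde n(h)=0,
\end{equation*}
for every primitive $\gamma$ with $|\gamma|=n$, $E_E(\gamma)=k$, and any 1PI $h$ without ghost cycles. The claim in the proposition follows by summing this over primitives weighted by $1/\Sym(\gamma)$, and recognising that in the resulting sum each $\gamma$ automatically satisfies $\tilde n(\gamma)=0$ because $B_+^{k;n}$ was defined on primitives without ghost cycles.

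The key geometric observation is that inserting $h$ at a vertex of $\gamma$ does not create new independent cycles; the loop numbers add, $|B_+^\gamma(h)|=|\gamma|+|h|$, and accordingly the cycle set decomposes canonically as
\begin{equation*}
\mathcal C_{B_+^\gamma(h)} \;=\; \mathcal C_h \,\sqcup\, \mathcal C_\gamma,
\end{equation*}
with the two subsets supported on disjoint internal-edge sets separated by the image of the external half-edges of $h$. By Lemma \ref{lem:skeletongraphs-delta}, $e^{\delta_+}(\Gamma)/\Sym(\Gamma)$ equals an unordered sum over subsets $S\subset\mathcal C_\Gamma$ of cycles to be turned into ghost cycles (subject to the vertex-disjointness constraint from Def.~\ref{4-22}). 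The decomposition above lets me split each admissible $S$ uniquely as $S_h\sqcup S_\gamma$; the non-adjacency constraint for $\delta_+$ is then checked separately on the $h$-side and the $\gamma$-side, because an $h$-cycle and a $\gamma$-cycle never share a vertex incident to a common internal edge (their only contact is through the insertion boundary, whose half-edges are gluon by hypothesis).

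Thus the plan is: first produce the cycle-space decomposition above and match the independent-choice structure of $e^{\delta_+}$ against it; then check that the coefficients $\mathbf{bij}(\gamma,h,\Gamma)$, $|h|_\vee$, $\mathrm{maxf}(\Gamma)$, $(\gamma|h)$ appearing in the definition of $B_+^\gamma$ are unchanged under replacement of a gluon cycle by a ghost cycle (with both of its orientations summed), so that $B_+^{e^{\delta_+}(\gamma)}\circ e^{\delta_+}$ reassembles exactly the terms produced by $e^{\delta_+}\circ B_+^{\gamma}$. Summing over primitives concludes.

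\textbf{Main obstacle.} The bookkeeping of symmetry factors and insertion multiplicities is where the argument is delicate. One must verify that the automorphism groups of the graphs involved, and in particular the insertion-place counts $(\gamma|h)$ and bijection counts, are invariant under the replacement of a selected gluon cycle by an (oriented) ghost cycle; equivalently, that the ghostification map is equivariant for the actions of $\mathrm{Aut}(\gamma)$ and $\mathrm{Aut}(h)$ on their cycle sets. Granted this, the single-graph identity follows by matching subsets of cycles to be ghostified on both sides and absorbing the factor-of-two coming from the two orientations of each ghost cycle into Def.~\ref{4-22}, exactly as in the examples shown above for $\delta_+$ acting on individual graphs.
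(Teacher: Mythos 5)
Your reduction to a single-graph intertwining identity and your focus on the coefficient bookkeeping are reasonable, but the central structural claim on which the whole argument rests is false: the cycle set of an inserted graph does \emph{not} decompose as $\mathcal C_{B_+^\gamma(h)}=\mathcal C_h\sqcup\mathcal C_\gamma$. The operator $\delta_+$ of Def.(\ref{4-22}) runs over \emph{all} cycles of the graph, not over a basis, and insertion creates mixed cycles that thread through both the co-graph and the inserted subgraph. The paper's own two-loop example makes this explicit: for $\Gamma'$ the gluon self-energy with a one-loop self-energy inserted on an internal line, the first Betti numbers add to $1+1$ but $\Gamma'$ has \emph{three} cycles --- the small loop of $h$ and two cycles that run around the outside and pass through one or the other side of $h$ --- and $\delta_+\Gamma'$ correspondingly produces three distinct ghosted graphs. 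The two graphs obtained by ghosting the mixed cycles come neither from ghosting a cycle of $h$ nor from ghosting a cycle of $\gamma$; on the right-hand side of the proposition they arise only because the ghosted skeleton $e^{\delta_+}(\gamma)$ acquires internal ghost edges into which graphs with \emph{open} ghost lines (here the one-loop ghost self-energy, which has $\tilde n=0$ and is fixed by $e^{\delta_+}$) are inserted. Your plan of checking the non-adjacency constraint ``separately on the $h$-side and the $\gamma$-side'' fails for the same reason: in the example the cycle of $h$ shares vertices with both mixed cycles, so the vertex-disjointness condition in Def.(\ref{4-22}) couples the two sides.

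A second, smaller error: you assert that every $\gamma$ in the sum defining $B_+^{k;n}$ automatically has $\tilde n(\gamma)=0$. As defined, $B_+^{k;n}$ sums over \emph{all} primitive 1PI graphs of the given loop order and residue, including primitives with ghost cycles (e.g.\ the one-loop ghost bubble), so the restriction to $\tilde n(\gamma)=0$ on the right-hand side is substantive: the ghost-loop primitives contributing on the left must be recovered as terms of $e^{\delta_+}(\gamma)$ for gluon-loop primitives $\gamma$ on the right. A correct argument therefore has to match the ghosting of mixed cycles on the left against insertions of open-ghost-line subgraphs into ghosted skeletons on the right, and only then invoke Lemma \ref{lem:skeletongraphs-delta} for the symmetry-factor and orientation multiplicities --- which is essentially the route the paper takes, leaning on that lemma and the worked two-loop example rather than on a cycle-set decomposition.
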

\begin{rem}
\label{prop:exp-chi}
There is a similar result for connected graphs on the exponentiation of $\chi_+$. We give it here without proof. It follows directly though from extending the definition 
of graph Hopf algebras and their Hochschild cohomology from 1PI to connected graphs.
When acting on series of graphs with no marked edges:
$$
e^{\chi^+} \circ B_+^{k;n}%(X^k_{\tilde j =0} Q^{2n}_{\tilde n =0})) 
= \sum_{\begin{smallmatrix} \gamma ~\prim \\ |\gamma|=n, j(\gamma) =0 \\ E_E(\gamma)=k \end{smallmatrix}} \frac{1}{\Sym(\gamma)}B_+^{e^{\chi^+}(\gamma)} \circ e^{\chi_+}
%(Q^{2k}_{\tilde n = 0} X^{r}_{\tilde n = 0}))
$$
where the sum is over graphs $\gamma$ with no marked edges and $j(\gamma)$ is the number of 4-valent vertices. 

Together, the two results on the interplay of Hochschild cohomology and exponentiation  show that gauge invariant combinatorial Green functions are obtained from gauge invariant skeleton graphs into which gauge invariant subgraphs are inserted.
\end{rem}

\begin{ex}
Let us consider the example of the gluon self-energy at two loops:
\begin{align}
\label{eq:Ge2l}
G_{n=2}^2 &= 
%\parbox{15mm}{\includegraphics[scale=.2]{tad2l2eA.eps}} 
+ \frac16 \parbox{18mm}{\includegraphics[scale=.15]{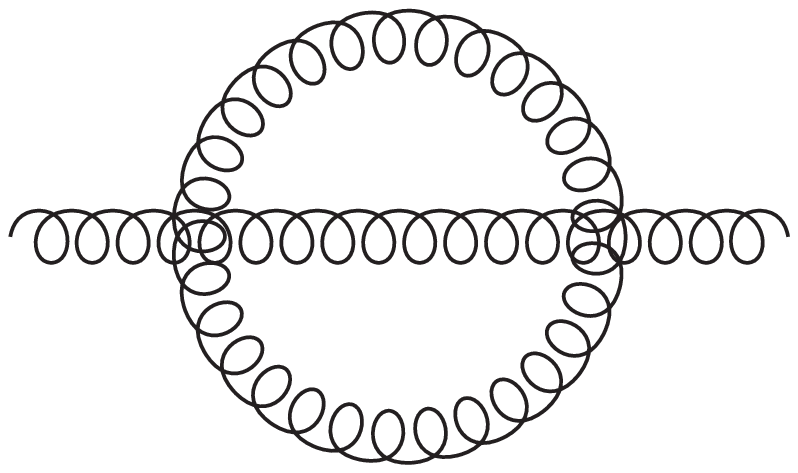}} 
 + \frac12 \parbox{20mm}{\includegraphics[scale=.15]{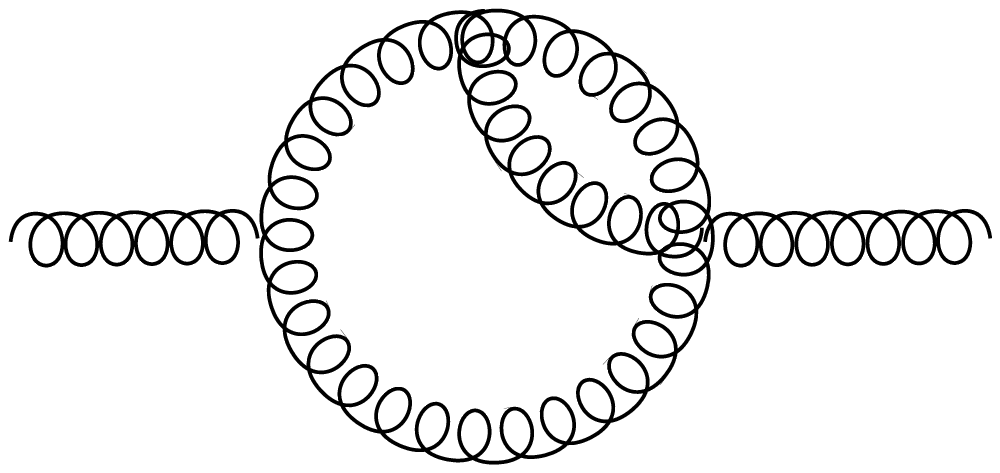}} 
+ \frac12 \parbox{20mm}{\includegraphics[scale=.15]{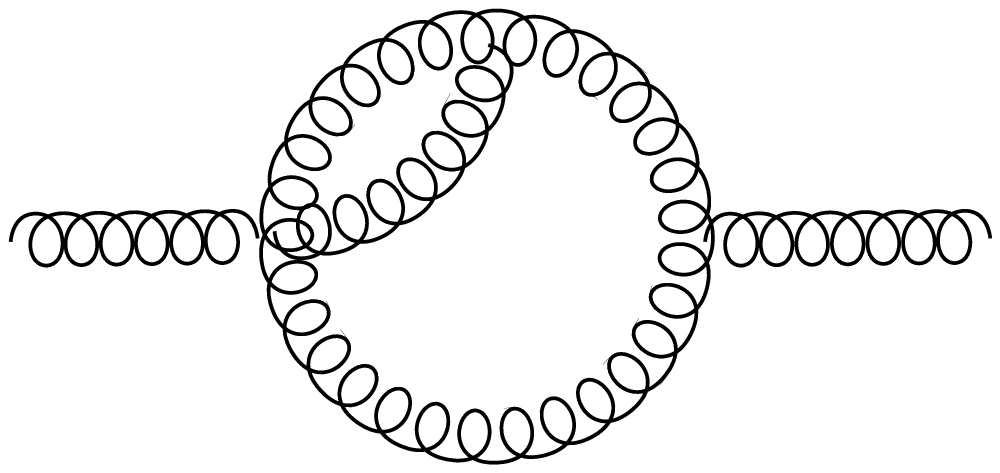}} \\
&\quad +
\frac12 \parbox{20mm}{\includegraphics[scale=.15]{gluon2l2eA.eps}} 
+ \parbox{20mm}{\includegraphics[scale=.15]{ghost2l2eA.eps}} 
+ \parbox{20mm}{\includegraphics[scale=.15]{ghost2l2eD.eps}}
+ \parbox{20mm}{\includegraphics[scale=.15]{ghost2l2eDR.eps}}
\nonumber \\
& \quad +  \frac 12 \parbox{20mm}{\includegraphics[scale=.15]{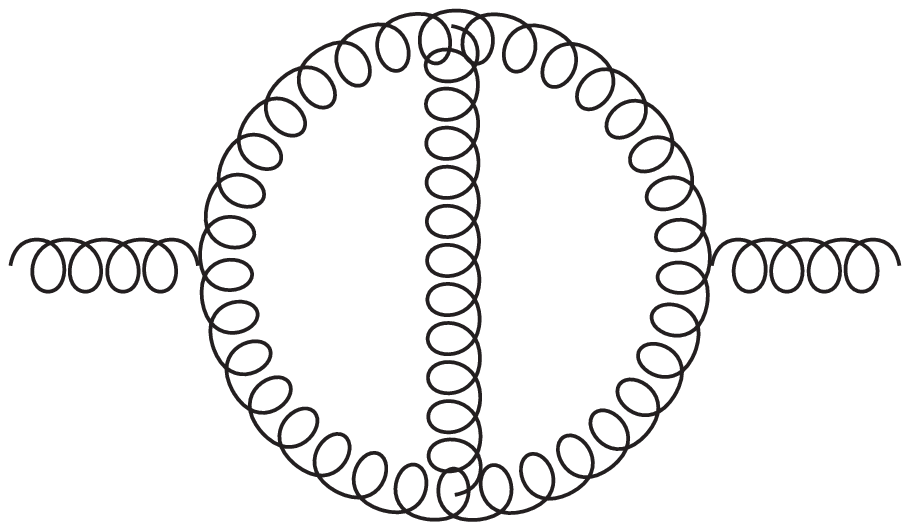}} 
+ \parbox{20mm}{\includegraphics[scale=.15]{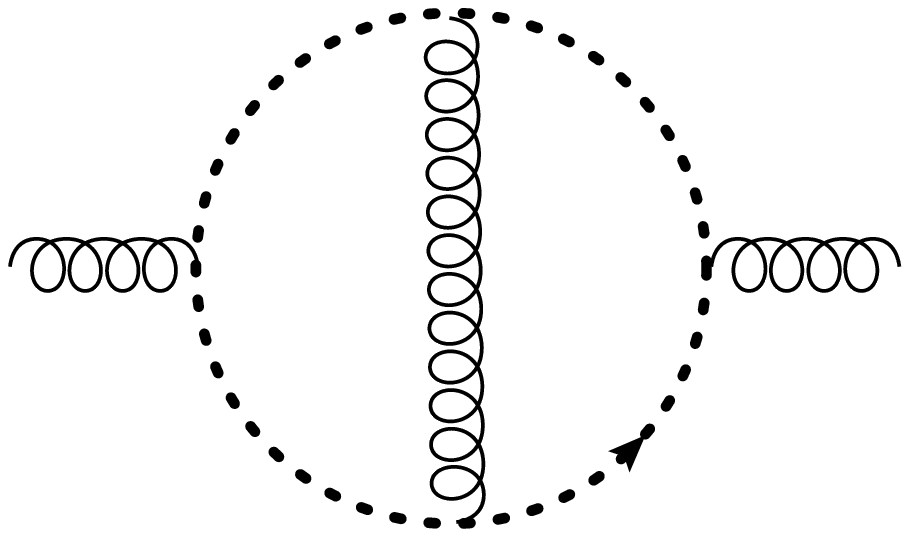}} 
+ \parbox{20mm}{\includegraphics[scale=.15]{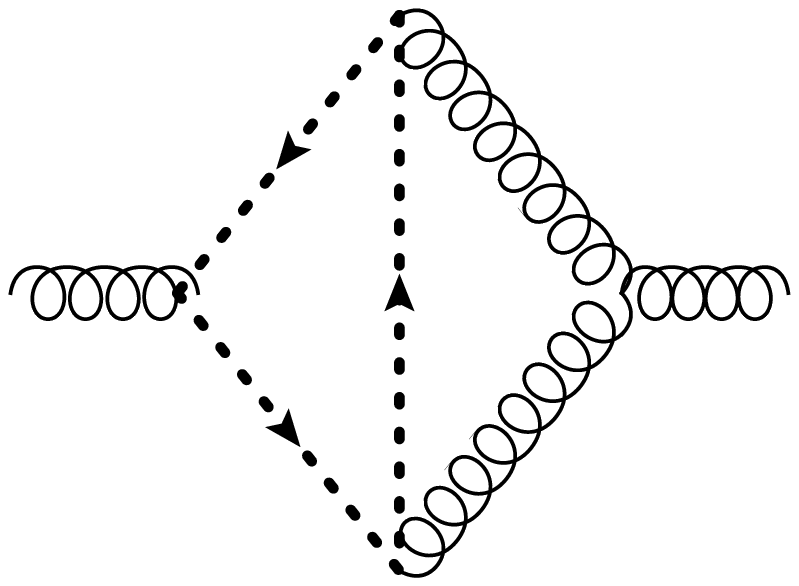}} 
+ \parbox{20mm}{\includegraphics[scale=.15]{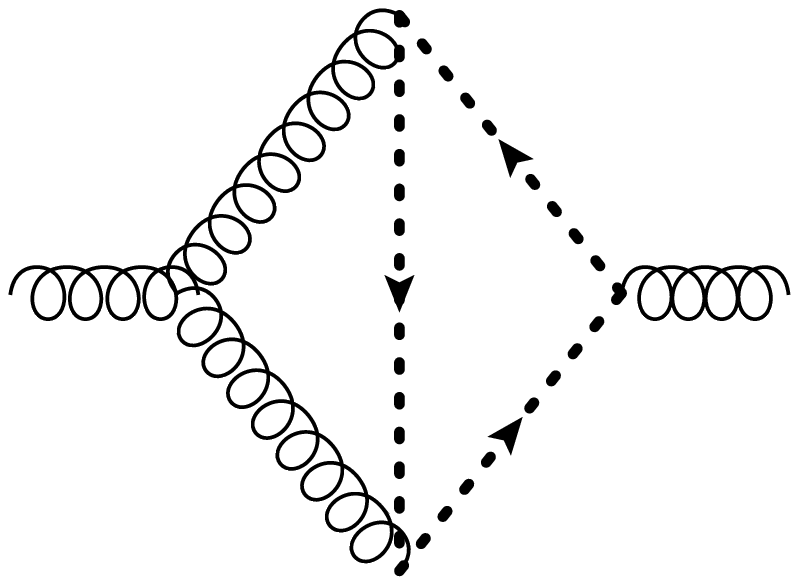}}\nonumber
\end{align}
whose zero-ghost-loop part is
\begin{align}
\label{eq:Ge2l0ghosts}
G_{n=2, \tilde n = 0}^2 = %\parbox{15mm}{\includegraphics[scale=.15]{tad2l2eA.eps}} 
\frac16 \parbox{18mm}{\includegraphics[scale=.15]{tad2l2eB.eps}} 
+ \frac12 \parbox{20mm}{\includegraphics[scale=.15]{gluon2l2eE.eps}} 
+ \frac12 \parbox{20mm}{\includegraphics[scale=.15]{gluon2l2eF.eps}} \\ \nonumber
\qquad 
+\frac12 \parbox{20mm}{\includegraphics[scale=.15]{gluon2l2eA.eps}} 
+  \frac 12 \parbox{20mm}{\includegraphics[scale=.15]{gluon2l2eB.eps}} 
\end{align}
One readily checks that $(1+\delta_+) G_{n=2, \tilde n= 0}^2 = G_{n=2}^2$. 
\end{ex}

\begin{thm}
Let $\tilde H$ be the Hopf subalgebra of $H$ generated by $G^{k,n}$ for all $n \geq 0$ and $k=2,3,4$. Then $\exp{\delta_+}$ is an automorphisms of the graded Hopf algebra $\tilde H$:
$$
 \exp{\delta_+} (x_1 x_2) =  \exp{\delta_+} (x_1)\exp{\delta_+} (x_2); \qquad
\Delta( \exp{\delta_+} (x)) = (\exp{\delta_+} \otimes \exp{\delta_+}  ) \Delta (x).
$$
for $x_1,x_2,x \in \tilde H$.
\end{thm}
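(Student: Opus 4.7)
The plan breaks into three ingredients: (a) $\exp(\delta_+)$ is an algebra homomorphism, (b) $\exp(\delta_+)$ is a coalgebra homomorphism, (c) $\exp(\delta_+)$ sends $\tilde H$ into itself. Together (a), (b), (c) give the stated Hopf algebra automorphism.

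For (a) I would simply invoke the fact that $\delta_+$ is a derivation of the commutative algebra $H$, as asserted in Def.~\ref{4-22} and the Remark preceding this theorem: geometrically, choosing a cycle to convert into a ghost cycle in a disjoint product $\Gamma_1\Gamma_2$ is choosing a cycle either of $\Gamma_1$ or of $\Gamma_2$, which is exactly the Leibniz rule. Iterating gives $\delta_+^n(xy)=\sum_k \binom{n}{k}\delta_+^k(x)\delta_+^{n-k}(y)$, and the binomial identity turns $\exp(\delta_+)$ into an algebra homomorphism.

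The heart of the proof is (b), which I would reduce to the infinitesimal cocycle identity
\begin{equation*}
\Delta\circ\delta_+ \;=\; (\delta_+\otimes\mathrm{id}+\mathrm{id}\otimes\delta_+)\circ\Delta
\end{equation*}
on generators of $\tilde H$. Once this is established, induction yields $\Delta\delta_+^n=\sum_k\binom{n}{k}(\delta_+^k\otimes\delta_+^{n-k})\Delta$, and summing against $1/n!$ proves the multiplicative coaction $\Delta\exp(\delta_+)=(\exp(\delta_+)\otimes\exp(\delta_+))\Delta$. To prove the infinitesimal identity, I would argue combinatorially: fix $\Gamma\in\tilde H$ and a proper 1PI subgraph $\gamma$ appearing in $\Delta\Gamma$. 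I claim that the set of cycles $C\subset\Gamma$ on which $\delta_+^C$ acts nontrivially partitions, compatibly with the contraction $\Gamma\mapsto\Gamma/\gamma$, into cycles entirely internal to $\gamma$ and cycles that descend to cycles of $\Gamma/\gamma$. For a cycle $C$ that uses edges both inside and outside $\gamma$, one shows that the portion of $C$ inside $\gamma$ is a collection of arcs connecting external legs of $\gamma$; after contracting $\gamma$ these arcs glue at the contracted vertex and $C$ becomes a well-defined cycle of $\Gamma/\gamma$. Conversely no cycle of $\gamma$ or $\Gamma/\gamma$ is counted twice. Tracking this bijection term-by-term across the sum $\sum_\gamma \gamma\otimes\Gamma/\gamma$ matches $\Delta\delta_+\Gamma$ to $(\delta_+\otimes\mathrm{id}+\mathrm{id}\otimes\delta_+)\Delta\Gamma$.

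For (c), the operator $\delta_+$ only replaces gluon cycles by ghost cycles: it preserves the loop number and the set of external half-edges of each graph, and the combinatorial Green's functions $G^{k,n}$ by definition collect all 1PI graphs at fixed $(k,n)$ irrespective of ghost-loop content, so $\delta_+ G^{k,n}$ remains a (ghost-loop-graded piece of a) sum over graphs with the same $(k,n)$. Since $\tilde H$ is generated as an algebra by the $G^{k,n}$ and $\delta_+$ is a derivation, $\delta_+(\tilde H)\subset\tilde H$, hence $\exp(\delta_+)(\tilde H)\subset\tilde H$. Bijectivity follows because $\exp(-\delta_+)$ is a two-sided inverse on the graded pieces.

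The main obstacle is the cycle bookkeeping in the co-Leibniz identity: one must verify carefully that no cycle of $\Gamma$ is double-counted or lost when split according to the coproduct, and that the annihilation rule (``$\delta_+^C$ is zero if $C$ meets a marked or ghost edge'') is respected consistently on both sides. The self-energy case $|\gamma^{[1]}_E|=2$, where the two external legs of $\gamma$ become identified in $\Gamma/\gamma$, is where one should double-check the matching most carefully; the cycle through $\gamma$ using both external legs cleanly descends to a cycle of $\Gamma/\gamma$ of the same length after contraction, giving a term on the right-hand side, and there is no corresponding contribution of cycles of $\gamma$, which is the correct accounting.
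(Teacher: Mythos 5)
Your steps (a) and (c) are essentially fine, but step (b) --- the heart of your argument --- rests on an identity that is false at the level at which you try to prove it. The co-Leibniz rule $\Delta\circ\delta_+=(\delta_+\otimes\mathrm{id}+\mathrm{id}\otimes\delta_+)\circ\Delta$ does \emph{not} hold graph-by-graph, and no cycle-by-cycle bijection between cycles of $\Gamma$ and cycles of $\gamma$ and $\Gamma/\gamma$ can establish it. The obstruction is not the bookkeeping of which cycles descend under contraction; it is that when $\delta_+^C$ turns a gluon cycle $C$ passing through a divergent subgraph $\gamma$ into a ghost cycle, the subgraph now sitting at the location of $\gamma$ has external \emph{ghost} legs (it contributes to a gluon--ghost vertex or ghost self-energy amplitude, not to $G^{3}$ or $G^{2}$), so the left tensor factors produced by $\Delta\delta_+\Gamma$ and by $(\delta_+\otimes\mathrm{id})\Delta\Gamma$ are genuinely different elements of $H$. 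They agree only after summing over all graphs of the amplitude \emph{and} imposing the Slavnov--Taylor identities \eqref{eq:ST}, i.e.\ modulo the corresponding co-ideal. The paper makes this failure explicit in the worked example directly after the theorem: the difference $(\exp\delta_+\otimes\exp\delta_+)\Delta'(G^{2,n=2}_{\tilde n=0})-\Delta'(\exp\delta_+ G^{2,n=2}_{\tilde n=0})$ is a nonzero element of $H\otimes H$, equal to twice the one-loop Slavnov--Taylor combination tensored with a ghost loop, and it vanishes only ``upon adding the contribution of 4-valent vertices'' and using \eqref{eq:ST}. Your closing paragraph correctly flags the self-energy case as delicate, but the actual failure mode is the change of external leg type under $\delta_+$, which your bijection never confronts.

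The paper's proof therefore works entirely at the level of Green functions and never isolates an infinitesimal coderivation property. It starts from the closed coproduct formula of \cite{Sui08}, $\Delta(G^k_{\tilde n=0})=\sum_{j_3,j_4}G^k_{\tilde n=0}(Q^3_{\tilde n=0})^{j_3}(Q^4_{\tilde n=0})^{2j_4}\otimes G^k_{j_3j_4;\tilde n=0}$, applies $\exp\delta_+\otimes\exp\delta_+$, uses Lemma \ref{lem:skeletongraphs-delta} in the form $\exp\delta_+(G^k_{n,\tilde n=0})=G^k_n$ on each tensor leg, and then invokes $Q^3=Q^4$ to recognize the result as $\Delta(G^k)$. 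To salvage your route you would have to prove the infinitesimal identity not in $H$ but in the quotient of $\tilde H$ by the Slavnov--Taylor co-ideal, and replace the bijection by an argument that the mismatch terms assemble into exactly that co-ideal --- which is the real content of the theorem and not a routine cycle-counting exercise.
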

\proof
By definition, $\delta_+$ is an algebra derivation so that $\exp \delta_+$ is an algebra automorphism. Note that at a given loop order $l$, the exponential terminates at that power $n$ and is thus well-defined on the graded algebra underlying $\tilde H$.

Let us then consider the compatibility of $\delta_+$ with the coproduct structure.  %Denote by $q_{j_3,j_4,\tilde j}$ the projection onto graphs in $H$ that is of vertex multidegree $j_3, j_4, \tilde j$. %The vertex degree is essentially the number of vertices of the specified type, see \cite{Sui08} for a precise definition. 
Recall from \cite{Sui08} the formula 
$$
\Delta(G^k) = \sum_{j_3,j_4,\tilde j \geq 0} G^k (Q^3)^{j_3} (Q^4)^{2 j_4} (Q^{1, \tilde 2})^{\tilde n} \otimes G^k_{j_3j_4\tilde j}. 
$$
which holds even without the Slavnov--Taylor identities. It continues to hold when restricting to graphs with zero ghost loops:
$$
\Delta(G^k_{\tilde n = 0}) = \sum_{j_3,j_4} G^k_{\tilde n = 0} (Q_{\tilde n = 0}^{3})^{j_3} (Q_{\tilde n = 0}^{4})^{2 j_4} \otimes G^r_{j_3,j_4;\tilde n = 0}. 
$$
We now apply $\exp \delta_+ \otimes \exp \delta_+$ to this equation to obtain after imposing the Slavnov--Taylor-identities $Q^3 = Q^4$:
\begin{align*}
(\exp \delta_+ \otimes \exp \delta_+ )\Delta(G^k_{\tilde n = 0}) &= \sum_{j_3,j_4} G^k(Q^{3})^{j_3} (Q^{4})^{2 j_4} \otimes \exp \delta_+ \left(G^k_{j_3j_4;\tilde n = 0}\right)\\
&=\sum_{n \geq 0} G^k Q^{2n} \otimes \exp \delta_+ \left(G^k_{n,\tilde n = 0} \right)
\end{align*}
since in the absence of ghost vertices $j_3 + 2 j_4 = 2n$ in terms of the first Betti number $n$.  Lemma(\ref{lem:skeletongraphs-delta}) then yields $\exp \delta_+ (G^k_{n,\tilde n = 0}) = G^k_n$, which completes the proof.
\endproof

This can be extended to the {\it connected} Green's functions $X_{k,n}$, where also a similar result can  be shown for $\exp \chi_+$. 

\begin{ex}
First, recall the Slavnov--Taylor identities $G^3 G^{\tilde 2} = G^{1,\tilde 2} G^{2}$ which at one-loop order become:
\begin{multline*}
 \parbox{20mm}{\includegraphics[scale=.15]{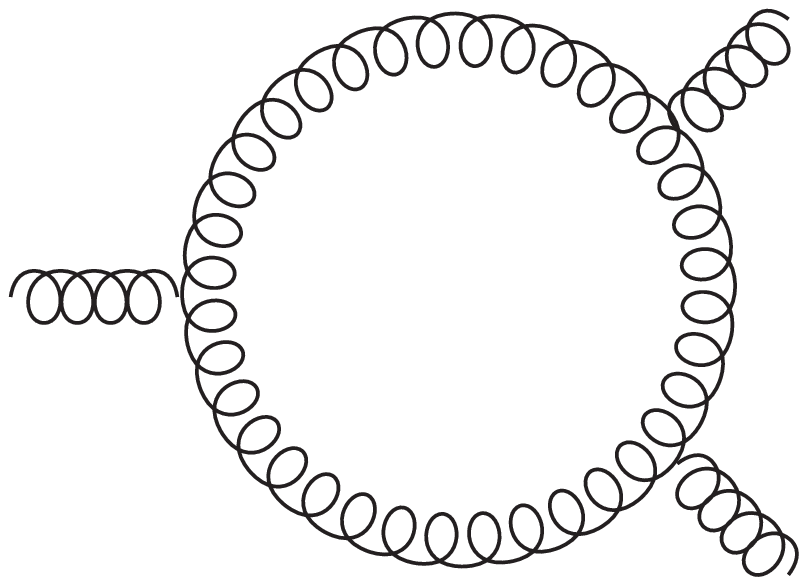}}
+ \frac12 \parbox{20mm}{\includegraphics[scale=.15]{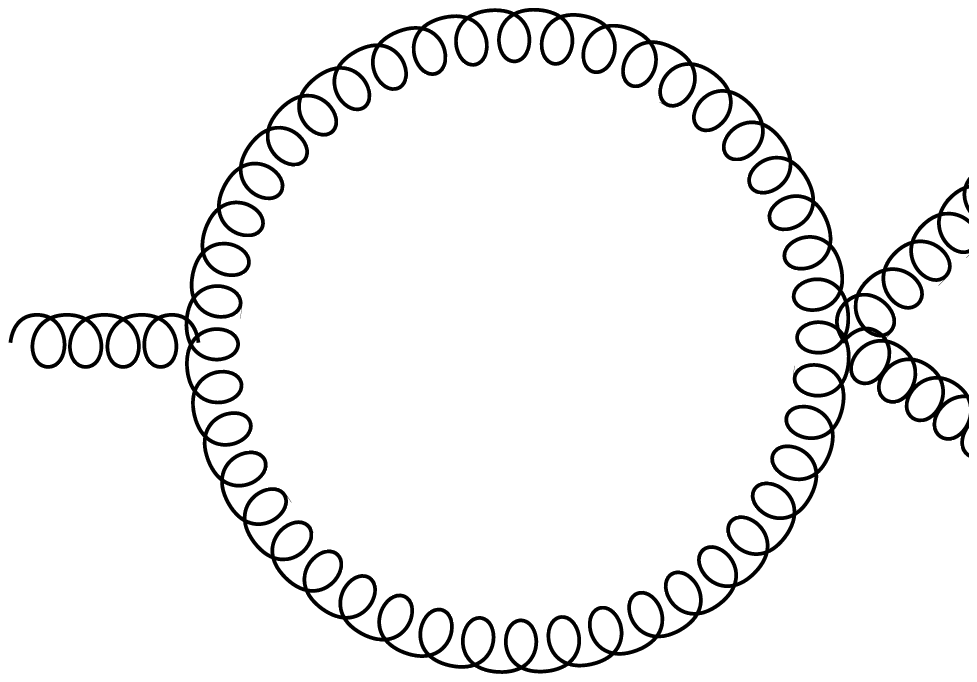}}
%+ \frac12 \parbox{20mm}{\includegraphics[scale=.15,angle=120]{gluon1l3eA.eps}}
%+ \frac12 \parbox{20mm}{\includegraphics[scale=.15,angle=-120]{gluon1l3eA.eps}} \\
%+ \frac12 \parbox{20mm}{\includegraphics[scale=.15]{gluon1l3eB.eps}} 
+ \parbox{15mm}{\includegraphics[scale=.15]{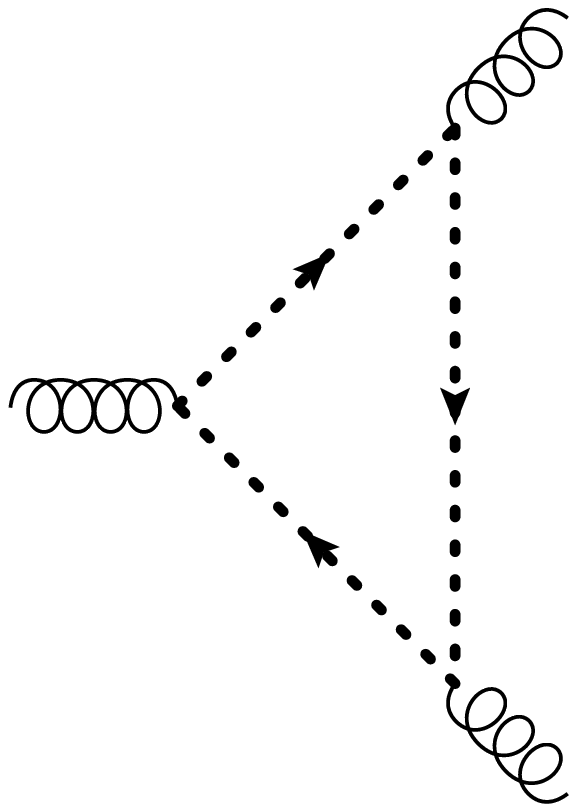}}
+ \parbox{15mm}{\includegraphics[scale=.15]{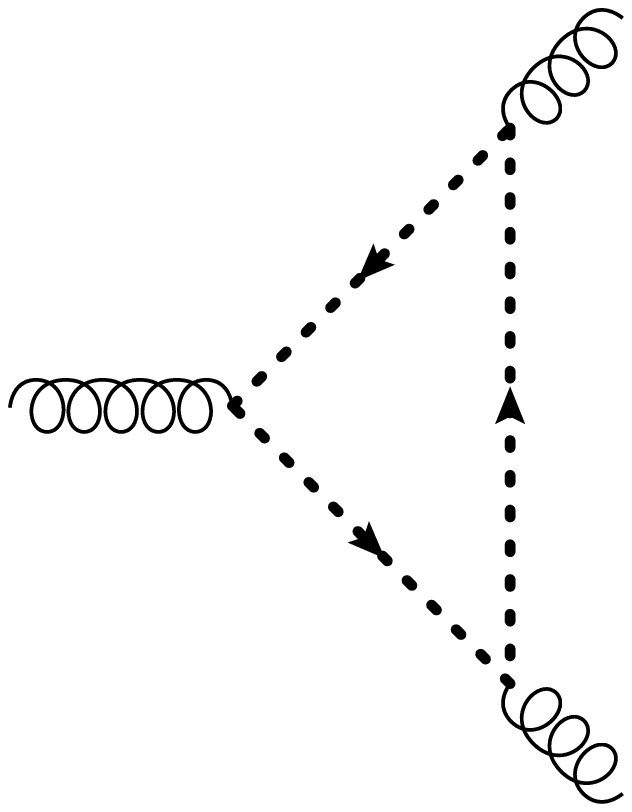}}
-\parbox{15mm}{\includegraphics[scale=.15]{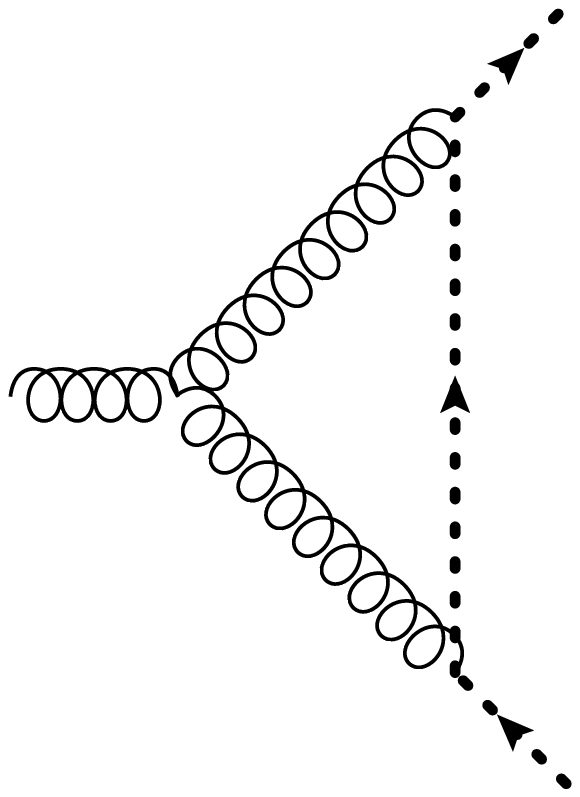}}
-\parbox{15mm}{\includegraphics[scale=.15]{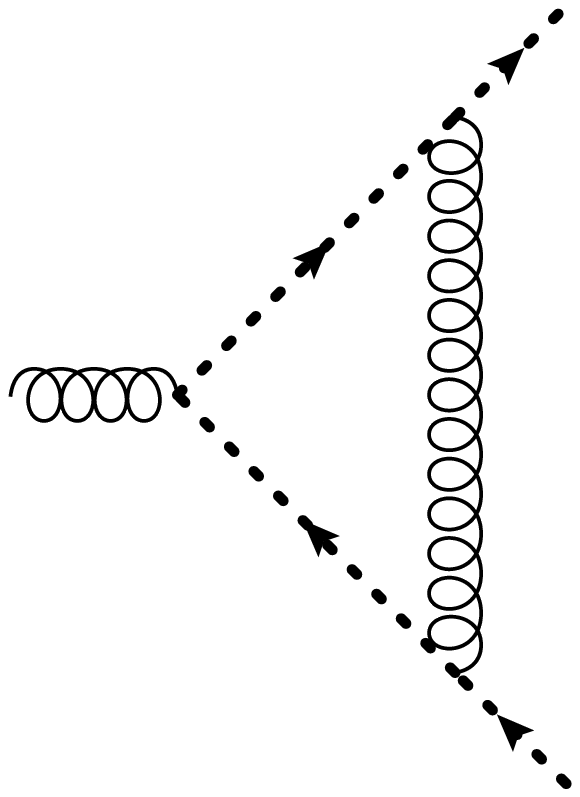}} 
- \parbox{20mm}{\includegraphics[scale=.15]{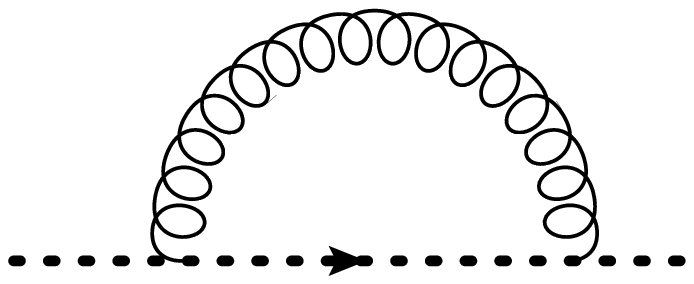}} \\ 
+ \frac12 \parbox{20mm}{\includegraphics[scale=.15]{gluon1l2e.eps}} 
+ \parbox{20mm}{\includegraphics[scale=.15]{ghost1l2e.eps}}
= 0.
\end{multline*}
We compute $\Delta'(G^{2,n=2}_{\tilde n=0})$ with $G^{2,n=2}_{\tilde n=0}$ given in Eq.\eqref{eq:Ge2l0ghosts}. For the first graph on the last line, we have
$$
\Delta'\left( \frac12 \parbox{20mm}{\includegraphics[scale=.15]{gluon2l2eA.eps}}\right) 
=  \frac12 \parbox{20mm}{\includegraphics[scale=.15]{gluon1l2e.eps}} \otimes \parbox{20mm}{\includegraphics[scale=.15]{gluon1l2e.eps}}
$$
If we apply $\exp \delta_+ \otimes \exp \delta_+$ to this expression, we obtain 
$$
\frac12 \left( \parbox{20mm}{\includegraphics[scale=.15]{gluon1l2e.eps}} + 2 \parbox{20mm}{\includegraphics[scale=.15]{ghost1l2e.eps}} \right)\otimes \left(\parbox{20mm}{\includegraphics[scale=.15]{gluon1l2e.eps}} + 2 \parbox{20mm}{\includegraphics[scale=.15]{ghost1l2e.eps}}\right)
$$
For the coproduct on the last graph in Eq.\eqref{eq:Ge2l0ghosts} we have
$$
\Delta' \left(\frac 12 \parbox{20mm}{\includegraphics[scale=.15]{gluon2l2eB.eps}} \right)
= \parbox{20mm}{\includegraphics[scale=.15]{gluon1l3e.eps}} \otimes \parbox{20mm}{\includegraphics[scale=.15]{gluon1l2e.eps}}
$$
and applying $\exp \delta_+ \otimes \exp \delta_+$ to this expression yields
$$
\left( \parbox{20mm}{\includegraphics[scale=.15]{gluon1l3e.eps}}+\parbox{15mm}{\includegraphics[scale=.15]{ghost1l3e.eps}}+\parbox{15mm}{\includegraphics[scale=.15]{ghost1l3eR.eps}} \right)\otimes \left(\parbox{20mm}{\includegraphics[scale=.15]{gluon1l2e.eps}} + 2 \parbox{20mm}{\includegraphics[scale=.15]{ghost1l2e.eps}}\right)
$$
On the other hand, $\Delta' (\exp \delta_+ G^{2,n=2}_{\tilde n=0}) = \Delta'(G^{2,n=2})$ is computed from Eq.\eqref{eq:Ge2l}:
\begin{multline*}
\Delta'(G^{2,n=2}) =\left( \frac 12  \parbox{20mm}{\includegraphics[scale=.15]{gluon1l2e.eps}}+\parbox{20mm}{\includegraphics[scale=.15]{ghost1l2e.eps}} \right)\otimes \parbox{20mm}{\includegraphics[scale=.15]{gluon1l2e.eps}}\\
+ 2 \parbox{15mm}{\includegraphics[scale=.15]{gluon1l2ge.eps}} \otimes \parbox{20mm}{\includegraphics[scale=.15]{ghost1l2e.eps}}\\
+ \left(   \parbox{20mm}{\includegraphics[scale=.15]{gluon1l3e.eps}}+ \parbox{15mm}{\includegraphics[scale=.15]{ghost1l3e.eps}} + \parbox{15mm}{\includegraphics[scale=.15]{ghost1l3eR.eps}} \right)\otimes \parbox{20mm}{\includegraphics[scale=.15]{gluon1l2e.eps}}\\
 +\left( 2  \parbox{12mm}{\includegraphics[scale=.15]{gluon1l1e2ge.eps}}+ 2 \parbox{15mm}{\includegraphics[scale=.15]{ghost1l1e2ge.eps}} \right)\otimes \parbox{20mm}{\includegraphics[scale=.15]{ghost1l2e.eps}}
\end{multline*}
We conclude that
\begin{multline*}
(\exp \delta_+ \otimes \exp \delta_+)\Delta'(G^{2,n=2}_{\tilde n=0})- \Delta' (\exp \delta_+ (G^{2,n=2}_{\tilde n=0})) = \\
2 \left(
\frac12 \parbox{20mm}{\includegraphics[scale=.15]{gluon1l2e.eps}} 
+\parbox{20mm}{\includegraphics[scale=.15]{ghost1l2e.eps}} 
+  \parbox{15mm}{\includegraphics[scale=.15]{ghost1l3e.eps}} 
+  \parbox{15mm}{\includegraphics[scale=.15]{ghost1l3eR.eps}} \right.\\
\left.-  \parbox{15mm}{\includegraphics[scale=.15]{gluon1l2ge.eps}} 
+  \parbox{20mm}{\includegraphics[scale=.15]{gluon1l3e.eps}}    
- \parbox{15mm}{\includegraphics[scale=.15]{gluon1l1e2ge.eps}}
- \parbox{15mm}{\includegraphics[scale=.15]{ghost1l1e2ge.eps}} 
 \right) \otimes \parbox{20mm}{\includegraphics[scale=.15]{ghost1l2e.eps}}
\end{multline*}
which vanishes by the Slavnov--Taylor identities upon adding the contribution of 4-valent vertices.
\end{ex}

\section{The corolla polynomial and differentials}\label{Corolla}
\subsection{The Corolla Polynomial}
Finally, we introduce the Corolla Polynomial (\cite{KrY}). It is a polynomial based on half-edge variables
$a_{v,j}$ assigned to any half-edge $(v,j)$ determined by a vertex $v$ and an edge $j$.
We need the following definitions:
\begin{itemize}
\item For a vertex $v\in V$ let $n(v)$ be the set of edges incident to $v$ (internal or external).
\item For a vertex $v\in V$ let $D_v = \sum_{j \in n(v)} a_{v,j}$.
\item Let $\mathcal{C}$ be the set of all cycles of $\Gamma$ (cycles, not circuits). This is a finite set.
\item For $C$ a cycle and $v$ a vertex in $V$, since $\Gamma$ is 3-regular, there is a unique edge of $\Gamma$ incident to $v$ and not in $C$, let $v_C$ be this edge.
\item For $i\geq 0$ let 
  \[
  C^i = \sum_{\substack{C_1,C_2,\ldots C_i \in \mathcal{C} \\ C_j \text{pairwise disjoint}}} \left(\left(\prod_{j=1}^{i} \prod_{v \in C_j}a_{v,v_C}\right)\prod_{v \not\in C_1\cup C_2\cup \cdots \cup C_i} D_v\right)
  \]
\item Let
  \[
   C = \sum_{j \geq 0} (-1)^j C^j
  \]
\end{itemize}

For any finite graph $\Gamma$, this is a polynomial $C=C(\Gamma)$ ---the corolla polynomial--- because $C^i=0$ for $i>|\mathcal{C}|$.

\begin{thm}(\cite{KrY})
\label{bin reformulation}
  Let $\mathcal{T}^\Gamma$ be the set of sets $T$ of half edges of $\Gamma$ with the property that
  \begin{itemize}
    \item every vertex of $\Gamma$ is incident to exactly one half edge of $T$
    \item $\Gamma\smallsetminus T$ has no cycles
  \end{itemize}
  Then
  \[
  C(\Gamma) = \sum_{T\in \mathcal{T}^\Gamma} \prod_{h\in T} a_h
  \]
\end{thm}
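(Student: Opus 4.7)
The plan is to prove the identity by expanding $\prod_v D_v$ and applying inclusion--exclusion over the cycles surviving in the complement of a chosen set of half-edges.

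First I would rewrite
\[
\prod_{v\in V^\Gamma} D_v = \sum_{T_0 \in \mathcal{T}_0^\Gamma} \prod_{h\in T_0} a_h,
\]
where $\mathcal{T}_0^\Gamma$ is the set of half-edge selections $T_0$ satisfying only condition (i) (one half-edge per vertex), dropping the acyclicity condition (ii). The goal is then to show that the alternating sum $C = \sum_j (-1)^j C^j$ performs inclusion--exclusion that removes exactly the contributions of those $T_0$ for which $\Gamma\smallsetminus T_0$ still contains a cycle, leaving only the genuine elements of $\mathcal{T}^\Gamma$.

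The pivotal structural observation is that because $\Gamma$ is $3$-regular, after removing one half-edge per vertex every remaining vertex has at most two remaining half-edges; hence $\Gamma\smallsetminus T_0$ has maximum degree $2$. This forces any cycles of $\Gamma\smallsetminus T_0$ to be vertex-disjoint. Moreover, a cycle $C$ of $\Gamma$ survives in $\Gamma\smallsetminus T_0$ if and only if at every vertex $v\in C$ the half-edge selected by $T_0$ is precisely $v_C$ (the unique half-edge at $v$ not belonging to $C$). Writing $\mathcal{C}(T_0)$ for the set of surviving cycles, these two facts mean: (a) $\mathcal{C}(T_0)$ is automatically a pairwise-disjoint family, and (b) for any pairwise-disjoint family $S$ of cycles of $\Gamma$, the $T_0$'s with $S\subseteq \mathcal{C}(T_0)$ are exactly those that choose $v_C$ at each $v\in \bigcup S$ and choose arbitrarily elsewhere.

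Applying standard inclusion--exclusion, $\mathbf{1}[\mathcal{C}(T_0)=\emptyset] = \sum_{S\subseteq \mathcal{C}(T_0)}(-1)^{|S|}$, and swapping the order of summation,
\[
\sum_{T\in\mathcal{T}^\Gamma}\prod_{h\in T}a_h
= \sum_{T_0\in\mathcal{T}_0^\Gamma}\sum_{S\subseteq\mathcal{C}(T_0)}(-1)^{|S|}\prod_{h\in T_0}a_h
= \sum_{j\geq 0}(-1)^j \!\!\sum_{\substack{S=\{C_1,\ldots,C_j\}\\ \text{disjoint}}}\!\! \sum_{\substack{T_0\in\mathcal{T}_0^\Gamma\\ S\subseteq\mathcal{C}(T_0)}}\prod_{h\in T_0}a_h.
\]
By observation (b), the innermost sum factorises as
\[
\Big(\prod_{C\in S}\prod_{v\in C} a_{v,v_C}\Big)\prod_{v\notin \bigcup S} D_v,
\]
which is exactly the summand of $C^j$. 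Hence the right-hand side equals $\sum_j(-1)^j C^j = C(\Gamma)$.

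The only subtlety, which I would emphasise, is the automatic disjointness of $\mathcal{C}(T_0)$: without it the inclusion--exclusion on arbitrary subsets would not match the definition of $C^j$, which is restricted to \emph{pairwise disjoint} cycles. The other step that requires a line of justification is the identification at a vertex $v\in C$: that $v$ is trivalent ensures that after deleting the two $C$-half-edges there is a unique remaining half-edge $v_C$, so the condition ``$C$ survives $T_0$'' translates cleanly into ``$T_0$ contains $(v,v_C)$ for each $v\in C$.''
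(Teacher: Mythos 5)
Your proof is correct and complete: the expansion of $\prod_v D_v$ over unconstrained half-edge selections, the observation that a cycle $C$ survives $T_0$ iff $T_0$ picks $v_C$ at every $v\in C$ (using 3-regularity), the automatic vertex-disjointness of surviving cycles from the max-degree-2 structure of $\Gamma\smallsetminus T_0$, and the inclusion--exclusion $\mathbf{1}[\mathcal{C}(T_0)=\emptyset]=\sum_{S\subseteq\mathcal{C}(T_0)}(-1)^{|S|}$ fit together exactly to reproduce the alternating sum $\sum_j(-1)^jC^j$. Note that this paper does not prove the theorem itself but quotes it from \cite{KrY}; your argument is essentially the one given there, and you correctly flag the two points (disjointness of $\mathcal{C}(T_0)$ and uniqueness of $v_C$ at a trivalent vertex) on which the match with the definition of $C^j$ hinges.
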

\begin{rem}
This shows that the corolla polynomial is strictly positive.
As it applies in this form as a corolla differential to pure Yang--Mills theory, this results
in a positivity statement on Yang--Mills theory which does not hold for gauge fields coupled to matter fields.
Accordingly, the sign of the $\beta$-function in gauge theory becomes dependent on the number of fermion families, and their representations.
\end{rem}
\begin{rem}\label{remthree}
For a graph $\Gamma$, let $E$ be a set of %$k=|E|$ 
pairwise disjoint internal edges of $\Gamma$.
For $i\geq 0$ let 
  \[
  C^i_E(\Gamma) = \sum_{\substack{C_1,C_2,\ldots C_i \in \mathcal{C} \\ C_j \text{ pairwise disjoint} \\ C_j\cap E=\emptyset}} \left(\left(\prod_{j=1}^{i} \prod_{v \in C_j}a_{v,v_C}\right)\prod_{v \not\in C_1\cup C_2\cup \cdots \cup C_i\cup E} 
  D_v\right)
  \]
  where the sum forbids cycles from sharing either vertices or edges with $E$.
\item Let
  \[
   C_E(\Gamma) = \sum_{j \geq 0} (-1)^j C^j_E(\Gamma).
  \]
Then,
\[
C_E(\Gamma)=C(\Gamma- E)
\]  
where $\Gamma-E$ is the graph with the edges and vertices involved in $E$ removed.  Removing a vertex removes all its incident half-edges so that $2|E|$ new external edges are generated. 
Note that $C_\emptyset(\Gamma)=C(\Gamma)$.
\end{rem}

Define
\begin{equation*}
C^{\mathrm{fr}}(\Gamma):=\sum_E \left(C_E(\Gamma)\prod_{e\in E}W_e\right),
\end{equation*}
where $W_e$ is defined in (\ref{eq:W}).
\begin{cor}
\[
C^{\mathrm{fr}}(\Gamma)=\sum_E \left( \sum_{T\in\mathcal{T}^E}(\prod_{h\in T} a_h) \prod_{e\in E}W_e\right).\]
\end{cor}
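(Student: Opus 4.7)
The proof will be a direct chain of substitutions using the two results immediately preceding the corollary, namely Theorem \ref{bin reformulation} and the identity in Remark \ref{remthree}. The plan is as follows.

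First, I would unfold the definition
\[
C^{\mathrm{fr}}(\Gamma)=\sum_E \Bigl(C_E(\Gamma)\prod_{e\in E}W_e\Bigr),
\]
where the outer sum ranges over sets $E$ of pairwise disjoint internal edges of $\Gamma$. The edge-weight factor $\prod_{e\in E}W_e$ is already in the desired form, so the only work is to rewrite each $C_E(\Gamma)$. Applying the identity $C_E(\Gamma)=C(\Gamma-E)$ from Remark \ref{remthree} reduces this to computing the plain corolla polynomial of the reduced graph $\Gamma-E$.

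Next, I would invoke Theorem \ref{bin reformulation} on $\Gamma-E$, which states
\[
C(\Gamma-E)=\sum_{T\in\mathcal{T}^{\Gamma-E}}\prod_{h\in T}a_h,
\]
where $\mathcal{T}^{\Gamma-E}$ is the set of half-edge selections $T$ of $\Gamma-E$ such that every vertex of $\Gamma-E$ is covered exactly once by $T$ and $(\Gamma-E)\smallsetminus T$ contains no cycles. Substituting this into the previous display, the desired formula drops out, provided one identifies the notation $\mathcal{T}^E$ of the statement with $\mathcal{T}^{\Gamma-E}$.

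The main (and only) issue to clarify is this last identification: since $E$ consists of pairwise disjoint internal edges, removing $E$ (together with its incident vertices, as per Remark \ref{remthree}) produces a well-defined $3$-regular graph with $2|E|$ extra external half-edges, on which the hypotheses of Theorem \ref{bin reformulation} apply verbatim. Once this is spelled out, there is no remaining content; the corollary is a pure bookkeeping consequence, and I would expect no analytic obstacle, only the need to fix the notation $\mathcal{T}^E:=\mathcal{T}^{\Gamma-E}$ explicitly.
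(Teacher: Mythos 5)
Your proposal is correct and is exactly the argument the paper has in mind: its proof reads simply ``Immediate,'' i.e.\ unfold the definition of $C^{\mathrm{fr}}(\Gamma)$, use $C_E(\Gamma)=C(\Gamma-E)$ from Remark \ref{remthree}, and apply Theorem \ref{bin reformulation} to $\Gamma-E$, with $\mathcal{T}^E$ read as $\mathcal{T}^{\Gamma-E}$. Your explicit note that this last notational identification is the only point needing care is a reasonable addition, but there is no difference in substance.
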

\begin{proof}
Immediate.
\end{proof}
\begin{rem}
Consider a 3-regular graph $\Gamma$ which has $j$, $j\geq 2$, 3-valent vertices, and let $\mathcal{P}$ be a set of $m$ paths, $2m\leq j$, on internal edges and 3-valent vertices in $\Gamma$ which each connect 
two external 3-valent vertices (a 3-valent vertex $v$ is external if $n(v)$ contains an external edge) with $p_i\cap p_j=\emptyset$, $\forall p_i,p_j\in \mathcal{P}$.

Consider for chosen set $E$ and $\mathcal{P}$ as above, with $E\cap\mathcal{P}=\emptyset$,
for $i\geq 0$, 
  \begin{eqnarray*}
  C^i_{E,\mathcal{P}}(\Gamma) & = &  
  \sum_{\substack{C_1,C_2,\ldots C_i \in \mathcal{C} \\ C_j \text{ pairwise disjoint} \\ C_j\cap E=\emptyset\\
  C_j\cap\mathcal{P}=\emptyset}} \left(\left(\prod_{j=1}^{i} \prod_{v \in C_j}a_{v,v_C}\right)\prod_{v \not\in C_1\cup C_2\cup \cdots \cup C_i\cup E\cup\mathcal{P}} 
  D_v\right)\times\\ & & \times \left( \prod_{p\in\mathcal{P}}\prod_{v\in p}a_{v,v_p}\right),
  \end{eqnarray*}
  where the sum forbids cycles from sharing either vertices or edges with $E$, and $v_p$ is the unique half-edge at $v$ not in $p$.
\item Let
  \[
   C_{E,\mathcal{P}}(\Gamma) = \sum_{j \geq 0} (-1)^j C^j_{E,\mathcal{P}}(\Gamma).
  \]
Finally, we set
\begin{equation}\label{openlines}
C^{\mathrm{fr}}_{\mathcal{P}}(\Gamma):=\sum_E \left(C_{E,\mathcal{P}}(\Gamma)\prod_{e\in E}W_e\right).
\end{equation}  
\end{rem}
\subsection{Corolla differentials}
Our main use of the corolla polynomial is to construct differential operators with it. These operators differentiate with respect to momenta $\xi_e$ assigned to edges $e$ of a graph, and act on the second Kirchhoff polynomial written for generic edge momenta $\xi_e$, that is on $|N|_{\mathrm{Pf}}$.

Only at the end of the computation will we employ the map \[
Q: \xi_e\to \xi_e+q(e).\]
We then set  $\xi_e=0$ after we have applied the corolla differentials 
so that we obtain the standard second Symanzik polynomial for specific external momenta as prescribed by gauge theory amplitudes.

For a half edge $h\equiv (w,f)\in H^\Gamma$, we let $e(h)=f$ and $v(h)=w$.  We remind the reader that $h_+$ and $h_-$ are the successor and the precursor of $h$  in the oriented corolla at $v(h)$, and that we assign to a graph $\Gamma$:
\begin{enumerate}
\item to each (possibly external) edge $e$, a variable $A_e$ and a 4-vector $\xi_e$;\\
\item to each half edge $h$, a Lorentz  index $\mu(h)$;\\
\item a factor $\mathrm{colour}(\Gamma)$.
\end{enumerate}

\subsection{The differential $D^0$}
The corolla polynomial is an alternating sum over terms $C^i$, where $i$ counts the number of loops.
Similarly, the corolla differentials are a sum of terms $D^i$. We start with $D^0$.

Let $\Gamma\in\mathcal G^{n,l}$:
\begin{samepage}
\begin{equation*} U^0(\Gamma) = \int_E\frac{\mathrm d\underline k_L}{(2\pi)^{dl}} C_\Gamma^0(\underline D) e^{ -\sum_{e\in\Gamma^{[1]}} A_e\xi_e^{\prime2} },  \end{equation*}
\end{samepage}where
\begin{equation*} C_\Gamma^0(\underline D) = \prod_{v\in\Gamma^{[0]}} D_v, \qquad D_v = D_{v1} + D_{v2} + D_{v3} \end{equation*}
(the edges incident on $v$ are labelled $1,2,3$), 
\begin{equation*} D_{v1} = -\tfrac12 g^{\mu_2\mu_3} \Big( \epsilon_{v2}\frac1{A_2}\frac\partial{\partial\xi_{2\mu_1}}-\epsilon_{v3}\frac1{A_3}\frac\partial{\partial\xi_{3\mu_1}} \Big). \end{equation*}
Using that all corollas are oriented, we can write this as
\begin{equation*} D_{g}(h) := -\tfrac12 g^{\mu_{h_+}\mu_{h_-}} \Big( \epsilon_{h_+}\frac1{A_{e(h_+)}}\frac\partial{\partial\xi(h_+)_{\mu_h}}-\epsilon_{h_-}\frac1{A_{e(h_-)}}\frac\partial{\partial\xi(h_-)_{\mu_h}} \Big), \end{equation*}
for any half-edge $h$.
The operator $D_v$ is such that if it acts on $e^{-\sum_{e\in\Gamma^{[1]}} A_e\xi_e^{\prime2} }$, it gives the 3-vertex Feynman rule of Eq.\eqref{3-vertex}:
\begin{equation*} D_v e^{-\sum_{e\in\Gamma^{[1]}} A_e\xi_e^{\prime2}} = V^{(3)}_v e^{-\sum_{e\in\Gamma^{[1]}} A_e\xi_e^{\prime2}}. \end{equation*}
In order to calculate $C_\Gamma^0(\underline D)\equiv C_\Gamma^0(h\to D_g(h))$, we also need to know the Leibniz terms $D_vV^{(3)}_w$, where $v,w\in\Gamma^{[0]}$.
\begin{itemize}
\item If $v$ and $w$ do not share an edge, $D_vV^{(3)}_w=0$.

\item Suppose they share exactly one edge; we give it label $5$. Let $1$ and $2$ be the other edges at $v$ and $3$ and $4$ the other ones at $w$:
\begin{equation*} \parbox[c][35pt]{35pt}{\centering\begin{fmfgraph*}(30,30)
	\fmfleft{2,1}
	\fmfright{3,4}
	\fmfv{label=\scriptsize$1$,label.dist=1}{1}
	\fmfv{label=\scriptsize$2$,label.dist=1}{2}
	\fmfv{label=\scriptsize$3$,label.dist=1}{3}
	\fmfv{label=\scriptsize$4$,label.dist=1}{4}
	\fmfv{label=\scriptsize$v$,label.dist=3}{v}
	\fmfv{label=\scriptsize$w$,label.dist=3}{w}
	\fmf{boson,tension=2}{1,v}
	\fmf{boson,tension=2}{2,v}
	\fmf{boson,tension=2}{3,w}
	\fmf{boson,tension=2}{4,w}
	\fmf{boson,label=\scriptsize$5$,label.dist=3}{v,w}
\end{fmfgraph*}}
.\end{equation*}
Then:
\begin{equation*}
D_vV^{(3)}_w
= \frac1{A_5} (g^{\mu_4\mu_2}g^{\mu_3\mu_1}-g^{\mu_2\mu_3}g^{\mu_4\mu_1})\equiv  \frac{W_e}{A_e}, \end{equation*}
where $W_e$ is the Feynman rule for a marked edge (equation (\ref{eq:W})). Note that thus
\begin{equation*} D_vV^{(3)}_w = D_wV^{(3)}_v . \end{equation*}

\item Suppose that $v$ and $w$ share two edges, $3$ and $4$. Let $1$ be the other edge at $v$ and $2$ the other one at $w$:
\begin{equation*} \parbox[c][35pt]{35pt}{\centering\begin{fmfgraph*}(30,15)
	\fmfleft 1
	\fmfright 2
	\fmfv{label=\scriptsize$1$,label.dist=1}{1}
	\fmfv{label=\scriptsize$2$,label.dist=1}{2}
	\fmfv{label=\scriptsize$v$,label.dist=-5}{v}
	\fmfv{label=\scriptsize$w$,label.dist=-6}{w}
	\fmf{boson,tension=2}{1,v}
	\fmf{boson,tension=2}{2,w}
	\fmf{boson,right,label=\scriptsize$3$,label.dist=3}{v,w}
	\fmffreeze
	\fmf{boson,left,label=\scriptsize$4$,label.dist=3}{v,w}
\end{fmfgraph*}}
.\end{equation*}
Then:
\begin{equation*}\begin{split}
D_vV^{(3)}_w
&= \Big( \frac1{A_3} ( g^{\mu_4\mu_4}g^{\mu_2\mu_1} - g^{\mu_2\mu_4}g^{\mu_4\mu_1} )+ \frac1{A_4} ( g^{\mu_3\mu_3}g^{\mu_2\mu_1} - g^{\mu_3\mu_2}g^{\mu_3\mu_1} ) \Big) \\
&= \frac{W_3}{A_3} + \frac{W_4}{A_4}.
\end{split}\end{equation*}
where we have used equation (\ref{eq:W}). Note that also in this case
\begin{equation*} D_vV^{(3)}_w = D_wV^{(3)}_v . \end{equation*}
Contracting the indices further gives self-loops which can be omitted:
\begin{equation*} D_vV^{(3)}_w = 3C_2\delta^{a_1a_2} g^{\mu_1\mu_2} \Big(\frac1{A_3}+\frac1{A_4}\Big). \end{equation*}
\end{itemize}

\subsection{Regular terms and residues}
We can now compute immediately the application of $D^0$ to the scalar integrand $I_\Gamma$, that is, $U^0(\Gamma):=D^0I_\Gamma$.
\begin{equation*}\begin{split}
U^0(\Gamma)
&= \int\frac{\mathrm d\underline k_L}{(2\pi)^{dl}} \bigg(\prod_{v\in\Gamma^{[0]}}D_v\bigg) e^{ -\sum_{e\in\Gamma^{[1]}} A_e\xi_e^{\prime2} } \\
&= \int\frac{\mathrm d\underline k_L}{(2\pi)^{dl}} \bigg[ \bigg(\prod_{v\in\Gamma^{[0]}}V^{(3)}_v\bigg) \\
&\qquad+ \sum_{\substack{w,w'\in\Gamma^{[0]}\\ \text{$w$ and $w'$ share an edge}\\ w<w'}}
(D_wV^{(3)}_{w'})
\bigg(\prod_{\substack{v\in\Gamma^{[0]}\\ v\neq w,w'}} V^{(3)}_v\bigg) \\
&\qquad+ \sum_{\substack{w,w',x,x'\in\Gamma^{[0]}\\ \text{$w$ and $w'$ and $x$ and $x'$ share an edge}\\ w<w'<x<x' }}
(D_wV^{(3)}_{w'})(D_xV^{(3)}_{x'})
\bigg(\prod_{\substack{v\in\Gamma^{[0]}\\ v\neq w,w',x,x'}} V^{(3)}_v\bigg) \\
&\qquad+ \cdots \bigg] e^{ -\sum_{e\in\Gamma^{[1]}} A_e\xi_e^{\prime2} }.
\end{split}\end{equation*}
With the result of the previous subsection we get (recall that we exclude graphs with self-loops):
\begin{equation*}\begin{split}
U^0(\Gamma) &= \int\frac{\mathrm d\underline k_L}{(2\pi)^{dl}} \bigg[ \bigg(\prod_{v\in\Gamma^{[0]}}V^{(3)}_v\bigg)
+ \sum_{e\in\Gamma^{[1]}_{\rm int}}
\frac{W_e}{A_e}
\bigg(\prod_{\substack{v\in\Gamma^{[0]}\\ v \text{ not adj.\ to } e }} V^{(3)}_v\bigg) \\
&\qquad+ \sum_{\substack{\{e_1,e_2\}\subset\Gamma^{[1]}_{\rm int}\\ \text{$e_1$ and $e_2$ do not share a vertex}}}
\frac{W_{e_1}W_{e_2}}{A_{e_1}A_{e_2}}
\bigg(\prod_{\substack{v\in\Gamma^{[0]}\\ v \text{ not adj.\ to } e_1,e_2 }} V^{(3)}_v\bigg) \\
&\qquad+ \cdots \bigg] e^{ -\sum_{e'\in\Gamma^{[1]}} A_{e'}\xi_{e'}^{\prime2} }. \\
&= \sum_{k\geq0} \sum_{\substack{\{e_1,\ldots,e_k\}\subset\Gamma^{[1]}_{\rm int}\\ \text{$e_1,\ldots,e_k$ do not share a vertex}}} \frac{W_{e_1}\cdots W_{e_k}}{A_{e_1}\cdots A_{e_k}}
\int\frac{\mathrm d\underline k_L}{(2\pi)^{dl}} \\
&\qquad\times \bigg(\prod_{\substack{v\in\Gamma^{[0]}\\ v \text{ not adj.\ to } e_1,\ldots,e_k }} V^{(3)}_v\bigg) e^{ -\sum_{e\in\Gamma^{[1]}} A_{e}\xi_{e}^{\prime 2}} .\\
\end{split}\end{equation*}

The first term we recognise as the Feynman-Schwinger integrand of $\Gamma$. The other terms we can write as the integrands of marked versions of $\Gamma$ (equation (\ref{markedintegrand})). More precisely,
\begin{equation*}
U^0(\Gamma)
= \sum_{k\geq0} \sum_{\{e_1,\ldots,e_k\}\subset\Gamma^{[1]}_{\rm int}} \frac1{A_{e_1}\cdots A_{e_k}} \int\frac{\mathrm d\underline k_L}{(2\pi)^{dl}} \mathcal I(\chi_+^{e_1}\cdots\chi_+^{e_k}\Gamma)
e^{-A_{e_1}\xi_{e_1}^{\prime2}-\cdots-A_{e_k}\xi_{e_k}^{\prime2}},
\end{equation*}
where $\mathcal I(\Gamma)$ is given in equation \eqref{eq:mathcalI}. Recall that in the exponent in the integrand only the unmarked edges are included. 
That is why the factor $e^{-A_{e_1}\xi_{e_1}^{\prime2}-\cdots-A_{e_k}\xi_{e_k}^{\prime2}}$ appears. This factor does not change the residue along $\prod_{e\in\Gamma^{[1]}_{\rm int}}A_e=0$.

Each subset of edges here is accompanied by a corresponding set of poles. By construction, the residues along these poles correspond to integrands where the edges shrink to 
form 4-valent vertices with the correct Feynman rules.

Using the $\chi_+^e$-operator, we can write the integral $\tilde U^0(\Gamma)$ as:
\begin{equation*}\begin{split}
\tilde U^0(\Gamma)
&= \sum_{k\geq0} \sum_{\{e_1,\ldots,e_k\}\subset\Gamma^{[1]}_{\rm int}}
\int\mathrm d\underline A_{\Gamma^{[1]}\smallsetminus\{e_1,\ldots,e_k\}}
I(\chi_+^{e_1}\cdots\chi_+^{e_k}\Gamma) .
\end{split}\end{equation*}
In terms of Feynman amplitudes, this is
\begin{equation} \tilde U^0(\Gamma)
= \sum_{k\geq0} \sum_{\{e_1,\ldots,e_k\}\subset\Gamma^{[1]}_{\rm int}}
\Phi (\chi_+^{e_1}\cdots\chi_+^{e_k}\Gamma)
= \Phi(e^{\chi_+}\Gamma).
\label{eq:tildeU-chi}\end{equation}

Instead of applying $\tilde U^0$ to a single graph, we can do this to the combinatorial Green's function $X^{n,l}$. This gives us the Green's function for all graphs in Yang--Mills theory without the ghosts, but including the 4-valent vertices:
\begin{prop}
Collecting residues as above produces the evaluation by the Feynman rules of all 3- and 4-valent graphs in gauge theory without internal ghost or fermion edges:
\begin{equation*} \tilde U^0(X^{n,l}) = \Phi(e^{\chi_+}X^{n,l}) = \Phi(X^{n,l}_{/\fourvertex}). \end{equation*}
\end{prop}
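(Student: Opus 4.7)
The plan is to reduce the proposition to two ingredients that are already in place: the single-graph identity \eqref{eq:tildeU-chi}, and Lemma \ref{lem:skeletongraphs}.iii which identifies $e^{\chi_+}X^{n,l}$ with $X^{n,l}_{/\fourvertex}$. Schematically, I would argue the two equalities separately and then splice them together.

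For the second equality, $\Phi(e^{\chi_+}X^{n,l}) = \Phi(X^{n,l}_{/\fourvertex})$, I would apply the Feynman rule map $\Phi$ termwise to the identity $e^{\chi_+}X^{n,l} = X^{n,l}_{/\fourvertex}$ established in Lemma \ref{lem:skeletongraphs}.iii. The content here is purely combinatorial: a marked edge between two 3-valent corollas evaluates under $\Phi$ to the same expression as a 4-valent vertex in its corresponding $s$, $t$, or $u$ channel, which is precisely the defining relation \eqref{eq:|}, with the symmetry-factor bookkeeping done once and for all in Lemma \ref{lem:graph-4v-|} and Lemma \ref{lem:||&4-vertices}. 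Linearity of $\Phi$ on the free $\mathbb{Q}$-algebra $H$ then upgrades the equality from generators to the combinatorial Green function.

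For the first equality, $\tilde U^0(X^{n,l}) = \Phi(e^{\chi_+}X^{n,l})$, I would invoke \eqref{eq:tildeU-chi} on each 3-regular generator $\Gamma$ appearing in $X^{n,l}$, and then sum, weighted by $\mathrm{colour}(\Gamma)/\mathrm{sym}(\Gamma)$. The operator $\tilde U^0$ is defined as an integral of $C^0_\Gamma(\underline D)$ applied to the scalar integrand; this is manifestly $\mathbb{Q}$-linear in the graph argument since both the corolla differential and the Gaussian weight depend only on the data of $\Gamma$. The only delicate point is to observe that each residue extraction on a subset $\{e_1,\dots,e_k\}\subset \Gamma^{[1]}_{\rm int}$ of non-adjacent edges produces exactly the integrand associated to the marked graph $\chi_+^{e_1}\cdots\chi_+^{e_k}\Gamma$, which is guaranteed by the calculation of the Leibniz terms $D_vV^{(3)}_w = W_e/A_e$ carried out in the subsection on regular terms and residues.

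The main obstacle I anticipate is ensuring the case analysis on edge-adjacencies in $D_vV^{(3)}_w$ matches the support condition on $\chi_+^e$ (which vanishes when $e$ is adjacent to another marked edge), so that no over- or undercounting arises when several $\chi_+^{e_j}$ act simultaneously. This is exactly what the restriction to sets of mutually non-adjacent internal edges in the expansion of $U^0(\Gamma)$ encodes, and it matches the definition of $\chi_+$ in Def.\ref{4-1}. Once this is verified at the level of a single graph, summing over all $\Gamma\in\mathcal{G}^{n,l}_{0\fourvertex;0\ghostloop}$ with the appropriate weights and chaining the two equalities yields $\tilde U^0(X^{n,l})=\Phi(X^{n,l}_{/\fourvertex})$, completing the proof.
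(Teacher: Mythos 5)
Your proposal is correct and follows essentially the same route as the paper, which proves the proposition by combining Eq.~(\ref{eq:tildeU-chi}) with Lemma~\ref{lem:skeletongraphs}.iii and linearity over the weighted sum of graphs. Your additional remarks on the adjacency bookkeeping between the Leibniz terms $D_vV^{(3)}_w$ and the support condition in Def.~\ref{4-1} are a sound elaboration of what the paper leaves implicit.
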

\begin{proof}
The above equation (\ref{eq:tildeU-chi}) is used, together with Lemma \ref{lem:skeletongraphs}.iii.
\end{proof}

\subsection{Exponentiating residues}
Let us discuss the pairing between the integrand with poles along the boundaries of the simplex, with boundaries given by $\sigma_\Gamma: \prod_{i=1}^{|\Gamma_I^{[1]}|}A_i=0$,
and the Feynman integrand $U^0(\Gamma)$ in more detail.

The amplitude $\tilde U^0(\Gamma)$ can be obtained from $U^0(\Gamma)$ by taking residues along hypersurfaces $\prod_{e\in E}A_e=0$ and regular parts and integrating:
\begin{equation*}\begin{split}
\tilde U^0(\Gamma)
&= \sum_{k\geq0} \sum_{\{e_1,\ldots,e_k\}\subset\Gamma^{[1]}_{\rm int}} \int\mathrm d\underline A_{\Gamma^{[1]}\smallsetminus\{e_1,\ldots,e_k\}} \Reg_{A_1,\ldots,\hat{A_{e_1}},\ldots,\hat{A_{e_k}},\ldots=0} 
%\\ &\qquad\times
 \Res_{A_{e_1},\ldots,A_{e_k}=0}U^0(\Gamma) .
\end{split}\end{equation*}

For a function $f=f(\{A_e\})$ of graph polynomial variables $A_e,e\in \gamma^{[1]}_I$ with at most simple poles at the origin localized in disjoint sets of edges $E$, we can write 
\[
f=\sum_E f^E,
\]
where the sum is over all such sets and $f_E$ is the part of $f$ which is regular upon setting variables $A_e, e\in (\Gamma^{[1]}_I-E)$ to zero.

For any set $E$ of mutually disjoint internal edges of $\Gamma$, consider 
$
\prod_{e\in E}\oint_{\gamma_e}f,
$
and let $f^E$ be its regular part.
For any finite graph $\Gamma$, let $\mathcal{E}^\gamma$ be the set of all sets of mutually disjoint edges ($\emptyset$ included).

Consider the differential form 
\[
J_\Gamma^f:=\Big(f^E \bigwedge_{e\in (\Gamma^{[1]}_I-E)}dA_e\Big)_{E\in\mathcal{E}^\Gamma}.
\]
Let $M_\Gamma^E$ be the hypercube
\[
M_\Gamma^E:=\mathbb{R}_+^{|\Gamma^{[1]}_I|-|E|},
\] 
and the corresponding vector
\[
H_\Gamma(M_\Gamma^E)_{E\in\mathcal{E}^\Gamma}.
\]
Then, there is a natural pairing 
\[
\int H_\Gamma\cdot J_\Gamma^f:=\sum_{E\in\mathcal{E}^\Gamma} \int_{M_\Gamma^E} \Big(f^E \bigwedge_{e\in (\Gamma^{[1]}_I-E)}dA_e\Big).
\]

\subsection{Graph homology and the residue map}
Note that in parametric integration we integrate against the simplex $\sigma\equiv \sigma_\Gamma$ with boundary $\prod_{e\in\Gamma^{[1]}} A_e=0$.
We have co-dimension $k$-hypersurfaces given by 
\[
A_{i_1}=\cdots=A_{i_k}=0.
\]
The Feynman integrand we have constructed above comes from regular parts, and residues along these hypersurfaces.
It can be described by the following commutative diagram.

\begin{equation*}\label{graphhomres}\begin{CD}
\mathcal{G}_{j\fourvertex,/\ghostloop}\ni\Gamma@>
                                                                                                                                                                                                                                                            \chi_+>>\chi_+(\Gamma)\in \mathcal{G}_{(j+1)\fourvertex,/\ghostloop}\\
@VV\Phi V@VV\Phi V\\
\Phi(\Gamma)@>\sum_e  \mathrm{Res}_e>>\Phi(\chi_+(\Gamma))
\end{CD}\end{equation*}
The underlying geometry will be interpreted elsewhere.
\subsection{Covariant gauges}
For an edge $e$, let 
\[
G_{\mu\nu}^\rho(e):=\frac{g_{\mu\nu}}{{{\xi_e^\prime}}^2}-2\rho\frac{{\xi_e^\prime}_\mu{\xi_e^\prime}_\nu}{{{\xi_e^\prime}}^4}
\]
the corresponding gluon propagator in a covariant gauge ($\rho=1/2$ being the transversal Landau gauge, $\rho=0$ the Feynman gauge).
One computes
\[
G_{\mu\nu}^\rho(e)=\int_0^\infty \frac{-1}{2A\rho}\frac{\partial}{\partial\xi^\prime_{e\mu}}\frac{\partial}{\partial\xi^\prime_{e\nu}}e^{-\rho A{{\xi_e^\prime}^2}}d\!A=:\int_0^\infty g_{\mu\nu}^\rho(e) d\!A.
\]
We  set 
\[
G^\rho_\Gamma:=\prod_{e\in\Gamma^{[1]}_I} G^\rho_{\mu_{(s(e),e)}\mu_{(t(e),e)}}(e),
\]
for half-edges $(s(e),e)$ and $(t(e),e)$,
and $g^\rho_\Gamma$ accordingly. 

We let $I_\Gamma(\rho)$ be the corresponding scalar integrand
obtained by substituting $A_e\to \rho A_e$ for each internal edge $e$.  

$G^\rho_\Gamma$ acts as a differential operator so that 
\[
G^\rho_\Gamma I_\Gamma(\rho)=F_G(\rho) I_\Gamma(\rho),
\]
with $F_G(\rho)$ a polynomial in $\rho$, edge variables $A_e$ and 4-momenta $\xi_e$.

Similarly, the corolla differential $D_\Gamma$  acts as a differential operator so that 
\[
D_\Gamma I_\Gamma(1)=F_D I_\Gamma(1),
\]
with $F_D$ a polynomial in the 4-momenta $\xi_e$ and a rational function in the edge variables $A_e$.

To compute in an arbitrary covariant gauge, we then work with
\[
F_G(\rho) F_D I_\Gamma(1).
\]

\subsection{Yang--Mills theory}
Consider a cycle $C$ through 3-valent vertices in a graph $\Gamma$, and consider 
\[
\prod_{v\in C}D_g(v_C).\
\]
This is a differential operator with coefficients which are monomials in variables $1/A_e$, where $e\in C$.

Let $D_C$ be the part in this  differential operator
which is linear in all variables $1/A_e$, for $e\in C$.
Let $\phi_C$ be the Feynman rule for a ghost loop on $C$, summed over both orientations.
\begin{lem}\label{ghostly}
\[
\phi_C:=D_C e^{-\sum_{e\in C}A_e{\xi_e^\prime}^2}.
\]
\end{lem}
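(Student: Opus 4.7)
The plan is to compute $D_C$ explicitly from the definition of $D_g(h)$ and then apply it to the Gaussian factor, identifying the result with the ghost-loop Feynman rule.

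First, I expand $\prod_{v\in C}D_g(v_C)$. With $h=v_C$, each factor $D_g(v_C)$ is the sum of two pieces, one associated to each cycle half-edge $h_+^v$, $h_-^v$ at $v$; these carry the reciprocals $A_{e(h_+^v)}^{-1}$ and $A_{e(h_-^v)}^{-1}$ respectively. Distributing the product gives a sum of $2^{|C|}$ summands, indexed by a choice $v\mapsto\eta_v\in\{h_+^v,h_-^v\}$ at every vertex of $C$.

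Next, I extract $D_C$, the part linear in every $A_e^{-1}$ for $e\in C$. A summand contains $\prod_{e\in C}A_e^{-1}$ iff each cycle edge $e$ is ``selected'' at exactly one of its two endpoints. Since at each vertex of $C$ we must select exactly one of its two cycle half-edges, and since picking $\eta_v$ at any initial vertex uniquely propagates around the cycle (if $\eta_v$ selects edge $e$ at $v$, then at the other endpoint $v'$ we must select the \emph{other} cycle half-edge at $v'$, i.e.\ the next edge of $C$), exactly two summands survive, one for each of the two orientations $\omega,\bar\omega$ of $C$.

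For a fixed orientation $\omega$, writing $\eta_v^\omega$ for the selected (``outgoing'') cycle half-edge at $v$ and $e_v^\omega=e(\eta_v^\omega)$, the corresponding summand is
\[
D_C^\omega=\prod_{v\in C}\Bigl[-\tfrac12\,g^{\mu_{h_+^v}\mu_{h_-^v}}\,\epsilon_{\eta_v^\omega}\,A_{e_v^\omega}^{-1}\,\frac{\partial}{\partial \xi_{e_v^\omega,\mu_{v_C}}}\Bigr].
\]
Since $\frac{\partial}{\partial \xi_{e,\mu}}e^{-A_e(\xi'_e)^2}=-2A_e\,(\xi'_e)_{\mu}\,e^{-A_e(\xi'_e)^2}$, each vertex prefactor collapses via $(-\tfrac12)\cdot A_{e_v^\omega}^{-1}\cdot(-2A_{e_v^\omega})=1$, yielding
\[
D_C^\omega\,e^{-\sum_{e\in C}A_e(\xi'_e)^2}=\Bigl(\prod_{v\in C}g^{\mu_{h_+^v}\mu_{h_-^v}}\,\epsilon_{\eta_v^\omega}\,(\xi'_{e_v^\omega})_{\mu_{v_C}}\Bigr)\,e^{-\sum_{e\in C}A_e(\xi'_e)^2}.
\]
At each $v$, the factor $\epsilon_{\eta_v^\omega}\,(\xi'_{e_v^\omega})_{\mu_{v_C}}$ is the outgoing cycle momentum at $v$ contracted with the external-leg index $\mu_{v_C}$, which is precisely the ghost--gluon vertex rule. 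The vertex-local metrics $g^{\mu_{h_+^v}\mu_{h_-^v}}$ supply the Feynman-gauge propagator Lorentz structure on the cycle edges being replaced by the ghost loop, and the exponential supplies their scalar Schwinger-parameter propagators. Summing over the two orientations $\omega,\bar\omega$ reproduces $\phi_C$, the ghost-loop Feynman rule on $C$ summed over both orientations.

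The main obstacle I anticipate is the final identification step: carefully verifying that the product of vertex-local metrics $g^{\mu_{h_+^v}\mu_{h_-^v}}$, once their indices are contracted through the cycle edges by the ambient Feynman-gauge conventions, reconstitutes exactly the standard Faddeev--Popov ghost-loop Lorentz structure, and that the signs $\epsilon_{\eta_v^\omega}$ from the selected half-edges match the conventional outgoing-momentum choice; the combinatorial and differential-calculus steps are routine bookkeeping.
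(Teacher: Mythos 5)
Your proof is correct and follows the same route as the paper's, which simply asserts that the identity ``follows directly from the Feynman rules for ghost propagators and ghost--gluon vertices'' and that linearization kills the pole (Leibniz) terms corresponding to 2-ghost--2-gluon seagull vertices; you have merely supplied the combinatorial details --- the $2^{|C|}$-term expansion, the identification of the two multilinear terms with the two orientations of $C$, and the cancellation $(-\tfrac12)A_e^{-1}(-2A_e)=1$ --- that the paper leaves implicit. The residual sign and index bookkeeping you flag at the end is exactly the content the paper also takes for granted.
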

\begin{proof}
This follows directly from the Feynman rules for ghost propagators and ghost-gluon vertices. Linearization eliminates all poles with residues corresponding to 4-valent 2-ghost-2-gluon vertices.
\end{proof}

Now consider the corolla polynomial $C(\Gamma)$ and replace each half-edge variable $h$ by the differential $D_g(h)$.
This defines a differential operator 
\[
d(\Gamma)^{\mathrm{YM}}:=C(\Gamma)(h\to D_g(h)),
\]
%where $D^0=C^0(\Gamma)(h\to D_g(h))$ and similar for $D^i$.
We consider $d(\Gamma)^{\mathrm{YM}} (I_\Gamma)$ where
\[
I_\Gamma:=\frac{e^{-\frac{|N_\Gamma|_{\mathrm{Pf}}}{\psi_\Gamma}}}{\psi_\Gamma^2},
\]
is the scalar integrand for a graph $\Gamma$.
\begin{prop}
All poles in $d(\Gamma)^{\mathrm{YM}} I_\Gamma$ are   located along co-dimension $|E|$ hypersurfaces $A_e=0,\;e\in E $ 
for subsets $E$ of mutually disjoint edges are simple poles.
\end{prop}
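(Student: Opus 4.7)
The approach is to expand $d(\Gamma)^{\mathrm{YM}}$ using Theorem~\ref{bin reformulation},
\[
d(\Gamma)^{\mathrm{YM}}=\sum_{T\in\mathcal T^\Gamma}\prod_{v\in V^\Gamma}D_g(h_v),
\]
where $h_v\in T$ denotes the unique half-edge of $T$ at $v$. Each factor $D_g(h_v)$ is the signed difference of two first-order differential operators of the shape $A_{e_\pm}^{-1}\partial/\partial\xi_{e_\pm,\mu_{h_v}}$, where $e_+,e_-$ are the two edges of $n(v)$ other than $e(h_v)$. Distributing these two summands vertex by vertex turns $\prod_v D_g(h_v)$ into a finite sum indexed by maps $\chi\colon v\mapsto c_v\in n(v)\setminus\{e(h_v)\}$; the $(T,\chi)$-summand contributes the prefactor $\prod_v A_{c_v}^{-1}$ multiplied by the commuting product of derivatives $\prod_v \partial/\partial\xi_{c_v,\mu_{h_v}}$ and a scalar index tensor built from $g^{\mu\nu}$ and signs.

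I then apply the $(T,\chi)$-summand to $I_\Gamma=e^{-|N_\Gamma|_{\mathrm{Pf}}/\psi_\Gamma}/\psi_\Gamma^2$ and bookkeep the $A_e$-degrees edge by edge. Because $|N_\Gamma|_{\mathrm{Pf}}$ is quadratic in the variables $\xi_e$, the Gaussian cumulant expansion of $\prod_v\partial/\partial\xi_{c_v,\mu_{h_v}}\,e^{-|N_\Gamma|_{\mathrm{Pf}}/\psi_\Gamma}$ terminates at order two: each summand is labelled by a partition of the derivatives into singletons and pairs, contributing one copy of $-\partial|N_\Gamma|_{\mathrm{Pf}}/\partial\xi/\psi_\Gamma$ per singleton and one copy of $-\partial^2|N_\Gamma|_{\mathrm{Pf}}/\partial\xi\,\partial\xi/\psi_\Gamma$ per pair. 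Corollary~\ref{linear} identifies the $A$-degree of each such factor: a singleton at $e$ is linear in $A_e\xi_e$, a same-edge pair at $e$ is linear in $A_e$, and a cross-edge pair at $(e,f)$ is bilinear in $A_eA_f$. Equivalently, every derivative at $\xi_e$ supplies exactly one factor of $A_e$ except that two derivatives at $e$ paired with each other supply only a single joint factor of $A_e$. Fix an edge $e$ with endpoints $v,w$ and set $n_e:=\#\{u\in\{v,w\}\colon c_u=e\}\in\{0,1,2\}$; the net $A_e$-power of the $(T,\chi,\text{pattern})$-term is $-n_e+(n_e-s_e)=-s_e$, where $s_e\leq\lfloor n_e/2\rfloor\leq 1$ counts same-edge pairs at $e$. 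Consequently only simple poles appear, and a pole at $A_e=0$ occurs only when $n_e=2$, i.e.\ when both endpoints of $e$ satisfy $c_v=c_w=e$.

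The set $E_{T,\chi}:=\{e:n_e=2\}$ of edges where residues are actually produced is automatically vertex-disjoint: if $e_1,e_2\in E_{T,\chi}$ shared a vertex $u$, then $c_u=e_1=e_2$, contradicting that $\chi$ assigns a single edge to each vertex. Summing over all $T$ and $\chi$ therefore shows that every pole of $d(\Gamma)^{\mathrm{YM}}I_\Gamma$ is simple and supported along a codimension-$|E|$ hypersurface $\{A_e=0:e\in E\}$ for some subset $E$ of pairwise vertex-disjoint internal edges. The main technical hurdle is the per-edge $A$-degree bookkeeping inside the Gaussian cumulant expansion, and making sure that the coincidence of two same-edge derivatives is the unique mechanism that leaves an uncancelled pole; once Corollary~\ref{linear} is exploited in this way, the vertex-disjointness is a purely combinatorial consequence of each vertex appearing in exactly one factor $D_g(h_v)$.
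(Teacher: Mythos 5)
Your argument is correct and is essentially the paper's own proof, made explicit: Corollary~\ref{linear} forces the net $A_e$-degree of each term to be $-s_e$ with $s_e\leq\lfloor n_e/2\rfloor\leq 1$, so only simple poles arise and only when both endpoint corollas of $e$ differentiate with respect to $\xi_e$, and the fact that each corolla supplies exactly one derivative makes the resulting pole-carrying edge set pairwise disjoint. The only point in the paper's proof that you omit is the closing remark that the remaining singularities of $I_\Gamma$ itself (those coming from $\psi_\Gamma$, i.e.\ from divergent subgraphs) are supported on connected edge subsets and therefore never along a hypersurface indexed by mutually disjoint edges.
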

\begin{proof}
Corollary \ref{linear} ensures that poles are at most of first order and appear only when two derivatives act on the same edge. By the definition of the corolla
polynomial this can only appear in mutually disjoint ordered pairs of corollas. All poles coming from divergent subgraphs are located along subsets of connected edges, as divergent subgraphs have more than a single edge.
\end{proof}
By our previous results on the Leibniz terms we can summarize now for the parametric integrand:
\begin{cor}
The residues of these poles correspond to graphs where each corresponding pair of corollas $P_e$ is replaced by a 4-valent vertex.
\end{cor}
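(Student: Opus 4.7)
The plan is to expand $d(\Gamma)^{\mathrm{YM}} I_\Gamma$ by Leibniz, track which terms produce poles of the form $1/A_e$, and recognise the coefficient as the marked-edge Feynman rule $W_e$. First, I will write $d(\Gamma)^{\mathrm{YM}}$ using Theorem~\ref{bin reformulation}: up to the alternating cycle contributions (which correspond to ghost cycles and do not contribute to the Yang--Mills simple-pole analysis being claimed here), the differential is a sum over admissible half-edge selections $T \in \mathcal{T}^\Gamma$, so that applying it to $I_\Gamma$ splits into products $\prod_{v\in V^\Gamma}(\text{one derivative at }v)$. Acting on $I_\Gamma = e^{-|N_\Gamma|_{\mathrm{Pf}}/\psi_\Gamma}/\psi_\Gamma^2$ and using Corollary~\ref{linear}, each single derivative $\partial/\partial\xi_{e,\mu}$ produces a factor linear in $A_e\xi^\mu_e$; two derivatives acting on the \emph{same} edge variable produce a term linear in $A_e g^{\mu\nu}$ (constant in $\xi_e$); three or more derivatives on the same edge variable vanish. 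Consequently, the only way $A_e$ can appear in the denominator (through the $1/A_e$ in $D_g(h)$) without being cancelled is when exactly two of the corolla differentials in $d(\Gamma)^{\mathrm{YM}}$ differentiate with respect to $\xi_e$, one from each endpoint of $e$.

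Next, I will fix a set $E = \{e_1,\ldots,e_k\}$ of mutually disjoint internal edges and compute $\prod_{e\in E}\mathrm{Res}_{A_e=0}\,d(\Gamma)^{\mathrm{YM}}I_\Gamma$. By the observation above, the residue is supported on the component of $d(\Gamma)^{\mathrm{YM}}$ where, for each $e\in E$, the two corolla differentials $D_v,D_w$ at the endpoints $v,w$ of $e$ combine via Leibniz on the pair of corollas $P_e = \mathrm{cor}(s(e))\mathrm{cor}(t(e))$. The explicit one-edge Leibniz calculation carried out in Section~\ref{Corolla} shows precisely that
\[
D_v V^{(3)}_w = \frac{W_e}{A_e},
\]
which is the colour-stripped Feynman rule \eqref{eq:W} of the marked edge $e$. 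Taking $\mathrm{Res}_{A_e=0}$ therefore extracts exactly $W_e$, and the remaining corolla differentials at vertices not incident to $E$ give back the 3-vertex Feynman rules $V^{(3)}_v$ on the quotient graph $\bar\Gamma_E$ obtained by treating the edges in $E$ as marked (shrunk). The surviving exponential factor $e^{-\sum_{e\notin E}A_e\xi_e'^2}$, after integrating out the loop momenta, is just $I_{\bar\Gamma_E}$, so the residue is exactly the marked-edge integrand $I_{\Gamma_E}$ of Eq.~\eqref{markedintegrand}.

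Finally, I will identify the result with the integrand for graphs in which each pair of corollas $P_e$, $e\in E$, has been collapsed to a 4-valent vertex. By Eq.~\eqref{eq:|}, a marked edge between two 3-vertex corollas, summed over the three $s,t,u$ channels that arise as one sums over the different ways of pairing the four external half-edges across the marking, reproduces exactly the 4-gluon Feynman rule. In our residue expansion, the three channels appear automatically: a given unordered pair of endpoint half-edges for a single $D_v$--$D_w$ contraction picks out one term of $W_e$ (as in Lemma~\ref{lem:graph-4v-|}), while the other channels come from the alternative half-edge pairings in $d(\Gamma)^{\mathrm{YM}}$ at the same two corollas. Summing these three contributions at each $e\in E$ and matching to Remark~\ref{remthree}, the residue equals the integrand of the graph $\Gamma$ with each pair $P_e$ replaced by a 4-valent vertex, as claimed.

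The main technical obstacle is the combinatorial bookkeeping in the last step: verifying that the three pairings of half-edges across a contracted edge produced by $d(\Gamma)^{\mathrm{YM}}$ assemble (with the correct signs and $g^{\mu\nu}$-contractions dictated by the Leibniz terms in $D_g(h_+), D_g(h_-)$) into precisely the $s$-, $t$- and $u$-channel terms of Eq.~\eqref{eq:|}. Once this channel-by-channel matching is established for a single edge, the multi-edge case follows immediately because the edges in $E$ are disjoint and the residues factorise, while the combinatorial factor $C_E(\Gamma)$ of Remark~\ref{remthree} accounts for the unaffected vertices.
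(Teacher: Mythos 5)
Your first two steps are sound and essentially reproduce the analysis the paper carries out for $U^0(\Gamma)$: the only source of a $1/A_e$ pole is the pair of Leibniz terms in which the corolla differentials at the two endpoints of $e$ both differentiate with respect to $\xi_e$, Corollary \ref{linear} turns the resulting $1/A_e^2$ into a simple pole, and the coefficient is exactly $D_vV^{(3)}_w = W_e/A_e$. Note, though, that the paper's own proof of this corollary is a one-liner of a different flavour: it observes that taking the residue at $A_e=0$ contracts $e$ in both Symanzik polynomials by the standard contraction--deletion identities, so that what multiplies $W_e$ is literally the scalar integrand of the contracted graph. You reach the same conclusion by working upstream of the loop-momentum integration, where the marked-edge integrand of Eq.\ (\ref{eq:mathcalI}) is defined by omitting the marked edges from the quadric; that is an acceptable substitute, but you should state explicitly that the residue of $e^{-\sum_{e'}A_{e'}\xi_{e'}'^2}/A_e$ at $A_e=0$ is the exponential with the $A_e$ term deleted, since that is the step which identifies the remainder with $\mathcal I$ of the marked graph.

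The genuine problem is your last paragraph. The three $s$-, $t$- and $u$-channels of Eq.\ (\ref{eq:|}) do \emph{not} arise from ``alternative half-edge pairings in $d(\Gamma)^{\mathrm{YM}}$ at the same two corollas'' of a fixed graph $\Gamma$. For a fixed $\Gamma$ and a fixed internal edge $e$, the Leibniz combinations contributing to the $1/A_e$ pole assemble into the single two-term tensor $W_e=g^{\mu_1\mu_3}g^{\mu_2\mu_4}-g^{\mu_1\mu_4}g^{\mu_2\mu_3}$, i.e.\ into exactly \emph{one} channel of the 4-gluon vertex --- the channel determined by which pairs of half-edges sit at $s(e)$ and at $t(e)$. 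The other two channels live in the two other 3-regular graphs $\Gamma'$ with $\Gamma'/e$ equal to the same 4-vertex graph; they are only recovered upon summing over all connected graphs, which is the content of Lemma \ref{lem:graph-4v-|} and is precisely what the paper invokes in the sentences following the corollary. The corollary itself only claims that the residue is the integrand of $\Gamma$ with the pair of corollas $P_e$ fused to a 4-valent vertex carrying the marked-edge rule $W_e$, so your ``main technical obstacle'' (a channel-by-channel matching inside a single graph) is not an obstacle at all --- it is a statement that is false at the single-graph level and unnecessary for the corollary.
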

\begin{proof}
Setting an edge variable to zero shrinks that edge in the two Symanzik polynomials by the standard contraction-deletion identities
\cite{BrI,BrII,AluffMarc}.
\end{proof}

The Leibniz terms serve the useful purpose to shrink an edge between two 3-gluon vertices.
They provide a residue which corresponds to  the integrand where the corresponding edge is a marked edge in our
conventions. it is hence part of the integrand for a graph
with a corresponding 4-valent vertex. As we have checked before, when summing over all connected 3-regular graphs, we correctly reproduce the Feynman integrand for all gluon self-interactions.

We stress that in doing so we want to shrink edges only between pairs of corollas which both are corollas for 3-gluon vertices, and will not mark edges between
other type of vertices.
This leads us to
\begin{defi}
We let $D(\Gamma)^{\mathrm{YM}} I_\Gamma$ be the part $d(\Gamma)^{\mathrm{YM}} I_\Gamma$ which is linear in all variables $1/A_e$.
\end{defi}
This eliminates all  poles in $D(\Gamma)^{\mathrm{YM}} I_\Gamma$ of the form $1/A_e$.  We can regain then the contribution of 4-valent 4-gluon vertices
by using Theorem \ref{bin reformulation} together with  Remark \ref{remthree}:
\begin{lem} Let
\[U_\Gamma  = 
g^\rho_\Gamma C^{\mathrm{fr}}(\Gamma)(a_h\to D_g(h))I_\Gamma.\]
Then $\bar{U}_\Gamma$ (cf. Eq.\eqref{corrfac}) generates the integrand for the complete contribution of $\Gamma$ to the full Yang--Mills  theory amplitude. 
$\bar{U}_\Gamma^R$ generates the corresponding integrand for the renormalized contribution.
\end{lem}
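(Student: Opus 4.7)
The plan is to verify this lemma by tracking how each piece of the differential operator $g^\rho_\Gamma\,C^{\mathrm{fr}}(\Gamma)(a_h\to D_g(h))$ reproduces the gauge-theory Feynman rules when acting on the scalar integrand $I_\Gamma$. I would first unpack $C^{\mathrm{fr}}(\Gamma)=\sum_{E}C_E(\Gamma)\prod_{e\in E}W_e$, so each summand is indexed by a set $E$ of mutually disjoint internal edges which I interpret as the marked edges encoding 4-gluon vertices via Eq.(\ref{eq:|}). The factors $\prod_{e\in E}W_e$ supply exactly the colour-stripped 4-vertex Feynman rule by Eq.(\ref{eq:W}), while $C_E(\Gamma)=C(\Gamma-E)$ leaves a corolla polynomial on the 3-regular skeleton whose corollas are to be differentiated.

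Next I would analyze $C(\Gamma-E)$ term by term in its defining alternating sum $\sum_j(-1)^jC^j$. For the degree-zero piece $\prod_v D_v$, the identity $D_v e^{-\sum A_e\xi_e'^2}=V_v^{(3)}e^{-\sum A_e\xi_e'^2}$ from Section 6.3 shows that, upon extracting the part of $D(\Gamma)^{\mathrm{YM}} I_\Gamma$ linear in each $1/A_e$ (which by Corollary \ref{linear} corresponds to mutually disjoint double derivatives on a common edge), one recovers the Yang--Mills tree-level vertex content at every 3-valent vertex. The higher corolla terms $C^j$ restrict certain vertices to lie on disjoint cycles $C_1,\dots,C_j$ with the half-edge $v_C$ singled out; by Lemma \ref{ghostly} the operator $\prod_{v\in C}D_g(v_C)$ produces, after linearization in the edge variables of the cycle, exactly the ghost-loop Feynman rule $\phi_C$ summed over both orientations, and the explicit $(-1)^j$ delivers the Faddeev--Popov minus sign. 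Combining these contributions with $g^\rho_\Gamma$, which by the Gaussian identity $G^\rho_{\mu\nu}(e)=\int_0^\infty g^\rho_{\mu\nu}(e)\,dA$ assembles the covariant-gauge propagator on each internal edge, the integrand $U_\Gamma$ is precisely the Yang--Mills integrand for the sum over all ways of marking a subset $E$ of disjoint edges and all ways of replacing disjoint gluon cycles by ghost cycles of both orientations. Summation over $\Gamma$ then combines with Theorem \ref{chithm} and Theorem \ref{deltathm} to exhaust the gauge-theory graphs contributing to $r$ at loop order $k$.

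For the passage from $U_\Gamma$ to $\bar U_\Gamma$, I would invoke the correction factor Eq.(\ref{corrfac}): by Corollary \ref{linear} the polynomial degrees $|A_\Gamma^F|_{\gamma^{[1]}_I}$ are bounded by $\omega^\gamma$, and the quadratically divergent case arises only for self-energy subgraphs $\gamma$ with $|\gamma^{[1]}_{E}|=2$. Multiplying each such $F$-term by $\psi_{\gamma_\bullet}/(\psi_\gamma A_{e_+^\gamma})$, as justified in the lemma following Eq.(\ref{corrfac}) through partial integration with respect to the $t_\gamma$-scale, converts the quadratic singularity into a logarithmic one while the boundary term vanishes under the renormalization conditions Eq.(\ref{qurencon}). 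Once every divergent subgraph $\gamma$ contributes only a logarithmic pole along a forest-compatible hypersurface, the BPHZ-style subtraction formula of Section 3.4 applies verbatim: $\bar U_\Gamma^R = \sum_{f\in\mathcal F_\Gamma}(-1)^{|f|}Q(\bar U_{\Gamma/f})Q_0(\bar U_f)$ is well-defined and yields the renormalized integrand, with the commutation Eq.(\ref{xeeq}) guaranteeing that the corolla differentials inside $\bar U_\Gamma$ and the forest subtractions act independently on the $\Gamma/f$ and $f$ factors.

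The main obstacle I anticipate is the bookkeeping of Leibniz cross-terms in $C(\Gamma)(h\to D_g(h))I_\Gamma$: derivatives at one corolla can hit factors of $\xi_e'$ produced by derivatives at a neighbouring corolla, generating both the residue terms identified with 4-gluon vertices in Section 6.4 and the polar structure $W_e/A_e$ along shared edges. Isolating the genuine linear-in-$1/A_e$ part and matching it with the $W_e$-weighted sums over $E$ in $C^{\mathrm{fr}}$ ---without double-counting or missing the sign from reversed ghost-loop orientations--- is the combinatorial core of the argument; this is exactly what Theorem \ref{bin reformulation} and Remark \ref{remthree} are designed to package, so the proof reduces to a careful comparison of the two expansions, which I would execute by localizing on one edge / one cycle at a time.
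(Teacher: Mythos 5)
Your proposal is correct and follows essentially the same route as the paper, which disposes of the lemma in one line as an ``immediate application of Lemma \ref{ghostly} and Theorem \ref{allthm}'': you invoke exactly these two ingredients (the ghost-loop identification of the cycle terms in the corolla polynomial, and the homological matching of 4-valent vertices and ghost loops via $e^{\chi_+}$ and $e^{\delta_+}$), together with the already-established Leibniz-term computations, the covariant-gauge factor $g^\rho_\Gamma$, the quadratic-subdivergence correction Eq.(\ref{corrfac}), and the forest formula. The only difference is that you spell out the bookkeeping the paper declares immediate; no gap.
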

\begin{proof}
Immediate application of Lemma \ref{ghostly} and Theorem \ref{allthm}.
\end{proof}
This also proves Theorem \ref{GTFR} in the context of Yang--Mills theory.
\begin{rem}
If we were to work with non-linear gauges, we could avoid this linearization and use the Leibniz terms for the graphs with 2-gluon 2-ghost and 4-ghost vertices. Also, note that $\bar{U}_\Gamma^R=\bar{U}_\Gamma^R(\rho)$ depends on the gauge parameter.
\end{rem}   
\subsection{Amplitudes with open ghost or fermion lines}
For $k$ open ghost  lines we have a straightforward generalization of these differentials  by using 
$C^{\mathrm{fr}}_{\mathcal{P}}(G)$, see Eq.(\ref{openlines}), where each half edge $h$ is again replaced by $D_g(h)$ and linearization is understood as before. 
For fermion lines, see below.
\subsection{Gauge Theory}
If we include matter fields, we need to add a second differential in particular for fermion fields:
\begin{eqnarray*}
D_f(h) & := & \left(\frac{1}{A(e(h_+))}\frac{\partial}{\partial \xi(e(h_+))_{\mu(h_+)}}\gamma_{\mu(h_+)}\gamma_{\mu(h)}\right.\nonumber\\
 & & -
\left. \frac{1}{A(e(h_-))}\frac{\partial}{\partial \xi(e(h_-))_{\mu(h_-)}}\gamma_{\mu(h)}\gamma_{\mu(h_-)}\right).
\end{eqnarray*}
Now we must carefully distinguish between fermion and ghost cycles.

For a collection of cycles $C_1,
\cdots,C_j$ contributing to $C^j$, consider partitions of this set into two subsets $I_f,I_g$ containing $|I_f|+|I_g|=j$ cycles.
Replace $a_{v,v_C}\to b_{v,v_C}$ for each $C\in I_f$. This defines $C^{I_g,I_f}(\Gamma)(a_h,b_h)$.
Upon summing over all possible partitions $I_g,I_l$ of the cycles for each $j$, this gives a further corolla polynomial for which we write in slight abuse of notation  $C(\Gamma)(a_h,b_h)$.
Assign a differential operator as follows:
\begin{equation*}
U_\Gamma=g^\rho_\Gamma\sum_{j\geq 0} \sum_{|I_g|+|I_f|=j} C^{I_g,I_f}(\Gamma)(D_g(h),D_f(h))\mathrm{colour}^{I_g,I_f}(\Gamma),
\end{equation*}
where in $C^{I_g,I_f}$, for $I_g\cup I_f\not=\emptyset$, we keep only terms which are linear in variables $1/A_e$ for edges $e\in C_1\cup\cdots\cup C_{j}$. 
We can now proceed with $\bar{U}_\gamma$ as before.

Note that the restriction to $I_l=\emptyset$ gives back the corresponding operator for Yang--Mills theory.
From here on, Theorem \ref{GTFR} follows for gauge theory  as before for Yang--Mills theory.
\begin{rem}
Note that all this can be turned into a projective integrand, illuminating the slots in the period matrix which are filled in a gauge theory 
as compared to a scalar field theory. In particular, one hopes that the geometry of Eq.(\ref{graphhomres}) is helpful
to explain appearances and disappearances of periods in gauge theory.
\end{rem}
\begin{rem}
Putting fermions into the same colour rep as gauge bosons allows for immediate cancellations between $D_g$ and $D_f$. This can be illuminating in studying the simplifications for supersymmetric gauge theories.
\end{rem}
\subsection{Examples: QED and Yang--Mills theory} In the following two examples, we compute the one-loop vacuum polarization in quantum electrodynamics, and then the one-loop gluon vacuum polarization in Yang--Mills theory. Both examples can be obtained from corolla differentials acting on the simplest possible 3-regular graph:
\[
\Gamma:=
\parbox{45pt}{\centering\scriptsize\begin{fmfgraph*}(35,17.5)
	\fmfleft 1 \fmfright 2
	\fmf{plain,tension=2}{1,a}
	\fmf{plain,tension=2}{2,b}
	\fmf{plain,left,tension=.5,label=$1$,label.dist=2}{a,b}
	\fmf{plain,left,tension=.5,label=$2$,label.dist=2}{b,a}
	\fmfv{label=$3$,label.dist=2}{1}
	\fmfv{label=$4$,label.dist=2}{2}
	\fmfv{label=a,label.dist=2,label.angle=0}{a}
	\fmfv{label=b,label.dist=2,label.angle=180}{b}
\end{fmfgraph*}}
.
\]
We label its two internal edges $1,2$, and the external edges $3,4$. We also label the two vertices $a,b$.
Edge 3 is oriented from vertex $a$ to vertex $b$, and edge 4 vice versa, say.

We have six half-edges: $h_1:=(a,3),h_2:=(a,2),h_3:=(a,1)$ and $h_4:=(b,1),h_5:=(b,2),h_6:=(b,4)$, with corresponding half-edge variables $a_{a3}, a_{a2}$ etc.

We have four 4-vectors $\xi_1,\xi_2,\xi_3,\xi_4$, with $\xi_e\in \mathbb{M}^4$, Minkowski space, 
with scalar product $\xi_e^2\equiv \xi_e\cdot\xi_e={\xi_e}_0^2-{\xi_e}_1^2-{\xi_e}_2^2-{\xi_e}_3^2$.

\begin{exmp}
In order to compute the one-loop vacuum polarisation in massless QED, 
\[
\Pi_1= \parbox{25pt}{\begin{fmfgraph*}(25,12.5)
	\fmfleft 1 \fmfright 2
	\fmf{boson,tension=2}{1,a}
	\fmf{boson,tension=2}{2,b}
	\fmf{fermion,left,tension=.5}{a,b,a}
\end{fmfgraph*}}
\]
we proceed as follows. We have  for the corolla polynomial
\begin{equation*}
C_1(\Gamma)=a_{a3} a_{b4}.
\end{equation*}
The scalar integrand is
\begin{equation*}
I(\Gamma)=\frac{1}{2}{\xi_3}^2{\xi_4}^2\frac{e^{-\frac{({\xi_1}-{\xi_2})^2A_1A_2+(A_3{\xi_3}^2+A_4{\xi_4}^2)(A_1+A_2)}{A_1+A_2}}}{(A_1+A_2)^2}dA_1 dA_2dA_3 dA_4.
\end{equation*}
We can directly integrate $A_3,A_4$ eliminating any  appearance of ${\xi_3}^2,{\xi_4}^2$ as in this example no derivatives with
respect to external edges appear in the corolla differential.

Indeed, replacing the two half-edge variables in $C_1(\Gamma)$ by the fermion differential and using the linearized corolla differential (we symmetrize below in $\mu(3),\mu(4)$ when allowed)
\[
\frac{1}{4A_1A_2}\left(\frac{\partial}{\partial{\xi_3}_{\mu(3)}}\frac{\partial}{\partial{\xi_4}_{\mu(4)}}
+\frac{\partial}{\partial{\xi_4}_{\mu(4)}}\frac{\partial}{\partial{\xi_3}_{\mu(3)}}\right)
=\frac{1}{2A_1A_2}\frac{\partial}{\partial{\xi_3}_{\mu(3)}}\frac{\partial}{\partial{\xi_4}_{\mu(4)}}\]
 delivers $\pi_1$, the integrand for $\Pi_1$:
\begin{eqnarray*}
\pi_1 & := & - \frac{1}{4}\Tr(\gamma_{\mu(3)}\gamma_{\mu(2)}\gamma_{\mu(4)}\gamma_{\mu(1)})
\frac{\partial}{A_1\partial{\xi_1}_{\mu(1)}}\frac{\partial}{A_2\partial{\xi_2}_{\mu(2)}}I(\Gamma)\\
 & = & 
-\Tr(\gamma_{\mu(3)}\gamma_{\mu(2)}\gamma_{\mu(4)}\gamma_{\mu(1)})({\xi_1}-{\xi_2})_{\mu(1)}({\xi_2}-{\xi_1})_{\mu(2)}A_1A_2\times\nonumber\\
& & 
\times\frac{e^{-\frac{({\xi_1}-{\xi_2})^2A_1A_2}{A_1+A_2}}}{(A_1+A_2)^4}dA_1 dA_2\,
\left(A_\gamma^{F_1}=\frac{A_1A_2}{(A_1+A_2)^4},\,|A_\gamma^{F_1}|_\gamma=0 \right)\nonumber\\
 & & +\Tr(\gamma_{\mu(3)}\gamma_{\mu(2)}\gamma_{\mu(4)}\gamma_{\mu(1)})\frac{1}{2}g_{\mu(1)\mu(2)}\times\nonumber\\
  & & 
\times\frac{e^{-\frac{({\xi_1}-{\xi_2})^2A_1A_2}{A_1+A_2}}}{(A_1+A_2)^3}dA_1 dA_2\,
\left(A_\gamma^{F_2}=\frac{1}{(A_1+A_2)^3},\,|A_\gamma^{F_2}|_\gamma=2 \right).
\end{eqnarray*}
Partially integrating the metric tensor term (equivalently, multiplying $A_\gamma^{F_2}$ by
$\frac{A_1A_2}{(A_1+A_2)A_4}$ before integrating $A_4$, see Eq.(\ref{corrfac})) gives
\begin{eqnarray*}
\pi_1 & = & \Tr(\gamma_{\mu(3)}\gamma_{\mu(2)}\gamma_{\mu(4)}\gamma_{\mu(1)}) ({\xi_1}-{\xi_2})_{\mu(1)}({\xi_2}-{\xi_1})_{\mu(2)}A_1A_2\times\nonumber\\
& & 
\times\frac{e^{-\frac{({\xi_1}-{\xi_2})^2A_1A_2}{A_1+A_2}}}{(A_1+A_2)^4}dA_1 dA_2\nonumber\\
 & & +\Tr(\gamma_{\mu(3)}\gamma_{\mu(2)}\gamma_{\mu(4)}\gamma_{\mu(1)})
 \frac{1}{2}g_{\mu(1)\mu(2)}({\xi_1}-{\xi_2})^2A_1A_2\times\nonumber\\
  & & 
\times\frac{e^{-\frac{({\xi_1}-{\xi_2})^2A_1A_2}{A_1+A_2}}}{(A_1+A_2)^4}dA_1 dA_2.
\end{eqnarray*}
Evaluating the trace, contracting indices and integrating delivers ($Q$ replaces ${\xi_1}-{\xi_2}$ by $q$, subtraction at $q^2=\mu^2$ understood, {\it i.e.} $Q_0$ replaces ${\xi_1}-{\xi_2}$ by $\mu$)
\begin{equation*}
\Pi_1= 8(q^2g_{\mu(3)\mu(4)}-q_{\mu(3)}q_{\mu(4)})\int \frac{A_1A_2e^{-\frac{q^2A_1A_2}{A_1+A_2}}}{(A_1+A_2)^4}-\cdots_{|q^2=\mu^2}dA_1dA_2
\end{equation*}
which can be written projectively
\begin{equation*}
\Pi_1 = 8(q^2g_{\mu(3)\mu(4)}-q_{\mu(3)}q_{\mu(4)})\ln\frac{q^2}{\mu^2} \int_{\mathbb{P}^1(\mathbb{R}_+)} 
\frac{A_1A_2}{(A_1+A_2)^4}(A_1dA_2-A_2dA_1)
\end{equation*}
and which correctly evaluates to the expected transversal result
\begin{equation*}
\Pi_1= 
\frac{4}{3}(q^2g_{\mu(3)\mu(4)}-q_{\mu(3)}q_{\mu(4)})\ln\frac{q^2}{\mu^2}.
\end{equation*}
\end{exmp}

Next, we turn to Yang--Mills theory.
\begin{exmp}
\newcommand\examplegraph[1]{{
\parbox{#1 pt}{\begin{fmfgraph}(#1,#1)
	\fmfleft 1 \fmfright 2
	\fmf{boson,tension=2}{1,a}
	\fmf{boson,tension=2}{2,b}
	\fmf{boson,left,tension=.5}{a,b}
	\fmf{boson,left,tension=.5}{b,a}
\end{fmfgraph}}
}}
\newcommand\examplegraphpl[1]{{
\parbox{#1 pt}{\begin{fmfgraph}(#1,#1/2)
	\fmfleft 1 \fmfright 2
	\fmf{plain,tension=2}{1,a}
	\fmf{plain,tension=2}{2,b}
	\fmf{plain,left,tension=.5}{a,b}
	\fmf{plain,left,tension=.5}{b,a}
\end{fmfgraph}}
}}
We have
\begin{align*} \frac{|N_{\examplegraphpl{10}}|_{\mathrm{Pf}}} {\psi_{\examplegraphpl{10}}} &= -\xi_3^2A_3-\xi_4^2A_4-\frac{(\xi_1-\xi_2)^2A_1A_2}{A_1+A_2} 
\intertext{while the corolla polynomials read}
C_{\examplegraphpl{10}}^0 (\underline a) &= (a_{\rm a3}+a_{\rm a1}+a_{\rm a2})(a_{\rm b4}+a_{\rm b1}+a_{\rm b2})\\
C_{\examplegraphpl{10}}^1 (\underline a) &= a_{\rm a3} a_{\rm b4} \end{align*}

The corresponding differentials then become
\begin{equation*}\begin{split}
C_{\examplegraphpl{10}}^0 (\underline D)
&= (D_{\rm a3}+D_{\rm a1}+D_{\rm a2})(D_{\rm b4}+D_{\rm b2}+D_{\rm b1}) \\
&= \big(-\tfrac12)^2 \bigg(
g^{\mu_1\mu_2} \Big( -\frac1{A_1}\frac\partial{\partial\xi_{1\mu_3}} - \frac1{A_2}\frac\partial{\partial\xi_{2\mu_3}} \Big) \\
&\qquad\quad +g^{\mu_2\mu_3} \Big( \frac1{A_2}\frac\partial{\partial\xi_{2\mu_1}} - \frac1{A_3}\frac\partial{\partial\xi_{3\mu_1}} \Big) \\
&\qquad\quad +g^{\mu_3\mu_1} \Big( \frac1{A_3}\frac\partial{\partial\xi_{3\mu_2}} + \frac1{A_1}\frac\partial{\partial\xi_{1\mu_2}} \Big) \bigg) \\
&\qquad\times\bigg(
g^{\mu_2\mu_1} \Big( -\frac1{A_2}\frac\partial{\partial\xi_{2\mu_4}} - \frac1{A_1}\frac\partial{\partial\xi_{1\mu_4}} \Big) \\
&\qquad\quad +g^{\mu_1\mu_4} \Big( \frac1{A_1}\frac\partial{\partial\xi_{1\mu_2}} - \frac1{A_4}\frac\partial{\partial\xi_{4\mu_2}} \Big) \\
&\qquad\quad +g^{\mu_4\mu_2} \Big( \frac1{A_4}\frac\partial{\partial\xi_{4\mu_1}} + \frac1{A_2}\frac\partial{\partial\xi_{2\mu_1}} \Big) \bigg) \\
C_{\examplegraphpl{10}}^1 (\underline D)
&= D_{\rm a3} D_{\rm b4} \\
&= 4\big(-\tfrac12\big)^2
\Big( -\frac1{A_1}\frac\partial{\partial\xi_{1\mu_3}} - \frac1{A_2}\frac\partial{\partial\xi_{2\mu_3}} \Big) \\
&\qquad\times \Big( -\frac1{A_2}\frac\partial{\partial\xi_{2\mu_4}} - \frac1{A_1}\frac\partial{\partial\xi_{1\mu_4}} \Big)
\end{split}\end{equation*}

for which the linear part, without the factor $4$ (the space-time dimension), is
\begin{equation*}\begin{split}
\tilde C_{\examplegraphpl{10}}^1 (\underline D)
&= \big(-\tfrac12\big)^2 \frac1{A_1A_2}
\Big( \frac{\partial^2}{\partial\xi_{1\mu_3}\partial\xi_{2\mu_4}}
+ \frac{\partial^2}{\partial\xi_{2\mu_4}\partial\xi_{1\mu_3}} \Big)
\end{split}\end{equation*}

We compute 
\begin{align*}
U_{\examplegraphpl{10}}^0
&= C_{\examplegraphpl{10}}^0(\underline D) \frac{e^{\bar\phi_{\examplegraphpl{10}}/\psi_{\examplegraphpl{10}}}}{\psi_{\examplegraphpl{10}}^2} \\
&= \frac1{(A_1+A_2)^4}\Big( (A_1-A_2)g^{\mu_1\mu_2}q^{\mu_3} - (2A_1+A_2)g^{\mu_2\mu_3}q^{\mu_1} \\
&\qquad\qquad + (A_1+2A_2)g^{\mu_3\mu_1}q^{\mu_2} \Big)
\Big( (A_1-A_2)g^{\mu_2\mu_1}q^{\mu_4} \\
&\qquad\qquad+ (A_1+2A_2) g^{\mu_1\mu_4}q^{\mu_2}
- (2A_1+A_2) g^{\mu_4\mu_2}q^{\mu_1} \Big) \\
&\qquad\quad\times e^{-q^2\big(\frac{A_1A_2}{A_1+A_2}+A_3+A_4\big)} \\
&\qquad+ \frac3{(A_1+A_2)^3} \Big(1-\overbrace{\frac{A_1}{A_2}-\frac{A_2}{A_1}}^{\to 0,\,\mathrm{as\,residues\,
are\, scale-independent\, self-loops}}\Big) g^{\mu_1\mu_2}
e^{-q^2\big(\frac{A_1A_2}{A_1+A_2}+A_3+A_4\big)} 
\intertext{and so}
\tilde U_{\examplegraphpl{10}}^0
&= \left( \frac1{(A_1+A_2)^4}\big(
-(2A_1^2+2A_2^2+14A_1A_2)q^{\mu_3}q^{\mu_4}\right. \\
&\qquad\quad \left. +(5A_1^2+5A_2^2+8A_1A_2)q^2g^{\mu_3\mu_4}
\big)\right)_{\mathrm{hence}\,|A_\gamma^{F_1}|_\gamma=0}
e^{-q^2\frac{A_1A_2}{A_1+A_2}} \mathrm dA_1\mathrm dA_2 \\
&\qquad+ \left( \frac3{(A_1+A_2)^3} g^{\mu_3\mu_4}\right)_{\mathrm{hence}\,|A_\gamma^{F_2}|_\gamma=2}
e^{-q^2\frac{A_1A_2}{A_1+A_2}} \mathrm dA_1\mathrm dA_2 
\end{align*}
Similarly, 
\begin{equation*}\begin{split}
\tilde U_{\examplegraphpl{10}}^1
&= \Big(\frac{2A_1A_2}{(A_1+A_2)^4}q^{\mu_3}q^{\mu_4} -\frac1{(A_1+A_2)^3} g^{\mu_3\mu_4} \Big)
e^{-q^2\frac{A_1A_2}{A_1+A_2}} \mathrm dA_1\mathrm dA_2
\end{split}\end{equation*}

We thus obtain for the corresponding integrals:
\begin{equation*}\begin{split}
\int\tilde U_{\examplegraphpl{10}}^{0\rm R}
&= \big(\tfrac{11}3 q^{\mu_3}q^{\mu_4}-\tfrac{25}6 q^2g^{\mu_3\mu_4}) \ln\Big(\frac{q^2}{\mu^2}\Big),
\end{split}\end{equation*}
which corresponds to the gauge boson loop:
\[
\parbox{25pt}{\begin{fmfgraph*}(25,12.5)
	\fmfleft 1 \fmfright 2
	\fmf{boson,tension=2}{1,a}
	\fmf{boson,tension=2}{2,b}
	\fmf{boson,left,tension=.5}{a,b,a}
\end{fmfgraph*}}
,
\]
and
\begin{equation}
\int\tilde U_{\examplegraphpl{10}}^{1\rm R}
= -\big(\tfrac13q^{\mu_3}q^{\mu_4}+\tfrac16 q^2g^{\mu_3\mu_4})
\ln\Big(\frac{q^2}{\mu^2}\Big),\label{ratiotwo}
\end{equation}
which corresponds to the ghost loop:
\[
\parbox{25pt}{\begin{fmfgraph*}(25,12.5)
	\fmfleft 1 \fmfright 2
	\fmf{boson,tension=2}{1,a}
	\fmf{boson,tension=2}{2,b}
	\fmf{ghost,left,tension=.5}{a,b,a}
\end{fmfgraph*}}
.
\]
They combine to a transversal result:
\begin{equation*}\begin{split}
\int\tilde U_{\examplegraphpl{10}}^{\rm R}
&= \int\tilde U_{\examplegraphpl{10}}^{0\rm R}
- \int\tilde U_{\examplegraphpl{10}}^{1\rm R} \\
&= 4(q^{\mu_3}q^{\mu_4}-q^2g^{\mu_3\mu_4})
\ln\Big(\frac{q^2}{\mu^2}\Big).
\end{split}\end{equation*}
Multiplying with $\frac{\mathrm{colour}(\gamma)}{\mathrm{sym}(\gamma)}=\frac{1}{2}f^{h_1h_2h_3}f^{h_2h_3h_6}$, this is the result 
for the 1-loop gluon self-energy in Yang--Mills theory. The gauge theory result is immediate from including the previous example with a suitable colour factor for the fermion loop.
\end{exmp}

\section{Conclusion}
\subsection{Covariant quantization without ghosts}
Consider $\bar{U}_\Gamma^R=:\sum_{i=0}^\infty (-1)^i \bar{U}_\Gamma^{i,R}$ 
in a notation which reflects the alternating structure of the corolla polynomial. Set $\bar{U^{\mathrm{gh}}}_\Gamma^R:=\sum_{i=1}^\infty (-1)^i \bar{U}_\Gamma^{i,R}$.

Covariant quantization delivers naively the integrand $\bar{U}_\Gamma^{0,R}$.
Let $P_L$ be a projector onto longitudinal degrees of freedom so that a physical amplitude is in the kernel of $P_L$,
$P_T$ the corresponding projector such that $P_L+P_T=\mathrm{id}$. 

Summing over connected  graphs contributing to a physical amplitude $X^{r,n}$ at $n$ loops, we know that 
\[
P_L\left( \bar{U}_{X^{r,n}}^{0,R}\right)=-P_L\left( \bar{U^{\mathrm{gh}}}_{X^{r,n}}^{R}\right).
\]
The undesired longitudinal part of the ghost free sector determines the longitudinal part of the ghost contribution by definition.

But also, to compute the ratio
\[
\frac{P_L\left( \bar{U^{\mathrm{gh}}}_{X^{r,n}}^{R}\right)}{P_T\left( \bar{U^{\mathrm{gh}}}_{X^{r,n}}^{R}\right)}
\]
is a combinatorial exercise in determining the interplay of these projectors with the Leibniz terms originating
from the corolla differentials in the various topologies. These longitudinal and transversal differentials 
are determined by the same scalar integrand, and hence are not independent. Eq.(\ref{ratiotwo}) with the ratio two between 
the $qq$ and $g$ form-factor is a typical example. 

So the transversal part of the ghost sector is determined by the combinatorics of scalar graphs and the longitudinal part. It hence is implicitly determined by the ghost free sector. 
\subsection{Slavnov--Taylor Identities}
Slavnov--Taylor identities are treated here as originating from co-ideals in the corresponding Hopf algebras. We reproduce the Feynman rules in four dimensions as renormalized integrands, and can similarly reproduce them
in dimensional regularization, and checked that  our renormalized Feynman integrand vanishes on the corresponding co-ideals, as required.

In future work, we will directly demonstrate the validity of Slavnov--Taylor identities from the structure of the corolla polynomial.
%\subsection{Representation theory of the Poincar\'e group and unitarity}
\end{fmffile}

\end{document}